\documentclass[11pt,runningheads]{llncs}
\usepackage[a4paper, margin = 1in]{geometry}

\usepackage[T1]{fontenc}
\usepackage{booktabs}
\usepackage[ruled]{algorithm2e}

\SetAlFnt{\small}
\SetAlCapFnt{\small}
\SetAlCapNameFnt{\small}
\SetAlCapHSkip{0pt}
\IncMargin{-\parindent}

\usepackage{fix-cm}
\usepackage{graphicx}
\usepackage[titletoc]{appendix}
\usepackage{xcolor}

\usepackage{amsmath,amssymb}
\usepackage{xifthen}
\usepackage{stackengine}
\usepackage{comment}
\usepackage[utf8]{inputenc}
\usepackage{nicefrac}  
\usepackage{xfrac}
\usepackage{enumerate}
\usepackage{tikz}
\usetikzlibrary{arrows.meta}
\usepackage{CJKutf8} 
\usepackage[round,authoryear]{natbib}
\usepackage{hyperref}
\usepackage{cleveref} 
\hypersetup{
	colorlinks = true,
	linkcolor=blue!70!black,
	citecolor=blue!70!black,
	linkbordercolor = {white}
}

\newcommand{\squishlist}{
\begin{list}{{{\small{$\bullet$}}}}
{\setlength{\itemsep}{3pt}      \setlength{\parsep}{1pt}
\setlength{\topsep}{1pt}       \setlength{\partopsep}{0pt}
\setlength{\leftmargin}{1em} \setlength{\labelwidth}{1em}
\setlength{\labelsep}{0.5em} } }
\newcommand{\squishend}{  \end{list}}

\newcommand{\xhdr}[1]{\vspace{8pt}\noindent{\bf {#1.}}\ }
\newcommand{\sxhdr}[1]{%
  \par\vspace{5pt}%
  {\bfseries #1.}\enspace\ignorespaces%
}

\newcommand{\omt}[1]{} 

\usepackage{pgfplots}
\pgfplotsset{compat=newest}
\usepgfplotslibrary{groupplots}
\usetikzlibrary{patterns}
\usetikzlibrary{decorations.pathreplacing,arrows.meta}
\usepackage{tikz}
\usetikzlibrary{decorations.markings, decorations.pathreplacing}

\newcommand{\cc}[1]{\ensuremath{\mathsf{#1}}}

\newcommand{\Zplus}{\mathbb{Z}^+}

\newcommand{\prob}[2][]{\text{Pr}\ifthenelse{\not\equal{}{#1}}{_{#1}}{}\!\left[{\def\givenn{\middle|}#2}\right]}
\newcommand{\expect}[2][]{\mathbb{E}\ifthenelse{\not\equal{}{#1}}{_{#1}}{}\!\left[{\def\givenn{\middle|}#2}\right]}
\newcommand{\indicator}[2][]{\mathbf{1}\ifthenelse{\not\equal{}{#1}}{_{#1}}{}\!\left\{{\def\givenn{\middle|}#2}\right\}}
\newcommand{\Variance}[2][]{\text{Var}\ifthenelse{\not\equal{}{#1}}{_{#1}}{}\!\left[{\def\givenn{\middle|}#2}\right]}

\newcommand{\E}{\mathbb{E}}

\newcommand{\dd}{\ \mathrm{d}}

\newcommand{\supp}{\cc{supp}}

\def\expect{\Ex}

\newcommand{\MPC}{\cc{MPC}}

\newcommand{\IndexAlg}{\emph{Index Algorithm }}

\newcommand{\AutoAdjust}[3]{\mathchoice{ \left #1 #2  \right #3}{#1 #2 #3}{#1 #2 #3}{#1 #2 #3} }
\newcommand{\Xcomment}[1]{{}}

\newcommand{\InBrackets}[1]{\AutoAdjust{[}{#1}{]}}
\newcommand{\Ex}[2][]{\operatorname{\mathbb E}_{#1}\InBrackets{#2}}
\def\expect{\Ex}

\definecolor{deepblue}{rgb}{0,0,0.5}
\definecolor{deepred}{rgb}{0.6,0,0}
\definecolor{deepgreen}{rgb}{0,0.5,0}
\definecolor{halfgray}{gray}{0.55}

\makeatletter
\renewenvironment{proof}[1][\proofname]{\par
  \normalfont
  \topsep6\p@\@plus6\p@
  \trivlist
  \item[\hskip\labelsep\itshape #1.]\ignorespaces
}{%
  \qed\endtrivlist
}
\makeatother

\makeatletter
\renewcommand{\@fnsymbol}[1]{\number#1}
\makeatother

\usepackage{color-edits}
\addauthor[Zhicheng]{dzc}{red}
\addauthor[HF]{hf}{red}
\addauthor[Zihe]{zh}{red}
\begin{document}
\title{Algorithmic Information Design for Searchers \\ with Uncertain Alternatives\thanks{An earlier version of this paper was titled \emph{Competitive Information Design in Sequential Search}.}}

\author{
Zhicheng Du\thanks{Renmin University of China.  
Email: \texttt{duzhicheng@ruc.edu.cn}}  \and
Hu Fu\thanks{Shanghai University of Finance and Economics.
Email: \texttt{fu.hu.thu@gmail.com}} \and
Ying Qin\thanks{Renmin University of China.
Email: \texttt{qinying0420@ruc.edu.cn}} \and
Zihe Wang\thanks{Renmin University of China.
Email: \texttt{wang.zihe@ruc.edu.cn}}
}
\authorrunning{Z. Du et al.}
\institute{}
\date{\today}
%
%
%
%
\maketitle              
\begin{abstract}

Advertisements reveal information to consumers who decide on further information acquisition and eventual purchase.
\citet{anderson2006advertising} first modeled this problem using an information-design framework where the advertiser acts as a sender and the consumer as a receiver.
Due to search frictions and the consumer's outside option, search for additional information is not always worthwhile for the receiver.
Thus, the sender's information design is used to make further search attractive and ultimately induce purchase.
Following \citet{lyu2023information}, we study the optimal information design for a sender who sells search goods to a consumer with uncertain alternatives.
Instead of making relaxations to the sender's problem (as done in \citealp{lyu2023information}), we work directly on the joint distribution over realized values and signals.
Our contributions are twofold.
First, we give a method, based on duality arguments, to verify whether a given information strategy is optimal.
We illustrate the value of this verification framework in a competitive extension, where it certifies a non-trivial symmetric equilibrium for two senders with a common convex prior.
Second, on the algorithmic front, we develop an FPTAS that finds for the seller a signaling scheme whose utility differs from that of the optimal solution by at most an additive $\epsilon$ error, for any $\epsilon > 0$.



\keywords{Information Design \and Sequential Search \and Optimality Verification \and FPTAS}

\end{abstract}



\section{Introduction}
\label{sec:intro}

Consumers are often faced with horizontally differentiated products without being fully informed of the attributes offered by different sellers.
This information gap is often bridged, at least partially, by the advertisements posted by sellers to compete for the purchase.
This paper takes an information design perspective to study the design of informative advertising for buyers engaged in a sequential search.


In \citet{anderson2006advertising}'s influential model of \emph{informative advertising}, an advertisement is viewed as a signal sent by the seller to directly, though perhaps only partially, inform a potential buyer about her valuation of a product.
The signal, bound by legal sanctions, is credible.
The unit-demand buyer, with posterior estimates from the advertisements, searches for a purchase. 

In this context, advertisements are only meaningful when there is search friction.  
\citeauthor{anderson2006advertising} assumed the buyer incurs an inspection cost in order to exactly learn her value for any product; 
they study a monopoly seller facing a buyer with a deterministic outside option.
Following \citet{lyu2023information}, we study an extended framework which allows the buyer to have a randomized value for her outside option. 
This extension, as we explain in Section~\ref{subsec:pandora}, makes the framework applicable to a seller in a competitive market for search goods, with the outside option summarizing the competition from all rival sellers.
This single-dimensional summarization is made possible by structural properties of the buyer's optimal search strategy, the celebrated \IndexAlg first discovered by \citet{weitzman1978optimal} for the \emph{Pandora Box} problem.


\citeauthor{anderson2006advertising} studied a seller that strategically sets a price and commits to an advertising signaling scheme.
As in \citet{lyu2023information}, we abstract away the pricing part and only consider strategic signaling: 
a seller gets utility~$1$ when the buyer makes a purchase from him, and $0$ otherwise.\footnote{Strictly speaking, the last part of \citet{lyu2023information} discusses pricing, but in essence does not go beyond saying that the seller's revenue is the product of his price and his demand (which is our utility, inherently a function of the price), and should be optimized by selecting the best price.  The mathematical substance of the work almost entirely lies in the part that considers an exogenous price.}
Unlike the case with deterministic outside option, the information design part in our setting is already very complicated.  
This simplification is also meaningful because decisions on pricing and promotion/marketing are often made by different departments in modern corporations; advertising and pricing may be decoupled decisions (It is for similar reasons that bidders in ad auctions are sometimes modeled as ``value maximizers'' rather than standard utility maximizers; see, e.g., \citealp{WilkensCN16,WilkensCN17,Deng24,Balseiro25}.)

We study the seller's strategic advertising as an instance of information design \citep{kamenica2011bayesian}.
A unit-demand buyer has match value~$v$ for the seller's good; $v$, drawn from a publicly known distribution $F$, is unknown to the buyer initially, but can be learned if the buyer inspects it at a cost $c>0$.
The buyer's value $o$ for her outside option is drawn from a publicly known distribution~$H$, but $o$ is private to the buyer.
The seller aims to maximize the probability that the buyer searches for and selects his item. 
To this end, the sender, prior to any inspection, commits to a signaling scheme that stochastically maps $v$ to a signal.
The signal acts as an advertisement that updates the buyer's belief about $v$ to a posterior distribution.
The buyer implements the Index Algorithm to decide whether to inspect and whether to take the seller's good.
In the language of information design, the buyer is the \emph{receiver}, and the seller the \emph{sender}.

\xhdr{Our Contributions}
We make contributions on two fronts of the sender's persuasion problem. 
First, we develop a procedure for verifying the optimality of the seller's signaling scheme and apply it to identify equilibria in markets with two advertising sellers.
This is the first time such equilibria are characterized in an oligopoly; while \citet{lyu2023information} was only able to identify equilibria when the number of sellers goes to infinity (the large-market limit).
On the algorithmic front, we develop a fully polynomial-time approximation scheme (FPTAS) which finds for the seller a signaling scheme whose yield utility differs from that of the optimal solution by at most an additive $\epsilon$ error, for any $\epsilon > 0$.
Underlying both results is a lossless representation of the sender's signaling as a value-index joint distribution.


As we explain in Section~\ref{sec:prelim}, this two-dimensional representation naturally results from the Index Algorithm.
Working with it directly poses technical difficulty: the feasible set of strategies is given by an infinite-dimensional linear program with a continuum of feasibility constraints.
\citeauthor{lyu2023information} avoided the difficulty by representing the signaling scheme as a \emph{one-dimensional} distribution of the so-called amortized values (see \Cref{defn:kappa}); they imposed on these distributions a relaxed, necessary feasibility condition.
Whereas this relaxation allows the use of tools from the persuasion literature (such as the techniques worked on mean-preserving contractions), it comes with loss.
Generally, solutions to this relaxed program may not be feasible.
\citeauthor{lyu2023information} derived conditions under which the relaxed program yields a feasible optimal solution; these conditions include quasi-concavity (and hence unimodality) of~$H$, the distribution of the outside option value.
Notably, when one applies the framework to study a competitive market with multiple sellers (which motivates the single-seller setting in the first place), a constraint on~$H$ in the single seller setting does not translate into a constraint on the value distribution~$F$, the only input in the market setting.  
This is because $H$ is used to summarize rival sellers' competition; in an equilibrium, it incorporates information about the rival sellers' endogenous signaling. 
Therefore, in a symmetric equilibrium, an assumption on~$H$ is really an assumption on the \emph{solution} of the program.
It is for this reason that \citeauthor{lyu2023information} was able to derive equilibria only when the number of sellers goes to infinity.

We tackle the formidable two-dimensional distribution head-on.
Notice first that our approach imposes neither relaxation nor restriction on the original problem.
On one front, we exploit the geometry of the value-index joint distributions and develop dual certificates for the optimality of signaling schemes (Theorem~\ref{thm:constructed-lambda-mu-optimal}).
The value of this verification framework goes beyond equilibrium certification, although it is one important application.
It is worth pointing out that the two-seller equilibrium we derive (Theorem~\ref{thm:symmetric-equilibrium-convex-prior-big-c}) does not satisfy \citeauthor{lyu2023information}'s unimodality assumption.
On the other front, we develop an additive FPTAS (Theorem~\ref{thm:fptas}) which uses a two-level discretization that performs the two dimensions, realized values and posterior indices, in different ways.
We also study the competitive extension with multiple senders in several additional directions, including equilibrium existence in the general game and comparisons of information disclosure across different competitive settings.
Due to space constraints, these results are deferred to \Cref{apx:existence,apx:sym-equilibrium,apx:comparison}.

\subsection{Related Literature}
\label{subsec:related work}
Our work contributes to two related strands of literature.

\xhdr{Sequential search and Pandora Box}
In our model, the receiver faces an uncertain outside option and a search good controlled by the sender.
She decides whether to pay the cost to learn the item's value and then choose the better option, or to take the outside option directly.
This sequential decision problem can be seen as an instance of the \emph{Pandora Box} problem introduced in \citet{weitzman1978optimal}.
The Pandora Box problem has motivated a rapidly growing algorithmic literature on sequential search and its variants \citep[e.g.][]{doval2018whether, beyhaghi2019pandora, boodaghians2020pandora, chawla2020pandora, FuLiLiu23, BeyhaghiCai23,Hajiaghayi25,banihashem2025delegated}.
We refer to \citet{BeyhaghiCaiSurvey} for a survey.
A central insight of \citet{weitzman1978optimal} is that the receiver's optimal search policy is governed by index values.
The Index Algorithm has also been widely used to model consumer search in markets with competing sellers.
See, for example, \citet{armstrong2017ordered} for a survey, and \citet{derakhshan22ranking} and \citet{friedler25buybox} for recent applications.
Our work differs from this literature by focusing on the sender's optimal pre-search information design.



\xhdr{Information design for search goods}
We model the senders' pre-search information revelation as an instance of Bayesian persuasion \citep{kamenica2011bayesian}.
This framework has inspired a large body of works on information design.
See, e.g., \citealp{bergemann2019information,dughmi2017algorithmic,kamenica2019bayesian} for some recent surveys.

A large strand of the literature studies information provision for experience goods, where the receiver cannot learn the exact value through inspection before purchase and must make the purchase decision based only on the information provided by the seller.
More recent work studies information disclosure and advertising in monopoly and competitive markets, including \citet{ivanov2013information,boleslavsky2017demonstrations,hwang2026competitive}.
Related persuasion tools include the concavification approach of \citet{kamenica2011bayesian}, the linear-programming approach of \citet{kolotilin2018optimal}, and the duality-based approaches of \citet{dworczak2019simple,dworczak2024persuasion}.

However, information design for search goods is less developed.
The main obstacle is that what the sender controls is the receiver's pre-search information.
Thus, posterior means alone are no longer sufficient for determining outcomes due to the inherently complicated search decision.
\citet{anderson2006advertising} initiate this line by studying advertising search goods to consumers with deterministic alternatives.
\citet{anderson2013advertising} extend the framework to allow for richer advertising content.
Several related papers focus on post-search information design instead of pre-search information design, including \citet{bar2012search,board2018competitive,whitmeyer2020persuasion,dogan2022consumer,au2023attraction}.
Closer to pre-search information design for search goods, \citet{choi2019optimal} characterize consumer-optimal and seller-worst signals in a binary-value model, while \citet{hinnosaar2020robust} study seller-worst pre-search information in a robust-pricing environment.
In contrast to these works, we study the seller's optimal pre-search information strategy with a continuous value distribution and a buyer with a privately known outside option.

The closest paper to ours is \citet{lyu2023information}, who first studies optimal pre-search information provision by a search-goods seller with privately known outside options.
He reduces the sender's persuasion problem to a one-dimensional relaxed problem over the distribution of the amortized values, and shows that the relaxed optimum solves the original problem only under unimodality conditions on the outside-option distribution.
Our paper complements this work by keeping the exact value-index formulation without making any relaxation and developing algorithmic tools for the original best-response problem.

\xhdr{Roadmap}
\Cref{sec:prelim} introduces the model and identifies the sender's persuasion problem.
\Cref{sec:verification} develops the dual-certificate framework for verifying optimality of candidate strategies.
\Cref{sec:approximation} provides an additive FPTAS for computing near-optimal information strategies.

\section{Preliminaries}
\label{sec:prelim}

\subsection{The Setup}
\label{subsec:setup}
We consider a single strategic sender (he) who faces a receiver (she) with uncertain alternatives. 
The sender controls one item whose match value to the receiver is $v\in[0,1]$. 
The value is drawn from a commonly known prior $F\in\Delta([0,1])$,\footnote{Throughout the paper, $\Delta(\cdot)$ denotes the set of all probability distributions; unless specified otherwise, a distribution is represented by its cumulative distribution function (CDF).} and is \emph{initially invisible} to the receiver. 


The receiver can pay an inspection cost $c>0$ to learn the realized value $v$ (i.e., the item is a search good). 
The receiver also has an outside option $o\in[0,1]$, drawn from a commonly known distribution $H\in\Delta([0,1])$.
The realized outside option is the receiver's private information. 


Search friction makes the additional information generated by inspection not always worthwhile.
The seller can design a pre-search signal about the realized value $v$ to attract the receiver.
Following the information design literature,  the sender commits to a signaling scheme $(\mathcal S,\pi)$, where $\mathcal S$ is the signal space and $\pi(\cdot\mid v)$ specifies the conditional distribution of signals given the realized value $v$.

The timing of the game is as follows.
\begin{enumerate}
    \item The sender publicly commits to a signaling scheme $(\mathcal S,\pi)$.
    \item The sender observes value $v\sim F$, and sends signal $s\sim \pi(\cdot\mid v)$ to the receiver.
    \item The receiver updates the posterior belief about $v$ after receiving signal $s$.
    \item The receiver observes outside option $o\sim H$, and decides whether to  search for the item.
    \begin{enumerate}
        \item[(4.1)] If not, she takes the outside option $o$ and leaves the game.
        \item[(4.2)] If yes, she incurs a search cost $c>0$ and learns $v$ exactly.
        Then she takes the larger value among $v$ and $o$ and leaves the game.
        Ties are broken uniformly at random.
    \end{enumerate}
\end{enumerate}

\subsection{The Receiver's Optimal Search}
\label{subsec:receiver-optimal-search}
\label{subsec:pandora}

We review \citet{weitzman1978optimal}'s Pandora Box problem which models a sequential searcher's problem and admits an elegant, index-based optimal strategy.
Our purpose is to illustrate (a) that the setup in Section~\ref{subsec:setup} is an abstraction of a sender's problem in a search good market with multiple competing sellers, and (b) that the outcome of the optimal search is completely determined by a comparison between the outside option and the sender's so-called {amortized values}.

In the Pandora Box problem, a searcher is faced with $n$~boxes, each with a value that becomes observable only after an inspection cost is paid.
Specifically, each box~$i$ has value $v_i$ drawn independently from a publicly known distribution~$F_i$, and is revealed at inspection cost~$c_i$.  
The searcher can take at most one box, and must optimize her search order and stopping time to maximize the expected value of the box taken minus the cumulated inspection costs.
\citeauthor{weitzman1978optimal} showed the following \emph{Index Algorithm} to be the optimal strategy.

\begin{definition}[Index]
\label{def:index}
    Box $i$'s \emph{index} $\theta_i$ is the unique solution to 
    \begin{equation}
    \label{eq:weitzman-equation}
        \E_{v_i\sim F_i}[(v_i-\theta_i)_+]=c_i\quad \mathrm{where}\quad (x)_+\triangleq\max \{x, 0\}~.\tag{Index Eq.}
    \end{equation}
\end{definition}

\begin{definition}[The Index Algorithm, \citealp{weitzman1978optimal}] 
Compute indices for all boxes.
Discard boxes with negative indices; sort the rest in decreasing order by their indices.
Inspect the boxes in this order until the largest observed value exceeds the indices of all the boxes not inspected, or when all the boxes have been inspected.
Take the box with the highest revealed value. 
\end{definition}

Both our observations ((a) and (b) above) are consequences of the discovery, by \citet{kleinberg2016descending}, that the searcher implementing the index algorithm always ends up taking the box with the highest \emph{amortized value}: 

\begin{definition}[Amortized value]
\label{defn:kappa}
For box~$i$ with index $\theta_i$ and realized value $v_i$, its \emph{amortized value}  
is $\kappa_i\triangleq \min\{v_i,\theta_i\}$.
\end{definition}

\begin{proposition}[\citealp{kleinberg2016descending}]
\label{prop:amortized}
A searcher implementing the index algorithm in the Pandora Box problem takes the box with the highest nonnegative amortized value.
\end{proposition}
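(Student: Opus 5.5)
We prove the statement by a sample-path argument: fix every realization $(v_1,\dots,v_n)$ and show that the box taken by the Index Algorithm has the largest nonnegative amortized value $\kappa_i=\min\{v_i,\theta_i\}$. We use the convention that boxes with negative index are discarded; such boxes have $\kappa_i\le\theta_i<0$, so they are irrelevant on both sides. Relabel the remaining boxes so that $\theta_1\ge\theta_2\ge\dots\ge\theta_m\ge 0$. Let $M_k=\max_{j\le k} v_j$ be the best value seen after inspecting box $k$. The algorithm stops after inspecting box $\tau$, where $\tau$ is the first $k$ with $M_k\ge\theta_{k+1}$, or $\tau=m$. It then takes $i^\ast=\arg\max_{j\le\tau} v_j$. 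We must show $\kappa_{i^\ast}\ge\kappa_j$ for every $j$. Ties are handled by matching tie-breaking conventions, which we take to be consistent.

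\emph{Inspected boxes.} Take $j\le\tau$ with $j\neq i^\ast$. For every $k<\tau$ the algorithm did not stop, so $M_k<\theta_{k+1}$. We first claim $v_{i^\ast}\le\theta_{i^\ast}$ unless $i^\ast$ is the last inspected box. Indeed, if $i^\ast<\tau$, then $v_{i^\ast}\le M_{\tau-1}<\theta_\tau\le\theta_{i^\ast}$. In that case $\kappa_{i^\ast}=v_{i^\ast}\ge v_j\ge\kappa_j$.

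If $i^\ast=\tau$, then $\kappa_{i^\ast}=\min\{v_\tau,\theta_\tau\}$. For $j<\tau$ we have $\kappa_j\le v_j\le v_\tau$. Also $v_j\le M_{\tau-1}<\theta_\tau$, so $\kappa_j\le v_j<\theta_\tau$. Hence $\kappa_j\le\kappa_{i^\ast}$.

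\emph{Uninspected boxes.} Take $j>\tau$. Then $\tau<m$, so the stopping rule gives $v_{i^\ast}=M_\tau\ge\theta_{\tau+1}\ge\theta_j\ge\kappa_j$.

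This is the step where care is needed: we must also show $\kappa_{i^\ast}\ge\theta_j$ itself, not merely $v_{i^\ast}\ge\theta_j$. If $i^\ast=\tau$, then $\theta_{i^\ast}=\theta_\tau\ge\theta_j$, so $\kappa_{i^\ast}\ge\theta_j$. If $i^\ast<\tau$, then $\theta_{i^\ast}\ge\theta_\tau\ge\theta_j$. Combined with $v_{i^\ast}\ge\theta_{\tau+1}\ge\theta_j$, this gives $\kappa_{i^\ast}\ge\theta_j\ge\kappa_j$.

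Nonnegativity is immediate. If the algorithm takes a box, its value and its index are both nonnegative under the model with values in $[0,1]$; otherwise, one adds a zero-value outside option as a dummy box.

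The only delicate point is the monotonicity bookkeeping between the running maximum and the sorted indices; no property of the index equation beyond the sorting is needed. Optimality of the Index Algorithm itself is due to Weitzman and is not required here.
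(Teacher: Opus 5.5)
The paper does not prove this proposition; it cites it from \citet{kleinberg2016descending}, so there is no in-paper argument to compare against. Your sample-path argument is correct and self-contained. The three facts you extract from the stopping rule and the sorted order are: $v_{i^\ast}\le\theta_{i^\ast}$ when $i^\ast<\tau$; $v_j\le\theta_\tau$ for every inspected $j<\tau$; and $v_{i^\ast}\ge\theta_{\tau+1}$ when $\tau<m$. Together they give $\kappa_{i^\ast}\ge\kappa_j$ for all $j$. Moreover $\kappa_{i^\ast}\ge 0$, because $v_{i^\ast}\in[0,1]$ and $\theta_{i^\ast}\ge 0$.

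A more conceptual alternative, in the spirit of the cited work, is a descending clock. Lower a clock from the top, open box $j$ when the clock reaches $\theta_j$, and say box $j$ fires when the clock reaches $\kappa_j$: immediately on opening if $v_j\ge\theta_j$, later at $v_j$ otherwise. The index rule takes exactly the first box to fire. That view explains the $v\ge\theta$ versus $v<\theta$ split at a glance. Your bookkeeping is fully explicit and makes clear that only the sorting and the stopping rule are used, not the index equation.

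Three small tightenings:
\begin{itemize}
\item Every inequality you use holds in weak form, so no tie-breaking convention is needed. The argument works for any choice of $\arg\max$ and for either a strict or weak reading of ``exceeds'' in the stopping rule.
\item The case split for uninspected boxes is unnecessary, since $i^\ast\le\tau<j$ already gives $\theta_{i^\ast}\ge\theta_{\tau}\ge\theta_j$.
\item You should state the case $m=0$ explicitly: nothing is inspected or taken, and every $\kappa_j\le\theta_j<0$.
\end{itemize}
The dummy-box remark is not needed here, since all values lie in $[0,1]$.
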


By Proposition~\ref{prop:amortized}, for a seller in a competitive search-good market, the competition from rival sellers can be summarized by one number: the highest amortized value from these sellers.
This number is abstracted as $o$, the value of the outside option, in our setup in Section~\ref{subsec:setup}.
Our seller in question is chosen by the buyer if and only if his own amortized value is greater than~$o$.

When there is signaling, the buyer calculates the index using the posterior distribution, and chooses the sender if{f} his posterior amortized value is greater than~$o$.

\subsection{The Sender's Persuasion Problem}
\label{subsec:sender-persuasion-problem}
Given the receiver's optimal search rule, i.e., the Index Algorithm, the sender can reshape the receiver's belief about the invisible item and affect the corresponding index computed by the receiver through information disclosure.
Now, we formalize the sender's persuasion problem.

A signaling scheme is a joint distribution over realized values and signals. 
Under the Index Algorithm, both the sender's and the receiver’s utilities are determined by the realized values and the posterior indices. 
Since signals inducing the same posterior index are payoff-equivalent, we identify information strategies with joint distributions $G$ over the value-index pair $(v,\theta)$.

\begin{definition}[Feasible joint distribution]
\label{def:joint-distribution}
    A joint distribution $G\in\Delta([0,1]\times[-c,1-c])$ is feasible if its value marginal is $F$ and, for $G_\theta$-almost every $\theta\in[-c,1-c]$,\footnote{For such a joint distribution $G$, we write $G_\theta$ for its marginal distribution over indices.
    We write $G_{\theta\mid v}$ and $G_{v\mid\theta}$ for the conditional distributions of $\theta$ given $v$ and of $v$ given $\theta$, respectively.
}
$$
\int_0^1 \big((v-\theta)_+-c\big)\dd G(v\mid \theta)=0~.
$$
We denote the set of feasible joint distributions by $\mathcal G(F,c)$, which is convex and weakly compact.\footnote{See \Cref{pf:convex-compact} for the formal statement and the proof.}
\end{definition}

Our original game is payoff-equivalent to the following reduced game:
\begin{enumerate}
    \item The sender chooses a feasible joint distribution $G\in \mathcal G(F,c)$ as his strategy.
    \item The sender's value-index pair $(v, \theta)$ is drawn from~$G$;  
    its amortized value $\kappa$ is $\min\{v, \theta\}$.
    \item 
    The outside option $o$ is drawn from $H$.
    If $\kappa \ge o$, the receiver takes the sender's item; otherwise, she takes the outside option. 
\end{enumerate}

For the sender's utility, he gains utility $1$ if chosen by the receiver, and utility $0$ otherwise.
For a realized amortized value $\kappa\in[-c,1-c]$, the sender's interim utility is the probability of the realized outside option $o$ being smaller than $\kappa$, 
so we define
\begin{equation*}
    u(\kappa)\triangleq
    \begin{cases}
        0\quad & \mathrm{if\ }\kappa \in[-c,0)~,\\
        H(\kappa)\quad & \mathrm{if\ }\kappa \in[0,1-c]~.
    \end{cases}
\end{equation*}
The sender's persuasion problem is to choose an information strategy $G$ among the joint distribution space $\mathcal G(F,c)$ to maximize the expected utility:
\begin{equation}
\label{eq:best-response}
    \max_{G\in\mathcal G(F,c)} U(G)\triangleq \mathbb E_{(v,\theta)\sim G} \left[ u\left(\min\{v,\theta\}\right) \right]~.\tag{$\mathcal{P}_{\cc{BR}}$}
\end{equation}
It is an infinite-dimensional linear program (LP) over joint value-index distributions, subject to a continuum of feasibility constraints. 
Consequently, neither standard finite-dimensional LP methods nor conventional one-dimensional persuasion techniques apply directly.

\section{Optimal Strategy Verification}
\label{sec:verification}

In this section, we develop a dual-certificate approach to the sender's problem.
In \Cref{subsec:duality}, we show the existence of optimal solutions for both the primal and dual problems (\Cref{thm:optimal-solution-exists-in-dual-new}), and derive a version of complementary slackness conditions (\Cref{thm:primal-dual-optimal}).
In \Cref{subsec:verification}, we develop a method for deciding if a given primal solution is optimal (\Cref{thm:constructed-lambda-mu-optimal}); this is by constructing a dual solution (\Cref{alg:lambda,alg:mu}).
Finally, we apply the framework to a competitive extension and certify a symmetric equilibrium for two senders with a common convex prior (\Cref{thm:symmetric-equilibrium-convex-prior-big-c}).

\subsection{Duality and Optimality Certificates}
\label{subsec:duality}
We formulate the sender's problem \eqref{eq:best-response} as the primal problem. 
Define $p(v,\theta)\triangleq u(\min\{v,\theta\})$ and $q(v,\theta)\triangleq (v-\theta)_+-c$ for each value-index pair $(v,\theta)$.
To derive the dual problem, we introduce a dual variable $\lambda(v)$ for the value-marginal constraint at each $v\in[0,1]$, and a dual variable $\mu(\theta)$ for the index constraint at each $\theta\in[-c,1-c]$.
Then, 
the dual problem is
\begin{align}
\label{eq:dual-obj}
\tag{$\mathcal{D}_{\cc{BR}}$}
\min_{\lambda,\ \mu} \quad & \int_0^1 \lambda(v)f(v)\dd v\\
\text{s.t.} \quad & \label{eq:dual-con1}\lambda(v)+\mu(\theta)q(v,\theta)\ge p(v,\theta)~,\quad \forall \ (v,\theta)
\end{align}

For continuous interim utility function $u(\kappa)$, 
the primal must have an optimal solution $G^*$ by the weak compactness of the feasible space $\mathcal{G}(F,c)$ (\Cref{lem:convex-compact}).
We next show that, under mild assumptions, the dual also has an optimal solution, with additional well-behaved properties.

\begin{lemma}[Existence of optimal dual solutions]
\label{thm:optimal-solution-exists-in-dual-new}
If function $u(\kappa)$ is $L$-Lipschitz continuous for some $L>0$,\footnote{
Let $\mathcal X$ be a subset of a Euclidean space, a function $h:\mathcal X\to\mathbb R$ is $L$-Lipschitz continuous for some $L>0$ if
$|h(x)-h(y)|\le L\|x-y\|$ for all $x,y\in\mathcal X$.
}
the dual problem~\eqref{eq:dual-obj} has an optimal solution $(\lambda^*,\mu^*)$ such that (i) $\mu^*(\theta)\in[-L,0]$ for each $\theta\in[-c,1-c]$; and (ii) $\lambda^*$ is non-decreasing and continuous over $[0,1]$.
\end{lemma}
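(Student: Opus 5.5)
The plan is to reduce the dual to a problem over $\mu$ alone, and then run a compactness argument on a minimizing sequence. For any $\mu$, the smallest $\lambda$ satisfying \eqref{eq:dual-con1} is the pointwise envelope
\[
\lambda_\mu(v)\triangleq \sup_{\theta\in[-c,1-c]}\big[p(v,\theta)-\mu(\theta)q(v,\theta)\big].
\]
Since $f\ge 0$, replacing $\lambda$ by $\lambda_\mu$ never increases the objective, so I may assume $\lambda=\lambda_\mu$ throughout.

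\textbf{Step 1 (main obstacle): truncating $\mu$ to $[-L,0]$.} I would show that replacing $\mu$ by a truncated $\tilde\mu(\theta)\in[-L,0]$ can only lower $\lambda_\mu$ pointwise. The constraint at a fixed $\theta$ matters only through the point where $p(\cdot,\theta)-\mu(\theta)q(\cdot,\theta)$ is binding.

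For $\mu(\theta)>0$, the term $-\mu q$ is positive exactly where $v<\theta+c$. There $p(v,\theta)=u(\min\{v,\theta\})\le u(v)$. At $\theta'=v-c$ we have $q(v,\theta')=0$, so every feasible $\lambda$ already satisfies $\lambda(v)\ge u(v-c)$, and letting $\theta'$ range over other indices gives further lower bounds of this kind. I would try to show that the constraint at $\theta$ with $\mu(\theta)=0$ is dominated by these bounds together with the constraints at neighbouring indices. The $L$-Lipschitz continuity of $u$ is what should make that domination work.

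For $\mu(\theta)<-L$, the term $|\mu|q$ grows faster than slope $L$ for $v>\theta+c$. Meanwhile $p(\cdot,\theta)$ is constant there, so I would argue that a slope of $L$ already suffices to dominate $u$. I expect this step to be the delicate one: if direct pointwise domination fails, I would instead compare objective values using weak duality against the primal optimum $G^*$.

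\textbf{Step 2: monotonicity and continuity.} Once $\mu\in[-L,0]$, each function $v\mapsto p(v,\theta)-\mu(\theta)q(v,\theta)$ is non-decreasing in $v$. This is because $p=u(\min\{v,\theta\})$ is non-decreasing, and $-\mu q=|\mu|\big((v-\theta)_+-c\big)$ is non-decreasing. Each such function is also $2L$-Lipschitz in $v$: $u$ contributes $L$ and $|\mu|\le L$ contributes $L$. A supremum of uniformly Lipschitz non-decreasing functions is Lipschitz and non-decreasing, so $\lambda_\mu$ is non-decreasing and continuous. It is moreover uniformly bounded, since $|p|\le 1$ and $|q|\le 1$.

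\textbf{Step 3: attainment.} By weak duality with any feasible $G$, the dual value is finite. Take a minimizing sequence $(\lambda_n,\mu_n)$ with $\mu_n\in[-L,0]$ and $\lambda_n=\lambda_{\mu_n}$. By Arzelà--Ascoli, after passing to a subsequence $\lambda_n\to\lambda^*$ uniformly, and $\lambda^*$ inherits monotonicity and $2L$-Lipschitz continuity. The objective converges by uniform convergence.

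To define $\mu^*$, fix $\theta$ and pick any limit point $\mu^*(\theta)\in[-L,0]$ of the bounded sequence $(\mu_n(\theta))_n$. The choice of subsequence may depend on $\theta$, which is harmless because constraint \eqref{eq:dual-con1} is imposed separately for each $\theta$. Passing to the limit along that subsequence in $\lambda_n(v)+\mu_n(\theta)q(v,\theta)\ge p(v,\theta)$ gives feasibility of $(\lambda^*,\mu^*)$ at every $(v,\theta)$. Hence $(\lambda^*,\mu^*)$ is optimal with properties (i) and (ii).
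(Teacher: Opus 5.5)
Your overall route is the paper's. You truncate $\mu$ to $[-L,0]$, observe that the envelope $\lambda_\mu$ is then non-decreasing and continuous, and extract a limit from a minimizing sequence. Steps 2 and 3 are sound and in fact more careful than the paper's own argument. The paper infers continuity of a supremum of continuous functions without the equi-Lipschitz bound you supply, and it passes to limits inside max/min bounds, which needs the uniform convergence that Arzel\`a--Ascoli gives you. One small correction: take the limit point $\mu^*(\theta)$ along the Arzel\`a--Ascoli subsequence, not along the full sequence $(\mu_n(\theta))_n$, since $\lambda_n(v)$ need not converge otherwise.

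The gap is Step 1, which you leave as a plan whose heuristics examine the wrong regions. Replacing $\mu(\theta)$ by $\tilde\mu(\theta)$ changes $p(v,\theta)-\mu(\theta)q(v,\theta)$ by $(\mu(\theta)-\tilde\mu(\theta))\,q(v,\theta)$. So the truncated constraint can only become more demanding where this quantity is positive:
\begin{itemize}
\item at $v>\theta+c$ when $\mu(\theta)>0$ is set to $0$;
\item at $v<\theta+c$ when $\mu(\theta)<-L$ is set to $-L$.
\end{itemize}
The regions you analyse ($v<\theta+c$ in the first case, $v>\theta+c$ in the second) are exactly where truncation lowers the function, so the old constraint at $\theta$ already dominates there. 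Conversely, your plan to dominate the $\mu(\theta)=0$ constraint by constraints at \emph{other} indices does not work at $v<\theta$. There the constraint reads $\lambda(v)\ge u(v)$, which those other constraints need not imply; it is the old constraint at $\theta$ itself that covers it.

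In the regions that do matter, a fact you already recorded closes the argument. Every feasible pair satisfies $\lambda(v)\ge u(v-c)$, and truncation does not touch this bound because $q(v,v-c)=0$.
\begin{itemize}
\item For $\mu(\theta)>0$ and $v>\theta+c$, the new value is $u(\theta)\le u(v-c)$. This uses only monotonicity of $u$, not the Lipschitz property.
\item For $\mu(\theta)<-L$ and $v<\theta<\theta+c$, the new value is $u(v)-Lc\le u(v-c)$, by $L$-Lipschitz continuity.
\item For $\mu(\theta)<-L$ and $\theta\le v<\theta+c$, the new value is $u(\theta)-L(\theta+c-v)\le u(v-c)$, again by $L$-Lipschitz continuity.
\end{itemize}
Hence $\tilde\mu=\min\{0,\max\{\mu,-L\}\}$ gives $\lambda_{\tilde\mu}\le\lambda_\mu$ pointwise, and the weak-duality fallback is unnecessary. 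This is precisely the paper's argument for the lower bound: when $\mu(\theta)\le -L$, the function $p(\cdot,\theta)-\mu(\theta)q(\cdot,\theta)$ lies below $u(\cdot-c)$ on $[0,\theta+c]$.
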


Such properties make the subsequent construction of dual variables tractable.
Complementary slackness also holds, and can be used to verify whether a given pair of feasible solutions is optimal both to the primal and dual problems.

\begin{lemma}[Complementary slackness]
\label{thm:primal-dual-optimal}
    Given a feasible primal solution $G^*$ and a feasible dual solution $(\lambda^*,\mu^*)$, $G^*$ and $(\lambda^*,\mu^*)$ are both optimal to the primal and dual if and only if
    \begin{equation}
    \label{eq:primal-dual-optimal}
    \int_{[0,1]\times[-c,1-c]}
    \left[
    \lambda^*(v)+\mu^*(\theta)q(v,\theta)-p(v,\theta)
    \right]
    \dd G^*(v,\theta)
    =
    0~.
    \end{equation}
\end{lemma}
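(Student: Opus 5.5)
The plan is a weak-duality computation followed by a zero-duality-gap argument. Fix a feasible primal $G^*$ and a feasible dual $(\lambda^*,\mu^*)$.

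\textbf{Step 1: the weak-duality identity.} Integrate the dual constraint~\eqref{eq:dual-con1} against $G^*$. Since the integrand $\lambda^*(v)+\mu^*(\theta)q(v,\theta)-p(v,\theta)$ is pointwise nonnegative, the integral in~\eqref{eq:primal-dual-optimal} is nonnegative. Now split that integral into three terms and use primal feasibility on each.
\begin{itemize}
\item The value marginal of $G^*$ is $F$, so $\int\lambda^*(v)\dd G^*=\int_0^1\lambda^*(v)f(v)\dd v$, which is the dual objective.
\item Disintegrate along $G^*_\theta$: $\int\mu^*(\theta)q(v,\theta)\dd G^*=\int\mu^*(\theta)\big[\int q(v,\theta)\dd G^*(v\mid\theta)\big]\dd G^*_\theta(\theta)$. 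The inner integral vanishes for $G^*_\theta$-a.e.\ $\theta$ by Definition~\ref{def:joint-distribution}, so this term is $0$.
\item The last term is exactly $U(G^*)$.
\end{itemize}
Hence the left side of~\eqref{eq:primal-dual-optimal} equals $D(\lambda^*,\mu^*)-U(G^*)\ge 0$. This holds for every feasible pair, which is weak duality. Some care is needed for measurability and integrability. I would require $\lambda^*$ to be $F$-integrable and $\mu^*$ measurable and bounded, or else allow the dual value to be $+\infty$. This is harmless, and in the regime of \Cref{thm:optimal-solution-exists-in-dual-new} we have $\mu^*\in[-L,0]$ and $\lambda^*$ continuous. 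Note also that $q$ is bounded on the compact domain, so Fubini applies.

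\textbf{Step 2: the ``if'' direction.} If~\eqref{eq:primal-dual-optimal} holds, then $U(G^*)=D(\lambda^*,\mu^*)$. By weak duality, every feasible $G$ satisfies $U(G)\le D(\lambda^*,\mu^*)=U(G^*)$. Symmetrically, every feasible dual pair satisfies $D(\lambda,\mu)\ge U(G^*)=D(\lambda^*,\mu^*)$. So both solutions are optimal.

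\textbf{Step 3: the ``only if'' direction.} This needs strong duality, $\max U=\min D$. Then optimality of both solutions gives $U(G^*)=D(\lambda^*,\mu^*)$, and the identity from Step 1 yields~\eqref{eq:primal-dual-optimal}. This is the main obstacle, since weak duality alone does not close the gap in an infinite-dimensional LP. I would establish strong duality in whatever generality the lemma intends, most likely under the Lipschitz hypothesis of \Cref{thm:optimal-solution-exists-in-dual-new}.

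The first attempt would be to reuse that lemma's proof, which presumably constructs $(\lambda^*,\mu^*)$ as a limit of finite discretizations. On a finite grid of values and indices, finite LP strong duality holds. The bounds $\mu\in[-L,0]$ and the monotonicity and continuity of $\lambda$ give equicontinuity, so Arzel\`a--Ascoli yields a convergent subsequence. Meanwhile, the primal discretization optima converge to the primal optimum by weak compactness (\Cref{lem:convex-compact}) and continuity of $p$ and $q$. Passing to the limit gives a feasible dual pair with value equal to the primal optimum.

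A fallback is a Lagrangian/minimax argument. Write $\sup_G\inf_{\lambda,\mu}$ over $G$ ranging among measures with the correct value marginal, which form a weakly compact convex set, and apply Sion's minimax theorem. The Lagrangian is bilinear, and the compactness on the primal side is exactly what Sion requires.
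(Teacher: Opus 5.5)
Your Steps 1 and 2 are the paper's sufficiency argument in substance. The value marginal turns $\int\lambda^*\,\mathrm{d}G^*$ into the dual objective, the row-wise index constraint kills the $\mu^*q$ term via Fubini, and dual feasibility gives the sign. Your Step 3 has the same structure as the paper's necessity argument. The difference is that the paper simply writes ``since the strong duality holds''. Its only support is the remark after \eqref{eq:proof-br-sufficient-6}, which presupposes a pair already satisfying \eqref{eq:primal-dual-optimal}, so it is circular for the ``only if'' direction. \Cref{thm:optimal-solution-exists-in-dual-new} only addresses dual attainment and never compares the dual value with $\max U$. So the obstacle you flag is real and is not resolved in the paper. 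What remains is that each of your two sketches needs one repair.

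First, you only need a zero duality gap, $\inf D=\max U$, not dual attainment, since an optimal $(\lambda^*,\mu^*)$ has $D(\lambda^*,\mu^*)=\inf D$.

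\emph{Discretization route.} Arzel\`a--Ascoli controls $\lambda$ but nothing controls $\mu$. Also, the finite-LP dual enforces the constraint only at grid points, so passing to a limit does not by itself give a feasible pair. It is easier not to take limits at all:
\begin{enumerate}
\item Take an optimal dual of \eqref{eq:best-response-delta-epsilon}, with $\mu_r\in[-L,0]$ without loss. This uses $\widehat v_i-c\in\Gamma$ and that $u$ is nondecreasing and $L$-Lipschitz.
\item Extend $\mu$ to be constant on the nearest-grid-point cell of each grid index, and set $\lambda(v)=\sup_\theta\{p(v,\theta)-\mu(\theta)q(v,\theta)\}$. This pair is feasible for the continuum dual.
\item For each fixed constant $m\in[-L,0]$, the map $(v,\theta)\mapsto p(v,\theta)-m\,q(v,\theta)$ is $2L$-Lipschitz in each coordinate. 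Comparing each $(v,\theta)$ with its nearest grid point therefore gives $\lambda\le\lambda_i+O(L\delta)$ on $I_i$.
\item Hence the dual value is at most $\mathsf{OPT}_{\delta,\epsilon}+O(L\delta)\le\mathsf{OPT}+O(L\delta)$, by the bounds in the proof of \Cref{thm:fptas}. Let $\delta\to0$.
\end{enumerate}

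\emph{Sion route.} The map $G\mapsto\int(p-\mu q)\,\mathrm{d}G$ is weakly upper semicontinuous only when $\mu$ is, say, continuous (with $p$ continuous under the Lipschitz hypothesis). For merely bounded measurable $\mu$, Sion does not apply. Restrict the infimum to $\mu\in C([-c,1-c])$:
\begin{enumerate}
\item The inner infimum is still $U(G)$ on $\mathcal G(F,c)$ and $-\infty$ off it. This is because a signed measure annihilating all continuous test functions vanishes, which is exactly how \Cref{lem:convex-compact} encodes the row constraint.
\item The restricted min--max value is at least the unrestricted dual value.
\item So Sion together with weak duality gives $\max U=\inf D$.
\end{enumerate}
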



The lemma links primal supports with binding dual constraints.
Dual feasibility implies that the slack $\lambda^*(v)+\mu^*(\theta)q(v,\theta)-p(v,\theta)$ is nonnegative everywhere.
Hence, Condition~\eqref{eq:primal-dual-optimal} implies that this slack must vanish $G^*$-almost everywhere.
Equivalently, for $G^*$-almost every pair $(v,\theta)$,
\begin{equation}
    \lambda^*(v)=\sup_{\theta'\in[-c,1-c]}\left\{p(v,\theta')-\mu^*(\theta')q(v,\theta')\right\}
    =p(v,\theta)-\mu^*(\theta)q(v,\theta)~,
\end{equation}
where $p(v,\theta)$ is the payoff of the value-index pair $(v,\theta)$, and $-\mu^*(\theta)q(v,\theta)$ captures its spillover effect on the other values with the same index.
Thus, $p(v,\theta)-\mu^*(\theta)q(v,\theta)$
can be viewed as the amortized payoff generated by $(v,\theta)$, while $\lambda^*(v)$ is the maximal amortized payoff among all indices.
The optimal strategy assigns mass only to indices that maximize this amortized payoff.

\subsection{Optimality Verification Procedures}
\label{subsec:verification}
Using techniques above, we develop a method to verify the optimality of a given primal solution.
Suppose the prior $F$ is atomless and strictly increasing over $[0,1]$.


We first derive a necessary condition for optimality to rule out some candidates.
Recall that $u(\kappa)=0$ whenever $\kappa<0$.
Thus low values within $[0,c]$ can generate positive utility only if they are pooled with higher values to induce nonnegative amortized values.
However, values sufficiently close to zero are not worth pooling: raising them to nonnegative amortized values requires too much distortion of higher values relative to the payoff they can generate.
The following lemma formalizes this observation through a threshold value.

\begin{lemma}[Positive threshold]
\label{lem:no-threshold-value}
For any feasible strategy $G\in\mathcal G(F,c)$, define
\begin{align*}
T\triangleq
\Bigl\{v\in[0,c]~\Bigm|~
&\supp(G_{v\mid\theta})\cap[0,v)=\emptyset
\ \  \mathrm{for\ all\ }\theta\in(v-c,1-c]~,\\
&\supp(G_{v\mid\theta})\cap(v,1]=\emptyset
\ \ \mathrm{for\ all\ }\theta\in[-c,v-c)
\Bigr\}~.
\end{align*}
The threshold $\underline v\triangleq\sup T$ if $T\neq \emptyset$; and $\underline v\triangleq 0$ otherwise.
If $G^*$ is optimal, then $\underline v>0$.
\end{lemma}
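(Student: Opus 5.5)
The plan is to argue by contradiction via a local perturbation that ``peels off'' the lowest values. Two preliminary observations come first. Since $u$ is continuous and $u\equiv 0$ on $[-c,0)$, we have $H(0)=0$ and $u(\kappa)\le L\kappa$ for $\kappa\ge 0$ (Lipschitz, as in Lemma~\ref{thm:optimal-solution-exists-in-dual-new}). Also $0\in T$ trivially, so the claim is that $T$ contains some $\bar v>0$. Failure of $\bar v\in T$ can happen in two ways:
\begin{enumerate}
\item[(a)] a value in $[0,\bar v)$ is pooled with an index $\theta>\bar v-c$;
\item[(b)] a value in $(\bar v,1]$ sits in a pool with $\theta<\bar v-c$.
\end{enumerate}
Supposing $\underline v=0$, for every small $\varepsilon>0$ one of (a) or (b) occurs at level $\varepsilon$ on a set of positive $G^*$-mass; here I use that $F$ is atomless and strictly increasing, so the relevant sets of values have positive mass.

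\textbf{Case (a), the main case.} Take the positive-mass family of pools with index $\theta>\varepsilon-c$ that contain values $v<\varepsilon$. Modify $G^*$ by removing these low values and fully revealing them, i.e.\ giving each the index $v-c<0$. This is feasible because a degenerate posterior at $v$ has index $v-c$.

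\emph{Loss.} Each removed $(v,\theta)$ had amortized value $\min\{v,\theta\}\le v<\varepsilon$. So the total loss is at most $L\varepsilon\, m$, where $m$ is the removed mass.

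\emph{Gain.} In each affected pool the values below $\theta$ contribute nothing to $\int (v-\theta)_+$. Removing them therefore renormalizes the remaining conditional distribution upward, and the index solving the index equation strictly increases. Quantitatively, if a pool of conditional mass $1$ loses mass $\delta$ of values below $\theta$, then $\int(v-\theta)_+\,\mathrm dG=c(1-\delta)$ on the rest. The new index $\theta'$ satisfies $\int_{v>\theta'}(v-\theta')\,\mathrm dG=c(1-\delta)$, and since $\frac{d}{d\theta}\int(v-\theta)_+=-\Pr[v>\theta]$, this gives $\theta'-\theta\approx c\delta/\Pr[v>\theta\mid\text{pool}]$, an increase of first order in $\delta$. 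The remaining values above $\theta'$ have amortized value $\theta'$, so if $\theta\ge 0$ they gain roughly $H(\theta')-H(\theta)$.

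The gain must be compared with the loss $O(\varepsilon\delta)$, which is second order once $\varepsilon\to 0$. For this I would use the dual rather than raw $H$. Take optimal $(\lambda^*,\mu^*)$ from Lemma~\ref{thm:optimal-solution-exists-in-dual-new}. Complementary slackness (Lemma~\ref{thm:primal-dual-optimal}) forces, at a low value $v$ pooled with $\theta$,
\[
\lambda^*(v)=p(v,\theta)-\mu^*(\theta)q(v,\theta)=p(v,\theta)-\mu^*(\theta)\,(-c)\le L\varepsilon+c\,\mu^*(\theta).
\]
The alternative index $\theta'=v-c$ gives $\lambda^*(v)\ge 0$, since $p(v,v-c)=0$ and $q(v,v-c)=0$. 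Hence $\mu^*(\theta)\ge -L\varepsilon/c$ on pools that receive low values. So the ``shadow price'' of such a pool tends to $0$ as $\varepsilon\to 0$.

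I would then show this is incompatible with optimality at the higher values $v>\theta$ in the same pool. Their slackness reads $\lambda^*(v)=u(\theta)-\mu^*(\theta)(v-\theta-c)$. Comparing with nearby indices $\theta+\eta$ through dual feasibility, and using continuity and monotonicity of $\lambda^*$, should force $u$ to be locally flat at $\theta$. That contradicts the strict gain $H(\theta')>H(\theta)$ unless the pool has $\theta<0$. If $\theta<0$, the pool yields zero utility anyway, and that situation reduces to case (b).

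\textbf{Case (b)} (pools with negative index containing a high value). Here the pool earns $0$. I would split it: reveal the values above $c$ separately, giving positive amortized value $v-c$ and positive utility where $H>0$. The remaining part is rearranged so that it is still feasible, e.g.\ by giving its values full revelation with index $v-c$. This gives a weak improvement, and a strict one whenever $H(v-c)>0$ on a positive-mass set.

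\textbf{Main obstacle.} The hard step is making the first-order comparison in case (a) rigorous, i.e.\ showing the index gain genuinely dominates the $O(\varepsilon)$ loss. The danger is that $H$ may be flat near the relevant indices, or that the remaining pools sit at negative indices, so that the improvement is only weak. I would try to close this by working with the dual characterization above, so that the contradiction comes from complementary slackness, and would handle degenerate flat-$H$ situations by the mass and continuity arguments sketched in case (b).
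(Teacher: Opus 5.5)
\textbf{The gap is at the end of case (a), and under the stated hypotheses it cannot be closed.} Your dual estimate is correct as far as it goes. Slackness at a value $v<\varepsilon$ pooled at $\theta>v$ gives $-\mu^*(\theta)\le L\varepsilon/c$. Combined with slackness and $\lambda^*(w)\ge u(w-c)$ at a value $w>\theta+c$ of the same pool, it yields $u(w-c)-u(\theta)\le (L\varepsilon/c)(w-\theta-c)$.

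Near-flatness of $u$ on $[\theta,w-c]$ is not a contradiction, though. If $u$ is flat there, the ``strict gain $H(\theta')>H(\theta)$'' you invoke is zero. Meanwhile the low values earn $u(v)>0$ at no cost, so pooling them is genuinely optimal. Case (b) and pools at negative indices have a related problem: the offending values can all lie below $c$, so they yield only weak improvements. A weak improvement cannot contradict the optimality of a \emph{given} $G^*$.

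In fact the lemma is false without a further assumption on $u$. Take $F$ uniform on $[0,1]$, $c=0.02$, and $H(\kappa)=12\kappa^2-16\kappa^3$ on $[0,\tfrac12]$ with $H\equiv1$ on $[\tfrac12,1]$. Then $u$ is $C^1$ and $3$-Lipschitz.
\begin{itemize}
\item Every index is at most $1-c$, so every feasible $G$ satisfies $U(G)\le\int_0^1 u(\min\{v,1-c\})\dd v$.
\item No information has index $\theta_0=1-\sqrt{2c}=0.8$, and $u(\min\{v,0.8\})=u(\min\{v,0.98\})$ for all $v$. So no information attains the bound and is optimal.
\item Hence every optimal $G^*$ attains the bound. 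Since $u$ vanishes on $[-c,0)$ and is strictly increasing on $[0,\tfrac12]$, this forces $\theta\ge v$ for $G^*$-a.e.\ pair with $v\in(0,\tfrac12)$.
\item Therefore $G^*\bigl([0,v')\times(v'-c,1-c]\bigr)\ge v'>0$ for every $v'\in(0,c]$, so $T=\{0\}$ and $\underline v=0$ for \emph{every} optimal $G^*$.
\end{itemize}

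\textbf{The paper's proof does not handle this either.} It is a one-line ``without loss of generality'' remark that pooling the single value $0$ to a positive index gains nothing. That concerns a null set when $F$ is atomless. It also overlooks that positive values near $0$ earn $u(v)>0$ when pooled: in the example, removing $(0,\theta_0)$ from the support strictly lowers utility. So the obstacle you flagged is real.

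\textbf{What your method can prove.} Assume $u(y)-u(x)\ge\gamma(y-x)$ for $0\le x\le y\le 1-c$, for some $\gamma>0$, and conclude that \emph{some} optimal $G^*$ has $\underline v>0$.
\begin{itemize}
\item Dual route: your inequality then gives $\gamma\le L\varepsilon/c$, which is impossible for $\varepsilon<\gamma c/L$. So small values are a.e.\ not pooled at nonnegative indices.
\item Primal route: suppose a pool at index $\theta\ge0$ loses conditional mass $\delta$ of values below $\min\{\theta,\varepsilon\}$, and $\theta'$ is the new index. The identity $\min\{w,\theta'\}-\min\{w,\theta\}=(w-\theta)_+-(w-\theta')_+$ shows the remaining values gain exactly $c\delta$ in total amortized value per unit pool mass. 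That is at least $\gamma c\delta$ in utility, against a loss of at most $L\varepsilon\delta$. Both terms are first order in $\delta$, not first versus second order as you wrote; the real comparison is $L\varepsilon$ against $\gamma c$.
\item What remains carries zero utility: values below $c$ in negative-index pools, and your case (b). Fully revealing those pools removes them. This is the weak-improvement normalization you describe, and it is why the conclusion can only be ``some optimal $G^*$''.
\end{itemize}
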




For candidate strategies that survive, we propose \Cref{alg:lambda,alg:mu}, which construct dual variables $\lambda$ and $\mu$ respectively (\Cref{alg:mu} is in \Cref{pf:alg-2}).
If the two algorithms terminate successfully, the input and their are both optimal for the respective programs.

\begin{theorem}[Optimality verification]
\label{thm:constructed-lambda-mu-optimal}
Assume that:
(i) $u(\kappa)$ is differentiable and Lipschitz continuous everywhere; and 
(ii) for every $(v,\theta)\in\supp(G)$, any open neighborhood of $(v,\theta)$ has strictly positive probability under $G$.
If $\underline{v}>0$ and a feasible solution $(\lambda,\mu)$ can be constructed by \Cref{alg:lambda,alg:mu}, then $G$ and $(\lambda,\mu)$ are optimal for the primal and dual programs, respectively.
\end{theorem}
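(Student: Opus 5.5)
The plan is to reduce the theorem to the complementary slackness criterion of \Cref{thm:primal-dual-optimal}. Two things are needed: dual feasibility of the constructed pair $(\lambda,\mu)$, meaning $\lambda(v)+\mu(\theta)q(v,\theta)\ge p(v,\theta)$ for every $(v,\theta)\in[0,1]\times[-c,1-c]$, and condition~\eqref{eq:primal-dual-optimal} for the given $G$. The phrase ``a feasible solution $(\lambda,\mu)$ can be constructed'' gives the first; what remains is to show that the construction forces the slack
\[
s(v,\theta)\triangleq\lambda(v)+\mu(\theta)q(v,\theta)-p(v,\theta)
\]
to vanish $G$-almost everywhere. Weak duality is just integration: against any feasible $G'$, integrate the dual constraint and use that the value marginal is $F$ and that $\int q\,\mathrm{d}G'(v\mid\theta)=0$. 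This gives $U(G')\le\int\lambda\,\mathrm{d}F$. Once the slack integrates to zero under $G$, equality holds, and both $G$ and $(\lambda,\mu)$ are optimal.

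\emph{Step 1: how the construction is defined.} I expect \Cref{alg:lambda} to define $\lambda$ on the projection of $\supp(G)$ onto values by $\lambda(v)=p(v,\theta)-\mu(\theta)q(v,\theta)$ for $(v,\theta)\in\supp(G)$. I expect \Cref{alg:mu} to pick $\mu(\theta)$ on the index support so that these definitions agree whenever one value $v$ is paired with several indices, and so that $\theta'\mapsto p(v,\theta')-\mu(\theta')q(v,\theta')$ is stationary at support points. Assumption~(i) makes $p$ differentiable in $\theta$ away from $\theta=v$ and from the kink of $u$ at $0$, so the first-order condition in $\theta$ yields an ODE or explicit formula for $\mu$ on the support. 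Off the support, $\mu$ is extended, for instance by the lower envelope with values in $[-L,0]$ as in \Cref{thm:optimal-solution-exists-in-dual-new}, so that feasibility can be checked by the algorithm.

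\emph{Step 2: zero slack on the support.} By construction $s(v,\theta)=0$ at every $(v,\theta)\in\supp(G)$. Since $G(\supp(G)^c)=0$, the integral in~\eqref{eq:primal-dual-optimal} vanishes. Assumption~(ii) is needed here: the algorithm reads off $\lambda$ and $\mu$ from conditional supports $\supp(G_{v\mid\theta})$, which are only determined up to null sets. Without (ii), isolated support points of $G$-measure zero could be assigned inconsistent values. With (ii), every support point is a limit of positive-mass neighbourhoods. Continuity of $p$, $q$ and of the constructed $\lambda,\mu$ (via (i) and the Lipschitz bound) then shows the equality holds on the closed support, not just almost everywhere.

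\emph{Step 3: the low-value region.} The role of $\underline v>0$ is to handle $v\in[0,\underline v)$, where values are separated or sit below the pooling region and $p=0$. There I expect the algorithm to set $\lambda(v)=0$, or $\lambda(v)=-\mu(v-c)c$-type expressions matching the boundary. The positive threshold (\Cref{lem:no-threshold-value}) ensures this region is nondegenerate, so $\lambda$ can be anchored at $0$ and $\mu$ initialized consistently at the left end of the index range. Without it the recursion defining $\mu$ would have no starting point.

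The main obstacle is Step 2: showing rigorously that the pointwise identities produced by the algorithms cover all of $\supp(G)$, including points where $\theta=v-c$ and where $p$ has its kink at $\kappa=0$. At these points differentiability fails and consistency between two indices sharing the same $v$ must come from continuity. I would handle it by proving $\lambda$ and $\mu$ continuous on the relevant closures and then using the neighbourhood assumption~(ii) to pass to limits.
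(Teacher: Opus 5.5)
Your reduction is correct. It takes a different route from the paper's, and for the direction actually stated it is the more direct one. You treat dual feasibility as a hypothesis, which is right: \Cref{alg:mu} only tests it on single-valued and off-support rows, and the paper checks it by hand in the equilibrium application. You then get optimality of both $G$ and $(\lambda,\mu)$ from the sufficiency half of \Cref{thm:primal-dual-optimal} once the slack vanishes $G$-a.e.

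The paper's one-line proof instead assembles the case lemmas behind \Cref{alg:lambda,alg:mu} (Cases 1, 2, 3.1--3.3 and the two row lemmas for $\mu$), together with \Cref{lem:no-threshold-value}. Each of these presupposes that $G$, and for the case lemmas also $(\lambda,\mu)$, is \emph{already} optimal. What they buy is canonicity. An optimal dual complementary to an optimal $G$, with the regularity of \Cref{thm:optimal-solution-exists-in-dual-new}, must take exactly the values the algorithms assign. That is what justifies reading a failed run as evidence against optimality. For the sufficiency claim itself those lemmas are not needed, and your argument is what carries the weight.

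Your account of the mechanism needs correcting, though.
\begin{itemize}
\item \Cref{alg:lambda} fixes $\lambda$ first, with no reference to $\mu$.
\item \Cref{alg:mu} then sets $\mu(\theta)=(p(v,\theta)-\lambda(v))/q(v,\theta)$ on each support row. It aborts unless this ratio is the same for all $v\neq\theta+c$ in that row. So the consistency check runs across values sharing an index, not across indices sharing a value, and zero slack off the diagonal is just the definition of $\mu$.
\item On the diagonal $v=\theta+c$, the slack is $\lambda(v)-u(v-c)$ because $q$ vanishes there. Case 2 sets $\lambda(v)=u(v-c)$ at every such support point with $v>\underline v$. For $v\le\underline v$, Case 1 gives $\lambda(v)=0$, and dual feasibility gives $0\ge u(v-c)\ge 0$.
\end{itemize}
So your ``main obstacle'' dissolves with no limiting argument. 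This matters, because the plan you propose for it would fail. The constructed $\mu$ is not continuous in general: in the dual the paper exhibits for the hinge-convex equilibrium, $-\mu$ jumps at $\theta_1$ and at $\theta_2$. Upgrading the identity from $G$-a.e.\ to the closed support is also unnecessary, since \eqref{eq:primal-dual-optimal} only requires the former. Likewise, neither $\underline v>0$ nor assumption (ii) plays any role in the sufficiency argument; (ii) holds automatically for the topological support anyway.
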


Specifically, \Cref{alg:lambda} determines $\lambda(v)$ for each $v\in[0,1]$ using spatial relationships between the dual variables, and \Cref{alg:mu} checks the existence of $\mu(\theta)$
that constitutes a feasible dual solution along with $\lambda$ constructed.
\Cref{alg:mu} is relatively straightforward,
so we focus on explaining the construction of $\lambda$ in the main text.
A few definitions are in order.

\begin{definition}[Monotone support sequence]
\label{def:monotone-support-sequence}
Given a joint distribution $G$, a \emph{monotone support sequence} of $G$ is a sequence $\{(v^m,\theta^m)\}_{m\in\mathbb{Z}_+}\subseteq\supp(G)$ such that $v^m<v^{m+1}$ for every $m\in \mathbb{Z}_+$, and the sequence $\{\theta^m\}_{m\in \mathbb{Z}_+}$ is either nondecreasing or nonincreasing.
If $\{\theta^m\}_{m\in \mathbb{Z}_+}$ is strictly monotone, then $\{(v^m,\theta^m)\}_{m\in \mathbb{Z}_+}$ is called a \emph{strictly monotone support sequence}.
\end{definition}


\begin{algorithm}[h]
\caption{LAMBDA $(v)$: Construction of the dual variable $\lambda$.}
\label{alg:lambda}
\SetKwInOut{Require}{Input}
\SetKwInOut{Ensure}{Output}

\Require{~Candidate strategy $G$; interim utility $u$; cost $c$; threshold $\underline{v}$; and any value $v \in [0,1]$.}
\Ensure{~The corresponding value of the dual variable $\lambda(v)$.}

\BlankLine

\If{$v \in [0, \underline{v}]$}{
    $\lambda(v) \gets 0$ and \textbf{return} $\lambda(v)$ \tcp*[r]{Case 1}
}

\If{$(v, v-c) \in \mathrm{supp}(G)$}{
    $\lambda(v) \gets u(v-c)$ and \textbf{return} $\lambda(v)$ \tcp*[r]{Case 2}
}

\If{there is a monotone support sequence converging to $(v, v-c)$}{
    $\lambda(v) \gets u(v-c)$ and \textbf{return} $\lambda(v)$ \tcp*[r]{Case 3.1}
}
\ElseIf{there is a strictly monotone support sequence converging to $(v, \theta_1)$ with $\theta_1 \neq v-c$}{
    $\lambda(v) \gets u'(\theta_1)q(v, \theta_1) + p(v, \theta_1)$ and \textbf{return} $\lambda(v)$ \tcp*[r]{Case 3.2}
}
\Else{
    There is a $\theta_2 \neq v-c$ such that $(v-\epsilon, v) \subseteq \mathrm{supp}(G_{v|\theta_2})$ for some $\epsilon > 0$ \;
    $\underline{v}_{\theta_2} \gets \inf \mathrm{supp}(G_{v|\theta_2})$ \;
    $\lambda(v) \gets \frac{\text{LAMBDA}(\underline{v}_{\theta_2}) - p(\underline{v}_{\theta_2}, \theta_2)}{q(\underline{v}_{\theta_2}, \theta_2)} q(v, \theta_2) + p(v, \theta_2)$ and \textbf{return} $\lambda(v)$ \tcp*[r]{Case 3.3}
}
\end{algorithm}

\begin{figure}[htbp]
    \centering
    \begin{minipage}[t]{0.48\textwidth}
        \centering
        \begin{tikzpicture}[scale=2.0]
        \fill[pattern=north east lines, pattern color=red] (0,-0.4) rectangle (0.4,1.2);
        \fill[pattern=north east lines, pattern color=red] (0.4,-0.8) rectangle (2,-0.4);
        \draw (0,0) node[left] {\footnotesize $0$};
        \draw (0,-0.8) node[below] {\footnotesize $0$};
        \draw (2,-0.8) node[below] {\footnotesize $1$};
        \draw (0,1.2) node[left] {\footnotesize $1-c$};
        \draw (0.8,-0.8) node[below] {\footnotesize $c$};
        \draw (0,-0.8) -- (0,1.2) node[above] {\footnotesize index};
        \draw (2,-0.8) -- (2,1.2);
        \draw (0,-0.8) -- (2,-0.8) node[right] {\footnotesize value};
        \draw (2,1.2) -- (0,1.2);
        \draw (0,-0.8) node[left] {\footnotesize $-c$};
        
        \draw[dashed] (0,0) -- (1.2,1.2) node[above] {\footnotesize $\theta=v$};
        \draw[dashed] (0,-0.8) -- (2,1.2) node[above] {\footnotesize $\;\; \theta=v-c$};
        \draw[dashed] (0,0) -- (2,0) node[above] {};
        \draw[dashed] (0.8,-0.8) -- (0.8,0) node[above] {};
        \draw[dashed] (0.4,1.2) -- (0.4,-0.8) node[below] {\footnotesize \textcolor{black}{$\underline{v}$}};
        \draw[dashed] (0,-0.4) node[left] {\footnotesize $\underline{v}-c$} -- (2,-0.4);

    \end{tikzpicture}
    \end{minipage}%
    \hfill
    \begin{minipage}[t]{0.48\textwidth}
        \centering
        \begin{tikzpicture}[scale=2.0,,decoration={markings, mark=at position 0.7 with {\arrow{>}}}]
        \draw (0,0) node[left] {\footnotesize $0$};
        \draw (0,-0.8) node[below] {\footnotesize $0$};
        \draw (2,-0.8) node[below] {\footnotesize $1$};
        \draw (0,1.2) node[left] {\footnotesize $1-c$};
        \draw (0,-0.8) -- (0,1.2) node[above] {\footnotesize index};
        \draw (2,-0.8) -- (2,1.2);
        \draw (0,-0.8) -- (2,-0.8) node[right] {\footnotesize value};
        \draw (2,1.2) -- (0,1.2);
        \draw (0,-0.8) node[left] {\footnotesize $-c$};
        
        \draw[dashed] (0,0) -- (1.2,1.2) node[above] {\footnotesize $\theta=v$};
        \draw[dashed] (0,-0.8) -- (2,1.2) node[above] {\footnotesize $\;\; \theta=v-c$};
        \draw[dashed] (0,0) -- (2,0) node[above] {};
        \draw[thick] (0,-0.8) -- (0.4,-0.4) node[above] {};
        
        \fill[black] (1.7,0.9) circle (1pt) node[right] {\footnotesize $B$};

        \draw (1.3,0.5) circle (1pt) node[right] {\footnotesize $C$};
        \draw[thick, postaction={decorate}, domain=0.9:1.3, samples=50] plot (\x, {-(\x - 1.3)^2+0.51});

        \fill[black] (1.3,0.8) circle (1pt) node[right] {\footnotesize $D$};\
        \draw[thick, postaction={decorate}, domain=1.05:1.3, samples=50] plot (\x, {(\x - 1.8)^2+0.55});
        
        \fill[black] (1.3,0.2) circle (1pt) node[right] {\footnotesize $E$};
        \draw[thick,, postaction={decorate}] (0.7,0.2) -- (1.3,0.2);
        \fill[black] (0.7,0.2) circle (1pt) node[left] {\footnotesize $F$};
        
        \fill[black] (0.4,0.6) circle (1pt) node[left] {\footnotesize $A$};
        \draw[dashed] (0.4,1.2) -- (0.4,-0.8) node[below] {\footnotesize $\underline{v}$};
        \draw[dashed] (0,-0.4) node[left] {\footnotesize $\underline{v}-c$} -- (0.4,-0.4);

        
    \end{tikzpicture}
    \end{minipage}
    \caption{
    The left panel shows the threshold value $\underline{v}$.
    The right panel illustrates \Cref{alg:lambda}. 
    The solid dots and black curves represent the support set of one candidate strategy~$G$. 
    The hollow dots represent the limit points of some monotone support sequence. 
    Case 1: Let $\lambda(v_A)=0$ since $v_A\le\underline{v}$;
    Case 2: Let $\lambda(v_B)=u(v_B-c)$ since $\theta_B=v_B-c$;
    Case 3.1: Let $\lambda(v_C)=u(v_C-c)$ since there is a monotone support sequence converging to $(v_C,\theta_C)$;
    Case 3.2: Let $\lambda(v_D)=u'(\theta_D)q(v_D,\theta_D)+p(v_D,\theta_D)$ since there is a strictly monotone support sequence converging to $(v_D,\theta_D)$;
    Case 3.3: Let $\lambda(v_E)=\frac{\lambda(v_F)-p(v_F,\theta_F)}{q(v_F,\theta_F)}q(v_E,\theta_E)+p(v_E,\theta_E)$ since $(v_F,\theta_E)\subseteq\supp(G_{v|\theta_E})$.
    }
    \label{fig:procedure}
\end{figure}

The construction of $\lambda$ is based on necessary conditions for $G$ and $(\lambda, \mu)$ to be optimal.
In other words, the deductions in the following are driven by the assumption that $G$ is optimal, and that a corresponding $\lambda$ satisfying the complementary slackness condition with $G$ can be constructed.
We provide an illustration for \Cref{alg:lambda} in the right panel of \Cref{fig:procedure}.
By our previous discussion following the complementary slackness conditions, if $G$ and $(\lambda,\mu)$ are both optimal, $\lambda(v)$ represents value $v$'s contribution to the utility.
For any $v \in [0,1]$, 
the determination of $\lambda(v)$ falls into one of five cases.
In the first two cases, we can directly determine $\lambda(v)$:  
\begin{itemize}
    \item \textbf{Case 1:}
    If $v\in[0,\underline v]$, then \Cref{alg:lambda} returns $\lambda(v)=0$.
    By \Cref{lem:no-threshold-value}, values at or below the threshold $\underline v$ make no contribution to the sender's payoff.
    \item \textbf{Case 2:}
    If $(v,v-c)\in\supp(G)$, then \Cref{alg:lambda} returns $\lambda(v)=u(v-c)$.
    Condition $\lambda(v)+\mu(v-c)q(v,v-c)-p(v,v-c)=0$ implies $\lambda(v)=u(v-c)$ since $q(v,v-c)=0$.
\end{itemize}

Now consider a value $v$ that falls into neither of these two cases.
As the prior distribution is assumed to have positive density everywhere on $[0,1]$, there must exist a monotone support sequence converging to the point $(v,\theta)$ for some $\theta\in[0,1-c]$.
Depending on the convergence properties of this sequence, we further divide the analysis into three cases:
\begin{itemize}
    \item \textbf{Case 3.1:}
    If there is a monotone support sequence $\{(v^m,\theta^m)\}_{m\in\Zplus}$ converging to the point $(v,v-c)$,
    by the continuity of $\lambda$ and the boundedness of $\mu$ (see \Cref{thm:optimal-solution-exists-in-dual-new}), we have $\lambda(v)=\lim\nolimits_{m\rightarrow \infty}-\mu(\theta^m)q(v^m,\theta^m)+p(v^m,\theta^m)=u(v-c)$.
    \item \textbf{Case 3.2:}
    If there is a strictly monotone support sequence $\{(v^m,\theta^m)\}_{m\in\Zplus}$ converging to the point $(v,\theta_1)$ for some $\theta_1\neq v-c$.
    By \Cref{thm:primal-dual-optimal}, $\lambda(v^m)=-\mu(\theta^m)q(v^m,\theta^m)+p(v^m,\theta^m)$ for each $m\in\Zplus$.
    The converging sequence gives infinitely many such equalities in the neighborhood of $(v,\theta_1)$.
    These equalities, combined with the specific formula of $-\mu(\theta^m)q(v^m,\theta^m)+p(v^m,\theta^m)$, allows us to conclude that $\lambda(v)=u'(\theta)q(v,\theta_1)+p(v,\theta_1)$.

    \item \textbf{Case 3.3:}
    If none of the above conditions hold, 
    there must exist $\theta_2\neq v-c$ such that $(v-\epsilon,v)\subseteq \supp(G_{v|\theta_2})$ for some $\epsilon>0$. 
    By the continuity of $\lambda$, we reduce the task of solving $\lambda(v)$ to solving $\lambda(\inf\supp(G_{\cdot|\theta_2}))$ and run \Cref{alg:lambda} again to solve this case.
\end{itemize}

\subsection{Application: Certifying Symmetric Equilibrium}

We now use the verification framework to certify symmetric equilibria in a competitive version of the model.
We first extend the single-sender model to a multi-sender game, which can be viewed as a Pandora Box problem where each box is controlled by a strategic sender.

\xhdr{Competitive information design game}
There are $n\ge2$ senders, each controlling one item.
All senders share the same prior $F$ and cost $c>0$.
Each sender $i$ chooses an information strategy $G_i\in\mathcal G(F,c)$.
The value-index pair $(v_i,\theta_i)$ is independently drawn from $G_i$ with the amortized value $\kappa_i\triangleq\min\{v_i,\theta_i\}$.
The receiver selects a sender with the highest nonnegative amortized value.
We denote the amortized value distribution under some strategy $G$ by
$$
K_G(\kappa)
\triangleq
\Pr
\left[
\min\{v,\theta\}\le\kappa
\right]=G(\kappa,1-c)+G(1,\kappa)-G(\kappa,\kappa)~.
$$
If all opponents use $G$, the induced interim utility of an individual sender is\footnote{
This expression ignores ties since ties cannot exist under equilibrium in most scenarios we consider.
}
$$
u_G(\kappa)
\triangleq
\begin{cases}
0\quad & \mathrm{if\ }\kappa<0~,\\
K_G(\kappa)^{N-1}\quad & \mathrm{if\ }\kappa\ge 0~.
\end{cases}
$$
A feasible strategy $G^*\in\mathcal G(F,c)$ forms a \emph{symmetric equilibrium} if and only if $G^*$ is a best response strategy to the competitive environment $u_{G^*}$.


Notice that \citet{lyu2023information} also studies competition but only in the large-market limit.
His dimensionality-reduction approach can not characterize finite-seller markets, which is precisely where our verification framework working on the joint distributions over $(v,\theta)$ becomes useful.

We consider a game where all senders share a common convex prior $F$.\footnote{We also provide results about the symmetric equilibrium under a common concave prior; see \Cref{apx:sym-equilibrium}.}
There are two cases depending on the inspection cost.
When $c\le \inf\supp(F)$, all realized values induce nonnegative full-revelation indices, and full revelation is the unique symmetric equilibrium; this case has a simpler characterization, so we defer it to \Cref{subsec:low-cost-equilibrium}.
We focus here on the high-cost case, where $c>\inf\supp(F)$.
For a two-sender game with a common convex prior $F$, we construct a symmetric equilibrium whose induced amortized-value distribution has a hinge-convex structure.

\begin{theorem}[Hinge-convex signaling equilibrium]
\label{thm:symmetric-equilibrium-convex-prior-big-c}
For two senders with a common convex prior~$F$ and cost $c>\inf \supp(F)$, there is a symmetric equilibrium $G^*$ that induces each sender's amortized value distribution $K$ of the following hinge-convex structure:
for certain parameters $\theta_1 \in (-c, 0)$, $\theta_2 \in (0, 1 - c]$, and $\rho > 0$ (uniquely determined in the proof),
\begin{equation*}
    K^*(\kappa)=
    \begin{cases}
        F(\kappa+c)\ &\emph{if}\ \kappa\in[-c,\theta_1]~,\\
        F(\theta_1+c)\ &\emph{if}\ \kappa\in(\theta_1,0]~,\\
        \min \left\{
        \rho \cdot\kappa+F(\theta_1+c), 1 \right\}\ 
        &\emph{if}\ \kappa\in(0,\theta_2]~, \\
        F(\kappa+c)\ &\emph{if}\ \kappa\in(\theta_2,1-c]~,
    \end{cases}
\end{equation*}
where $K^*(\theta_2) = F(\theta_2 + c)$ if $\theta_2 < 1 - c$ (the non-degenerate case); and 
$K^*(\theta_2) = 1$ if $\theta_2 = 1 - c$ (the degenerate case).

\end{theorem}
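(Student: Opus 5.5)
The plan is to construct $G^*$ explicitly, read off its amortized-value distribution, set $u=u_{G^*}$ (here $u_{G^*}(\kappa)=K^*(\kappa)$ for $\kappa\ge0$, since $n=2$), and certify that $G^*$ best-responds to $u$ through the dual certificate of \Cref{thm:primal-dual-optimal}. The construction has three regions. Low values $v\in[0,\theta_1+c]$ are fully revealed, so they sit on the line $\theta=v-c$ with negative index and earn nothing. High values $v\in(\theta_2+c,1]$ are also fully revealed, giving $K^*(\kappa)=F(\kappa+c)$ there. The intermediate values $(\theta_1+c,\theta_2+c]$ are pooled so that the induced amortized values spread uniformly over $(0,\theta_2]$. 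Concretely, for each index $\theta\in(0,\theta_2]$ I will pair a low value $\ell(\theta)\in(\theta_1+c,c)$ with a high value $h(\theta)\ge\theta$, using two-point posteriors whose weights are fixed by the index equation $\int((v-\theta)_+-c)\,\mathrm dG(v\mid\theta)=0$. The amortized value of such a pair is $\theta$ for the high point and $\ell(\theta)<\theta$ for the low point; I will arrange that the low points have $\ell(\theta)\ge 0$ or else are assigned to a negative index, and I will adjust the matching so the resulting $K$ is linear with slope $\rho$ on $(0,\theta_2]$. The parameters $(\theta_1,\theta_2,\rho)$ are pinned down by three conditions: mass balance ($\rho\theta_2+F(\theta_1+c)=F(\theta_2+c)$, or $=1$ in the degenerate case), the feasibility/index equation aggregated over the pooled block, and an indifference (first-order) condition at the hinge $\theta_1$. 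Convexity of $F$ is what makes these monotone in the parameters, which gives existence and uniqueness by an intermediate-value argument.

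Next I will build the dual. Since $u(\kappa)=\min\{\rho\kappa+F(\theta_1+c),1\}$ on the pooling region is Lipschitz, \Cref{thm:optimal-solution-exists-in-dual-new} tells me to look for $\mu\in[-L,0]$ and a nondecreasing continuous $\lambda$. I will follow \Cref{alg:lambda}. In Case 1, $\lambda=0$ on $[0,\underline v]$ with $\underline v=\theta_1+c$, and $\underline v>0$ holds as \Cref{lem:no-threshold-value} requires. In Case 2, $\lambda(v)=F(v)$ for $v>\theta_2+c$. On the pooled block, the two-point structure makes each index line a Case 3.2/3.3 situation, and $\mu(\theta)$ is forced by binding the constraint at both support points: $\mu(\theta)=-\bigl(p(h,\theta)-p(\ell,\theta)-\lambda(h)+\lambda(\ell)\bigr)/\bigl(q(h,\theta)-q(\ell,\theta)\bigr)$. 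Because $u$ is linear with slope $\rho$ there, I expect $\mu\equiv-\rho$ on $(0,\theta_2]$, which makes $\lambda(v)=\rho\,q(v,\theta)+p(v,\theta)$ affine in $v$ along each line. For $\theta<0$ I will set $\mu(\theta)=0$ where $p=0$, and for $\theta>\theta_2$ I will take $\mu(\theta)=-u'(\theta)=-f(\theta+c)$.

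The main obstacle is global dual feasibility, $\lambda(v)+\mu(\theta)q(v,\theta)\ge p(v,\theta)$ for all $(v,\theta)$, including off-support pairs. I will split into the regions $\theta<0$, $0\le\theta\le\theta_2$ and $\theta>\theta_2$, crossed with $v<\underline v$, pooled $v$, and high $v$. In each region the slack is piecewise a difference of a convex function of $v$ (coming from $F$) and an affine or kinked function, so convexity of $F$ should reduce every check to endpoint and tangency inequalities. These are exactly the conditions used to define $\theta_1,\theta_2,\rho$: slope matching $\rho\le f$ past $\theta_2+c$ and tangency at the hinge. The delicate cases are $v$ slightly below $\underline v$ paired with small positive $\theta$ (this is where the choice of $\theta_1$ matters) and $v$ near $\theta_2+c$ paired with $\theta$ slightly above $\theta_2$. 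I will treat the degenerate case $\theta_2=1-c$ separately, where $u$ caps at $1$. Once the dual is feasible, the slack vanishes on $\supp(G^*)$ by construction, so \Cref{thm:primal-dual-optimal} (equivalently \Cref{thm:constructed-lambda-mu-optimal}) gives optimality, hence $G^*$ is a symmetric equilibrium.
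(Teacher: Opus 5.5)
Your verification half is essentially the paper's Step~3. You use the same $\lambda$: zero on $[0,\theta_1+c]$, the line $\rho(v-c-\theta_1)$ on the pooled block, and $F$ above $\theta_2+c$. You also use the same $\mu$: $-\rho$ on the pooled indices and $-f(\theta+c)$ above $\theta_2$. Setting $\mu=0$ on all of $[-c,0)$, rather than only below $\theta_1$, is harmless: $p=0$ there and $G^*$ puts no mass on $(\theta_1,0)$. Your three conditions pinning down $(\theta_1,\theta_2,\rho)$ are also the right ones.

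The genuine gap is the construction step. You never show that $K^*$ is the amortized-value distribution of some $G^*\in\mathcal G(F,c)$, and the sketch you give is inconsistent. You assert that the low atom of row $\theta$ has amortized value $\ell(\theta)<\theta$. But since you take $\ell(\theta)>\theta_1+c$, this fails for every row $\theta<\theta_1+c$. In those rows both atoms have amortized value $\theta$, and no other mass lands at that amortized value, because values below $\theta_1+c$ are revealed at negative indices. So each such row must carry mass density exactly $\rho$ while also satisfying its own index equation, which requires $h(\theta)>\theta+c$, not merely $h(\theta)\ge\theta$. ``Adjusting the matching'' gives no mechanism for meeting both requirements along a continuum of rows with the prescribed value marginal. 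The mass-balance and aggregated index equations you rely on are only necessary conditions. Passing from them to a feasible joint distribution is exactly the lossy step of the one-dimensional relaxation that the paper is built to avoid.

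The paper closes this with an explicit construction:
\begin{itemize}
\item It takes a cutoff $\theta^*\in(0,\theta_2)$ where $f(\cdot+c)$ crosses $\rho$.
\item It fully reveals a density-$\rho$ slice of every $v\in(\theta^*+c,\theta_2+c)$ on its own row $v-c$.
\item It pools the remaining density $f(v)-\rho$ of these high values with the low values in $(\theta_1+c,\theta^*+c)$. Note that these low values run above $c$.
\item The pooling uses two-point rows $\{\alpha(\theta),\beta(\theta)\}$ for $\theta\in[0,\theta^*]$. These are generated by an ODE with $\alpha(0)=\theta_1+c$ and $\beta(0)=\theta_2+c$, written separately for the regimes $\alpha(\theta)>\theta$ and $\alpha(\theta)\le\theta$.
\item A convexity and mass-balance argument then shows that $\alpha$ increases, $\beta$ decreases, and they meet at $\theta^*+c$ exactly at $\theta=\theta^*$.
\end{itemize}
Without an explicit implementation of this kind, your dual certificate certifies a candidate that may not exist.

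A smaller point: $u_{G^*}$ jumps from $0$ to $F(\theta_1+c)>0$ at $\kappa=0$, so it is not Lipschitz. Hence neither \Cref{thm:optimal-solution-exists-in-dual-new} nor \Cref{thm:constructed-lambda-mu-optimal} formally applies. Rest the conclusion only on the sufficiency (weak-duality) half of \Cref{thm:primal-dual-optimal}, which needs neither hypothesis.
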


In this equilibrium, values below $\theta_1+c$ are fully revealed and yield zero utility.
Values in $(\theta_1+c,c)$ are pooled with higher values to generate positive amortized values, which are spread uniformly over an interval starting from $0$.
In the non-degenerate case, pooling stops at value $\theta_2+c$, and values above are fully revealed (see \Cref{fig:convex-eq-right}).
In the degenerate case, all values above $\theta_1+c$ join the pooling region, so the upper full-revelation region disappears.

\begin{figure}[ht]
    \centering
    \begin{minipage}[t]{0.48\textwidth}
        \centering
        \begin{tikzpicture}[scale=2.0]
            \draw[->] (-1.2,0) -- (1.5,0) node[right] {\footnotesize $\kappa$};
            \draw[-] (0,0) -- (0,2.2) node[left] {};
            \draw[->] (0,0) -- (0,2.3) node[left] {};

            \draw[rounded corners=3pt, fill=white, draw=black] (-1.3,1.3) rectangle (-0.3,1.8);
        
            \draw[red, thick] (-1.25,1.65) -- (-1.0,1.65);
            \draw (-1.05,1.65) node[anchor=west, font=\footnotesize] {$K^*(\kappa)$};
        
            \draw[black, thick] (-1.25,1.45) -- (-1.0,1.45);
            \draw (-1.05,1.45) node[anchor=west, font=\footnotesize] {$F(\kappa+c)$};
            \draw[thick, domain=-0.8:1.2, samples=50] plot (\x, {-(7-(\x+1.38113)^2)^0.5+2.5811});
            \draw[red, thick] (0,0.226) -- (0.9, 1.24078);
            \draw[red, thick, domain=-0.8:-0.2, samples=50] plot (\x, {-(7-(\x+1.38113)^2)^0.5+2.5811});
            \draw[red, thick, domain=0.9:1.2, samples=50] plot (\x, {-(7-(\x+1.38113)^2)^0.5+2.5811});
            \draw[red, thick] (-0.2,0.222) -- (0,0.222);
            \draw[dashed] (0,0.226) -- (-0.2,0);
            
            \draw (-0.8,0) node[below] {\footnotesize $-c$};
            \draw (0,0) node[below] {\footnotesize $0$};
            \draw[dashed] (-0.2,0.226) -- (-0.2,0) node[below] {\footnotesize $\theta_1$};
            \draw[dashed] (0.9,1.24078) -- (0.9,0) node[below] {\footnotesize $\theta_2$};
            \draw[dashed] (0.6,0.906) -- (0.6,0) node[below] {\footnotesize $\theta^*$};
            \draw[dashed] (1.2,2) -- (1.2,0);
            \draw (1.25,0) node[below] {\footnotesize $1-c$};
            \draw[dashed] (1.2,2) -- (0,2) node[left] {\footnotesize $1$};
        \end{tikzpicture}
    \end{minipage}
    \hspace{0.02\textwidth}
    \begin{minipage}[t]{0.48\textwidth}
        \centering
        \begin{tikzpicture}[scale=2.0]
            \fill[blue!10] (0.01,-0.79) rectangle (0.6,-0.2); 
            \fill[green!10] (0.6,-0.2) rectangle (1.7,0.9); 
            \fill[yellow!20] (1.7,0.9) rectangle (1.99,1.19);  
            
            \draw[dashed] (0,0) -- (2,0);
            \draw (-0.05,0) node[left] {\footnotesize $0$};
            \draw (0,-0.8) node[below] {\footnotesize $0$};
            \draw (0,1.2) node[left] {\footnotesize $1-c$};
            \draw (2,-0.8) node[below] {\footnotesize $1$};
            \draw (0,-0.8) -- (0,1.2) node[above] {\footnotesize index};
            \draw (2,-0.8) -- (2,1.2);
            \draw (0,-0.8) -- (2,-0.8) node[right] {\footnotesize value};
            \draw (2,1.2) -- (0,1.2);

            \draw[dashed] (0,0) -- (1.2,1.2) node[above] {\footnotesize $\theta=v$};
            \draw[dashed] (0,-0.8) -- (2,1.2) node[above] {\footnotesize $\;\; \theta=v-c$};
            
            \draw[red, thick] (1.7,0.9) -- (2,1.2);
            \draw[red, thick] (0,-0.8) -- (0.6,-0.2);
            \draw[red, thick, domain=0.6:1.4, samples=50] plot (\x, {0.79*(max(0, 0.5*\x-0.3))^0.3});
            \draw[red, thick] (1.4,0.6) -- (1.7,0.9);
            \draw[red, thick, domain=1.4:1.7, samples=50] plot (\x, {-22*(\x-1.4)^3+0.6});
            
            \draw[dashed] (0,-0.8) -- (0,-0.8) node[left] {\footnotesize $-c$};
            \draw[dashed] (1.7,0.9) -- (0,0.9) node[left] {\footnotesize $\theta_2$};
            \draw[dashed] (1.7,0.9) -- (1.7,-0.8) node[below, xshift=2pt] {\footnotesize $\theta_2+c\;\;$};
            \draw[dashed] (0.8,0) -- (0.8,-0.8) node[below, yshift=-2pt] {\footnotesize $c$};
            \draw[dashed] (0.6,-0.2) -- (0,-0.2) node[left] {\footnotesize $\theta_1$};
            \draw[dashed] (1.4,0.6) -- (1.4,-0.8);
            \draw (1.2,-0.8) node[below, xshift=2pt] {\footnotesize $\theta^*+c$};
            \draw[dashed] (0.6,0) -- (0.6,-0.8) node[below, xshift=-1pt] {\footnotesize $\theta_1+c\;\;$};
            \draw[dashed] (1.4,0.6) -- (0, 0.6) node[left] {\footnotesize $\theta^*$};
        \end{tikzpicture}
    \end{minipage}
        \caption{An example of the non-degenerate case in \Cref{thm:symmetric-equilibrium-convex-prior-big-c}.
        In the left panel, the red curve represents the amortized value distribution $K^*(\kappa)$ in equilibrium, and the black curve the shifted prior $F(\kappa+c)$.
        In the right panel, the red curves represent the support sets of the equilibrium strategy $G^*$.
        In the blue and yellow regions, low values and high values are fully revealed; while in the green region, lower values in $(\theta_1+c,\theta^*+c)$ are paired with higher values in $(\theta^*+c,\theta_2+c)$ to produce amortized values uniformly distributed over an interval.
        }
        \label{fig:convex-eq-right}
\end{figure}

Our findings echo a common pattern in real-world advertising. 
Firms often use such a \emph{three-fold advertising regime}: they highlight the product (by truthfully revealing) when the consumer--product match is clearly good, stay silent when the match is clearly poor, and otherwise blur the match quality---benefiting most when rivals face similarly mediocre matches.

Our proof follows the \emph{identify--construct--verify} procedure.
We first identify the candidate equilibrium amortized-value distribution $K^*$, then construct a feasible joint distribution $G^*$ that implements it, and finally construct dual variables $(\lambda^*,\mu^*)$ via \Cref{alg:lambda,alg:mu} and invoke \Cref{thm:constructed-lambda-mu-optimal} to certify that the constructed strategy is a best response to itself.

\xhdr{Step 1: Identify the candidate amortized value distribution $K^*$}
For the target structure of the amortized value distribution $K^*$ in \Cref{thm:symmetric-equilibrium-convex-prior-big-c}, values in $(\theta_1+c,\theta_2+c]$ are pooled so that the induced amortized values are uniformly distributed over $[0,\theta_1']$ for some parameter $\theta_1'\le\theta_2$ to be determined.
These thresholds are pinned down by the following two equalities.

First, notice that the amortized value distribution $K$ of any feasible joint distribution forms a mean-preserving contraction (MPC) of the shifted prior $F(\kappa+c)$.\footnote{A distribution $H\in\Delta([0,1])$ forms an MPC of a distribution~$F\in\Delta([0,1])$ if and only if $\int_{0}^t F(x) \dd x\ge\int_{0}^t G(x) \dd x$ for any $t\in[0,1]$ with equality at $t=1$.
See Lemma~3.1 in \citet{lyu2023information} for a formal statement.}
Due to this observation, it holds $K^*(\theta_1)=F(\theta_1+c)$ and $K^*(\theta_1')=F(\theta_1'+c)$.
Because $K^*$ is assessed to be linear with slope $\rho$ on this region, we obtain the first equality: 
$$
\rho\theta_1'=F(\theta_1'+c)-F(\theta_1+c)~.
$$

Second, consider the boundary value $\theta_1+c$.
At equilibrium, this value must be indifferent between being pooled with higher values to produce amortized value $0$ and being truthfully revealed to produce amortized value $\theta_1<0$.
Thus, the gain from pooling, $F(\theta_1+c)$, must equal the loss from lowering higher values' amortized values by $|\theta_1|$.
Since $K^*$ is linear with slope $\rho$ in the pooled region, this loss is $\rho|\theta_1|$, giving the second equality: 
$$
F(\theta_1 + c) = \rho|\theta_1|~.
$$

We finally prove that there uniquely exist $\theta_1$ and $\theta_1'$ that solve the resulting system obtained by combining the above two equalities.
$$
\frac{F(\theta_1+c)}{|\theta_1|}=\frac{F(\theta_1'+c)-F(\theta_1+c)}{\theta_1'}=\rho~.
$$
The solution yields two cases:
If $K^*(\theta_1')<1$, the equilibrium is non-degenerate with $\theta_2=\theta_1'$; otherwise, it is degenerate with $\theta_2=1-c$.
We sketch the remaining proof for the more involved non-degenerate case.  

\xhdr{Step 2: Construct a joint distribution $G^*$ that implements $K^*$}
To implement the target amortized-value distribution $K^*$, we fully reveal values below $\theta_1+c$ and above $\theta_2+c$, and split the intermediate region $[\theta_1+c,\theta_2+c]$ into low and high parts.
Since the density $f$ is nondecreasing and $F(\theta_2+c)-F(\theta_1+c)=K^*(\theta_2)-K^*(\theta_1)=\rho\theta_2$, there uniquely exists a cutoff $\theta^*\in(\theta_1,\theta_2)$ such that each value above $\theta^*+c$ has density of size at least $\rho$. 
We call such values \emph{high}, and the remaining values in the interval \emph{low}.

High values serve two roles in the construction.
Each high value $v$ is truthfully revealed with probability $\rho/f(v)$, and its remaining mass is pooled with low values.
The pairing must simultaneously generate the desired amortized value distribution and preserve both the value-marginal constraint and the row-wise index constraints.

We construct the pairing through a system of ODEs. 
Each index $\theta\in[0,\theta^*]$ pools one low value $\alpha(\theta)$ with one high value $\beta(\theta)$. 
The initial condition pairs the lowest and highest values: $\alpha(0)=\theta_1+c$ and $\beta(0)=\theta_2+c$.
The ODEs are chosen so that each row satisfies the index equation and the induced amortized values have the target density. 
Their solution satisfies $\alpha'(\theta)\ge0$ and $\beta'(\theta)\le0$ for all $\theta\in[0,\theta^*]$, and terminates at $\alpha(\theta^*)=\beta(\theta^*)=\theta^*+c$.


\xhdr{Step 3: Certify equilibrium via the verification framework}
It remains to show that the constructed strategy is indeed a best response to itself. This is the point at which the verification framework becomes essential. Rather than directly solving the sender's best-response problem, given the interim utility induced by the hinge-shaped amortized-value distribution $K^*$, we construct dual functions $(\lambda^*,\mu^*)$ using \Cref{alg:lambda,alg:mu}. 
We then invoke \Cref{thm:constructed-lambda-mu-optimal} to certify the optimality of the constructed strategy and dual functions.





\section{Optimal Strategy Approximation}
\label{sec:approximation}

In this section, we analyze the sender's best-response problem from a complementary perspective.
We provide an additive FPTAS to compute a near-optimal strategy (\Cref{thm:fptas}).

\begin{theorem}[An additive FPTAS]
\label{thm:fptas}
Suppose the interim utility function $u(\kappa)$ is $L$-Lipschitz continuous for some $L>0$.
For every $\epsilon>0$, \Cref{alg:continuous-fptas} computes a feasible strategy $\widetilde G\in\mathcal G(F,c)$ whose payoff is at least $\mathrm{OPT}-\epsilon$ in polynomial time in $L$ and $1/\epsilon$.
In other words, the sender's persuasion problem \eqref{eq:best-response} admits a fully polynomial-time approximation scheme (FPTAS).
\end{theorem}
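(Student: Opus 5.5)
The plan is to discretize the two dimensions differently: values are grouped into quantile buckets, and indices are restricted to a finite grid. The result is a finite LP of polynomial size whose optimum, after a rounding step, is a feasible strategy in $\mathcal G(F,c)$ that loses at most $\epsilon$.

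\textbf{Step 1: discretize indices.} Fix a grid $\Theta_\delta=\{-c,-c+\delta,\dots\}\cap[-c,1-c]$ with $\delta=\Theta(\epsilon/L)$. I would show there is a near-optimal $G$ whose index marginal is supported on $\Theta_\delta$. Take an optimal $G^*$, which exists by weak compactness (\Cref{lem:convex-compact}). For each row $\theta$, the posterior $G^*_{v\mid\theta}$ has index exactly $\theta$. The function $\theta\mapsto\E[(v-\theta)_+]$ is $1$-Lipschitz and nonincreasing, and its slope is $-\Pr[v>\theta]$. The aim is to replace each row with index $\theta$ by a row with index $\lfloor\theta\rfloor_\delta$. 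Since indices cannot be set freely, the replacement must be done by mixing with fully revealed mass. Concretely, I would merge all rows whose indices fall in $[\theta_k,\theta_{k+1})$ into one posterior. Because $\E[(v-\theta)_+]-c$ is linear in the posterior, the merged posterior has index $\hat\theta\in[\theta_k,\theta_{k+1})$. Each amortized value $\min\{v,\theta\}$ then moves by at most $\delta$, so the payoff changes by at most $L\delta$ by Lipschitzness of $u$. This shows that restricting to at most $|\Theta_\delta|$ signals, with indices known only up to $\delta$, costs $O(L\delta)$.

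\textbf{Step 2: discretize values and write the LP.} Partition $[0,1]$ into $N=\mathrm{poly}(L/\epsilon)$ cells $I_j$ of width $\delta$, and let $w_j=F(I_j)$. The variables are $x_{jk}\ge0$, the mass of cell $j$ sent to grid index $\theta_k$. The constraints are $\sum_k x_{jk}=w_j$ and, for each $k$, $\sum_j x_{jk}\,q(\bar v_j,\theta_k)=0$ at a representative $\bar v_j$. The objective is $\sum_{j,k}x_{jk}\,u(\min\{\bar v_j,\theta_k\})$. Moving a value within its cell perturbs $q$ by at most $\delta$ and $p$ by at most $L\delta$. I would relax the row constraints to $|\sum_j x_{jk}q|\le\delta\sum_j x_{jk}$ so that the discretized optimal $G^*$ from Step 1 stays feasible. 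The LP optimum is then at least $\mathrm{OPT}-O(L\delta)$, and it is solved in polynomial time.

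\textbf{Step 3: round back to an exactly feasible $\widetilde G$.} Given the LP solution, I would spread $x_{jk}$ proportionally over the true conditional $F|_{I_j}$, which makes the value marginal exactly $F$. Each row $k$ then has some true index $\tilde\theta_k$ with $|\tilde\theta_k-\theta_k|=O(\delta)$. The bound follows from the slack $\delta$ and the cell width, using $\E[(v-\theta)_+]$ being $1$-Lipschitz. One subtlety: when $\Pr[v>\theta]$ is tiny, the function is flat, so a small constraint violation could in principle move the index a lot. Since the slope is $-\Pr[v>\theta]$, a flat region forces $\E[(v-\theta)_+]\approx0$, which is less than $c$. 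So near the solution the slope is at least of order $c$, giving $|\tilde\theta_k-\theta_k|\le O(\delta/c)$. Setting the signal's index to the true $\tilde\theta_k$ yields $\widetilde G\in\mathcal G(F,c)$ exactly, and its payoff differs from the LP value by $O(L\delta/c)$. Choosing $\delta=\Theta(c\epsilon/L)$ gives $U(\widetilde G)\ge\mathrm{OPT}-\epsilon$.

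\textbf{Main obstacle.} I expect Step 1 and the rounding in Step 3 to be the hard parts: controlling how index perturbations propagate without relying on any structure of $G^*$. The key quantitative tool is the lower bound on $|\partial_\theta\E[(v-\theta)_+]|$ near a root of the index equation. I would try to derive it from convexity of $\theta\mapsto\E[(v-\theta)_+]$ together with its value $c$ at the root and its zero at $\theta=1$. That convexity gives slope at least $c/(1-\theta)\ge c$ in absolute value.
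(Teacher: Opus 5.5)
Your argument is correct up to constants, but it takes a different route from the paper.

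\textbf{The paper's route.} The paper never relaxes the row constraints. It first discretizes values and proves $|\mathsf{OPT}-\mathsf{OPT}_\delta|\le L\delta$ using the exact translation property of the index: if every value in a row moves by at most $\delta$, the index moves by at most $\delta$, because $(\widehat v-(\theta\pm\delta))_+$ brackets $(v-\theta)_+$. No slope bound is needed. It then solves an LP with \emph{exact} index constraints on a special grid $\Gamma$ containing all $\widehat v_i$, all $\widehat v_i-c$, and payoff-level points $\xi_k$. It shows that every off-grid row splits into exactly feasible rows at the two neighbouring grid points, by decomposing the row into two-value balanced packages whose coefficients $q_i$ are affine with constant sign on each grid cell. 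The final lift back to $F$ again uses the translation property.

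\textbf{Your route.} You coarsen signals by merging the rows in each index bin; the merged index stays in the bin by linearity and monotonicity of $\phi(\theta)=\E[(v-\theta)_+]$. You then solve an LP whose row constraints are relaxed by $O(\delta)$. You restore exact feasibility by recomputing indices, controlled by the convexity bound $\phi(s)\le|\phi'_+(s)|(1-s)$.

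\textbf{What each buys.} Your route avoids the special grid and the package decomposition entirely. The price is an error of order $L\delta(1+c)/c$, so you need $\delta=\Theta(c\epsilon/(L(1+c)))$ and the LP size grows like $1/c$. The paper's grid size and error are independent of $c$.

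\textbf{Three small repairs.} First, slack $\delta$ is too tight for your Step~1 output to stay feasible. Merging gives $\phi(\theta_k)\in[c,c+\delta]$, and the cell representative adds up to another $\delta$, so use slack $2\delta$. Second, indices go down to $-c$, so the slope bound is $c/(1-\theta)\ge c/(1+c)$, not $\ge c$. Third, when the grid point $t$ lies to the right of the true root $\tilde\theta$, the tangent at $\tilde\theta$ bounds $\phi(t)$ from below, which is the wrong direction. Use the chord to $(1,0)$ instead: $\phi(t)\le c(1-t)/(1-\tilde\theta)$, which gives $c-\phi(t)\ge \frac{c}{1+c}(t-\tilde\theta)$.
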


The theorem complements the verification framework developed above.
Since we keep the joint distributions over value-index pairs as the decision variables,
we have to preserve both the value marginal and the row-wise index constraints throughout the approximation.

Our approximation scheme uses a \emph{two-level} discretization.
We first discretize the value space to obtain a discrete prior, and then discretize the index support to obtain a finite LP (see \Cref{fig:approximation} for illustrations).
The proof compares three optimization problems: the original continuous-prior problem, the discretized-prior problem, and the finite LP after index discretization.
We solve for the finite LP and lift the discrete optimal solution back to the feasible region under the original prior as the output of our algorithm.
The key step is to bound the payoff loss in each transition between the three problems above.
We provide the proof sketch below.


\xhdr{Proof Sketch}
The first problem is the original sender's persuasion problem \eqref{eq:best-response}, which admits an optimal solution $G^*$ with optimal value $\mathsf{OPT}$.

The second problem is obtained by discretizing the value prior.
For a grid size $\delta\in(0,1)$, we uniformly partition the value space $[0,1]$ into intervals $\{I_r\}_{i\in M}$.
We then move all probability mass of $F$ in each interval $I_r$ to its midpoint $v_r$, thereby obtaining a discrete prior 
\begin{equation}
\label{eq:discrete-prior}
    F_\delta\triangleq\sum_{i=1}^M f_i^\delta\cdot \delta_{v_i}
    \quad \mathrm{where}\quad
    f_i^\delta\triangleq F(I_i)~,
\end{equation}
and $\delta_{v_i}$ denotes the Dirac measure at $v_i$.
We define the optimal value under the optimum $G^*_\delta$
\begin{equation}
    \mathsf{OPT}_\delta\triangleq\max\nolimits_{G_\delta\in\mathcal G(F_\delta,c)} U(G_\delta)=\mathbb E_{(v,\theta)\sim G_\delta} \left[ u\left(\min\{v,\theta\}\right) \right]~.\tag{$\mathcal{P}^\delta_{\cc{BR}}$} 
\end{equation}

The two problems are not directly comparable: their feasible sets are $\mathcal G(F,c)$ and $\mathcal G(F_\delta,c)$, neither of which contains the other.
Thus, there is no immediate ordering between their optimal values.
Nevertheless, their optimal values satisfy the following two-sided stability bound:
\begin{equation}
\label{eq:bound-1}
    |\mathsf{OPT}-\mathsf{OPT}_\delta|\le L\delta~.
\end{equation}
To prove this, we construct feasible solutions in both directions.
Starting from $G^*$, we move each row's value mass to the corresponding grid midpoints and recompute the row index, obtaining a feasible solution under $F_\delta$.
Since both values and recomputed indices move by at most $\delta$, the Lipschitz continuity of $u$ implies a payoff loss of at most $L\delta$.
Conversely, starting from $G_\delta^*$, we spread each midpoint mass back to its original value interval according to the conditional prior and recompute row indices.
The same argument also gives a payoff loss of at most $L\delta$.

The third problem is a finite LP obtained by further discretizing the index support.
Let $v_1<\cdots<v_M$ denote the support of the discretized prior $F_\delta$.
Define $\Theta_\delta\triangleq [v_1-c,v_M-c]$ as the effective index domain under $F_\delta$.
Fix a granularity $\epsilon\in(0,1)$, we first discretize the payoff range of the interim utility $u(\kappa)$.
For each $\ell_k\triangleq k\epsilon$, we compute the corresponding generalized inverse $\xi_k
\triangleq
\inf\{\theta\in\Theta_\delta:u(\theta)\ge \ell_k\}$.
We then identify the full index grids:
\begin{equation}
\label{eq:index-grid}
    \Gamma
    \triangleq
    \{v_i-c:i\in[M]\}
    \cup
    \left(\{v_i:i\in[M]\}\cap\Theta_\delta\right)
    \cup
    \{\xi_k:k=0,1,\ldots,\lceil 1/\epsilon\rceil\}~.
\end{equation}
Denote $R\triangleq |\Gamma|$.
The points $v_i$ ensure that the mapping $\theta\mapsto\min\{v_i,\theta\}$ has no kink inside any grid interval.
The points $v_i-c$ ensure that the function $q_i(\theta)=(v_i-\theta)_+-c$ does not change sign inside any grid interval.
The payoff-grid points $\xi_k$ ensure that moving mass between adjacent grid indices causes at most $\epsilon$ utility loss.
Let $\boldsymbol x=(x_{ir})_{i\in[M],r\in[R]}$ be the decision variable, where $x_{ir}\ge 0$ denotes the probability mass assigned to $(v_i,\theta_r)$.
We consider the following finite LP:
\begin{equation}
   \tag{$\mathcal{P}^{\delta,\epsilon}_{\cc{BR}}$} 
\begin{aligned}
\mathsf{OPT}_{\delta,\epsilon}
\triangleq
\max_{\boldsymbol x\ge0}\quad
&\sum_{i=1}^M\sum_{r=1}^R
u(\min\{v_i,\theta_r\})\ x_{ir}\\
\mathrm{s.t.}\quad
&\sum_{r=1}^R x_{ir}=f_i^\delta~,
&&\forall i\in[M]~,\\
&\sum_{i=1}^M
\big((v_i-\theta_r)_+-c\big)x_{ir}=0~,
&&\forall r\in[R]~.
\end{aligned} 
\end{equation}
Since every solution is also feasible for the discretized-prior problem \eqref{eq:best-response-delta}, it holds $\mathsf{OPT}_{\delta,\epsilon}\le\mathsf{OPT}_\delta$.
The converse approximation follows from a rounding argument.
Any feasible strategy under $F_\delta$ can be rounded to the grid $\Gamma$ by splitting each non-grid row between the two adjacent grid rows.
The construction ensures that the total loss is at most $\epsilon$.
Hence
\begin{equation}
\label{eq:bound-2}
    \mathsf{OPT}_\delta-\epsilon
    \le
    \mathsf{OPT}_{\delta,\epsilon}
    \le
    \mathsf{OPT}_\delta~.
\end{equation}

Finally, we solve the finite LP and lift its optimal solution $G_{\delta,\epsilon}^*$ back to the original prior $F$.
By the same lifting argument as above, this step incurs at most $L\delta$ payoff loss.
Combining the loss bounds \eqref{eq:bound-1} and \eqref{eq:bound-2} achieved above, the resulting feasible strategy $\widetilde G\in\mathcal G(F,c)$ satisfies
$$
U(\widetilde G)
\ge
\mathsf{OPT}_{\delta,\epsilon}-L\delta
\ge
\mathsf{OPT}-\epsilon-2L\delta~.
$$
Thus, the constructed strategy $\widetilde G$ is the output of our FPTAS, which achieves any constant factor approximation to the optimal solution $G^*$.

\begin{figure}[ht]
    \centering
    \begin{minipage}[t]{0.48\textwidth}
        \centering
        \begin{tikzpicture}[scale=2.0]
    
            \draw[->] (-1.2,0) -- (1.5,0) node[right] {\footnotesize $\kappa$};
            \draw[-] (0,0) -- (0,2.2) node[left] {};
            \draw[->] (0,0) -- (0,2.3) node[left] {};
    
            \draw[rounded corners=3pt, fill=white, draw=black] (-1.3,1.55) rectangle (-0.35,1.85);
            \draw[red, thick] (-1.25,1.70) -- (-1.00,1.70);
            \draw (-1.05,1.70) node[anchor=west, font=\footnotesize] {$\ u(\kappa)$};
    
            \draw[red, thick] (-0.8,0) -- (0,0);
            \draw[red, thick, domain=0:1.2, samples=100]
                plot (\x,{2*(2*(\x/1.2)-(\x/1.2)^2)});
    
            \draw[dashed] (0,0.5) -- (0.1608,0.5);
            \draw[dashed] (0,1.0) -- (0.3515,1.0);
            \draw[dashed] (0,1.5) -- (0.6000,1.5);
            \draw[dashed] (0,2.0) -- (1.2000,2.0);
    
            \draw[dashed] (0.1608,0.5) -- (0.1608,0);
            \draw[dashed] (0.3515,1.0) -- (0.3515,0);
            \draw[dashed] (0.6000,1.5) -- (0.6000,0);
            \draw[dashed] (1.2000,2.0) -- (1.2000,0);
    
            \fill[black] (0.1608,0.5) circle (0.025);
            \fill[black] (0.3515,1.0) circle (0.025);
            \fill[black] (0.6000,1.5) circle (0.025);
    
            \node[left] at (0,0.5) {\footnotesize $\epsilon$};
            \node[left] at (0,1.0) {\footnotesize $2\epsilon$};
            \node[left] at (0,1.5) {\footnotesize $3\epsilon$};
            \node[left] at (0,2.0) {\footnotesize $1$};
    
            \draw (-0.8,0) node[below] {\footnotesize $-c$};
            \draw (0,0) node[below] {\footnotesize $\xi_0$};
            \draw (0.1608,0) node[below] {\footnotesize $\xi_1$};
            \draw (0.3515,0) node[below] {\footnotesize $\xi_2$};
            \draw (0.6000,0) node[below] {\footnotesize $\xi_3$};
            \draw (1.20,0) node[below] {\footnotesize $\xi_4=1-c$};
        \end{tikzpicture}
    \end{minipage}
    \hspace{0.02\textwidth}
    \begin{minipage}[t]{0.48\textwidth}
        \centering
        \begin{tikzpicture}[scale=2.0]
            \draw[black] (0,-0.8) rectangle (2,1.2);
    
            \draw[dashed, black!45] (0,0) -- (1.2,1.2)
                node[above] {\footnotesize $\theta=v$};
            \draw[dashed, black!45] (0,-0.8) -- (2,1.2)
                node[above] {\footnotesize $\;\;\theta=v-c$};
    
            \foreach \y/\lab in {
                -0.4667/$ v_1-c$,
                 0.0000/$\xi_0$,
                 0.1100/$\xi_1$,
                 0.2000/$ v_2-c$,
                 0.3333/$ v_1$,
                 0.5600/$\xi_2$,
                 0.7600/$\xi_3$,
                 0.8667/$ v_3-c$,
                 1.2000/$1-c$
            }{
                \draw[dashed, black!70, line width=0.25pt] (0,\y) -- (2,\y);
                \node[left, fill=white, inner sep=1pt] at (0,\y) {\tiny \lab};
            }
    
            \foreach \x/\lab in {
                0.3333/$ v_1$,
                1.0000/$ v_2$,
                1.6667/$ v_3$
            }{
                \draw[dashed, black!70, line width=0.25pt] (\x,-0.8) -- (\x,1.2);
                \node[below, fill=white, inner sep=1pt] at (\x,-0.8) {\footnotesize \lab};
            }
    
            \node[below left] at (0,-0.8) {\footnotesize $0$};
            \node[below] at (2,-0.8) {\footnotesize $1$};
            \node[left] at (0,-0.8) {\footnotesize $-c$};
            \node[above] at (0,1.2) {\footnotesize index};
            \node[right] at (2,-0.8) {\footnotesize value};
    
            \foreach \x in {0.3333,1.0000,1.6667}{
                \foreach \y in {-0.4667,0.0000,0.1200,0.2000,0.3333,0.5600,0.7600,0.8667}{
                    \fill[black] (\x,\y) circle (0.025);
                }
            }
        \end{tikzpicture}
    \end{minipage}
        \caption{
        Illustration of the two-level discretization in \Cref{thm:fptas}.
        The left panel shows the payoff-grid construction for the interim utility $u(\kappa)$.
        The right panel shows the resulting finite linear program supported on the value-index grids after discretizations.
        The black points denote the supports of the discrete solution under the finite LP.
        }
        \label{fig:approximation}
\end{figure}

\begin{algorithm}[ht]
\caption{An additive FPTAS}
\label{alg:continuous-fptas}
\DontPrintSemicolon
\SetAlgoLined
\SetKwInOut{Input}{Input}
\SetKwInOut{Output}{Output}

\Input{
Prior $F\in\Delta([0,1])$;
Inspection cost $c>0$;
$L$-Lipschitz interim utility $u(\kappa)$;
Target error $\bar\epsilon\in(0,1)$.
}

\Output{
A feasible strategy $\widetilde G\in\mathcal G(F,c)$ with payoff at least $\mathsf{OPT}-\bar\epsilon$.
}

Set the discretization parameters $\epsilon\gets \bar\epsilon/2$ and $\delta\gets \min\left\{1/2,\bar\epsilon/4L\right\}$\;

Construct the discretized prior $F_\delta$ according to \eqref{eq:discrete-prior} under parameter $\delta$\;

Construct the index grids $\Gamma$ according to \eqref{eq:index-grid} under parameter $\epsilon$\;

Construct and solve the finite linear program \eqref{eq:best-response-delta-epsilon}. Let $G_{\delta,\epsilon}^*$ be an optimal solution\;

Lift $G_{\delta,\epsilon}^*$ back to the original prior $F$ by spreading each mass at $v_i$ over $I_i$, and obtain $\widetilde G$\;

\Return{$\widetilde G$}\;

\end{algorithm}

\bibliographystyle{plainnat}
\bibliography{reference}

\newpage
\appendix
\section{Proofs in \Cref{sec:prelim}}
\label{apx:preli}

\subsection{\Cref{lem:convex-compact} and Its Proof}
\label{pf:convex-compact}
\begin{lemma}
\label{lem:convex-compact}
For any prior $F$ and cost $c$, the action space $\mathcal{G}(F,c)$ is convex and weakly compact.
\end{lemma}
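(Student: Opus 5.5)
The key step is to rewrite the row-wise index constraint, which involves conditional distributions, as a family of integral constraints that are linear in $G$. Define $q(v,\theta)=(v-\theta)_+-c$ as in the duality section. The claim is that, for a joint distribution $G$ on $X\triangleq[0,1]\times[-c,1-c]$ with value marginal $F$, the condition ``$\int q(v,\theta)\dd G(v\mid\theta)=0$ for $G_\theta$-a.e.\ $\theta$'' is equivalent to
\[
\int_X \phi(\theta)\,q(v,\theta)\dd G(v,\theta)=0\quad\text{for every continuous }\phi:[-c,1-c]\to\mathbb R.
\]
The forward direction follows by disintegrating $G=G_\theta\otimes G_{v\mid\theta}$, since $q$ is bounded. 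For the converse, set $m(\theta)\triangleq\int q\dd G(v\mid\theta)$, which is bounded and measurable. The displayed identity says $\int\phi\, m\dd G_\theta=0$ for all continuous $\phi$. Continuous functions are dense in $L^1(G_\theta)$, so the identity extends to bounded measurable $\phi$, and the choice $\phi=\mathrm{sign}(m)$ gives $m=0$ $G_\theta$-a.e. Similarly, the marginal constraint is equivalent to $\int\psi(v)\dd G=\int\psi\dd F$ for all continuous $\psi$ on $[0,1]$.

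\emph{Convexity.} Let $G_1,G_2\in\mathcal G(F,c)$ and $G=tG_1+(1-t)G_2$ with $t\in[0,1]$. The marginal of $G$ is $tF+(1-t)F=F$. Every test identity above is linear in $G$, so it holds for $G$ because it holds for $G_1$ and $G_2$. By the equivalence, $G\in\mathcal G(F,c)$.

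\emph{Weak compactness.} The set $X$ is compact metric, so $\Delta(X)$ is weakly compact by Prokhorov's theorem (equivalently, by Banach--Alaoglu together with Riesz representation). It therefore suffices to show that $\mathcal G(F,c)$ is weakly closed. Take $G_n\to G$ weakly with $G_n\in\mathcal G(F,c)$.
\begin{itemize}
\item For each continuous $\psi$, the map $(v,\theta)\mapsto\psi(v)$ is continuous and bounded on $X$, so $\int\psi\dd G=\lim_n\int\psi\dd G_n=\int\psi\dd F$. Hence $G$ has value marginal $F$.
\item The function $q$ is continuous on $X$, because $(\cdot)_+$ is continuous. So $\phi(\theta)q(v,\theta)$ is continuous and bounded, and $\int\phi q\dd G=\lim_n\int\phi q\dd G_n=0$.
\end{itemize}
By the equivalence, $G\in\mathcal G(F,c)$, so the set is closed and hence compact.

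The only delicate point is the equivalence lemma: conditional distributions themselves do not behave well under weak limits, and the lemma is what replaces them with the continuous test functions $\phi q$. Continuity of $q$ is essential there. Standard measure theory (existence of disintegration on Polish spaces and density of $C$ in $L^1$) covers the rest.
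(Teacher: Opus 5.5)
Your proof is correct and follows the same route as the paper: Prokhorov on the compact space $[0,1]\times[-c,1-c]$, then weak closedness by passing the test identities $\int\psi(v)\,\mathrm{d}G=\int\psi\,\mathrm{d}F$ and $\int\phi(\theta)q(v,\theta)\,\mathrm{d}G=0$ to the limit, using continuity of $q$. You go further than the paper in two places: you actually prove the equivalence between these test identities and the $G_\theta$-a.e.\ row-wise constraint, which the paper only asserts, and you get convexity from linearity of the same identities at the level of measures rather than from the paper's density-based mixing of conditionals, so your argument also covers priors with atoms.
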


\begin{proof}[Proof of \Cref{lem:convex-compact}]
We divide the proof into two steps.

First, we show that the strategy space $\mathcal{G}(F,c)$ is convex given any prior $F$ and cost $c$.
Consider two feasible strategies $G\in\mathcal{G}(F,c)$ and $H\in\mathcal{G}(F,c)$ (with densities $g$ and $h$).
We want to prove that the convex combination $K(x)=\lambda\cdot G(x)+(1-\lambda)\cdot H(x)$ for any $x\in[0,1]$ (with density $\lambda\cdot g+(1-\lambda)\cdot h$) also belongs to space $\mathcal{G}(F,c)$ for any $\lambda\in[0,1]$.

To see this, first, for all indices $\theta\in[-c,1-c]$, it holds
\begin{equation*}
    \lambda\cdot \mathbb{E}_{v\sim G_{\cdot|\theta}}[\max\{v-\theta,0\}]+(1-\lambda)\cdot\mathbb{E}_{v\sim H_{\cdot|\theta}}[\max\{v-\theta,0\}]=c~.
\end{equation*}
Besides, for each value $v\in[0,1]$, it holds
\begin{equation*}
    \lambda\cdot\int_{\theta=-c}^{1-c}g_{\cdot|v}(\theta)\dd v+(1-\lambda)\cdot\int_{\theta=-c}^{1-c}h_{\cdot|v}(\theta)\dd \theta=f(v)~.
\end{equation*}
Therefore, we have proved that any convex combination of the two feasible strategies also satisfies all the constraints of space $\mathcal{G}(F,c)$, which makes $\mathcal{G}(F,c)$ convex.

Second, we show that the strategy space $\mathcal{G}(F,c)$ is also compact.
The boundedness of space $\mathcal{G}(F,c)$ can be directly obtained by the fact that space $\mathcal{G}(F,c)$ is indeed a measure space over $[0,1]\times[-c,1-c]$.
Since $[0,1]\times [-c,1-c]$ is a compact metric space, the space of probability measures $\Delta([0,1]\times [-c,1-c])$ is compact in the weak topology by Prokhorov's theorem. 

Next, we prove that the strategy space $\mathcal{G}(F_i,c_i)$ is weak closed.
We assume that there exists a sequence $\{G^m\}_{m\in\mathbb{Z}}$ with each $G^m\in \mathcal{G}(F,c)$, weakly converges to a certain 2-D distribution $G$, we shall prove the limit point $G$ also belongs to set $\mathcal{G}(F,c)$. 

To see this, we first check the marginal constraint (Bayesian plausibility). 
Since each $G^m\in \mathcal{G}(F,c)$, we have 
\[
\int_{\theta=-c}^{1-c} g^m(v,\theta) \dd\theta =f(v)~, \quad \forall v\in[0,1]~.
\]
It is equivalent to the following condition: for all bounded and continuous function $\phi$ defined on $[0,1]$, we have
\[
\int_{(v,\theta)}\phi(v)\dd G^m(v,\theta)=\int_{v=0}^1 \phi(v)\dd F(v)~.
\]
According to the definition of weak convergence, we have
\[
\lim\limits_{m\to \infty}\int_{(v,\theta)}\phi(v)\dd G^m(v,\theta)=\int_{(v,\theta)}\phi(v)\dd G(v,\theta)=\int_{v=0}^1 \phi(v)\dd F(v)~,
\]
which is further equivalent to 
\[
\int_{\theta=-c}^{1-c} g(v,\theta) \dd\theta =f(v)~, \quad \forall v\in[0,1]~.
\]

We then check the index constraint, which is 
\[
\int^1_{v=0}\left((v-\theta)_+-c\right)g^m(v,\theta)\dd v =0~,\quad\forall \theta\in[-c,1-c]~.
\]
It is equivalent to the following condition: for all bounded and continuous function $\psi$ defined on $[-c,1-c]$, we have
\begin{align*}
&\int_{\theta=-c}^{1-c}\psi(\theta)\left(\int^1_{v=0} ((v-\theta)_+-c)\dd G^m(v,\theta)\right)\dd\theta\\
=&\int_{(v,\theta)}\psi(\theta)((v-\theta)_+-c)\dd G^m(v,\theta)=0~.  
\end{align*}
Both functions $\psi$ and $(v-\theta)_+-c$ are bounded and continuous in $\theta$ and $(v,\theta)$, respectively. 
According to the definition of weak convergence, we have
\[
\lim\limits_{m\to\infty}\int_{(v,\theta)}\psi(\theta)((v-\theta)_+-c)\dd G^m(v,\theta)=\int_{(v,\theta)}\psi(\theta)((v-\theta)_+-c)\dd G(v,\theta)=0~.
\]
This fact is equivalent to 
\[
\int^1_{v=0}((v-\theta)_+-c)g(v,\theta)\dd v =0~, \quad \forall \theta\in[-c,1-c]~.
\]

Thus we have proved that the limit point 
$G$ must belong to space $\mathcal{G}(F,c)$, which makes the space closed.
To sum up, we have achieved the boundedness and closeness of space $\mathcal{G}(F,c)$ under weak convergence, and we can directly make space $\mathcal{G}(F,c)$ weak compact through the Heine-Borel Theorem.
\end{proof}

\section{Proofs in \Cref{sec:verification}}
\label{apx:verification}

\subsection{Proof of \Cref{thm:optimal-solution-exists-in-dual-new}}
\label{pf:dual-existence}

Before proving \Cref{thm:optimal-solution-exists-in-dual-new}, we have to introduce the following \Cref{lem:lambda bigger than phi,lem:mu less than zero,lem:lambda increasing} with respect to the properties of dual variables $\lambda(v)$ and $\mu(\theta)$.
\begin{lemma}
\label{lem:lambda bigger than phi}
For any dual solution $(\lambda,\mu)$, it holds that $\lambda(v)\ge u(v-c)\ge 0$ for each $v\in[0,1]$.
\end{lemma}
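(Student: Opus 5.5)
The plan is to read off the bound directly from the dual constraint~\eqref{eq:dual-con1}, evaluated at one carefully chosen index for each fixed value $v$. The constraint must hold for every pair $(v,\theta)\in[0,1]\times[-c,1-c]$. We want to pick $\theta$ so that the term $\mu(\theta)q(v,\theta)$ disappears, whatever $\mu$ is. This happens exactly when $q(v,\theta)=(v-\theta)_+-c=0$.

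The first step is to fix $v\in[0,1]$ and take $\theta=v-c$. This index is admissible, since $v-c\in[-c,1-c]$. Because $c>0$ we have $v-\theta=c>0$, so $q(v,v-c)=c-c=0$. The constraint then reduces to $\lambda(v)\ge p(v,v-c)$, with no condition on $\mu(v-c)$.

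The second step is to evaluate $p(v,v-c)=u(\min\{v,v-c\})$. Since $c>0$, we have $\min\{v,v-c\}=v-c$, and hence $p(v,v-c)=u(v-c)$. This gives $\lambda(v)\ge u(v-c)$.

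The last step is the inequality $u(v-c)\ge 0$, which follows from the definition of $u$. For $\kappa<0$ we have $u(\kappa)=0$, and for $\kappa\ge 0$ we have $u(\kappa)=H(\kappa)\ge 0$, because $H$ is a CDF. The same holds for the competitive interim utility $u_G$, which is either $0$ or a power of a CDF. So the chain $\lambda(v)\ge u(v-c)\ge 0$ holds for every $v$ and for any feasible dual pair $(\lambda,\mu)$.

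I do not expect a real obstacle here. The only point needing care is to note that the choice $\theta=v-c$ always lies in the index domain and always makes $q$ vanish, for every $v\in[0,1]$. This requires no assumption on $\mu$, such as the sign or boundedness conditions in \Cref{thm:optimal-solution-exists-in-dual-new}, and it does not need Lipschitz continuity of $u$.
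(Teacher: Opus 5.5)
Your proposal is correct and matches the paper's proof: both evaluate the dual constraint~\eqref{eq:dual-con1} at the pair $(v,v-c)$, where $q(v,v-c)=0$, to get $\lambda(v)\ge p(v,v-c)=u(v-c)\ge 0$. You also explicitly check that $v-c$ lies in the index domain and that $u\ge 0$, which the paper leaves implicit.
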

\begin{proof}[Proof of \Cref{lem:lambda bigger than phi}]
For each $v\in[0,1]$, by the fact that tuple $(v,v-c)$ satisfies Condition (\ref{eq:dual-con1}), we directly have $\lambda(v)\ge u(v-c)\ge 0$.
\end{proof}

\begin{lemma}
\label{lem:mu less than zero}
Assume there exists a constant $L>0$ such that the interim utility function $u$ is $L$-Lipschitz continuous over $[-c,1-c]$.
If there exists an optimal solution $(\lambda^*,\mu^*)$ to the dual problem \ref{eq:dual-obj}, we can without loss assume that $\mu^*(\theta)\in [-L,0]$ for each $\theta\in[-c,1-c]$.
\end{lemma}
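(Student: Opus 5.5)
The plan is to start from an arbitrary optimal dual solution $(\lambda^*,\mu^*)$ and replace $\mu^*$ by its clipped version $\tilde\mu(\theta)\triangleq\min\{0,\max\{\mu^*(\theta),-L\}\}$, keeping $\lambda^*$ unchanged. The objective $\int\lambda^* f$ does not change, so it suffices to show that $(\lambda^*,\tilde\mu)$ is still feasible, that is, $\lambda^*(v)+\tilde\mu(\theta)q(v,\theta)\ge p(v,\theta)$ for every $(v,\theta)$. Fix $\theta$. Wherever $\tilde\mu(\theta)=\mu^*(\theta)$ nothing changes, so only two cases need checking.

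\emph{Case $\mu^*(\theta)>0$, so $\tilde\mu(\theta)=0$.} Here we need $\lambda^*(v)\ge p(v,\theta)=u(\min\{v,\theta\})$. By \Cref{lem:lambda bigger than phi}, $\lambda^*(v)\ge u(v-c)$. For $v$ with $q(v,\theta)\le0$, the original constraint already gives $\lambda^*(v)\ge p(v,\theta)-\mu^*(\theta)q(v,\theta)\ge p(v,\theta)$. For $v$ with $q(v,\theta)>0$, i.e.\ $v>\theta+c$, we have $\min\{v,\theta\}=\theta$. To use the feasibility constraint at other values, I would pick $v'=\theta+c$, where $q(v',\theta)=0$. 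This gives $\lambda^*(\theta+c)\ge u(\theta)$. I then need $\lambda^*(v)\ge\lambda^*(\theta+c)$, which is monotonicity of $\lambda^*$. Without that, I would use a direct bound instead: $u$ is nondecreasing (a CDF on $[0,1-c]$, zero below), and $v>\theta+c$ gives $v-c>\theta$, so $\lambda^*(v)\ge u(v-c)\ge u(\theta)=p(v,\theta)$. This handles the case without needing monotonicity of $\lambda^*$.

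\emph{Case $\mu^*(\theta)<-L$, so $\tilde\mu(\theta)=-L$.} Where $q(v,\theta)\ge0$ the term $\tilde\mu q$ increases, so feasibility is preserved. Where $q(v,\theta)<0$, i.e.\ $v<\theta+c$, we need $\lambda^*(v)\ge p(v,\theta)+Lq(v,\theta)$. Again using $\lambda^*(v)\ge u(v-c)$, it suffices to show $u(v-c)\ge u(\min\{v,\theta\})-L\bigl(c-(v-\theta)_+\bigr)$.

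If $v\le\theta$, the right side is $u(v)-Lc\le u(v-c)$ by $L$-Lipschitzness. If $\theta<v<\theta+c$, the right side is $u(\theta)-L(c-v+\theta)=u(\theta)-L(\theta-(v-c))$. Since $0<\theta-(v-c)$, this is at most $u(v-c)$, again by Lipschitzness. So both cases give feasibility, and $(\lambda^*,\tilde\mu)$ is optimal with $\tilde\mu\in[-L,0]$.

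The main subtlety is the case $\mu^*>0$. It requires monotonicity of $u$, which holds because $u$ is $0$ below $0$ and equals the CDF $H$ on $[0,1-c]$; Lipschitz continuity also forces $H(0)=0$, so $u$ is continuous and nondecreasing everywhere. I expect no measurability issues, since clipping is a pointwise operation.
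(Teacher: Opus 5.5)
Your proof is correct and follows essentially the same route as the paper. Like the paper, you replace $\mu^*$ by $0$ where it is positive and by $-L$ where it is below $-L$, keep $\lambda^*$ fixed, and recheck feasibility using $\lambda^*(v)\ge u(v-c)$ together with the monotonicity and $L$-Lipschitzness of $u$. Your direct check of the $\mu^*>0$ case (only the pairs with $v>\theta+c$ are at risk, and there $u(v-c)\ge u(\theta)$) is cleaner than the paper's contradiction argument through a binding constraint.
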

\begin{proof}[Proof of \Cref{lem:mu less than zero}]
First, we prove $\mu^*(\theta)\le 0$ for each $\theta\in[-c,1-c]$ by contradiction.
We assume that there exists an optimal dual solution $(\lambda^*,\mu^*)$ with
$\mu^*(\overline{\theta})>0$ for some $\overline{\theta}\in[-c,1-c]$.
If $\lambda^*(v)> p(v,\overline{\theta})-\mu^*(\overline{\theta})q(v,\overline{\theta})$ holds for all $v\in[\overline{\theta},1]$, then assuming $\mu^*(\overline{\theta})=0$ is without loss of generality, since this assumption does not affect the dual objective while keeping the constraints hold.
If there exists a value $\overline{v}>\overline{\theta}$ with $\lambda^*(\overline{v})= p(\overline{v},\overline{\theta})-\mu^*(\overline{\theta})q(\overline{v},\overline{\theta})$, 
then for any $\theta\ge \overline{v}$, by Constraint (\ref{eq:dual-con1}), it holds that $\lambda^*(\overline{v})\ge u(\overline{v})$.
For the tuple $(\overline{v},\overline{\theta})$, also by Constraint (\ref{eq:dual-con1}) and the monotonicity of $u$, we know that $\lambda^*(\overline{v})=u(\overline{\theta})-\mu^*(\overline{\theta})(\overline{v}-\theta)<u(\overline{\theta})<u(\overline{v})$, which forms a contradiction.
To sum up, we have proved that, for any optimal solution $(\lambda^*,\mu^*)$ to the dual problem, we can without loss assume that $\mu^*(\theta)\le 0$ for each $\theta\in[-c,1-c]$.

Second, we prove that for each $\theta\in[-c,1-c]$, $\mu^*(\theta)$ is bounded by the constant $L$.
The Lipschitz continuity of function $u$ implies that $|u(x)-u(y)|\le L\cdot |x-y|$ for any $x,y\in[-c,1-c]$.
We assume that there exists a $\theta_0\in(-c,1-c)$ (the case of $\theta_0=-c$ or $1-c$ is trivial) with $\mu(\theta_0)<-L$.
Then we know that $u(v-c)\ge -\mu(\theta_0)q(v,\theta_0)+p(v,\theta_0)$ for any $v\in[0,\theta_0+c]$ and $u(v-c)\le -\mu(\theta_0)q(v,\theta_0)+p(v,\theta_0)$ for any $v\in[\theta_0+c,1]$.
So setting $\mu(\theta)=-L$ will weakly further decrease the objective function.
To sum up, we have proved that $-\mu(\theta)\le L$, that is function $\mu$ is bounded by the constant $L$. 
Directly by Constraint (\ref{eq:dual-con1}), we know that the function $\lambda$ corresponding to the bounded function $\mu$, is also bounded.
\end{proof}

\begin{lemma}
\label{lem:lambda increasing}
Assume there exists a constant $L>0$ such that the interim utility function $u$ is $L$-Lipschitz continuous over $[-c,1-c]$.
If there exists an optimal solution $(\lambda^*,\mu^*)$ to the dual problem \ref{eq:dual-obj}, then the function $\lambda^*(\cdot)$ is non-decreasing and continuous over $[0,1]$.
\end{lemma}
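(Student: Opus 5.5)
The natural route is to replace $\lambda^*$ by the smallest function compatible with the dual constraints. Given an optimal $(\lambda^*,\mu^*)$, where by \Cref{lem:mu less than zero} we may assume $\mu^*(\theta)\in[-L,0]$, define
\[
\hat\lambda(v)\triangleq\sup_{\theta\in[-c,1-c]}\left\{p(v,\theta)-\mu^*(\theta)q(v,\theta)\right\}.
\]
The pair $(\hat\lambda,\mu^*)$ is dual feasible by construction. Since feasibility forces $\hat\lambda\le\lambda^*$ pointwise, its objective $\int\hat\lambda f$ is no larger. Optimality of $(\lambda^*,\mu^*)$ then gives $\int(\lambda^*-\hat\lambda)f=0$, so $\lambda^*=\hat\lambda$ $F$-almost everywhere. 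The statement should be read as ``without loss of generality'' (as in \Cref{lem:mu less than zero}): any optimal $\lambda^*$ may be replaced by this envelope, which is again optimal. Alternatively, when $F$ has full support, one can upgrade the a.e. equality to an everywhere equality once the monotonicity and continuity of $\hat\lambda$ are established. It therefore suffices to prove that $\hat\lambda$ is non-decreasing and continuous.

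\textbf{Monotonicity.} Fix $\theta$. The map $v\mapsto p(v,\theta)=u(\min\{v,\theta\})$ is non-decreasing because $u$ is non-decreasing, being a CDF $H$ on $[0,1-c]$ and $0$ below. The map $v\mapsto q(v,\theta)=(v-\theta)_+-c$ is non-decreasing, and $-\mu^*(\theta)\ge0$. Hence each function inside the supremum is non-decreasing in $v$, and so is their pointwise supremum.

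\textbf{Continuity.} Again fix $\theta$. The map $v\mapsto u(\min\{v,\theta\})$ is $L$-Lipschitz since $u$ is. The map $v\mapsto -\mu^*(\theta)\big((v-\theta)_+-c\big)$ is $|\mu^*(\theta)|$-Lipschitz, and $|\mu^*(\theta)|\le L$. So every function in the family is $2L$-Lipschitz in $v$, uniformly in $\theta$. The family is also uniformly bounded, because $|p|\le1$, $|q|\le1+c$ and $|\mu^*|\le L$, so the supremum is finite. A pointwise supremum of a uniformly bounded family of $2L$-Lipschitz functions is $2L$-Lipschitz, hence continuous on $[0,1]$.

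\textbf{Main obstacle.} The hard step is the reduction that uses the bounds $\mu^*\in[-L,0]$. Both monotonicity and continuity depend on the sign and boundedness of $\mu^*$, so \Cref{lem:mu less than zero} must be applied first. One must also check that its modification of $\mu^*$ keeps $(\lambda^*,\mu^*)$ optimal before the envelope is taken. In addition, the ``without loss'' replacement of $\lambda^*$ by $\hat\lambda$ has to be justified, since $\lambda^*$ may differ from $\hat\lambda$ on an $F$-null set. If strict everywhere equality is wanted, use full support of $F$: a strict gap $\lambda^*(v_0)>\hat\lambda(v_0)$ at a single point can simply be lowered without changing the objective, so we choose the representative $\hat\lambda$.
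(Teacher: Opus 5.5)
Your proof is the paper's argument: $\lambda^*$ is identified with the upper envelope of $\hat L_\theta(v)=p(v,\theta)-\mu^*(\theta)q(v,\theta)$, each of which is non-decreasing because $\mu^*\le 0$, and your uniform $2L$-Lipschitz bound supplies the equicontinuity that the paper's step ``each $\hat L_\theta$ is continuous, hence so is the maximum'' tacitly needs (continuity of the members alone gives only lower semicontinuity of a supremum). Drop the aside that full support of $F$ upgrades $\lambda^*=\hat\lambda$ from almost everywhere to everywhere: an optimal $\lambda^*$ can be raised at a single point without losing feasibility or optimality, so the lemma holds only in the ``without loss of generality'' sense you otherwise adopt, which is also how the paper implicitly treats its identity $\lambda^*(v)=\max_\theta\hat L_\theta(v)$.
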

\begin{proof}[Proof of \Cref{lem:lambda increasing}]
Suppose $(\lambda^*,\mu^*)$ is optimal to the dual problem \ref{eq:dual-obj}.
By Constraint (\ref{eq:dual-con1}) and the objective of the dual problem, we have
\begin{equation*}
    \lambda^*(v)=\max_{\theta\in[-c,1-c]}-\mu^*(\theta)q(v,\theta)+p(v,\theta)~,\quad \forall v\in[0,1]~.
\end{equation*}
Given a certain $\theta\in[-c,1-c]$, for all $v\in[0,1]$, we define
\begin{equation*}
    \hat{L}_\theta(v)\triangleq-\mu^*(\theta)q(v,\theta)+p(v,\theta)=
    \begin{cases}
        -\mu^*(\theta)(v-\theta-c)+u(\theta) \quad & \text{if }v\ge \theta~, \\
        u(v)+c\mu^*(\theta) \quad & \text{if } v<\theta~.
    \end{cases}
\end{equation*}
We observe that function $\hat{L}_\theta(\cdot)$ is linear over $[\theta,1]$ and forms a shifted version of function $u$ over $[0,\theta]$.
Besides, $\lambda^*(v)=\max_{\theta\in[-c,1-c]}\hat{L}_\theta(v)$ for each $v\in[0,1]$.
By \Cref{lem:mu less than zero}, it is obvious that function $\hat{L}_\theta$ is non-decreasing over $[0,1]$ for each $\theta\in[-c,1-c]$.
By the fact that $\lambda^*(v)=\max_{\theta\in[-c,1-c]}\hat{L}_\theta(v)$ for each $v\in[0,1]$, we have the function $\lambda^*(\cdot)$ is also non-decreasing over $[0,1]$.
Since the dual variable $\mu^*$ is a bounded function (\Cref{lem:mu less than zero}) and the function $u$ is continuous over $[-c,1-c]$, we know that each $\hat{L}_\theta(\cdot)$ is continuous over $[0,1]$ for any $\theta\in[-c,1-c]$.
This further implies that the function $\lambda^*(\cdot)$ is also continuous over $[0,1]$.
\end{proof}

With these lemmas above, now we can prove \Cref{thm:optimal-solution-exists-in-dual-new}.

\begin{proof}[Proof of \Cref{thm:optimal-solution-exists-in-dual-new}]
Based on the necessary conditions of the optimal solution to the dual problem (\Cref{lem:mu less than zero,lem:lambda increasing}), the dual problem \ref{eq:dual-obj} is equivalent to the following convex optimization problem \ref{eq:new-dual-obj}:
\begin{align}
\label{eq:new-dual-obj}
\tag{$\mathcal{D}_{\cc{NEW}}$}
\min_{\lambda,\ \mu} \quad & \int_0^1 \lambda(v)f(v)\ \mathrm{d}v\\
\text{subject to} \quad & \label{eq:new-dual-con1}\lambda(v)=\max_{\theta\in[-c,1-c]}\left\{p(v,\theta)-\mu(\theta)q(v,\theta)\right\}~,\quad \forall v\in[0,1]\\
&\label{eq:new-dual-con2}\mu(\theta)\le 0,\quad \forall \theta\in[-c,1-c]
\end{align}
Any optimal solution $(\lambda^*,\mu^*)$ to the problem \ref{eq:new-dual-obj} also forms an optimal solution to the dual problem \ref{eq:dual-obj}, and vice versa.
Thus, it suffices to prove that \ref{eq:new-dual-obj} has an optimal solution, and for any optimal solution $(\lambda^*,\mu^*)$ it holds: (i) $\mu^*(\theta)\in [-L,0]$ for any $\theta\in[-c,1-c]$; and (ii) $\lambda^*$ is non-decreasing and continuous over $[-c,1-c]$.
We divide the whole proof into two steps.

\xhdr{Step-1: There exists an optimal solution $(\lambda^*,\mu^*)$ to the problem \ref{eq:new-dual-obj}}
Let 
$$
\text{OPT}=\inf_{\lambda,\mu}\int_0^1\lambda(v)f(v)\dd v
$$ denote the optimal value of the dual problem.
Let $\text{OBJ}(\lambda,\mu)$ denote the objective value of any pair of feasible solutions $(\lambda,\mu)$.
We aim to show that there exists a feasible solution $(\lambda^*,\mu^*)$ with $\text{OBJ}(\lambda^*,\mu^*)=\text{OPT}$.
Whether the dual problem has an optimal solution or not, 
there always exists a feasible solution sequence 
$\{(\lambda^m,\mu^m)\}_{m\in \Zplus}$ such that the corresponding objective sequence $\{\text{OBJ}(\lambda^m,\mu^m)\}_{m\in\Zplus}$ converges to the optimal value $\text{OPT}^*$.

By \Cref{lem:lambda increasing}, we know that function $\lambda$ is non-decreasing.
Besides, without loss of generality, we can assume that function $\lambda$ is also bounded.
Combining these two facts, the sequence $\{\lambda^m\}_{m\in\Zplus}$ also weakly converges to certain function $\lambda^*$ that satisfies $\int_0^1\lambda^*(v)f(v)\dd v=\text{OPT}^*$.
Next, we show that there exists a function $\mu^*$ that forms a feasible solution with the function $\lambda^*$.
For any $m\in\Zplus$, each pair $(\lambda^m,\mu^m)$ is feasible.
So by Constraint (\ref{eq:dual-con1}), for any $\theta\in[-c,1-c]$, we have that
\begin{equation*}
    \max_{v\in[\theta+c,1]}\frac{u(\theta)-\lambda^m(v)}{v-\theta-c}\le \mu^m(\theta)\le\min\left\{\min_{v\in[0,\theta)}\frac{\lambda^m(v)-u(v)}{c},\min_{v\in[\theta,\theta+c)}\frac{u(\theta)-\lambda^m(v)}{v-\theta-c}\right\}~.
\end{equation*}
Since the pair $(\lambda^m,\mu^m)$ is feasible, we have 
\begin{equation*}
    \max_{v\in[\theta+c,1]}\frac{u(\theta)-\lambda^m(v)}{v-\theta-c}\le \min\left\{\min_{v\in[0,\theta)}\frac{\lambda^m(v)-u(v)}{c},\min_{v\in[\theta,\theta+c)}\frac{u(\theta)-\lambda^m(v)}{v-\theta-c}\right\}~,
\end{equation*}
and such $\mu^m$ exists.
When $m\rightarrow\infty$, we have
\begin{equation*}
    \lim\limits_{m\rightarrow \infty}\max_{v\in[\theta+c,1]}\frac{u(\theta)-\lambda^m(v)}{v-\theta-c}=\max_{v\in[\theta+c,1]}\frac{u(\theta)-\lambda^*(v)}{v-\theta-c}~,
\end{equation*}
and we have
\begin{align*}
    &\lim\limits_{m\rightarrow \infty}\min\left\{\min_{v\in[0,\theta)}\frac{\lambda^m(v)-u(v)}{c},\min_{v\in[\theta,\theta+c)}\frac{u(\theta)-\lambda^m(v)}{v-\theta-c}\right\}\\
    =&\min\left\{\min_{v\in[0,\theta)}\frac{\lambda^*(v)-u(v)}{c},\min_{v\in[\theta,\theta+c)}\frac{u(\theta)-\lambda^*(v)}{v-\theta-c}\right\}~.
\end{align*}
Thus, we know that there also exists such $\mu^*$ that forms a feasible solution along with function $\lambda^*$.
To sum up, we have proved that there exists an optimal solution $(\lambda^*,\mu^*)$ to the problem \ref{eq:new-dual-obj}.
By the equivalence of \ref{eq:new-dual-obj} and \ref{eq:dual-obj}, there also exists an optimal solution to the dual problem \ref{eq:dual-obj}.

\xhdr{Step-2: For any optimal solution $(\lambda^*,\mu^*)$ to the dual problem \ref{eq:dual-obj}, $\mu^*(\theta)\in [-L,0]$ for any $\theta\in[-c,1-c]$, and $\lambda^*$ is non-decreasing and continuous over $[-c,1-c]$}
These conditions hold directly by \Cref{lem:mu less than zero,lem:lambda increasing}.
\end{proof}

\subsection{Proof of \Cref{thm:primal-dual-optimal}}
\label{pf:primal-dual-optimal}

\begin{proof}[Proof of \Cref{thm:primal-dual-optimal}]
First, we prove the sufficiency of \Cref{thm:primal-dual-optimal}.
The dual variables $\lambda(v)$ and $\mu(\theta)$ are both bounded measurable functions, so the order of integration can be interchanged for any double integral over $[0,1]\times[-c,1-c]$ under Fubini's Theorem.
It is obvious that both the primal and the dual have a feasible solution.
We consider any pair of feasible solutions $(g,\lambda,\mu)$.
First, by the value-marginal constraint, we have that 
\begin{equation}
    \label{eq:proof-br-sufficient-1}\int_{v=0}^1\int_{\theta=-c}^{1-c}\lambda(v)g(v,\theta)\dd v\dd \theta=\int_0^1\lambda(v)f(v)\dd v~.
\end{equation}
By the row-wise index constraint, we have that 
\begin{equation}
    \label{eq:proof-br-sufficient-2}\int_{v=0}^1\int_{\theta=-c}^{1-c}\mu(\theta)q(v,\theta)g(v,\theta)\dd v\dd\theta=\int_{\theta=-c}^{1-c}\int_{v=0}^1\mu(\theta)q(v,\theta)g(v,\theta)\dd v\dd\theta=0~.
\end{equation}
After combining Equation (\ref{eq:proof-br-sufficient-1}) and Equation (\ref{eq:proof-br-sufficient-2}), we achieve that
\begin{equation}
    \label{eq:proof-br-sufficient-3}\int_{v=0}^1\int_{\theta=-c}^{1-c}[\lambda(v)+\mu(\theta)q(v,\theta)]\ g(v,\theta)\dd v\dd\theta=\int_0^1\lambda(v)f(v)\dd v~.
\end{equation}
By Constraint (\ref{eq:dual-con1}), we have that
\begin{equation}
    \label{eq:proof-br-sufficient-4}\int_{v=0}^1\int_{\theta=-c}^{1-c}[\lambda(v)+\mu(\theta)q(v,\theta)]g(v,\theta)\dd v\dd\theta\ge \int_{v=0}^1\int_{\theta=-c}^{1-c}p(v,\theta)g(v,\theta)\dd v\dd\theta~.
\end{equation}
Next, we assume that there exists a pair of feasible solutions $(g^*,\lambda^*,\mu^*)$ that satisfies Condition (\ref{eq:primal-dual-optimal}), that is
\begin{equation}
    \label{eq:proof-br-sufficient-5}\int_{v=0}^1\int_{\theta=-c}^{1-c}[\lambda^*(v)+\mu^*(\theta)q(v,\theta)]g^*(v,\theta)\dd v\dd\theta=\int_{v=0}^1\int_{\theta=-c}^{1-c}p(v,\theta)g^*(v,\theta)\dd v\dd\theta~.
\end{equation}
For this special pair of feasible solutions, after combining Equation (\ref{eq:proof-br-sufficient-3}) and Equation (\ref{eq:proof-br-sufficient-5}), then we have
\begin{equation}
    \label{eq:proof-br-sufficient-6}\int_0^1\lambda^*(v)f(v)\dd v=\int_{v=0}^1\int_{\theta=-c}^{1-c}p(v,\theta)g^*(v,\theta)\dd v\dd\theta~.
\end{equation}
For any feasible solution $g$ to the primal problem, after combining Equation (\ref{eq:proof-br-sufficient-3}) and Equation (\ref{eq:proof-br-sufficient-4}), we obtain that
\begin{equation}
    \label{eq:proof-br-sufficient-7}\int_0^1\lambda^*(v)f(v)\dd v\ge\int_{v=0}^1\int_{\theta=-c}^{1-c}p(v,\theta)g(v,\theta)\dd v\dd\theta~.
\end{equation}
By Equation (\ref{eq:proof-br-sufficient-6}) and Inequality (\ref{eq:proof-br-sufficient-7}), we have
\begin{equation*}
    \int_{v=0}^1\int_{\theta=-c}^{1-c}p(v,\theta)g^*(v,\theta)\dd v\dd\theta\ge \int_{v=0}^1\int_{\theta=-c}^{1-c}p(v,\theta)g(v,\theta)\dd v\dd\theta~,
\end{equation*}
which means the special solution $g^*$ is optimal to the primal problem \eqref{eq:best-response}.
In the same manner, we can also prove that $(\lambda^*,\mu^*)$ is the optimal solution to the dual problem \eqref{eq:dual-obj}.
Besides, Equation \eqref{eq:proof-br-sufficient-6} implies that the strong duality holds between this primal problem and the dual problem.

Then we prove the necessity of \Cref{thm:primal-dual-optimal}.
We assume that $G^*$ with density $g^*$ is an optimal solution to \eqref{eq:best-response} and $(\lambda^*,\mu^*)$ is the optimal solution to \ref{eq:dual-obj}.
Since the strong duality holds, we know that
\begin{equation}
\label{eq:proof-br-necessary-1}
    \int_0^1\lambda^*(v)f(v)\dd v=\int_{v=0}^1\int_{\theta=-c}^{1-c}p(v,\theta)\ g^*(v,\theta)\dd v\dd \theta~.
\end{equation}
By the value-marginal constraint, we have that 
\begin{equation}
    \label{eq:proof-br-necesity-1}\int_{v=0}^1\int_{\theta=-c}^{1-c}\lambda^*(v)g^*(v,\theta)\dd v\dd \theta=\int_0^1\lambda^*(v)f(v)\dd v~.
\end{equation}
By the row-wise index constraint, we have
\begin{equation}
    \label{eq:proof-br-necesity-2}\int_{v=0}^1\int_{\theta=-c}^{1-c}\mu^*(\theta)q(v,\theta)\ g^*(v,\theta)\dd v\dd \theta=0~.
\end{equation}
Combining Equation (\ref{eq:proof-br-necesity-1}) and Equation (\ref{eq:proof-br-necesity-2}), we have achieve that
\begin{equation}
    \label{eq:proof-br-necesity-3}\int_{v=0}^1\int_{\theta=-c}^{1-c}[\lambda^*(v)+\mu^*(\theta)q(v,\theta)]\ g^*(v,\theta)\dd v\dd \theta=\int_0^1\lambda^*(v)f(v)\dd v~.
\end{equation}
Combining Equation (\ref{eq:proof-br-necessary-1}) and Equation (\ref{eq:proof-br-sufficient-3}), we achieve that
\begin{equation*}
    \int_{v=0}^1\int_{\theta=-c}^{1-c}[\lambda^*(v)+\mu^*(\theta)q(v,\theta)-p(v,\theta)]\ g^*(v,\theta)\dd v\dd \theta=0~,
\end{equation*}
which means that the optimal solution to the primal problem $G^*$ and the optimal solution to the dual problem $(\lambda^*,\mu^*)$ satisfy Condition (\ref{eq:primal-dual-optimal}).
\end{proof}

\subsection{Proof of \Cref{lem:no-threshold-value}}
\label{pf:no-threshold-value}

\begin{proof}[Proof of \Cref{lem:no-threshold-value}]
For any feasible 2-D distribution $G$, there being no such threshold value $\underline{v}$ implies that, there exists a tuple $(0,\theta)\in\supp(G)$ such that $\theta>0$.
In any best response $G^*$, without loss of generality, we can assume that such $(0,\theta)\in\supp(G)$ with $\theta>0$ does not exist since pooling value $0$ up to a positive index brings no utility increase.
\end{proof}

\subsection{\Cref{alg:mu} for Construction of Dual Variable $\mu(\theta)$}
\label{pf:alg-2}
Here is the algorithm to construct the dual variable $\mu(\theta)$ for any $\theta\in[-c,1-c]$.

\begin{algorithm}[h]
\caption{MU ($\theta$): Constructions of the dual variable $\mu$}
\label{alg:mu}
\SetKwInOut{Require}{Require}
\SetKwInOut{Ensure}{Ensure}

\Require{Strategy $G$; Interim utility $u$; Cost $c$; Threshold value $\underline{v}$; All $\lambda(v)$ for $v \in [0,1]$.}
\Ensure{Whether the construction is valid (True/False).}

\BlankLine

\For{$\theta \in [-c, \underline{v}-c]$}{
    $\mu(\theta) \gets 0$ \;
}

\For{$\theta \in \mathrm{supp}(G_\theta) \cup (\underline{v}-c, 1-c]$}{
    \If{$\mathrm{supp}(G_{\cdot|\theta}) = \{\theta+c\}$}{
        \If{$\displaystyle \max_{v \in (\theta+c, 1]} \frac{u(\theta)-\lambda(v)}{v-\theta-c} \le \min \left\{ \min_{v \in [0, \theta)} \frac{\lambda(v)-u(v)}{c}, \min_{v \in [\theta, \theta+c)} \frac{u(\theta)-\lambda(v)}{v-\theta-c} \right\}$}{
            \Return \textbf{False} \;
        }
        $\mu(\theta) \gets \min \left\{ \min_{v \in [0, \theta)} \frac{\lambda(v)-u(v)}{c}, \min_{v \in [\theta, \theta+c)} \frac{u(\theta)-\lambda(v)}{v-\theta-c} \right\}$ \;
    }
    \Else{
        $\mu_{\text{ref}} \gets \text{None}$ \;
        \For{$v \in \mathrm{supp}(G_{\cdot|\theta}) \setminus \{\theta+c\}$}{
            $\mu_{\text{cur}} \gets \frac{p(v,\theta) - \lambda(v)}{q(v,\theta)}$ \;
            \If{$\mu_{\text{ref}} = \text{None}$}{
                $\mu_{\text{ref}} \gets \mu_{\text{cur}}$ \;
            }
            \ElseIf{$\mu_{\text{cur}} \neq \mu_{\text{ref}}$}{
                \Return \textbf{False} \;
            }
        }
        $\mu(\theta) \gets \mu_{\text{ref}}$ \;
    }
}

\For{$\theta \notin \mathrm{supp}(G_\theta)$}{
    \If{$\displaystyle \max_{v \in (\theta+c, 1]} \frac{u(\theta)-\lambda(v)}{v-\theta-c} \le \min \left\{ \min_{v \in [0, \theta)} \frac{\lambda(v)-u(v)}{c}, \min_{v \in [\theta, \theta+c)} \frac{u(\theta)-\lambda(v)}{v-\theta-c} \right\}$}{
        \Return \textbf{False} \;
    }
    $\mu(\theta) \gets \min \left\{ \min_{v \in [0, \theta)} \frac{\lambda(v)-u(v)}{c}, \min_{v \in [\theta, \theta+c)} \frac{u(\theta)-\lambda(v)}{v-\theta-c} \right\}$ \;
}

\Return \textbf{True} \;
\end{algorithm}

\subsection{Proof of \Cref{thm:constructed-lambda-mu-optimal}}
\label{pf:verification}

Before proving \Cref{thm:constructed-lambda-mu-optimal}, we have to introduce the following lemmas.

\begin{lemma}[Case 1 in \Cref{alg:lambda}]
\label{lem:primal-dula-1}
Given $G^*$ and $(\lambda^*,\mu^*)$ are both optimal to the primal and dual problems,
then we without loss of generality assume that $\lambda^*(v)=0$ for each $v\in[0,\underline{v}]$ and $\mu^*(\theta)=0$ for each $\theta\in[-c,\underline{v}-c]$.
\end{lemma}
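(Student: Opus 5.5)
The plan is to start from an arbitrary optimal pair, namely $G^*$ and the optimal $(\lambda^*,\mu^*)$ provided by \Cref{thm:optimal-solution-exists-in-dual-new}, with $\mu^*\in[-L,0]$ and $\lambda^*$ non-decreasing and continuous. We then modify this pair on the low region so that $\lambda(v)=0$ for $v\in[0,\underline v]$ and $\mu(\theta)=0$ for $\theta\in[-c,\underline v-c]$. Afterwards we check three things: dual feasibility, that the dual objective does not increase, and that complementary slackness \eqref{eq:primal-dual-optimal} with $G^*$ still holds. By \Cref{thm:primal-dual-optimal}, these together give that the modified pair is still optimal and is paired with $G^*$.

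\emph{Step 1: the primal structure below $\underline v$.} By the definition of $T$ and $\underline v>0$ (\Cref{lem:no-threshold-value}), every $v\le\underline v$ (after passing to the supremum, using closedness of supports) is fully revealed. Concretely, rows with index $\theta>v-c$ place no mass below $v$, and rows with $\theta<v-c$ place no mass above $v$. Iterating this over all $v\in[0,\underline v]$ shows two facts. First, every row with index $\theta\in[-c,\underline v-c]$ is supported on values at most $\underline v$. Second, such a row must be the point mass at $\theta+c$, since otherwise the index equation forces mass on both sides of $\theta+c$. So on the low block $G^*$ is the diagonal $\theta=v-c$, and there $p(v,v-c)=u(v-c)=0$ because $v-c<0$.

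\emph{Step 2: modify the dual and check feasibility.} Set $\tilde\lambda=0$ on $[0,\underline v]$ and $\tilde\mu=0$ on $[-c,\underline v-c]$, keeping both equal to $(\lambda^*,\mu^*)$ elsewhere. Since $\lambda^*\ge0$ by \Cref{lem:lambda bigger than phi}, the objective weakly decreases. We must verify $\tilde\lambda(v)+\tilde\mu(\theta)q(v,\theta)\ge p(v,\theta)$ in four blocks.

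\begin{itemize}
\item $\theta\le\underline v-c$: here $\tilde\mu=0$ and $p=u(\min\{v,\theta\})=0$ since $\theta<0$, so the constraint reduces to $\tilde\lambda\ge0$.
\item $v\le\underline v$, $\theta>\underline v-c$: we need $\mu^*(\theta)q(v,\theta)\ge p(v,\theta)$.
\item If $\theta\ge v$, then $q(v,\theta)=-c$, so we need $-c\mu^*(\theta)\ge u(v)$. This holds whenever $v<0$ trivially, and otherwise because $u(v)=H(v)$ is small near $0$.
\item If $\theta<v$, then $q<0$ and $p=u(\theta)$, so we need $|\mu^*(\theta)|\,|q|\ge u(\theta)$.
\end{itemize}

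\emph{Step 3: the main obstacle.} The last two constraints in Step 2 are the crux, because lowering $\lambda$ to $0$ could break feasibility against rows with higher index. I would argue by complementary slackness. Optimality of $G^*$ implies that no $v\le\underline v$ benefits from being pooled into a row $\theta>v-c$. Otherwise we could perturb $G^*$ by moving a little mass of $v$ into such a row, compensating the index equation with the row's higher values, and strictly improve, contradicting maximality of $\underline v$ as $\sup T$. Translated into the dual, this says the amortized payoff $p(v,\theta)-\mu^*(\theta)q(v,\theta)$ is at most the full-revelation payoff $u(v-c)=0$ for $v\le\underline v$. If a direct argument fails, I would instead redefine $\tilde\mu(\theta)$ for $\theta>\underline v-c$ by the lower envelope formula used in \Cref{alg:mu}. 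This choice only relaxes constraints for $v\le\underline v$ (they involve $\tilde\lambda=0$ explicitly), and it leaves the constraints for $v>\underline v$ intact by monotonicity in $\mu\le0$.

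\emph{Step 4: complementary slackness.} On the low block, $G^*$ lives on the diagonal, where $q=0$ and $p=0=\tilde\lambda$, so the slack vanishes. Elsewhere the slack is unchanged. Hence \eqref{eq:primal-dual-optimal} holds, and \Cref{thm:primal-dual-optimal} concludes.
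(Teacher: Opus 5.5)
\textbf{There is a genuine gap, at exactly the step you flagged.} You never show that $\tilde\lambda=0$ stays feasible against rows $\theta>\underline v-c$, i.e.\ that $\mu^*(\theta)q(v,\theta)\ge p(v,\theta)$ for $v\le\underline v$.
\begin{itemize}
\item The remark that $u(v)=H(v)$ is small near $0$ does not give $-c\mu^*(\theta)\ge u(v)$.
\item The perturbation argument in Step 3 would at best contradict optimality of $G^*$, not maximality of $\sup T$. You give no mechanism turning it into a pointwise inequality for the given $\mu^*$.
\item The fallback fails. The envelope of \Cref{alg:mu}, computed with $\tilde\lambda\le\lambda^*$, can lie below $\mu^*(\theta)$. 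Lowering $\mu(\theta)$ tightens every constraint with $q(v,\theta)>0$ (values $v>\theta+c$), so monotonicity works against you. It also changes $\mu$ on the rows carrying most of $G^*$, so your Step 4 claim that the slack is unchanged elsewhere breaks.
\item Step 1 overclaims. The definition of $T$ gives separation only at points of $T$ (in particular at $\underline v$, since $T$ is closed under increasing limits), not full revelation below $\underline v$. Low values may be pooled among themselves into negative-index rows. This error is harmless: on rows $\theta\le\underline v-c$ you have $p=0$ and $\tilde\mu=0$, so the slack vanishes regardless.
\end{itemize}

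\textbf{The missing idea is that $\lambda^*$ needs no modification: complementary slackness already forces $\lambda^*=0$ on $[0,\underline v]$.}
\begin{itemize}
\item By separation at $\underline v$, $G^*$-a.e.\ pair $(v,\theta)$ with $v<\underline v$ has $\theta\le\underline v-c\le 0$. Hence $p(v,\theta)=0$, and \Cref{thm:primal-dual-optimal} gives $\lambda^*(v)=\mu^*(\theta)(\theta+c-v)$.
\item If $v\le\theta+c$, the right side is $\le 0$, so $\lambda^*(v)=0$ because $\lambda^*\ge 0$. If $v<\theta+c$, also $\mu^*(\theta)=0$.
\item By the index equation, any such row other than the point mass at $\theta+c$ has positive conditional mass below $\theta+c$. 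So $\mu^*(\theta)=0$ on that row, and then $\lambda^*=0$ on its upper values too.
\item Hence $\lambda^*=0$ $F$-a.e.\ on $[0,\underline v)$. By continuity of $\lambda^*$ and full support of $F$, this extends to all of $[0,\underline v]$. Your Steps 2--3 become vacuous, since feasibility is inherited from $(\lambda^*,\mu^*)$.
\item For $\theta<\underline v-c$, dual feasibility at any $v\in(\theta+c,\underline v]$ reads $\mu^*(\theta)(v-\theta-c)\ge 0$. Together with $\mu^*\le 0$ this forces $\mu^*(\theta)=0$.
\item The single row $\theta=\underline v-c$ can be reset to $0$. This leaves the objective unchanged, keeps feasibility (its constraints become $\lambda^*(v)\ge u(\min\{v,\underline v-c\})=0$), and preserves complementary slackness by the same row argument.
\end{itemize}
The paper's own proof simply asserts that $\lambda^*$ vanishes below $\underline v$; the argument above supplies that step.
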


\begin{proof}[Proof of \Cref{lem:primal-dula-1}]
Since strategy $G^*$ is optimal to the primal problem, the threshold value $\underline{v}$ is above zero.
Assuming no pooling is made to values $[0,\underline{v})$ is without loss of generality, thus we have $\lambda^*(v)=0$ for any $v\in[0,\underline{v}]$.
Besides, we can assume that $\mu^*(\theta)=0$ for each $\theta\in[-c,\underline{v}-c]$ without loss of generality.
Since assuming $\mu^*(\theta) = 0$ for each $\theta \in [-c, \underline{v}-c]$ does not affect the objective of the dual while keeping all constraints held.
\end{proof}

\begin{lemma}[Case 2 in \Cref{alg:lambda}]
\label{lem:primal-dual-2}
Given $G^*$ and $(\lambda^*,\mu^*)$ are both optimal to the primal and dual problems,
if there exists $v\in[\underline{v},1]$ such that $(v,v-c)\in\supp(G)$, then $\lambda^*(v)=u(v-c)$.
\end{lemma}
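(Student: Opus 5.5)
The plan is to combine complementary slackness (\Cref{thm:primal-dual-optimal}) with dual feasibility, using the support condition to pass from ``almost everywhere'' to the specific point $(v,v-c)$. Two sandwiching inequalities will do it.

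\textbf{Lower bound.} Evaluate the dual constraint \eqref{eq:dual-con1} at the pair $(v,v-c)$. Since $q(v,v-c)=(c)_+-c=0$ and $p(v,v-c)=u(\min\{v,v-c\})=u(v-c)$, dual feasibility gives $\lambda^*(v)\ge u(v-c)$ directly. This is exactly \Cref{lem:lambda bigger than phi}.

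\textbf{Upper bound.} Consider the slack $s(v',\theta')\triangleq\lambda^*(v')+\mu^*(\theta')q(v',\theta')-p(v',\theta')$. It is nonnegative everywhere by dual feasibility, and by \Cref{thm:primal-dual-optimal} it integrates to zero against $G^*$, so $s=0$ holds $G^*$-almost everywhere. The point $(v,v-c)$ lies in $\supp(G^*)$, so every open neighborhood of it has positive $G^*$-mass. Hence there is a sequence $(v^m,\theta^m)\to(v,v-c)$ with $s(v^m,\theta^m)=0$, which means
\[
\lambda^*(v^m)=p(v^m,\theta^m)-\mu^*(\theta^m)q(v^m,\theta^m).
\]
Now pass to the limit term by term:
\begin{itemize}
\item by \Cref{thm:optimal-solution-exists-in-dual-new}, $\lambda^*$ is continuous, so $\lambda^*(v^m)\to\lambda^*(v)$;
\item $p$ is continuous because $u$ is Lipschitz, so $p(v^m,\theta^m)\to u(v-c)$;
\item $\mu^*$ is bounded in $[-L,0]$ by the same lemma, and $q(v^m,\theta^m)\to q(v,v-c)=0$ by continuity of $q$, so the product $\mu^*(\theta^m)q(v^m,\theta^m)\to 0$.
\end{itemize}
Therefore $\lambda^*(v)=u(v-c)$.

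\textbf{Main obstacle.} The delicate step is the passage from ``$G^*$-a.e.'' to the specific point. If $(v,v-c)$ itself carries positive mass, then $s(v,v-c)=0$ holds directly and we are done. Otherwise we need the limiting argument above, and this is where we rely on the normalization from \Cref{thm:optimal-solution-exists-in-dual-new}: we take $(\lambda^*,\mu^*)$ to be the optimal dual solution with $\lambda^*$ continuous and $\mu^*$ bounded. This is consistent with ``without loss of generality'' in the neighboring lemmas (\Cref{lem:primal-dula-1}). The key point is that boundedness of $\mu^*$, not continuity, is all that is needed, because $q$ vanishes at the limit point.
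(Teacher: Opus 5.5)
Your proof is correct and uses the same idea as the paper's, which applies complementary slackness (\Cref{thm:primal-dual-optimal}) directly at the point $(v,v-c)$ and uses $q(v,v-c)=0$. Complementary slackness only holds $G^*$-almost everywhere, and your limiting argument (the one the paper itself uses for Case~3.1 in \Cref{lem:limit point with value=index+c}) supplies the step the paper skips. Your insistence on the normalized dual (continuous $\lambda^*$, bounded $\mu^*$) is also genuinely needed: since $F$ is atomless, raising an optimal $\lambda^*$ at a single point keeps it optimal and keeps complementary slackness, yet breaks the conclusion.
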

\begin{proof}[Proof of \Cref{lem:primal-dual-2}]
By \Cref{thm:primal-dual-optimal}, it directly holds that $\lambda^*(v)=-\mu^* (v-c)q(v,v-c)+p(v,v-c)=p(v,v-c)=u(v-c)$ since $q(v,v-c)=0$.
\end{proof}

\begin{lemma}[Case 3.1 in \Cref{alg:lambda}]
\label{lem:limit point with value=index+c}
Given $G^*$ and $(\lambda^*,\mu^*)$ are both optimal to the primal and dual problems, 
for any $(v,\theta)$ with $v=\theta+c$,
if there is a sequence $\{(v^m,\theta^m)\}_{m\in\Zplus}$ converging to $(v,\theta)$ such that : (i) $(v^m,\theta^m)\in\supp(G)$ for any $m\in\Zplus$; 
(ii) $\{v^m\}_{m\in\Zplus}$ is strictly increasing;
(iii) $\{\theta^m\}_{m\in\Zplus}$ is monotone;
and (iv) $v^m\in(\theta^m,v)$ for any $m\in\Zplus$, then $\lambda^*(v)=u(v-c)$.
\end{lemma}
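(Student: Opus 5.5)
We prove $\lambda^*(v)=u(v-c)$ by establishing the two inequalities separately. The lower bound $\lambda^*(v)\ge u(v-c)$ is immediate from \Cref{lem:lambda bigger than phi}, since the pair $(v,v-c)$ satisfies Constraint~\eqref{eq:dual-con1}. All the work is in the upper bound, which we obtain by passing to the limit along the given support sequence.

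\textbf{Slackness along the sequence.} Each $(v^m,\theta^m)$ lies in $\supp(G^*)$. By \Cref{thm:primal-dual-optimal}, the nonnegative slack $\lambda^*(v)+\mu^*(\theta)q(v,\theta)-p(v,\theta)$ vanishes $G^*$-almost everywhere. Every open neighborhood of a support point carries positive mass, so the slack vanishes on a dense subset of the support near each $(v^m,\theta^m)$.

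\textbf{Replacing the sequence.} We replace each $(v^m,\theta^m)$ by a nearby point $(\tilde v^m,\tilde\theta^m)$ where the slack is exactly zero. We choose these points within distance $1/m$, so they still converge to $(v,v-c)$. At these points
\[
\lambda^*(\tilde v^m)=p(\tilde v^m,\tilde\theta^m)-\mu^*(\tilde\theta^m)\,q(\tilde v^m,\tilde\theta^m).
\]

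\textbf{Passing to the limit.} We take $(\lambda^*,\mu^*)$ with the properties of \Cref{thm:optimal-solution-exists-in-dual-new}, which is the normalization used throughout. Then:
\begin{itemize}
\item $\lambda^*$ is continuous, so $\lambda^*(\tilde v^m)\to\lambda^*(v)$;
\item $q$ is continuous and $q(v,v-c)=0$, while $|\mu^*|\le L$, so $\mu^*(\tilde\theta^m)q(\tilde v^m,\tilde\theta^m)\to 0$;
\item $p(\tilde v^m,\tilde\theta^m)=u(\min\{\tilde v^m,\tilde\theta^m\})$, and $u$ is continuous.
\end{itemize}
Hypothesis (iv) gives $v^m>\theta^m$. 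For large $m$ the limit satisfies $v>v-c$, so $\tilde v^m>\tilde\theta^m$ and $\min\{\tilde v^m,\tilde\theta^m\}=\tilde\theta^m\to v-c$. Hence $p(\tilde v^m,\tilde\theta^m)\to u(v-c)$, and therefore $\lambda^*(v)=u(v-c)$.

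\textbf{Main obstacle.} The delicate step is the justification that slackness holds at actual points rather than only almost everywhere. Dual variables defined pointwise need not make the slack vanish exactly at a given support point. There is also the danger that $\mu^*$ is discontinuous; we sidestep this because we only need boundedness of $\mu^*$, never convergence of $\mu^*(\tilde\theta^m)$.

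If one prefers to avoid the approximation by nearby points, there is an alternative. Under the representation $\lambda^*(v)=\max_\theta \hat L_\theta(v)$ from the proof of \Cref{lem:lambda increasing}, each $\hat L_\theta$ is continuous in $v$, and the slack is lower semicontinuous in $v$ for fixed $\theta$. This suffices if the sequence is chosen along fixed rows.

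In this argument the monotonicity hypotheses (ii) and (iii) serve only to guarantee that such a convergent sequence exists. The substantive hypotheses are (iv) and the limit point lying on the line $\theta=v-c$.
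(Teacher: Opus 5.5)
Your proof is correct and follows the paper's argument. Complementary slackness along the support sequence, continuity of $\lambda^*$, the bound $|\mu^*|\le L$ combined with $q(v,v-c)=0$, and continuity of $u(\min\{\cdot,\cdot\})$ give $\lambda^*(v)=u(v-c)$ in the limit. Your perturbation to nearby zero-slack points is a valid tightening of a step the paper takes pointwise, since slackness is only guaranteed $G^*$-almost everywhere. Neither hypothesis (iv) nor the lower bound from \Cref{lem:lambda bigger than phi} is actually needed: continuity of $\min$ already gives $p(\tilde v^m,\tilde\theta^m)\to u(v-c)$, and the limit yields equality directly.
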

\begin{proof}[Proof of \Cref{lem:limit point with value=index+c}]
For any $m\in\Zplus$, since $(v^m,\theta^m)\in\supp(G^*)$, \Cref{thm:primal-dual-optimal} implies that
\begin{equation*}
    \lambda^*(v^m)=-\mu^*(\theta^m)q(v^m,\theta^m)+p(v^m,\theta^m)=-\mu^*(\theta^m)(v^m-\theta^m-c)+u(\theta^m)~.
\end{equation*}
Besides, since the function $u$ is Lipschitz continuous everywhere, we have already proved the boundedness of the function $u$ in \Cref{thm:optimal-solution-exists-in-dual-new}.
Thus, we know that
\begin{equation*}
    \lambda^*(v)=\lim \limits_{m\rightarrow \infty}\lambda^*(v^m)=\lim\limits_{m\rightarrow \infty}p(v^m,\theta^m)=u(v-c)~.
\end{equation*}
This completes the proof.
\end{proof}

\begin{lemma}[Case 3.2 in \Cref{alg:lambda} for the limit point $(v,\theta)$ with $v<\theta$]
\label{lem:limit point with value<index}
Given $G^*$ and $(\lambda^*,\mu^*)$ are both optimal to the primal and dual problems, 
for any $(v,\theta)$ with $v\in(\underline{v},\theta)$,
if there is a sequence $\{(v^m,\theta^m)\}_{m\in\Zplus}$ converging to $(v,\theta)$ such that : 
(i) $(v^m,\theta^m)\in\supp(G)$ for any $m\in\Zplus$; 
(ii) $\{v^m\}_{m\in\Zplus}$ is strictly increasing;
(iii) $\{\theta^m\}_{m\in\Zplus}$ is monotone; and
(iv) $\theta^m>v$ for any $m\in\Zplus$, 
then $-\mu^*(\theta^{m_1})=u'(\theta^{m_1})=u'(\theta^{m_2})=-\mu^*(\theta^{m_2})$ for any $m_1,m_2\in\Zplus$.
Besides, $\lambda^*(v)=u'(\theta^1)q(v,\theta)+p(v,\theta)$.
\end{lemma}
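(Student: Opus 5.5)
The plan is to exploit two ingredients. The first is complementary slackness along the sequence: by \Cref{thm:primal-dual-optimal}, for every $m$ we have $\lambda^*(v^m)=p(v^m,\theta^m)-\mu^*(\theta^m)q(v^m,\theta^m)$. The second is dual feasibility at every pair $(v',\theta^m)$: $\lambda^*(v')\ge p(v',\theta^m)-\mu^*(\theta^m)q(v',\theta^m)$.

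Since $\theta^m>v$ and $v^m<v$, for $m$ large we are in the region $v'<\theta$. There $p(v',\theta)=u(v')$ and $q(v',\theta)=-c$, so the right-hand side becomes $u(v')+c\mu^*(\theta)$. This is the shifted-$u$ branch of $\hat L_\theta$ from the proof of \Cref{lem:lambda increasing}.

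The first step is to pin down $\mu^*$ along the sequence. Complementary slackness gives $\lambda^*(v^m)=u(v^m)+c\mu^*(\theta^m)$. Dual feasibility at $(v^{m'},\theta^m)$ with $v^{m'}<\theta^m$ gives $\lambda^*(v^{m'})\ge u(v^{m'})+c\mu^*(\theta^m)$. Applying this both ways for $m_1,m_2$ yields
\[
u(v^{m_1})-u(v^{m_2})\le \lambda^*(v^{m_1})-\lambda^*(v^{m_2})\le u(v^{m_1})-u(v^{m_2})+c(\mu^*(\theta^{m_1})-\mu^*(\theta^{m_2})),
\]
together with the symmetric inequality.

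The main obstacle is the claimed identity $-\mu^*(\theta^m)=u'(\theta^m)$, with $u'$ evaluated at $\theta^m$. This cannot come from the branch $v'<\theta$ alone; it needs the kink of $\hat L_{\theta}$ at $v'=\theta$. For $v'\ge\theta$ the function is linear with slope $-\mu^*(\theta)$, while for $v'<\theta$ it follows $u$.

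My first attempt uses the upper support structure. By hypothesis~(ii) of \Cref{thm:constructed-lambda-mu-optimal}, each row $\theta^m$ carries mass in every neighborhood of $(v^m,\theta^m)$. The index equation then forces row $\theta^m$ also to contain values above $\theta^m+c$, since $q<0$ for $v<\theta+c$ and this must be balanced. Next I compare with a perturbed index $\theta^m+h$: feasibility of $\lambda^*$ against $\hat L_{\theta^m+h}$ at values near $\theta^m$, together with tightness of $\hat L_{\theta^m}$ on the support, should make $\theta\mapsto \hat L_\theta(v')$ stationary at $\theta=\theta^m$. Differentiating $-\mu^*(\theta)(v'-\theta-c)+u(\theta)$ in $\theta$ gives $\mu^*(\theta)+u'(\theta)$ plus a $\mu^{*\prime}$ term. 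The monotone sequence lets me control that extra term by a difference-quotient argument along $\theta^m$, using differentiability of $u$ from assumption~(i) and boundedness of $\mu^*$ from \Cref{thm:optimal-solution-exists-in-dual-new}. This should give $-\mu^*(\theta^m)=u'(\theta^m)$.

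The constancy of $u'(\theta^{m_1})=u'(\theta^{m_2})$ then comes from the displayed sandwich. The sandwich shows that $\lambda^*$ and $u$ have the same increments between points of the sequence once the $\mu^*$ values agree. Conversely, any difference in the $\mu^*$ values would contradict tightness at both indices.

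Finally, I pass to the limit. Continuity of $\lambda^*$ from \Cref{thm:optimal-solution-exists-in-dual-new} gives
\[
\lambda^*(v)=\lim_m\bigl(p(v^m,\theta^m)-\mu^*(\theta^m)q(v^m,\theta^m)\bigr)=u'(\theta^1)q(v,\theta)+p(v,\theta).
\]
This uses that $-\mu^*(\theta^m)\equiv u'(\theta^1)$ is constant, and that $p$ and $q$ are continuous at $(v,\theta)$ with $v<\theta$.
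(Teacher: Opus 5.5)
There is a genuine gap at exactly the step you flag as the main obstacle, $-\mu^*(\theta^m)=u'(\theta^m)$. When you differentiate $\theta'\mapsto-\mu^*(\theta')(v'-\theta'-c)+u(\theta')$, you get a ``$\mu^{*\prime}$ term'' that is not available. All that is known about $\mu^*$ is that it takes values in $[-L,0]$, and it may well be discontinuous. A difference-quotient argument along the sequence cannot replace that derivative. For fixed $m$, the other indices $\theta^{m'}$ accumulate at $\theta$, not at $\theta^m$. So comparing rows of the sequence with one another gives only chord information: at best, that the points $(\theta^{m'},u(\theta^{m'}))$ are collinear. It never gives the derivative of $u$ at $\theta^m$ itself. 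Stationarity at $\theta^m$ requires controlling $\mu^*(\theta')$ for every $\theta'$ in a two-sided neighborhood of $\theta^m$. Your first step, however, tests dual feasibility only at indices that belong to the sequence.

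The missing ingredient comes from the lower branch you already use, tested against every index rather than only sequence ones. For any $\theta'>v^m$, feasibility at $(v^m,\theta')$ and tightness at $(v^m,\theta^m)$ give $u(v^m)+c\mu^*(\theta^m)=\lambda^*(v^m)\ge u(v^m)+c\mu^*(\theta')$. Hence $\mu^*(\theta')\le\mu^*(\theta^m)=:-s$ on all of $(v^m,1-c]$, and this set contains a neighborhood of $\theta^m$ because $\theta^m>v>v^m$. The paper's first step likewise ties $\mu^*(\theta^m)$ to an extremum of $\mu^*$ over all of $[v,1-c]$. Specializing to $\theta'=\theta^{m'}$ gives constancy of $\mu^*$ along the sequence directly; your two-sided sandwich reaches the same conclusion in a roundabout way. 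Now take $\hat v^m>\theta^m+c$ in row $\theta^m$. For $\theta'\in(v^m,\hat v^m-c)$, the pair $(\hat v^m,\theta')$ lies on the linear branch with $\hat v^m-\theta'-c>0$, hence
\[
s(\hat v^m-\theta^m-c)+u(\theta^m)=\lambda^*(\hat v^m)\ge-\mu^*(\theta')(\hat v^m-\theta'-c)+u(\theta')\ge s(\hat v^m-\theta'-c)+u(\theta').
\]
So the differentiable function $\theta'\mapsto s(\hat v^m-\theta'-c)+u(\theta')$ has an interior maximum at $\theta^m$. This gives $u'(\theta^m)=s=-\mu^*(\theta^m)$ without ever differentiating $\mu^*$. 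With this replacement, your constancy claim and your final limit for $\lambda^*(v)$ go through as written.
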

\begin{proof}[Proof of \Cref{lem:limit point with value<index}]
Without loss of generality, we can assume that $\{\theta^m\}_{m\in\Zplus}$ is also strictly increasing.
For any $m\in\Zplus$, 
by \Cref{thm:primal-dual-optimal} and the fact that function $\lambda^*$ is continuous, we know that for any $m_1,m_2\in\Zplus$ with $m_1> m_2$, 
\begin{equation*}
    \lambda^*(v^{m_1})=(-c)\max_{\theta'\in[v,1-c]}-\mu^*(\theta')+u(v^{m_1})=-\mu^*(\theta^{m_1})(-c)+u(v^{m_1})~,
\end{equation*}
and
\begin{equation*}
    \lambda^*(v^{m_2})=(-c)\max_{\theta'\in[v,1-c]}-\mu^*(\theta')+u(v^{m_2})=-\mu^*(\theta^{m_2})(-c)+u(v^{m_2})~.
\end{equation*}
Combining these two equations, we achieve that
\begin{equation*}
    \max_{\theta'\in[v,1-c]}-\mu^*(\theta')=-\mu^*(\theta^{m_1})=-\mu^*(\theta^{m_2})~.
\end{equation*}
Next, we aim to prove that $\mu^*(\theta^{m_1})=-\mu^*(\theta^{m_2})=u'(\theta^{m_1})=u'(\theta^{m_2})$. 
To form feasible indices $\theta^{m_1}$ and $\theta^{m_2}$, there must exist $(\hat{v}^{m_1},\theta^{m_1})\in \supp(G^*)$ with $\hat{v}^{m_1}>\theta^{m_1}+c$, and $(\hat{v}^{m_2},\theta^{m_2})\in \supp(G^*)$ with $\hat{v}^{m_2}>\theta^{m_2}+c$.
By \Cref{thm:primal-dual-optimal}, it holds that
\begin{equation*}
    \lambda^*(\hat{v}^{m_1})=-\mu^*(\theta^{m_1})(\hat{v}^{m_1}-\theta^{m_1}-c)+u(\theta^{m_1})\ge -\mu^*(\theta^{m_2})(\hat{v}^{m_1}-\theta^{m_2}-c)+u(\theta^{m_2})~,
\end{equation*}
and 
\begin{equation*}
    \lambda^*(\hat{v}^{m_2})=-\mu^*(\theta^{m_2})(\hat{v}^{m_2}-\theta^{m_2}-c)+u(\theta^{m_2})\ge -\mu^*(\theta^{m_1})(\hat{v}^{m_2}-\theta^{m_1}-c)+u(\theta^{m_1})~.
\end{equation*}
Since 
(i) the function $-\mu^*(\theta^{m_1})q(\ \cdot\ ,\theta^{m_1})+u(\theta^{m_1})$ is linear over $[\theta_1,1-c]$ and crosses $(\theta^{m_1}+c,u(\theta^{m_1}))$;
(ii) the function $-\mu^*(\theta^{m_2})q(\ \cdot\ ,\theta^{m_2})+u(\theta^{m_2})$ is linear over $[\theta_2,1-c]$ and crosses $(\theta^{m_2}+c,u(\theta^{m_2}))$; and
(iii) $-\mu^*(\theta^{m_1})=-\mu^*(\theta^{m_2})$, 
it must hold that $\mu^*(\theta^{m_1})=-\mu^*(\theta^{m_2})=u'(\theta^{m_1})=u'(\theta^{m_2})$ and these two functions overlap over $[\theta_2+c,1-c]$.

Recall that $\lambda^*,p,q$ are all continuous; we complete our proof by showing that
\begin{equation*}
    \lambda^*(v)=\lim\limits_{m\rightarrow\infty}\lambda^*(v^m)=\lim\limits_{m\rightarrow\infty}-\mu^*(\theta^m)q(v^m,\theta^m)+p(v^m,\theta^m)=u'(\theta^1)q(v,\theta)+p(v,\theta)~.
\end{equation*}
This completes the proof.
\end{proof}

\begin{lemma}[Case 3.2 in \Cref{alg:lambda} for the limit point $(v,\theta)$ with $v\in(\theta,\theta+c)$]
\label{lem:limit point with value between theta and theta+c}
Given $G^*$ and $(\lambda^*,\mu^*)$ are both optimal to the primal and dual problems, 
for any $(v,\theta)$ with $v\in(\max\{\underline{v},\theta\},\theta+c)$,
if there is a sequence $\{(v^m,\theta^m)\}_{m\in\Zplus}$ converging to $(v,\theta)$ such that : (i) $(v^m,\theta^m)\in\supp(G)$ for any $m\in\Zplus$; 
(ii) $\{v^m\}_{m\in\Zplus}$ is strictly increasing;
(iii) $\{\theta^m\}_{m\in\Zplus}$ is monotone;
(iv) let $\theta'=\frac{v+\theta}{2}$, $\theta^m\in(v-c,\theta')$ and $v^m\in(\theta',v)$ for any $m\in\Zplus$, then $-\mu^*(\theta^{m_1})=u'(\theta^{m_1})=u'(\theta^{m_2})=-\mu^*(\theta^{m_2})$ for any $m_1,m_2\in\Zplus$.
Besides, $\lambda^*(v)=u'(\theta^1)q(v,\theta)+p(v,\theta)$.
\end{lemma}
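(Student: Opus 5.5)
The plan is to follow the template of \Cref{lem:limit point with value<index}. The one difference is geometric: in the present region each support point $(v^m,\theta^m)$ lies strictly between the lines $\theta=v$ and $\theta=v-c$. Indeed, $\theta^m<\theta'<v^m$ and $\theta^m>v-c>v^m-c$. Hence $p(v^m,\theta^m)=u(\theta^m)$ and $q(v^m,\theta^m)=v^m-\theta^m-c<0$.

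For a value $w$ and an index $\tilde\theta$, write $h_w(\tilde\theta)\triangleq p(w,\tilde\theta)-\mu^*(\tilde\theta)q(w,\tilde\theta)$. By dual feasibility and \Cref{thm:primal-dual-optimal}, together with condition (ii) of \Cref{thm:constructed-lambda-mu-optimal} and the continuity of $\lambda^*$ from \Cref{thm:optimal-solution-exists-in-dual-new}, we have $\lambda^*(w)=\sup_{\tilde\theta}h_w(\tilde\theta)$, and the supremum is attained at $\tilde\theta$ whenever $(w,\tilde\theta)\in\supp(G^*)$. So $\lambda^*(v^m)=h_{v^m}(\theta^m)$. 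Moreover, since row $\theta^m$ satisfies the index equation while $q(v^m,\theta^m)<0$, there is a high value $\hat v^m>\theta^m+c$ with $(\hat v^m,\theta^m)\in\supp(G^*)$, and so $\lambda^*(\hat v^m)=h_{\hat v^m}(\theta^m)$.

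\emph{Step 1: $\mu^*$ is constant along the sequence.} Take $m_1,m_2$. The optimality of $\theta^{m_1}$ for $v^{m_1}$ and of $\theta^{m_2}$ for $v^{m_2}$ gives two inequalities. Adding them cancels the $u$ terms and leaves a single-crossing relation $\bigl(\mu^*(\theta^{m_1})-\mu^*(\theta^{m_2})\bigr)(v^{m_1}-v^{m_2})\ge0$, since $\partial_w h_w=-\mu^*$ on this region. Doing the same with the high values $\hat v^{m_1},\hat v^{m_2}$ gives the analogous relation. Because $q$ has opposite signs at $(v^m,\theta^m)$ and at $(\hat v^m,\theta^m)$, pairing the low value of one row against the high value of the other yields the reverse inequality, exactly as in the proof of \Cref{lem:limit point with value<index}. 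The two inequalities together force $-\mu^*(\theta^{m_1})=-\mu^*(\theta^{m_2})\eqqcolon s$. Plugging a common slope $s$ back into $h_{v^m}(\theta^m)\ge h_{v^m}(\theta^k)$ and its symmetric counterpart shows that $u(\theta)-s\theta$ takes the same value at every $\theta^m$.

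\emph{Step 2: the first-order condition $u'(\theta^m)=s$.} This is the step I expect to be the main obstacle, because $\mu^*$ is unknown at indices off the sequence. Fix $m$ and an arbitrary nearby index $\theta''$. Dual feasibility gives $h_{v^m}(\theta'')\le h_{v^m}(\theta^m)$ and $h_{\hat v^m}(\theta'')\le h_{\hat v^m}(\theta^m)$. The coefficient of $\mu^*(\theta'')$ is $q(v^m,\theta'')<0$ in the first and $q(\hat v^m,\theta'')>0$ in the second. Therefore a positive combination of the two inequalities eliminates $\mu^*(\theta'')$. The combined inequality involves only $u(\theta'')$, $u(\theta^m)$, $\theta''$ and the now-known value $\mu^*(\theta^m)=-s$. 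After normalizing, it reads $u(\theta'')-s\theta''\le u(\theta^m)-s\theta^m+o(|\theta''-\theta^m|)$ on both sides of $\theta^m$. I would first verify that the weights stay bounded as $\theta''\to\theta^m$, which should follow since $\hat v^m-v^m$ is bounded away from $0$. By differentiability of $u$ (assumption (i)), $\theta^m$ is then a local maximizer of $u(\theta)-s\theta$, so $u'(\theta^m)=s=-\mu^*(\theta^m)$ for every $m$.

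\emph{Step 3: passing to the limit.} By continuity of $\lambda^*$, $p$ and $q$,
\[
\lambda^*(v)=\lim_{m\to\infty}\bigl[-\mu^*(\theta^m)q(v^m,\theta^m)+p(v^m,\theta^m)\bigr]=u'(\theta^1)q(v,\theta)+p(v,\theta),
\]
where the last equality uses that $-\mu^*(\theta^m)=u'(\theta^1)$ is constant along the sequence.
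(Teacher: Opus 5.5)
Your argument is correct. Steps 1 and 3 follow the paper's route. The paper likewise pairs the low support values $v^{m_i}$ with high support values $\bar v^{m_i}>\theta^{m_i}+c$ of two rows and writes the four complementary-slackness-plus-feasibility inequalities. It then uses that $w\mapsto p(w,\theta^{m_i})-\mu^*(\theta^{m_i})q(w,\theta^{m_i})$ is affine for $w>\max\{\theta^{m_1},\theta^{m_2}\}$ to conclude that the two lines coincide, and passes to the limit along $v^m\to v$.

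The real difference is your Step 2. The paper asserts that the same four inequalities also give $-\mu^*(\theta^m)=u'(\theta^m)$. But those inequalities involve $u$ only at the points $\theta^m$, so on their own they only show that the points $(\theta^m,u(\theta^m))$ lie on a common line of slope $s$; they say nothing about $u'$ at an individual $\theta^m$. Your elimination of the unknown $\mu^*(\theta'')$ is what actually supplies the tangency. You weight the feasibility constraints at $v^m$ and $\hat v^m$ by $q(\hat v^m,\theta'')>0$ and $-q(v^m,\theta'')>0$, which are exactly the weights of the two-point posterior on $\{v^m,\hat v^m\}$ whose index is $\theta''$. On this point your version is more complete than the paper's.

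The combination is in fact exact: it collapses to $(\hat v^m-v^m)\bigl(u(\theta'')-s\theta''\bigr)\le(\hat v^m-v^m)\bigl(u(\theta^m)-s\theta^m\bigr)$ for every $\theta''$ with $\theta''<v^m<\theta''+c<\hat v^m$. So there is no $o(|\theta''-\theta^m|)$ remainder and no boundedness of weights to check. It also never uses the sequence, so it is a general first-order condition for any row carrying one support value in $(\theta_0,\theta_0+c)$ and one above $\theta_0+c$.

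One small slip in Step 1: adding the two low-value inequalities gives $\bigl(\mu^*(\theta^{m_1})-\mu^*(\theta^{m_2})\bigr)(v^{m_1}-v^{m_2})\le 0$, not $\ge 0$ (the higher value prefers the steeper slope $-\mu^*$). The cross pairing of $v^{m_2}$ with $\hat v^{m_1}$ flips under the same convention, so the conclusion $\mu^*(\theta^{m_1})=\mu^*(\theta^{m_2})$ is unaffected.
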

\begin{proof}[Proof of \Cref{lem:limit point with value between theta and theta+c}]
Without loss of generality, we can assume that $\{\theta^m\}_{m\in\Zplus}$ is also strictly increasing.
We can consider two tuples $(v^3,\theta^3),\ (v^4,\theta^4)$ with $\theta'\le v_3<v_4<v$ and $\theta_4>\theta_3$.
There exist tuples $(\bar{v}^3,\theta^3),(\bar{v}^4,\theta^4)\in\supp(G)$ such that $\bar{v}^3>\theta^3+c$ and $\bar{v}^4>\theta^4+c$.
Based on \Cref{thm:primal-dual-optimal}, we achieve the following inequalities:
\begin{align*}
    -\mu^*(\theta^3)q(v^3,\theta^3)+p(v^3,\theta^3)&\ge -\mu^*(\theta^4)q(v^3,\theta^4)+p(v^3,\theta^4)~,\\
    -\mu^*(\theta^3)q(\bar{v}^3,\theta^3)+p(\bar{v}^3,\theta^3)&\ge -\mu^*(\theta^4)q(\bar{v}^3,\theta^4)+p(\bar{v}^3,\theta^4)~,\\
    -\mu^*(\theta^4)q(v^4,\theta^4)+p(v^4,\theta^4)&\ge -\mu^*(\theta^3)q(v^4,\theta^3)+p(v^4,\theta^3)~,\\
    -\mu^*(\theta^4)q(\bar{v}^4,\theta^4)+p(\bar{v}^4,\theta^4)&\ge -\mu^*(\theta^3)q(\bar{v}^4,\theta^3)+p(\bar{v}^4,\theta^3)~.
\end{align*}
Since (i) the function $-\mu^*(\theta^3)q(\ \cdot\ ,\theta^3)+p(\ \cdot\ ,\theta^3)$ is linear over $(\theta',v)$;
(ii) the function $-\mu^*(\theta^3)q(\ \cdot\ ,\theta^3)+p(\ \cdot\ ,\theta^3)$ crosses $(\theta^3+c,u(\theta^3))$;
(iii) the function $-\mu^*(\theta^4)q(\ \cdot\ ,\theta^4)+p(\ \cdot\ ,\theta^4)$ is linear over $(\theta',v)$; and
(iv) the function $-\mu^*(\theta^4)q(\ \cdot\ ,\theta^4)+p(\ \cdot\ ,\theta^4)$ crosses $(\theta^4+c,u(\theta^4))$, the above four inequalities imply that $\mu^*(\theta^3)=\mu^*(\theta^4)=u'(\theta^3)=u'(\theta^4
)$ and these two functions overlap over $(\theta',1]$.
The above analysis holds for any $m_1,m_2\in\Zplus$ with $m_1<m_2$.
Thus we know that $\mu^*(\theta^{m_1})=\mu^*(\theta^{m_2})=u'(\theta^{m_1})=u'(\theta^{m_2})$ for any $m_1,m_2\in\Zplus$.
For any $m\in\Zplus$, it holds
\begin{equation*}
    \lambda^*(v^m)=p(v^m,\theta^m)-\mu^*(\theta^m)q(v^m,\theta^m)~.
\end{equation*}
Passing $m$ to infinity, we achieve that
\begin{equation*}
    \lambda^*(v)=p(v,\theta)-\mu^*(\theta^1)q(v,\theta)=u'(\theta^1)q(v,\theta)+p(v,\theta)~.
\end{equation*}
This completes the proof.
\end{proof}

\begin{lemma}[Case 3.2 in \Cref{alg:lambda} for the limit point $(v,\theta)$ with $v>\theta+c$]
\label{lem:limit point with value > theta+c}
Given $G^*$ and $(\lambda^*,\mu^*)$ are both optimal to the primal and dual problems, 
for any $(v,\theta)$ with $v>\theta+c$,
if there is a sequence $\{(v^m,\theta^m)\}_{m\in\Zplus}$ converging to $(v,\theta)$ such that : (i) $(v^m,\theta^m)\in\supp(G)$ for any $m\in\Zplus$; 
(ii) $\{v^m\}_{m\in\Zplus}$ is strictly increasing;
(iii) $\{\theta^m\}_{m\in\Zplus}$ is monotone;
(iv) let $\theta'=\frac{v-c+\theta}{2}\in(\theta,v-c)$, $\theta^m\in[0,\theta')$ and $v^m\in(\theta'+c,v)$ for any $m\in\Zplus$, then $-\mu^*(\theta^{m_1})=u'(\theta^{m_1})=u'(\theta^{m_2})=-\mu^*(\theta^{m_2})$ for any $m_1,m_2\in\Zplus$.
Besides, $\lambda^*(v)=u'(\theta^1)q(v,\theta)+p(v,\theta)$.
\end{lemma}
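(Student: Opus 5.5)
We mirror the argument of \Cref{lem:limit point with value between theta and theta+c}, now in the region where every support point of the sequence lies strictly to the right of the line $v=\theta+c$. There each function $\hat L_{\theta^m}(\cdot)=-\mu^*(\theta^m)q(\cdot,\theta^m)+p(\cdot,\theta^m)$ is affine in the value: on $[\theta^m+c,1]$ it equals $-\mu^*(\theta^m)(x-\theta^m-c)+u(\theta^m)$, with slope $-\mu^*(\theta^m)\in[0,L]$ by \Cref{thm:optimal-solution-exists-in-dual-new}. Condition (iv) guarantees $v^m>\theta'+c>\theta^m+c$, so every $v^m$, and also $v$, lies in the affine part of every line $\hat L_{\theta^m}$. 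Without loss of generality we take $\{\theta^m\}$ strictly monotone; a constant subsequence reduces to the single-row situation, which we treat separately or exclude.

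\textbf{Step 1: all lines coincide.} Fix $m_1<m_2$. \Cref{thm:primal-dual-optimal} gives
\[
\lambda^*(v^{m_j})=\hat L_{\theta^{m_j}}(v^{m_j}),
\]
and dual feasibility gives $\lambda^*(x)\ge \hat L_{\theta}(x)$ for all $x,\theta$. So at $v^{m_1}$ the line of $\theta^{m_1}$ weakly dominates that of $\theta^{m_2}$, and at $v^{m_2}$ the reverse holds. Both lines are affine on $[\theta'+c,1]$, which contains both points. This only shows they cross between $v^{m_1}$ and $v^{m_2}$, so a third point is needed. Since $\lambda^*$ is the upper envelope of these lines and each line touches it at infinitely many of the points $v^m$ accumulating at $v$, a third index $m_3$ forces the two affine pieces to agree at three points unless they are identical. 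Hence
\[
-\mu^*(\theta^{m_1})=-\mu^*(\theta^{m_2}) \quad\text{and}\quad u(\theta^{m_1})+\mu^*(\theta^{m_1})\theta^{m_1}=u(\theta^{m_2})+\mu^*(\theta^{m_2})\theta^{m_2}.
\]

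\textbf{Step 2: identify the slope as $u'$.} The equal intercepts say $u(\theta^{m_1})-u(\theta^{m_2})=-\mu^*\,(\theta^{m_1}-\theta^{m_2})$ for every pair in the sequence. Letting one index tend to infinity, with $u$ differentiable by assumption (i) of \Cref{thm:constructed-lambda-mu-optimal}, we get $-\mu^*=u'(\theta)$ at the limit. The same pairwise identity also makes $u$ affine on the convex hull of $\{\theta^m\}$, so $u'(\theta^m)$ equals this common constant for every $m$.

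\textbf{Step 3: pass to the limit.} By continuity of $\lambda^*$ (\Cref{lem:lambda increasing}) and of $p,q$,
\[
\lambda^*(v)=\lim_{m\to\infty}\Bigl[p(v^m,\theta^m)-\mu^*(\theta^m)q(v^m,\theta^m)\Bigr]=u'(\theta^1)q(v,\theta)+p(v,\theta).
\]

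\textbf{Main obstacle.} The hard step is the coincidence argument in Step 1. Two tangency points only yield a crossing of the two lines, not equality, so we must use three sequence points together with the envelope structure. If that fails, the alternative is to differentiate: $\lambda^*$ agrees with $\hat L_{\theta^m}$ at $v^m$ and lies above it everywhere, so wherever $\lambda^*$ is differentiable its slope at $v^m$ is $-\mu^*(\theta^m)$. Since $\lambda^*$ is monotone and hence a.e.\ differentiable, this should force the slopes to be equal along the sequence.
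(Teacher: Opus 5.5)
\textbf{The gap is in Step 1, and it is the step the whole lemma rests on.} Complementary slackness only gives $\lambda^*(v^{m})=\hat{L}_{\theta^{m}}(v^{m})$ for the \emph{matching} index. Nothing you invoke says that the line of row $\theta^{m_1}$ touches $\lambda^*$ at $v^{m_2}$, so the assertion that each line touches the envelope at infinitely many of the $v^m$ is unfounded. With one contact point per line you get only the crossing you already noticed.

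In fact the facts you use (contacts at the high points $v^m$, dual feasibility, slopes in $[0,L]$) do not rule out $\lambda^*$ being strictly convex near $v$, with $\hat{L}_{\theta^m}$ its tangent at $v^m$ pivoting through $(\theta^m+c,u(\theta^m))$ and having strictly increasing slopes. Your fallback fails for the same reason. At a point of differentiability the slope of $\lambda^*$ at $v^m$ is indeed $-\mu^*(\theta^m)$, but a monotone (even convex) $\lambda^*$ can have different slopes at different $v^m$. Moreover, the countable set $\{v^m\}$ may miss the differentiability points altogether.

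\textbf{The missing idea is the index constraint on each row}, which is what the paper's proof is built on. Since $q(v^m,\theta^m)>0$ and the row constraint integrates $q(\cdot,\theta^m)$ to zero, each row must contain low values. In particular $v_l^m=\inf\supp(G^*_{\cdot\mid\theta^m})<\theta^m+c$, and complementary slackness gives a second contact point of $\hat{L}_{\theta^m}$ on the other side of its pivot. The paper takes the limit $v_l$ of $v_l^m$ and splits into three cases.
\begin{itemize}
\item If $v_l\in(\theta,\theta+c)$, both lines are affine at the low contacts, and the contacts of two rows interleave, as in \Cref{lem:limit point with value between theta and theta+c}. For instance, $v^{m_1}$ lies strictly between $v_l^{m_2}$ and $v^{m_2}$. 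So the affine difference $D=\hat{L}_{\theta^{m_1}}-\hat{L}_{\theta^{m_2}}$ is $\le 0$ at both ends and $\ge 0$ in between, which forces $D\equiv 0$.
\item If $v_l<\theta$, then near $v_l$ each $\hat{L}_{\theta^m}$ is the vertical shift $u(\cdot)+c\,\mu^*(\theta^m)$. Contacts of two such shifts at nearby points give $\mu^*(\theta^{m_1})=\mu^*(\theta^{m_2})$. The two resulting parallel lines each dominate the other at one point, so they coincide.
\item The boundary case $v_l=\theta$ needs the separate limiting and contradiction argument in the paper.
\end{itemize}
Without some such use of the low support points, equality of the $\mu^*(\theta^m)$ does not follow.

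\textbf{A smaller issue in Step 2.} Collinearity of the countably many points $(\theta^m,u(\theta^m))$ does not make $u$ affine on their convex hull, so you do not obtain $u'(\theta^m)=-\mu^*$ for each $m$. What your secant argument through the limit point does give, once Step 1 is repaired, is $u'(\theta)=-\mu^*$. That suffices for the formula $\lambda^*(v)=u'(\theta)q(v,\theta)+p(v,\theta)$ used in Case 3.2 of \Cref{alg:lambda}, but not literally for the statement written with $u'(\theta^1)$.
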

\begin{proof}[Proof of \Cref{lem:limit point with value > theta+c}]
We proceed with the proof in two steps.

\begin{figure}[htbp]
    \centering
    \begin{minipage}[t]{0.38\textwidth}
        \centering
        \begin{tikzpicture}[scale=3.0,decoration={markings, mark=at position 0.98 with {\arrow{>}}}] 

        \draw[thick, postaction={decorate},domain=-2.5:0.9,smooth,variable=\t] plot ({0.2*exp(-0.3*\t)*sin(3*deg(\t)) + 0.6}, {0.1*\t + 1});
        \draw[red] (0.68,1.09) circle (0.6pt);
        
        \draw[dashed] (0.6,0.2) -- (1.7,1.3) node[above] {\small index=value-c};
        \draw[dashed] (0.3,0.71) -- (0.89,1.3) node[above] {\small index=value};
        
        \draw[red] (1.8,1.09) circle (0.6pt);
        \draw[dashed] (1.8,1.09) -- (1.5,1.09);
        \draw[dashed] (1.8,1.09) -- (1.8,0.7);
        \draw[dashed] (1.8,0.7) -- (1.5,0.7);
        \draw[dashed] (1.5,0.7) -- (1.5,1.09);

        \draw[thick, postaction={decorate}, domain=1.5:1.8, samples=50] plot (\x, {4*(\x - 1.5)^2+0.7});
        
    \end{tikzpicture}
    \end{minipage}
    \hfill
    \begin{minipage}[t]{0.58\textwidth}
        \centering
        \begin{tikzpicture}[scale=2.2] 
        
        \draw[thick, domain=0.9:1.8, samples=50] plot (\x, {0.4*(\x - 1.5)+1});
        \fill (1.5,1) circle (0.6pt);
        \draw (1.1,0.9) node[left] {\small $(\theta^m+c,\mu(\theta^m))$};
        \fill (1.1,0.84) circle (0.6pt);
        \draw (-0.3,0.35) node[left] {\small $(\hat{v}^m,\lambda(\hat{v}^m))$};
        \fill (-0.3,0.28) circle (0.6pt);
        
        \draw[thick, domain=1.2:1.9, samples=50] plot (\x, {0.7*(\x - 1.68)+1.1});
        \fill (1.68,1.1) circle (0.6pt);
        \draw (1.68,1.2) node[left] {\small $(v^m,\lambda(v^m))$};
        \fill (1.38,0.89) circle (0.6pt);
        \fill (0.58,0.33) circle (0.6pt);

        \draw[thick, domain=1.4:1.9, samples=50] plot (\x, {1.1*(\x - 1.85)+1.25});
        \fill (1.85,1.25) circle (0.6pt);
        \fill (1.55,0.92) circle (0.6pt);
        \fill (1.05,0.37) circle (0.6pt);

        \draw[thick, red, domain=1.25:2.05, samples=50] plot (\x, {1.4*(\x - 2)+1.4});
        \fill[red] (2,1.4) circle (0.6pt) node[right] {\small $(v,\lambda(v))$};
        \fill[red] (1.7,0.98) circle (0.6pt) node[right] {\small $(\theta+c,u(\theta))$};
        \fill[red] (1.3,0.42) circle (0.6pt) node[right] {\small $(\theta,\lambda(\theta))$};
        
    \end{tikzpicture}
    \end{minipage}
    \caption{Graph illustration of Lemma \ref{lem:limit point with value > theta+c}}
    \label{}
\end{figure}

\xhdr{Step-1: Prove that $-\mu^*(\theta^{m_1})=-\mu^*(\theta^{m_2})=u'(\theta^{m_1})=u'(\theta^{m_2})$ for any $m_1,m_2\in\Zplus$}
Let $v_l^m\triangleq \inf\supp(G^*_{\cdot|\theta^m})$ for any $m\in\Zplus$.
Since the sequence $\{(v^m,\theta^m)\}_{m\in\Zplus}$ converges to  $(v,\theta)$, the sequence $\{(v_l^m,\theta^m)\}_{m\in\Zplus}$ also converges to some $(v_l,\theta)$ with $v_l<\theta+c$.
We divide the proof into the following three cases.

\begin{enumerate}
    \item[(i)] \textbf{Case 1: $v_l<\theta$.}
    We can prove this result using a similar method as \Cref{lem:limit point with value<index}.

    \item[(ii)] \textbf{Case 2: $v_l\in(\theta,\theta+c)$.}
    We can prove this result using a similar method as \Cref{lem:limit point with value between theta and theta+c}.

    \item[(iii)] \textbf{Case 3: $v_l=\theta$.}
    If the sequence $\{\theta^m\}$ is also decreasing, then we know that there exists a subsequence $\{m(k)\}_{k\in\Zplus}$ such that $\{(v^{m(k)}_l,\theta^{m(k)})\}_{k\in{\Zplus}}$ converges to $(v_l,\theta)$, and is decreasing.
    In this case, we can prove the target result using a similar method as \Cref{lem:limit point with value<index}.
    
    Thus, without loss, we can assume that the sequence $\{\theta^m\}$ is increasing, and there exists a subsequence $\{m(k)\}_{k\in\Zplus}$ such that $\{(v^{m(k)}_l,\theta^{m(k)})\}_{k\in{\Zplus}}$ converges to $(v_l,\theta)$, and is also increasing.
    Abusing the notations a little, we rename the sequence $\{m(k)\}_{k\in\Zplus}$ as $\{m\}_{m\in\Zplus}$.
    Since the sequences $\{v^m\}_{m\in\Zplus}$ and $\{\theta^m\}_{m\in\Zplus}$ are both increasing, by \Cref{thm:primal-dual-optimal}, we know that $\{-\mu^*(\theta^m)\}_{m\in\Zplus}$ is also increasing.
    We have already prove the boundedness of $\mu^*$, thus $\{-\mu^*(\theta^m)\}_{m\in\Zplus}$ also converges to some real number, which is denoted as $-\hat{\mu}$, that is $$
    \lim\limits_{m\rightarrow\infty}-\mu^*(\theta^m)=-\hat{\mu}~.
    $$
    
    Next, we aim to prove that there exists an $M>0$ such that for any $m>M$, $-\mu^*(\theta^m)=-\hat{\mu}$.
    We prove it through contradiction.
    We assume that for any $m<\infty$, it holds that $-\mu^*(\theta^m)<-\hat{\mu}$.
    We consider three sequences:
    (i) $\{(v_l^m,\lambda(v_l^m))\}_{m\in\Zplus}$, 
    (ii) $\{(\theta^m+c,u(\theta^m))\}_{m\in\Zplus}$, and
    (iii) $\{(v^m,\lambda(v^m))\}_{m\in\Zplus}$.
    Let $(x_a,y_a)$ denote the limit point of the sequence $\{(v_l^m,\lambda(v_l^m))\}_{m\in\Zplus}$.
    Let $(x_b,y_b)$ denote the limit point of the sequence $\{(\theta^m+c,u(\theta^m))\}_{m\in\Zplus}$.
    Let $(x_c,y_c)$ denote the limit point of the sequence $\{(v^m,\lambda(v^m))\}_{m\in\Zplus}$.
    Based on the graph illustrations, we can observe that $u'(\theta)\le -\hat{\mu}$.
    Next we aim to show that these three points $(x_a,y_a),(x_b,y_b),(x_c,y_c)$ lie in a common linear function.
    First, we show that $\frac{y_c-y_b}{x_c-x_b}=-\hat{\mu}$:
    \begin{align*}
        \frac{y_c-y_b}{x_c-x_b}&=\lim\limits_{m\rightarrow\infty} \frac{\lambda^*(v^m)-u(\theta^m)}{v^m-(\theta^m+c)}\\
        &=\lim\limits_{m\rightarrow\infty}\frac{-\mu^*(\theta^m)[v^m-(\theta^m+c)]}{v^m-(\theta^m+c)}\\
        &=\lim\limits_{m\rightarrow\infty}-\mu^*(\theta^m)=-\hat{\mu}~.
    \end{align*}
    Next, we prove that $\frac{y_b-y_a}{x_b-x_a}=-\hat{\mu}$.
    For any $m\in\Zplus$, we have that
    \begin{align*}
        \frac{y_b-y_a}{x_b-x_a}&=\lim\limits_{m\rightarrow\infty}\frac{u(\theta^m)-\lambda(v_l^m)}{\theta^m+c-v_l^m}\\
        &=\lim\limits_{m\rightarrow\infty}\frac{u(\theta^m)-(-\mu^*(\theta^m)q(v_l^m,\theta^m)+p(v_l^m,\theta^m))}{\theta^m+c-(v_l^m-\theta^m)-\theta^m}=-\hat{\mu}~.
    \end{align*}
    Combining these two facts, we know that $(x_a,y_a),(x_b,y_b),(x_c,y_c)$ lie in a common linear function.
    Under this fact, we know that for sufficiently small $\epsilon>0$,
    \begin{equation*}
        -\mu^*(\theta-\epsilon)q(v_l,\theta-\epsilon)+p(v_l,\theta-\epsilon)>-\mu^*(\theta)q(v_l,\theta)+p(v_l,\theta)~,
    \end{equation*}
    which forms a contradiction with the fact that the sequence $\{v_l^m,\theta^m\}_{m\in\Zplus}$ converges to $(v_l,\theta)$ with $v_l=\theta$.
    Thus, we have proved that the assumption in the beginning is invalid.
    To sum up, we have proved
    that there exists a sufficiently large $M>0$ such that $-\mu^*(\theta^{m_1})=-\mu^*(\theta^{m_2})$ for any $m_1,m_2>M$.
    Using a similar idea of \Cref{lem:limit point with value<index}, we can also prove that $-\mu^*(\theta^{m_1})=-\mu^*(\theta^{m_2})=u'(\theta^{m_1})=u'(\theta^{m_2})$ for any $m_1,m_2>M$.
\end{enumerate}

\xhdr{Step-2: Prove that $\lambda^*(v)=u'(\theta^1)q(v,\theta)+p(v,\theta)$}
For any $m\in\Zplus$, \Cref{thm:primal-dual-optimal} implies that
\begin{equation*}
    \lambda^*(v^m)=p(v^m,\theta^m)-\mu^*(\theta^m)q(v^m,\theta^m)~.
\end{equation*}
Passing $m$ to infinity, 
since functions $\lambda^*,p,q$ are both continuous,
we achieve that
\begin{equation*}
    \lambda^*(v)=p(v,\theta)-\mu^*(\theta^1)q(v,\theta)=u'(\theta^1)q(v,\theta)+p(v,\theta)~.
\end{equation*}
This completes the proof.
\end{proof}

\begin{lemma}[Case 3.2 in \Cref{alg:lambda} for the limit point $(v,\theta)$ with $v=\theta$]
\label{lem:limit point with value=theta}
Given $G^*$ and $(\lambda^*,\mu^*)$ are both optimal to the primal and dual problems, 
for any $(v,\theta)$ with $v=\theta+c$,
if there is a sequence $\{(v^m,\theta^m)\}_{m\in\Zplus}$ converging to $(v,\theta)$ such that : (i) $(v^m,\theta^m)\in\supp(G)$ for any $m\in\Zplus$; 
(ii) $\{v^m\}_{m\in\Zplus}$ is strictly increasing; and
(iii) $\{\theta^m\}_{m\in\Zplus}$ is monotone, 
then there exists a sufficiently large $M>0$ such that $-\mu^*(\theta^{m})=u'(\theta)$ for any $m>M$.
Besides, $\lambda^*(v)=u'(\theta)q(v,\theta)+p(v,\theta)$.
\end{lemma}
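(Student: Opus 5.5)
Throughout I read the limit point as lying on the diagonal $v=\theta$, as the lemma's title says. If the hypothesis $v=\theta+c$ is taken literally, then $q(v,\theta)=0$. The claimed formula then reduces to $\lambda^*(v)=u(v-c)$, which is exactly \Cref{lem:limit point with value=index+c}, so nothing new would need proving.

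The plan has two steps, mirroring the proofs of \Cref{lem:limit point with value<index,lem:limit point with value between theta and theta+c}. First, show that $-\mu^*(\theta^m)$ is eventually constant and equal to $u'(\theta)$. Second, pass to the limit in the complementary slackness identity. Pass to a subsequence so that one of two configurations holds for every $m$: either (A) $v^m<\theta^m$ or (B) $v^m\ge\theta^m$. Separately, if $\theta^m$ is eventually constant, it must equal $\theta$, and this case is handled at the end.

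\emph{Case (A).} Here $q(v^m,\theta^m)=-c$ and $p(v^m,\theta^m)=u(v^m)$. \Cref{thm:primal-dual-optimal} then gives $\lambda^*(v^m)=c\mu^*(\theta^m)+u(v^m)$. Comparing with dual feasibility at the other indices $\theta'>v^m$ shows that $-\mu^*(\theta^m)=\max_{\theta'>v^m}\{-\mu^*(\theta')\}$. Because the $v^m$ increase, this maximum forces $-\mu^*(\theta^m)$ to be constant along the sequence. Each row $\theta^m$ must also contain some value $\hat v^m>\theta^m+c$, since otherwise the index equation cannot hold. Writing the two cross inequalities from \Cref{thm:primal-dual-optimal} for rows $\theta^{m_1},\theta^{m_2}$ at $\hat v^{m_1},\hat v^{m_2}$ gives the following: the two lines of equal slope through $(\theta^{m_i}+c,u(\theta^{m_i}))$ must coincide. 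Hence $u$ is affine between $\theta^{m_1}$ and $\theta^{m_2}$, with slope $-\mu^*(\theta^{m_i})$. Since the $\theta^m$ accumulate at $\theta$, $u$ is affine on a one-sided interval ending at $\theta$. Differentiability of $u$ at $\theta$ (assumption (i) of \Cref{thm:constructed-lambda-mu-optimal}) then identifies this slope as $u'(\theta)$.

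\emph{Case (B).} Here $v^m\in[\theta^m,\theta^m+c)$, so the relevant affine pieces are $-\mu^*(\theta^m)(v-\theta^m-c)+u(\theta^m)$. I would run the four-inequality argument of \Cref{lem:limit point with value between theta and theta+c}, using the pairs $(v^m,\bar v^m)$ with $\bar v^m>\theta^m+c$ in the same row. The conclusion is again that consecutive pieces overlap. This yields equal slopes, affineness of $u$ between the $\theta^m$, and hence slope $u'(\theta)$ for $m>M$.

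\emph{Limit step.} Using continuity of $\lambda^*$ (\Cref{thm:optimal-solution-exists-in-dual-new}) and of $p,q$, let $m\to\infty$ in $\lambda^*(v^m)=p(v^m,\theta^m)-\mu^*(\theta^m)q(v^m,\theta^m)$. This gives $\lambda^*(v)=u'(\theta)q(v,\theta)+p(v,\theta)$.

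\emph{Main obstacle.} I expect the main difficulty to be the degenerate situation where $\theta^m$ is not strictly monotone, i.e.\ eventually $\theta^m=\theta$, so all the points sit in a single row. Then the overlap argument between distinct rows is unavailable. I would instead fix one high value $\hat v>\theta+c$ in row $\theta$ and use that the map $\theta'\mapsto -\mu^*(\theta')(\hat v-\theta'-c)+u(\theta')$ is maximized at $\theta'=\theta$ by dual feasibility. I would pair this with the analogous maximality at the points $v^m\to\theta$, which lie on the kink of $p$. Comparing one-sided difference quotients of the two conditions should pin $-\mu^*(\theta)$ between the left and right derivatives of $u$ at $\theta$, which coincide by differentiability. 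This one-sided comparison at the kink of $p$ is where I anticipate the real work.
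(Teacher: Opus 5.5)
Your handling of the case $\theta^m\downarrow\theta$ (the paper's Case~1) and of the eventually constant case is sound. There is, however, a genuine gap when $\theta^m$ increases strictly to $\theta$. This is exactly where the paper's proof does its work (Case~2 and Subcase~2.3). Configuration (B) always lands there, since $\theta^m\le v^m<\theta$.

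The cross-row comparisons you import from the $v<\theta$ and $v\in(\theta,\theta+c)$ lemmas need, via their condition (iv), every low point to sit on the same branch of every row's dual function $x\mapsto p(x,\theta^m)-\mu^*(\theta^m)q(x,\theta^m)$. Here the low points and the indices converge to the same point from below, so this fails for pairs $m_1<m_2$ with $m_2$ large.

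\emph{Case (A).} Dual feasibility at $v^m$ gives $\mu^*(\theta^m)\ge\mu^*(\theta')$ for all $\theta'>v^m$. Equivalently, $-\mu^*(\theta^m)=\min_{\theta'>v^m}\{-\mu^*(\theta')\}$ (a minimum, not a maximum). This only makes $-\mu^*(\theta^m)$ nondecreasing in $m$. The reverse inequality needs $\theta^{m_1}>v^{m_2}$, which is false once $v^{m_2}>\theta^{m_1}$.

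\emph{Case (B).} For large $m_2$ you have $v^{m_1}<\theta^{m_2}$, so row $\theta^{m_2}$'s function at $v^{m_1}$ equals $u(v^{m_1})+c\mu^*(\theta^{m_2})$, not the value of its line. One of the four inequalities is therefore lost. The remaining three (at $v^{m_2}$ and at the two high points) only give $-\mu^*(\theta^{m_1})\ge-\mu^*(\theta^{m_2})$, and they leave room for the two lines to cross between the high points. This is precisely what the paper obtains in Case~2: monotone slopes with a limit $-\underline{\mu}$, not constancy.

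What is missing is a limiting argument in place of constancy. Let $\bar s$ be the limit of the bounded monotone slopes $-\mu^*(\theta^m)$. Complementary slackness and continuity give $\lambda^*(\theta)=u(\theta)-c\bar s$, and $\lambda^*(\hat v)=u(\theta)+\bar s(\hat v-\theta-c)$ for any limit $\hat v$ of the high points $\hat v^m$. Then squeeze $\bar s$ between difference quotients of $u$ at $\theta$:
\begin{itemize}
\item For $u'(\theta)\ge\bar s$, use dual feasibility at $\theta$ in (A), or at $\hat v$ in (B), against the rows $\theta^m$.
\item For $u'(\theta)\le\bar s$, use feasibility at $\hat v$ against rows $\theta'$ slightly above $\theta$. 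There the low-point constraints give $-\mu^*(\theta')\ge\bar s$.
\end{itemize}
The collapse $\hat v^m\to\theta+c$ needs separate treatment; the paper's Subcase~2.2 uses $\lambda^*(\hat v^m)\ge u(\hat v^m-c)$.

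So your one-sided-derivative idea is the right tool, but it belongs here. The degenerate case you flagged as the main obstacle is actually easy. The low points give $-\mu^*(\theta')\ge-\mu^*(\theta)$ for all $\theta'>v^1$, which is a two-sided neighborhood of $\theta$. Feasibility at a high point $\hat v$ of row $\theta$ then yields $u'_+(\theta)\le-\mu^*(\theta)\le u'_-(\theta)$.

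Two smaller points. First, coinciding lines only show that the secant slope of $u$ between $\theta^{m_1}$ and $\theta^{m_2}$ equals the common slope, not that $u$ is affine there. This still suffices to identify the slope as $u'(\theta)$. Second, in the increasing case you should aim only for $\lim_m(-\mu^*(\theta^m))=u'(\theta)$. That is all the paper proves in Case~2, and it is all the formula for $\lambda^*(v)$ requires.
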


\begin{proof}[Proof of \Cref{lem:limit point with value=theta}]
Based on the properties of the sequence $\{(v^m,\theta^m)\}_{m\in\Zplus}$, we can divide the proof into three cases:

\begin{figure}[htbp]
    \centering
    \begin{minipage}[t]{0.38\textwidth}
        \centering
        \begin{tikzpicture}[scale=3.0,decoration={markings, mark=at position 0.98 with {\arrow{>}}}] 

        \draw[thick, postaction={decorate}, domain=0.4:0.7, samples=50] plot (\x, {3*(\x - 0.4)^2+0.81});
        \draw[thick, postaction={decorate}, domain=0.3:0.4, samples=50] plot (\x, {3*(\x - 0.1)^2+0.51});
        \draw (0.5,0.7) node[below] {\small $(v^m,\theta^m)$};
        \fill (0.4,0.81) circle (0.6pt);
        \draw[red] (0.68,1.09) circle (0.6pt);
        
        \draw[dashed] (0.4,0.2) -- (1.5,1.3);
        \draw (1.6,1.3) node[above] {\small index=value-c};
        \draw[dashed] (0.3,0.71) -- (0.89,1.3);
        \draw (0.8,1.3) node[above] {\small index=value};
        
        \draw[red] (1.5,1.07) circle (0.6pt);

        \draw[thick, postaction={decorate}, domain=1.8:1.5, samples=50] plot (\x, {4*(\x - 1.8)^2+0.7});
        \draw (1.6,0.7) node[below] {\small $(\hat{v}^m,\theta^m)$};
        
    \end{tikzpicture}
    \end{minipage}
    \hfill
    \begin{minipage}[t]{0.58\textwidth}
        \centering
        \begin{tikzpicture}[scale=3.0] 

        \draw[thick, red, domain=-0.1:1, samples=50] plot (\x, {0.1*(\x - 1.4)+0.95});
        \fill[red] (0.9,0.9) circle (0.6pt);
        \draw[red] (1,1) node[above] {\small $(\theta,\lambda(\theta))$};
        \fill[red] (0.52,0.862) circle (0.6pt) node[above] {\small $(\theta+c,u(\theta))$};
        \fill[red] (0,0.81) circle (0.6pt) node[above] {\small $(v,\lambda(v))$};
        
        \draw[thick, domain=-0.2:1.8, samples=50] plot (\x, {0.2*(\x - 1.5)+1});
        \fill (1.2,0.94) circle (0.6pt);
        \fill (0.5,0.8) circle (0.6pt);
        \fill (-0.05,0.69) circle (0.6pt);
        
        \draw[thick, domain=-0.3:1.9, samples=50] plot (\x, {0.4*(\x - 1.68)+1.1});
        \fill (1.48,1.02) circle (0.6pt);
        \draw (1.68,1) node[below] {\small $(\hat{v}^m,\lambda(\hat{v}^m))$};
        \fill (0.48,0.62) circle (0.6pt);
        \fill (-0.1,0.388) circle (0.6pt);

        \draw[thick, domain=-0.3:1.9, samples=50] plot (\x, {0.7*(\x - 1.85)+1.25});
        \fill (1.75,1.18) circle (0.6pt);
        \fill (0.45,0.27) circle (0.6pt);
        \draw (0.45,0.2) node[right] {\small $(\theta^m+c,\mu(\theta^m))$};
        \fill (-0.15,-0.15) circle (0.6pt);
        \draw (-0.1,-0.15) node[right] {\small $(v^m,\lambda(v^m))$};
        
    \end{tikzpicture}
    \end{minipage}
    \caption{Graph illustration of Lemma \ref{lem:limit point with value=theta}}
    \label{}
\end{figure}

\xhdr{Case 1: The sequence $\{\theta^m\}_{m\in\Zplus}$ is decreasing}
This lemma can be proved in a method similar to that of \Cref{lem:limit point with value<index}.

\xhdr{Case 2:
The sequence $\{\theta^m\}_{m\in\Zplus}$ is increasing, and there exists $M>0$ such that $v^m\ge \theta^m$ for any $m>M$}
For any $m\in\Zplus$, there exists $(\hat{v}^m,\theta^m)\in\supp(G^*)$ with $\hat{v}^m>\theta^m+c$.
By \Cref{thm:primal-dual-optimal}, we have that
\begin{equation*}
    p(\hat{v}^m,\theta^m)-\mu^*(\theta^m)q(\hat{v}^m,\theta^m)\ge p(\hat{v}^m,\theta^{m+1})-\mu^*(\theta^{m+1})q(\hat{v}^m,\theta^{m+1})~.
\end{equation*}
Since the functions $p(\cdot,\theta^m)-\mu^*(\theta^m)q(\cdot,\theta^m)$ and $p(\cdot,\theta^{m+1})-\mu^*(\theta^{m+1})q(\cdot,\theta^{m+1})$ are both linear over $[\hat{v}^m,1]$, it follows that $-\mu^*(\theta^m)\ge-\mu^*(\theta^{m+1})$ and $\hat{v}^m\ge \hat{v}^{m+1}$.
Since the function $\mu^*$ is bounded (\Cref{thm:optimal-solution-exists-in-dual-new}), we know that $\lim\limits_{m\rightarrow \infty}-\mu^*(\theta^m)=-\underline{\mu}$.

We consider three sequences:
(i) $\{(v^m,\lambda(v^m))\}_{m\in\Zplus}$, 
(ii) $\{(\theta^m+c,u(\theta^m))\}_{m\in\Zplus}$, and
(iii) $\{(\hat{v}^m,\lambda(\hat{v}^m))\}_{m\in\Zplus}$.
Notice that for any $m\in\Zplus$, three points $(v^m,\lambda(v^m)),(\theta^m+c,u(\theta^m)),(\hat{v}^m,\lambda(\hat{v}^m))$ lie in a common linear function $-\mu^*(\theta^m)q(\cdot,\theta^m)+p(\cdot,\theta^m)$ over $[\theta^m,1-c]$.
Let $(x_a,y_a)$ denote the limit point of the sequence $\{(v^m,\lambda(v^m))\}_{m\in\Zplus}$.
Let $(x_b,y_b)$ denote the limit point of the sequence $\{(\theta^m+c,u(\theta^m))\}_{m\in\Zplus}$.
Let $(x_c,y_c)$ denote the limit point of the sequence $\{(\hat{v}^m,\lambda(\hat{v}^m))\}_{m\in\Zplus}$.
We can observe that the function $-\mu^*(\theta)q(\cdot,\theta)+p(\cdot,\theta)$ crosses the point $(x_b,y_b)$.
It follows that $u'(\theta)\ge -\underline{\mu}$.

Next we aim to prove $u'(\theta)= -\underline{\mu}$ through contradiction.
We assume that $u'(\theta)> -\underline{\mu}$.
We divide the proof into two cases.

\sxhdr{Subcase 2.1: Points $(x_b,y_b)$ and $(x_c,y_c)$ do not overlap}
We show that $\frac{y_c-y_b}{x_c-x_b}=-\underline{\mu}$.
\begin{align*}
    \frac{y_c-y_b}{x_c-x_b}&=\lim\limits_{m\rightarrow\infty} \frac{\lambda^*(\hat{v}^m)-u(\theta^m)}{\hat{v}^m-(\theta^m+c)}\\
    &=\lim\limits_{m\rightarrow\infty}\frac{-\mu^*(\theta^m)[\hat{v}^m-(\theta^m+c)]}{\hat{v}^m-(\theta^m+c)}\\
    &=\lim\limits_{m\rightarrow\infty}-\mu^*(\theta^m)=-\underline{\mu}~.
\end{align*}
In a similar manner, we can also prove that $\frac{y_b-y_a}{x_b-x_a}=-\underline{\mu}$.
Thus, we have proved that points $(x_a,y_a),(x_b,y_b),(x_c,y_c)$ lie in a common linear function.
Next we consider an index $\theta+\epsilon$ for sufficiently small $\epsilon>0$.
Since the function $\lambda^*$ is continuous, we know that
\begin{equation*}
    -\mu^*(\theta)q(x_c,\theta)+p(x_c,\theta)\ge -\mu^*(\theta+\epsilon)q(x_c,\theta+\epsilon)+p(x_c,\theta+\epsilon)~,
\end{equation*}
which further implies that $-\mu^*(\theta+\epsilon)<-\underline{\mu}$.
Finally, we know that
\begin{align*}
    -\mu^*(\theta+\epsilon)q(v,\theta+\epsilon)+p(v,\theta+\epsilon)=u(v)+\mu^*(\theta+\epsilon)\cdot c>u(v)+\underline{\mu}\cdot c=\lambda(v)~,
\end{align*}
which forms a contradiction to \Cref{thm:primal-dual-optimal}.
Thus, we have proved that $u'(\theta)=-\underline{\mu}$.

\sxhdr{Subcase 2.2: Points $(x_b,y_b)$ and $(x_c,y_c)$ overlap}
For any $m\in\Zplus$, it holds $\lambda(\hat{v}^m)\ge u(\hat{v}^m-c)$.
The fact that points $(x_b,y_b)$ and $(x_c,y_c)$ overlap implies that $\lambda^*(\theta+c)=u(\theta)$.
Further,
\begin{equation*}
    \frac{\lambda(\hat{v}^m)-\lambda(\theta+c)}{\hat{v}^m-\theta-c}\ge \frac{u(\hat{v}^m-c)-u(\theta)}{\hat{v}^m-\theta-c}~.
\end{equation*}
Based on the graph illustrations of these functions, we know $-\mu^*(\theta^m)\ge \frac{\lambda(\hat{v}^m)-\lambda(\theta+c)}{\hat{v}^m-\theta-c}$.
Thus, we have that
\begin{equation*}
    -\mu^*(\theta^m)\ge \frac{u(\hat{v}^m-c)-u(\theta)}{\hat{v}^m-\theta-c}~.
\end{equation*}
Passing $m$ to infinity, we achieve that
\begin{equation*}
    -\underline{\mu}=\lim\limits_{m\rightarrow\infty}-\mu^*(\theta^m)\ge u'(\theta)~,
\end{equation*}
which forms a contradiction with the former assumption.

By combining these two cases, we have proved that $-\underline{\mu}=u'(\theta)$.
We know that $\frac{y_b-y_a}{x_b-x_a}=-\underline{\mu}$.
Thus, we have that
\begin{equation*}
    u'(\theta)=-\underline{\mu}=\frac{y_b-y_a}{x_b-x_a}=\frac{u(\theta)-\lambda^*(v)}{\theta+c-v}~,
\end{equation*}
which implies that $\lambda^*(v)=u(\theta)+u'(\theta)(v-\theta-c)$.

\sxhdr{Subcase 2.3: Otherwise}
For any $\epsilon>0$, we define two sets
\begin{align*}
    A\triangleq\{(v',\theta')|v'\ge \theta',v'\in[v-\epsilon,v]\}~,\quad B\triangleq\{(v',\theta')|\theta'\ge v,v'\in[v-\epsilon,v]\}~.
\end{align*}
In this case, there exists an $\epsilon>0$ such that regions $A$ and $B$ are both empty.
In other words, for any value $v'\in(v-\epsilon,v)$, there exists $\theta(v')\in(v',v)$ such that $(v',g(v'))\in\supp(G^*)$.
Next, we define function $h(v')=-\mu^*(\theta(v'))$ for any $v'\in(v-\epsilon,v)$.
For any $v_1,v_2\in(v-\epsilon,v)$ with $v_1<v_2$, it follows that
\begin{equation*}
    h(v_1)=\min_{\theta'\in[v_1,1-c]}-\mu^*(\theta'),\ 
    h(v_2)=\min_{\theta'\in[v_2,1-c]}-\mu^*(\theta')~.
\end{equation*}
Thus we know that $h(v_2)\ge h(v_1)$, which implies that the function $h$ is weakly increasing over $(v-\epsilon,v)$.  
Next we are going to prove that the function $h$ is also continuous over $(v-\epsilon,v)$.

First, we show the function $h$ is right-continuous over $(v-\epsilon,v)$.
For any $w\in(v',g(v'))$, it holds that $h(w)=h(v')$.
Since the prior has positive density everywhere, we know that the function $h$ is right-continuous over $(v-\epsilon,v)$.

Then, we show the function $h$ is left-continuous over $(v-\epsilon,v)$.
We prove it through contradiction.
We assume that the function $h$ is not left-continuous everywhere over $(v-\epsilon,v)$.
For any tuple $(v',g(v'))\in\supp(G^*)$,we define sets
\begin{align*}
    R\triangleq\{(v'',\theta'')|\theta''\in(v',v),v''\in[0,v')\}~,\quad S\triangleq\{(v'',\theta'')|v''\in(v-\epsilon,v'),\theta''\in(v',v)\}~.
\end{align*}
Under this case, the region $R$ must be empty.
Besides, for any value $v''\in(v-\epsilon,v')$, if $(v'',\theta'')\in\supp(G^*)$, then it must hold that $(v'',\theta'')\in S$.
Since $h$ is increasing, we know that
$\lim\limits_{w\rightarrow v'^-}h(w)=\hat{h}$.
Since $\lim_{w\rightarrow v'^-}\lambda^*(w)=\lambda^*(v')$, we know that $u(v')-c\hat{h}=u(v')-ch(v')$.
Finally, we achieve that $\hat{h}=h(v')$, which forms a contradiction with the assumption.
Till now, we have proved that the function $h$ is continuous over $(v-\epsilon,v)$.

Next we aim to prove that $h$ is flat over $(v-\epsilon,v)$.
We assume that the image of the function $h$ includes some open interval $(y_1,y_2)$, thus the image is uncountable.
Then, we define a set
\begin{equation*}
    T(y)\triangleq\{x|h(x)=y\}~.
\end{equation*}
It is obvious that $T(y)$ is non-empty, and we let $z\in T(y)$.
We know that  $T(y)$ includes $(z,g(z))$.
Therefore, we know that for any given $y$, its corresponding set $T(y)$ will consist of mutually disjoint intervals $(z,g(z))$. If we arbitrarily select a rational number from each such interval, then we would have obtained uncountably many rational numbers within the interval $(v-\epsilon,v)$. However, we know that the number of rational numbers within any interval is countable. This leads to a contradiction.

To sum up, we have proved that, under this case, there exists an $\delta>0$, such that $-\mu^*(\theta')=u'(\theta)$ for any $\theta'\in(\theta-\delta,\theta]$.
Thus, we directly have that
\begin{equation*}
    \lambda^*(v)=u'(\theta)q(v,\theta)+p(v,\theta)~.
\end{equation*}
This completes the proof.
\end{proof}

\begin{lemma}[Case 3.3 in \Cref{alg:lambda}]
\label{lem:primal-dual-3}
Given $G^*$ and $(\lambda^*,\mu^*)$ are both optimal to the primal and dual problems,
if there exists $(v,\theta)$ such that $v\in[\underline{v},1]$, $v\neq \theta+c$, and $v>v'=\inf\supp(G_{\cdot|\theta})$, then we have 
$$\lambda^*(v)=\frac{p(v,\theta)q(v',\theta)-p(v',\theta)q(v,\theta)+\lambda^*(v')q(v,\theta)}{q(v',\theta)}~.$$
\end{lemma}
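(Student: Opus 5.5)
The plan is to read off the formula from complementary slackness applied at two points of the same index row $\theta$, namely $v$ and $v'=\inf\supp(G^*_{\cdot|\theta})$, and then eliminate $\mu^*(\theta)$. The identity claimed is exactly
\[
\lambda^*(v)=p(v,\theta)-\mu^*(\theta)q(v,\theta)\quad\text{with}\quad -\mu^*(\theta)=\frac{\lambda^*(v')-p(v',\theta)}{q(v',\theta)},
\]
so it suffices to establish both equalities.

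First I would establish that complementary slackness holds \emph{pointwise} on the support rather than only $G^*$-almost everywhere. By \Cref{thm:primal-dual-optimal} the nonnegative slack $s(x,\theta)=\lambda^*(x)+\mu^*(\theta)q(x,\theta)-p(x,\theta)$ integrates to zero against $G^*$. Fix $\theta$ with positive row mass, or argue rowwise via disintegration. Since $\lambda^*$ is continuous by \Cref{thm:optimal-solution-exists-in-dual-new}, and $p(\cdot,\theta),q(\cdot,\theta)$ are continuous in $x$, the slack is continuous in $x$ along the row. It vanishes $G^*_{\cdot|\theta}$-a.e., hence on all of $\supp(G^*_{\cdot|\theta})$, which is closed. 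In particular, because $v'$ is the infimum of a closed set, it lies in the support, and the same holds for $v$, which in Case~3.3 lies in the interval $(v-\epsilon,v)$'s closure inside the support.

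This gives $\lambda^*(v)=p(v,\theta)-\mu^*(\theta)q(v,\theta)$ and $\lambda^*(v')=p(v',\theta)-\mu^*(\theta)q(v',\theta)$. Next I would check that $q(v',\theta)\neq0$. Feasibility of row $\theta$ requires mass on both sides of $\theta+c$, so $v'<\theta+c$. Hence $q(v',\theta)=(v'-\theta)_+-c<0$. Solving the second equation for $\mu^*(\theta)$ and substituting into the first yields
\[
\lambda^*(v)=p(v,\theta)+\frac{\lambda^*(v')-p(v',\theta)}{q(v',\theta)}\,q(v,\theta),
\]
which after putting everything over $q(v',\theta)$ is the stated expression. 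The hypothesis $v\neq\theta+c$ is not actually needed for the algebra. If $q(v,\theta)=0$, the formula reduces to Case~2.

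The main obstacle is the pointwise step, i.e., passing from the integral condition to equality at the specific points $v$ and $v'$. Measure-theoretically, a single row $\theta$ may be $G^*_\theta$-null, so the rowwise a.e.\ statement needs care. I would first try using assumption (ii) of \Cref{thm:constructed-lambda-mu-optimal}: every neighborhood of a support point has positive mass. Take a support point $(x,\theta)$ and suppose $s(x,\theta)>0$. If $\mu^*$ is lower semicontinuous, or at least we can pick the optimal dual with $\mu^*$ given by the envelope formula of \Cref{alg:mu}, the slack stays positive on a neighborhood, which has positive $G^*$-mass, contradicting the vanishing integral. Failing regularity of $\mu^*$, I would instead redefine $\mu^*(\theta)$ on the row as the common value forced by a.e.\ equality, which keeps dual feasibility and optimality, and then apply continuity in $x$ alone.
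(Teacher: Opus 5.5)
Your proposal is essentially the paper's proof: complementary slackness at $(v,\theta)$, continuity of $\lambda^*$, $p$, $q$ in the value variable to get the same identity at $v'=\inf\supp(G^*_{\cdot|\theta})$, and then elimination of $\mu^*(\theta)$. Your checks that $v,v'$ lie in the closed row support and that $q(v',\theta)<0$ (so the division is legitimate) make explicit what the paper leaves implicit, and your positive-mass-row argument justifies the pointwise complementary slackness the paper simply assumes on the support; your fallbacks for $G^*_\theta$-null rows would not work as written (lower semicontinuity of $\mu^*$ keeps the slack positive only near points with $q\ge 0$, whereas $q(v',\theta)<0$ would need upper semicontinuity, and resetting $\mu^*(\theta)$ is circular because it zeroes the slack at $v'$ and equality at $v$ is then exactly the claim), but the paper does not treat that case either.
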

\begin{proof}[Proof of \Cref{lem:primal-dual-3}]
Since $(v,\theta)\in\supp(G)$, by \Cref{thm:primal-dual-optimal}, we know that 
$
    \lambda^*(v)=-\mu^*(\theta)q(v,\theta)+p(v,\theta)
$.
Since the functions $\lambda^*$, $p$, and $q$ are all continuous (\Cref{lem:lambda increasing}), it holds that
$
    \lambda^*(v')=-\mu^*(\theta)q(v',\theta)+p(v',\theta)
$.
By combining these two equations, we have
$$
\lambda^*(v)=\frac{p(v,\theta)q(v',\theta)-p(v',\theta)q(v,\theta)+\lambda^*(v')q(v,\theta)}{q(v',\theta)}~.
$$
This completes the proof.
\end{proof}

\begin{lemma}[\Cref{alg:mu}]
\label{lem:primal-dual-9}
Given $G^*$ and $(\lambda^*,\mu^*)$ are both optimal to the primal and dual problems,
for any $\theta\in\supp(G^*_\theta)$ and any two distinct $v_1,v_2\neq \theta+c$ such that $(v_1,\theta),(v_2,\theta)\in\supp(G^*)$, we have that \begin{equation*}
    \mu^*(\theta)=\frac{p(v_1,\theta)-\lambda^*(v_1)}{q(v_1,\theta)}=\frac{p(v_2,\theta)-\lambda^*(v_2)}{q(v_2,\theta)}~.
\end{equation*}
\end{lemma}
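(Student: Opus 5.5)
The plan is to apply complementary slackness pointwise on the support. By \Cref{thm:primal-dual-optimal}, the slack $\lambda^*(v)+\mu^*(\theta)q(v,\theta)-p(v,\theta)$ is nonnegative everywhere by dual feasibility. Its integral against $G^*$ vanishes, so the slack vanishes $G^*$-almost everywhere.

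The first step is to upgrade ``almost everywhere'' to ``every support point'' for the two points $(v_1,\theta)$ and $(v_2,\theta)$. The natural route is continuity: $\lambda^*$ is continuous by \Cref{thm:optimal-solution-exists-in-dual-new}, and $p,q$ are continuous. However, $\mu^*$ is only bounded, not known to be continuous, so one cannot simply pass to limits in $\theta$. The way around this is to work inside the row: I will interpret $(v_i,\theta)\in\supp(G^*)$ as $v_i\in\supp(G^*_{\cdot\mid\theta})$ for a row $\theta$ that carries positive mass, or for $G_\theta$-almost every $\theta$ via disintegration. Disintegrating $G^*$ along $\theta$ gives that, for $G^*_\theta$-a.e. $\theta$, the slack vanishes for $G^*_{\cdot\mid\theta}$-a.e. $v$. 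Fix such a $\theta$. The map $v\mapsto \lambda^*(v)+\mu^*(\theta)q(v,\theta)-p(v,\theta)$ is continuous in $v$ for this fixed $\theta$. It is nonnegative and vanishes on a set of full conditional measure, which is dense in $\supp(G^*_{\cdot\mid\theta})$. Hence it vanishes at every point of that support, in particular at $v_1$ and $v_2$. This mirrors the continuity argument already used in \Cref{lem:primal-dual-3}.

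The second step is to solve for $\mu^*(\theta)$. This gives $\lambda^*(v_i)=p(v_i,\theta)-\mu^*(\theta)q(v_i,\theta)$ for $i=1,2$. Since $v_i\neq\theta+c$, we have $q(v_i,\theta)\neq 0$ whenever $v_i\ge\theta$. For $v_i<\theta$, $q=-c\neq0$. So we may divide to get $\mu^*(\theta)=\bigl(p(v_i,\theta)-\lambda^*(v_i)\bigr)/q(v_i,\theta)$ for each $i$, and equating the two expressions yields the claim.

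The main obstacle is the measure-theoretic first step: relating joint-support points to conditional supports and handling the null set of rows where the disintegration statement may fail. I would resolve this using assumption (ii) of \Cref{thm:constructed-lambda-mu-optimal} and the convention, used throughout \Cref{alg:mu}, that statements about $\supp(G_{\cdot\mid\theta})$ are understood for $G_\theta$-almost every $\theta$. Since modifying $\mu^*$ on a $G_\theta$-null set of rows changes neither feasibility nor the objective, this convention is harmless.
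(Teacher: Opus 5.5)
Your proposal is correct and follows the paper's route. The paper's proof is a one-line appeal to complementary slackness (\Cref{thm:primal-dual-optimal}), which gives $\lambda^*(v_i)=p(v_i,\theta)-\mu^*(\theta)q(v_i,\theta)$, followed by division by $q(v_i,\theta)\neq 0$. Your disintegration-plus-continuity step makes explicit the upgrade from almost everywhere to pointwise, and the restriction to $G^*_\theta$-almost every row, both of which the paper leaves implicit.

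One aside in your last paragraph is inaccurate, though nothing depends on it. Altering $\mu^*$ on a $G^*_\theta$-null set of rows leaves the dual objective unchanged, but it can violate the pointwise dual constraint. The almost-everywhere reading is better motivated by noting that complementary slackness says nothing about $\mu^*$ on such rows.
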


\begin{proof}[Proof of \Cref{lem:primal-dual-9}]
It holds directly by \Cref{thm:primal-dual-optimal}.
\end{proof}

\begin{lemma}[\Cref{alg:mu}]
\label{lem:primal-dual-10}
Given $G^*$ and $(\lambda^*,\mu^*)$ are both optimal to the primal and dual problems,
for any $\theta\in[-c,1-c]$, it holds that
\begin{equation*}
    \max_{v\in(\theta+c,1]}\frac{u(\theta)-\lambda^*(v)}{v-\theta-c}\le \mu^*(\theta)\le\min\left\{\min_{v\in[0,\theta)}\frac{\lambda^*(v)-u(v)}{c},\min_{v\in[\theta,\theta+c)}\frac{u(\theta)-\lambda^*(v)}{v-\theta-c}\right\}~.
\end{equation*}
\end{lemma}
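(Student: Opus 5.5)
The bounds in \Cref{lem:primal-dual-10} follow from dual feasibility alone; optimality of $G^*$ and $(\lambda^*,\mu^*)$ is not needed beyond the fact that $(\lambda^*,\mu^*)$ satisfies Constraint~\eqref{eq:dual-con1}. Fix $\theta\in[-c,1-c]$ and read Constraint~\eqref{eq:dual-con1} as a family of scalar inequalities in the single unknown $\mu^*(\theta)$, one for each $v\in[0,1]$. Split $[0,1]$ according to the sign of $q(v,\theta)=(v-\theta)_+-c$ and the form of $p(v,\theta)=u(\min\{v,\theta\})$. In each region, solve the linear inequality for $\mu^*(\theta)$, dividing by $q(v,\theta)$ and flipping the inequality when $q<0$.

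\textbf{Region $v<\theta$.} Here $\min\{v,\theta\}=v$, so $p(v,\theta)=u(v)$ and $q(v,\theta)=-c$. The constraint reads $\lambda^*(v)-c\mu^*(\theta)\ge u(v)$, that is, $\mu^*(\theta)\le (\lambda^*(v)-u(v))/c$.

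\textbf{Region $v\in[\theta,\theta+c)$.} Now $p(v,\theta)=u(\theta)$ and $q(v,\theta)=v-\theta-c<0$. Rearranging gives $\mu^*(\theta)\,(v-\theta-c)\ge u(\theta)-\lambda^*(v)$, and dividing by the negative number $v-\theta-c$ yields $\mu^*(\theta)\le (u(\theta)-\lambda^*(v))/(v-\theta-c)$.

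\textbf{Region $v\in(\theta+c,1]$.} Here $q(v,\theta)=v-\theta-c>0$ and $p(v,\theta)=u(\theta)$. The same rearrangement without the sign flip gives $\mu^*(\theta)\ge (u(\theta)-\lambda^*(v))/(v-\theta-c)$.

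\textbf{The point $v=\theta+c$.} Here $q=0$, so the constraint reduces to $\lambda^*(\theta+c)\ge u(\theta)$. It imposes nothing on $\mu^*(\theta)$, which is why $v=\theta+c$ is excluded from all three ranges.

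Since these inequalities hold for every $v$ in the respective region, take the supremum of the lower bounds and the infimum of each family of upper bounds. This gives exactly the displayed sandwich. Any of the three ranges may be empty for extreme $\theta$ (e.g.\ $[0,\theta)$ when $\theta\le 0$, or $(\theta+c,1]$ when $\theta=1-c$); the corresponding max or min is then vacuous ($-\infty$ or $+\infty$).

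The only point needing care is interpreting the ``max'' and ``min'' over half-open ranges, since the displayed quotients may blow up near $v=\theta+c$. We will read them as $\sup$ and $\inf$; the inequalities pass to these trivially because each pointwise bound holds. If attainment is desired, use \Cref{lem:lambda increasing} (continuity of $\lambda^*$ under the Lipschitz assumption) together with continuity of $u$. These make the quotients continuous on each region, so the extrema are attained on compact subintervals and otherwise approached as limits. This does not affect the inequality, so I expect no real obstacle.
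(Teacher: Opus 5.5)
Your proposal is correct and takes the same route as the paper. The paper's proof just cites Constraint~\eqref{eq:dual-con1} (together with \Cref{thm:primal-dual-optimal}) and leaves implicit the case-by-case rearrangement by the sign of $q(v,\theta)$ that you write out, and you are right that dual feasibility alone suffices, so the appeal to complementary slackness is superfluous.
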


\begin{proof}[Proof of \Cref{lem:primal-dual-10}]
By Constraint (\ref{eq:dual-con1}) and \Cref{thm:primal-dual-optimal}, we directly have that
\begin{equation*}
    \max_{v\in(\theta+c,1]}\frac{u(\theta)-\lambda^*(v)}{v-\theta-c}\le \mu^*(\theta)\le\min\left\{\min_{v\in[0,\theta)}\frac{\lambda^*(v)-u(v)}{c},\min_{v\in[\theta,\theta+c)}\frac{u(\theta)-\lambda^*(v)}{v-\theta-c}\right\}~.
\end{equation*}
This completes the proof.
\end{proof}

With the above lemmas, finally, we can provide the whole proof of \Cref{thm:constructed-lambda-mu-optimal}.

\begin{proof}[Proof of \Cref{thm:constructed-lambda-mu-optimal}]
This theorem directly holds by \Cref{lem:no-threshold-value,lem:primal-dula-1,lem:primal-dual-2,lem:limit point with value=index+c,lem:limit point with value=index+c,lem:limit point with value > theta+c,lem:limit point with value<index,lem:limit point with value between theta and theta+c,lem:limit point with value between theta and theta+c,lem:limit point with value=theta,lem:primal-dual-3,lem:primal-dual-9,lem:primal-dual-10}.
\end{proof}

\subsection{Equilibrium Characterization for Convex Priors with Low Cost}
\label{subsec:low-cost-equilibrium}
\begin{theorem}
\label{thm:convex-prior-unique-equilibrium}
    If all senders share the same prior $F$ that is (weakly) convex, with all values in its support at least~$c$, 
    then each sender fully revealing his value is the unique symmetric equilibrium.
\end{theorem}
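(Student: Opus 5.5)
The plan has two parts: show full revelation is a best response to itself, then rule out every other symmetric equilibrium. Throughout, write $\widetilde F(\kappa)\triangleq F(\kappa+c)$ for the shifted prior.

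\textbf{Existence.} First I will show that no feasible strategy can induce a negative amortized value. Since every $v\in\supp(F)$ satisfies $v\ge c$, for any posterior $G_{v\mid\theta}$ we have
$$
c=\E[(v-\theta)_+]\ge \E[v\mid\theta]-\theta\ge c-\theta,
$$
so $\theta\ge0$. Together with $v\ge c>0$ this gives $\kappa=\min\{v,\theta\}\ge0$ almost surely, for every $G\in\mathcal G(F,c)$.

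Under full revelation every index is $\theta=v-c$, so $\kappa=v-c$ and $K=\widetilde F$ on $[0,1-c]$. If all opponents fully reveal, the interim utility is $u(\kappa)=F(\kappa+c)^{n-1}$ on $[0,1-c]$. This is a power of a nonnegative, nondecreasing, convex function, hence convex.

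As noted in Step~1 of the proof of \Cref{thm:symmetric-equilibrium-convex-prior-big-c}, the amortized-value distribution $K_G$ of any feasible $G$ is a mean-preserving contraction of $\widetilde F$. By the previous paragraph, both distributions live on $[0,1-c]$, where $u$ is convex. Convex order then gives $\E_{K_G}[u]\le\E_{\widetilde F}[u]$, so full revelation is a best response to itself. Ties occur only at atoms of $F$; a convex CDF can have at most an atom at $1$, and I will dispose of that case separately with the uniform tie-breaking rule.

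\textbf{Uniqueness.} Let $G$ be a symmetric equilibrium with amortized-value distribution $K\neq\widetilde F$. I will proceed in three steps.
\begin{enumerate}
\item Show that $K$ has no atoms on $(0,1-c)$. If it had one, a sender could spread the corresponding pooled mass slightly, keeping rows feasible, and escape ties with positive probability. This is a strict gain, contradicting equilibrium. Hence $u_G=K^{n-1}$ is continuous.
\item By symmetry with continuous $K$, each sender's equilibrium payoff is exactly $1/n$.
\item Compare with deviating to full revelation, whose payoff is $\int K^{n-1}\,\mathrm d\widetilde F$. I aim to show this is strictly larger than $\int K^{n-1}\,\mathrm dK=1/n$ whenever $K\neq\widetilde F$.
\end{enumerate}

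For the third step I will use the first-order necessary conditions of the dual framework. By \Cref{thm:primal-dual-optimal} and \Cref{lem:limit point with value<index,lem:limit point with value between theta and theta+c,lem:limit point with value > theta+c}, any nontrivially pooled row $\theta$ that is optimal forces $-\mu^*(\theta)=u_G'(\theta)$, and forces the amortized payoff to be affine in $v$ on the pooled region. This should imply that $K^{n-1}$ is affine on every interval $J$ where $K\neq\widetilde F$. Strict pooling means the MPC inequality $\int_0^t K\le\int_0^t\widetilde F$ is strict on the interior of $J$ and binds at its endpoints, with $K=\widetilde F$ there.

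Then a chord argument should finish. On $J$, $K$ is a concave function (the $(n-1)$-th root of an affine function), while $\widetilde F$ is convex, and they agree at the endpoints of $J$. This forces $K\ge\widetilde F$ on $J$, which contradicts the strict integral inequality unless $J$ is empty.

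\textbf{Main obstacle.} I expect the hardest step to be passing from optimality of $G$ against $u_G$ to the statement that $K^{n-1}$ is affine on pooling intervals. The lemmas above are stated for limit points of monotone support sequences, and a general equilibrium $G$ may be irregular. If that route becomes unwieldy, I will fall back on a direct perturbation argument: take a pooled row, partially reveal its top values and re-pool the remainder. Convexity of $F$ should then make the first-order gain nonnegative, and strictly positive unless the row is degenerate.
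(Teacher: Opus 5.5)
Your existence argument matches the paper: by convexity of $u$ and the mean-preserving-contraction relation, no feasible $K$ beats $\widetilde F$. Steps 1 and 2 of uniqueness also match (no interior atoms of $K$ by spreading, so $u_G$ is continuous and the equilibrium payoff is $1/n$), as does your choice of full revelation as the test deviation.

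\textbf{The gap.} Everything in Step 3 rests on one claim: that optimality of $G$ against $u_G$ forces $K^{n-1}$ to be affine on each pooling interval. The lemmas you cite do not give this.
\begin{itemize}
\item They assume $u$ is differentiable and Lipschitz, which is needed even to have a bounded optimal $\mu^*$. The function $u_G=K^{n-1}$ need not satisfy either condition.
\item Even granting regularity, they only say that $u'$ takes a common value at the indices of a single monotone support sequence.
\item To get one affine piece on a whole interval, you would also need two things. First, every point of the interval must be the index of a genuinely pooled row. Second, the value ranges of neighbouring pooled rows must overlap, so that their touching lines coincide and the slopes glue together. Nothing rules out amortized values in the interval coming from revealed rows, or from values below their own index (where $\kappa=v$).
\item If $K^{n-1}$ is only piecewise affine with an upward kink, then $K$ is not concave, and the chord step collapses.
\item The intervals must be components of $\{t:\int_{-c}^{t}(K-\widetilde F)<0\}$, not of $\{K\neq\widetilde F\}$. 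Inside such a component $K$ crosses $\widetilde F$, so a component of $\{K\neq\widetilde F\}$ need not have binding endpoints.
\item Your fallback does not close the gap. The first-order gain from re-pooling a row is governed by the unknown $u_G$, not by the convexity of $F$.
\end{itemize}

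\textbf{The missing idea.} The paper never examines the structure of $G$. It bounds the deviation payoff $\int K^{n-1}\,\mathrm{d}\widetilde F$ directly by integrating by parts against the nondecreasing density. Convexity of $F$ enters only through $\mathrm{d}f\ge 0$, and the only property of $K$ used is the MPC inequality. The paper writes this for general $n$. For $n=2$, set $D(t)=\int_{-c}^{t}(K-\widetilde F)\le 0$, with $D(-c)=D(1-c)=0$. Then
\[
\int K\,\mathrm{d}\widetilde F-\tfrac12=\int (K-\widetilde F)\,\widetilde f=-\int D\,\mathrm{d}\widetilde f\ge 0 .
\]
Combined with the $1/n$ cap, every inequality binds, and the paper reads off $K=\widetilde F$.

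\textbf{A caution and a lighter structural step.} Equality only forces $D=0$ on the support of $\mathrm{d}\widetilde f$. For uniform $F$ and $n=2$, the deviation earns exactly $\tfrac12$ against every feasible $K$. So where $f$ is flat, some structural input is still required, though much less than your affine claim:
\begin{itemize}
\item The computation above shows full revelation is itself a best response to $u_G$.
\item Pool two values at distance at most $c$. Both then receive the row's mean of $v-c$ as amortized value.
\item So full revelation being optimal forces $u_G$ to be convex on $[0,1-c]$.
\item For $n=2$, $K$ is then convex. A convex $K$ that agrees with the affine $\widetilde F$ at the ends of a flat stretch, and has the same integral there, must equal $\widetilde F$ on that stretch.
\end{itemize}
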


\begin{proof}[Proof of \Cref{thm:convex-prior-unique-equilibrium}]
When all senders adopt the full-revelation strategy, it is easy to check that the corresponding strategy profile induced is a symmetric equilibrium.
We now show that this is the unique symmetric equilibrium.

Since we consider symmetric equilibria, we omit the subscripts for senders.
For the sake of contradiction, suppose $G^*$ is another symmetric equilibrium strategy that does not fully reveal the value.
Let $K$ denote the distribution of the amortized value induced by the 2-D distribution~$G^*$.
By the MPC condition, we know that $K$ shifted by~$c$ is a MPC of~$F$, i.e., 
\begin{equation*}
    \int_{-c}^tK(x)\dd x\le \int_0^{t+c}F(x)\dd x~,\quad \forall t\in[-c,1-c]~.
\end{equation*}
Besides, we have $K(-c)=F(0)=0$, $K(1-c)=F(1)=1$, and
\begin{equation*}
    \int_{-c}^{1-c}K(x)\dd x= \int_0^{1}F(x)\dd x~.
\end{equation*}
Let $u(v)$ denote the contribution of value $v\in[c,1]$ to the sender's expected utility.
Then, we can rewrite the expected utility of the sender as $\int_{c}^{1}u(v)f(v)\dd v$.
Under the fact that strategy $G^*$ forms a symmetric equilibrium, function $K$ must be continuous over $[-c,1-c]$; otherwise, each sender has a profitable deviation by spreading the index where the mass is. 
Under this fact, we have the corresponding function $\phi(x)=K^{N-1}(x)$ for any $x\in[-c,1-c]$.
Each sender can always choose to send no information, thereby obtaining an expected utility of \( \int_c^1 K^{N -1}(v-c)f(v)\dd v \). 
Therefore, each sender is guaranteed to achieve at least an expected utility of \(  \int_c^1 K^{N-1}(v-c)f(v)\dd v \), that is
\begin{align*}
    \int_{c}^{1}u(v)f(v)\dd v&\ge \int_c^1 K^{N-1}(v-c)f(v)\dd v=\int_c^1 f(v)\dd \left(\int_{-c}^{v-c}K^{N-1}(t)\dd t\right)\\
    &=\left[f(v)\int_{-c}^{v-c}K^{N-1}(t)\dd t\right]\Bigg|_{v=c}^{1}-\int_c^1\int_{-c}^{v-c}K^{N-1}(t)\dd t\dd f(v)\\
    &=\left[f(v)\int_{c}^{v}F^{N-1}(t)\dd t\right]\Bigg|_{v=c}^{1}-\int_c^1\int_{-c}^{v-c}K^{N-1}(t)\dd t\dd f(v)~.
\end{align*}
Since $\int_{-c}^{v-c}K^{N-1}(t)\dd t\le \int_c^{v}F^{N-1}(t)\dd t$ for any $v\in[c,1]$ and the density of the prior is non-decreasing over $[c,1]$, we have 
\begin{align*}
    \int_{c}^{1}u(v)f(v)\dd v&\ge\left[f(v)\int_{c}^{v}F^{N-1}(t)\dd t\right]\Bigg|_{v=c}^{1}-\int_c^1\int_c^{v}F^{N-1}(t)\dd t\dd f(v)\\
    &=\int_c^1F^{N-1}(v)\dd F(v)=\frac{1}{N}~.
\end{align*}
In any symmetric equilibrium of the $N$-senders game, each sender can obtain at most an expected utility of \( \sfrac{1}{N} \).
Combining these two facts, we obtain that $\int_{c}^{1}u(v)f(v)\dd v=\sfrac{1}{N}$.
Therefore, all the inequalities above must hold with equality, where we have 
\begin{equation*}
    \int_{-c}^{t}K(x)\dd x=\int_0^{t+c} F(x)\dd x,\quad \forall t\in[-c,1-c]~.
\end{equation*}
This fact is equivalent to the fact that the symmetric equilibrium strategy $G^*$ is indeed induced by the full-revelation strategy, which forms a contradiction with the assumption.
Thus, we have proved that the symmetric equilibrium induced by the full-revelation strategy is the unique symmetric equilibrium.
\end{proof}

For an intuition that full revelation constitutes an equilibrium, observe that, for each sender, the interim utility $u(\kappa) = F^{N-1}(\kappa + c)$ is convex in this scenario; any pooling of values decreases the expected utility by Jensen's inequality.
Intuitively, fully revealing the values is a best response.

The uniqueness of this symmetric equilibrium is nontrivial.
As a starting point, note that the senders' game is constant-sum  for such a small $c$:
since all values are at least~$c$, the index of each sender is always non-negative, regardless of the signaling; therefore, the buyer never quits without selecting a seller.
Thus, in any symmetric equilibrium, each sender must share a common utility of $\sfrac{1}{N}$. 
We show that if all senders except $i$ use a common strategy which does not always fully reveal the values,
sender $i$ can get a utility strictly higher than $\sfrac{1}{N}$ by fully revealing $v_i$.
The uniqueness of the symmetric equilibrium follows.

Note that the market achieves maximal efficiency under this equilibrium: the receiver fully learns the relevant information from the senders' signaling and therefore always selects the best box, attaining first-best revenue.

\subsection{Proof of \Cref{thm:symmetric-equilibrium-convex-prior-big-c}}

\begin{proof}[Proof of \Cref{thm:symmetric-equilibrium-convex-prior-big-c}]
Denote the shifted prior by $\tilde{F}(x) := F(x+c)$ for $x\in[-c,1-c]$ (with density $\tilde{f}$).
By the MPC condition, we know that the amortized value distribution of any feasible strategy forms an MPC of the function $\tilde{F}$ over $[-c,1-c]$.
We divide this proof into three steps.
Here we provide the proof of the non-degenerate case where $\theta_1'=\theta_2<1-c$.

\xhdr{Step 1: There uniquely exist $\theta_1\in(-c,0)$ and $\theta_2\in(0,1-c]$ such that $\frac{\tilde{F}(\theta_1)}{-\theta_1}=\frac{\tilde{F}(\theta_2)-\tilde{F}(\theta_1)}{\theta_2}$, and $K\in\MPC(\tilde{F})$ over $[-c,1-c]$}
For any $\theta_1\in(-c,0)$, we can construct the following linear function:
\begin{equation*}
    L_{\theta_1}(\theta)=\frac{\tilde{F}(\theta_1)}{-\theta_1}(\theta-\theta_1)~,\quad \forall \theta\in[\theta_1,1-c]~.
\end{equation*}
There exists an $\epsilon_1\in(0,c)$ such that for any $\theta_1\in(-c,-c+\epsilon)$, it holds that $L_{\theta_1}(\theta)\le \tilde{F}(\theta)$ for any $\theta\in[\theta_1,1-c]$. 
There exists an $\epsilon_2\in(0,c)$ such that for any $\theta_1\in(-\epsilon_2,0)$, the function $L_{\theta_1}$ has at least one intersection point with function $\tilde{F}$.
Since $\tilde{F}(\theta_1)$ and $\sfrac{\tilde{F}(\theta_1)}{-\theta_1}$ are both strictly increasing in $\theta_1$ over $(-c,0)$, we know that there uniquely exists a $\underline{\theta}_1\in(-c,0)$ such that $L_{\underline{\theta}_1}(\theta)\le \tilde{F}(\theta)$ for any $\theta\in[\underline{\theta}_1,1-c]$, and function $L_{\underline{\theta}_1}$ and function $\tilde{F}$ are tangent at some points.
Furthermore, there uniquely exists a $\overline{\theta}_1\in(\underline{\theta}_1,0)$ such that $L_{\underline{\theta}_1}(1-c)= \tilde{F}(1-c)$.
Thus for any $\theta_1\in(\underline{\theta}_1,0)$, function $L_{\theta_1}$ and function $\tilde{F}$ have at least two intersection points within the interval $[\theta_1,1-c]$.
For any $\theta_1\in(\underline{\theta}_1,0)$, we define $\theta_2=\max\{\theta\in[0,1-c]|L_{\theta_1}(\theta)=\tilde{F}(\theta)\}$ and $\theta_3=\min\{\theta\in[0,1-c]|L_{\theta_1}(\theta)=\tilde{F}(\theta)\}$.
Using the three new parameters, we can split the integral difference from $\theta_1$ to $\theta_2$ into two parts:
\begin{align*}
    &\int_{\theta_1}^{\theta_2}\tilde{F}(\theta)\dd \theta-  \int_{\theta_1}^{\theta_2}L_{\theta_1}(\theta)\dd \theta\\
    =&\left(\int_{\theta_3}^{\theta_2}\tilde{F}(\theta)\dd \theta-  \int_{\theta_3}^{\theta_2}L_{\theta_1}(\theta)\dd \theta \right)-\left(\int_{\theta_1}^{\theta_3}L_{\theta_1}(\theta)\dd \theta-\int_{\theta_1}^{\theta_3}\tilde{F}(\theta)\dd \theta \right)~.
\end{align*}
There exists an $\epsilon_3\in(0,\overline{\theta}_1-\underline{\theta}_1)$ such that $\int_{\theta_1}^{\theta_2}\tilde{F}(\theta)\dd \theta-  \int_{\theta_1}^{\theta_2}L_{\theta_1}(\theta)\dd \theta<0$ holds for any $\theta_1\in(\underline{\theta}_1,\underline{\theta}_1+\epsilon_3)$.
There exists an $\epsilon_4\in(0,\overline{\theta}_1-\underline{\theta}_1)$ such that $\int_{\theta_1}^{\theta_2}\tilde{F}(\theta)\dd \theta-  \int_{\theta_1}^{\theta_2}L_{\theta_1}(\theta)\dd \theta>0$ holds for any $\theta_1\in(\underline{\theta}_1,\underline{\theta}_1+\epsilon_4)$.
Besides, the first part $\int_{\theta_3}^{\theta_2}\tilde{F}(\theta)\dd \theta-  \int_{\theta_3}^{\theta_2}L_{\theta_1}(\theta)\dd \theta$ is decreasing at $\theta_1$ over $(\underline{\theta}_1,0)$, while the second part
$\int_{\theta_1}^{\theta_3}L_{\theta_1}(\theta)\dd \theta-\int_{\theta_1}^{\theta_3}\tilde{F}(\theta)\dd \theta$ is increasing at $\theta_1$ over $(\underline{\theta}_1,0)$.
So we know that there uniquely exists a $\theta_1\in(\underline{\theta}_1,\overline{\theta}_1)$ such that $\int_{\theta_1}^{\theta_2}\tilde{F}(\theta)\dd \theta-  \int_{\theta_1}^{\theta_2}L_{\theta_1}(\theta)\dd \theta=0$.
Finally, we prove that such a pair of parameters $(\theta_1,\theta_2)$ exists and is unique.

\xhdr{Step 2: Given $\theta_1$ and $\theta_2$ determined in Step~1, there exists a 2-D distribution that induces the hinge-shaped amortized value distribution}
We prove this by a construction.
We divide this part of the proof into two cases, based on the shape of $K$, the amortized value distribution.

\sxhdr{Case 1: $K(\theta_1') < 1$}
In this case, $\theta_2 = \theta_1' <1-c$.
Values in the $[0,\theta_1 + c]$ and $[\theta_2 - c,1]$ are fully revealed, so that $K$  and $\tilde F$ coincide on these intervals.
We are to pool values in $[\theta_1+c,\theta_2+c]$ to produce amortized values uniformly distributed over $[0, \theta_2]$.
Next, we show how to make such poolings.

Based on the structure of the function $K$, the slope over $[0,\theta_2]$ is denoted by $\rho=\frac{\tilde{F}(\theta_2)-\tilde{F}(\theta_1)}{\theta_2}$.
Since $K\in\MPC(\tilde{F})$, it holds that $\tilde{f}(0)<\rho$ and $\tilde{f}(\theta_2)>\rho$.
Besides, the density $\tilde{f}$ is increasing over $[0,\theta_2]$.
So there exists a row $\theta^*\in(0,\theta_2)$ such that $\tilde{f}(\theta^*)=K(\theta^*)=\rho$.
We need to ensure the index equation is preserved while pooling the values from $[\theta_1 + c, \theta_2 + c]$ into indices $[0, \theta_2]$, with the density of each row equal to $\rho$.
We divide the density of each value $v\in[\theta^*+c,\theta_2+c]$ into two parts: we keep a portion of size $\rho$ unpooled to individually form an index $v-c$, while the remaining density will be pooled together with lower values.
In details, first we let $\supp(G_{\cdot|\theta})=\{\theta+c\}$ and $g(\theta+c,\theta)=\rho$ for any $\theta\in(\theta^*,\theta_2)$.
Then, we show how to pool the values from $[\theta_1+c,\theta^*+c]$ and $[\theta^*+c,\theta_2+c]$ together to form indices over $[0,\theta^*]$.
We define functions $\alpha:[0,\theta^*]\rightarrow[0,1]$ and $\beta:[0,\theta^*]\rightarrow[0,1]$.
We assume that the support of the conditional distribution $G_\theta$ is $\{\alpha(\theta),\beta(\theta)\}$ for each $\theta\in[0,\theta^*]$.
We give the initial values of these two functions as $\alpha(0)=\theta_1+c$ and $\beta(0)=\theta_2+c$.
Given the initial conditions for the functions $\alpha$ and $\beta$, we now only need to compute their derivatives to fully characterize these two functions.
We consider the following two cases to show how to specifically compute the derivatives of these two functions.

\textbf{case-1.1: $\alpha(\theta)>\theta$.} 
In this case, we have these two equations:
\begin{equation*}
    \frac{f(\alpha(\theta))}{\alpha'(\theta)}+\frac{f(\beta(\theta))-\rho}{-\beta'(\theta)}=\rho~,\quad
    (\alpha(\theta)-\theta)\frac{f(\alpha(\theta))}{\alpha'(\theta)\rho}+(\beta(\theta)-\theta)\frac{f(\beta(\theta))-\rho}{-\beta'(\theta)\rho}=c~.
\end{equation*}
The first equation means that the sum of densities at points $(\alpha(\theta),\theta)$ and $(\beta(\theta),\theta)$ is equal to $\rho$.
The second equation is exactly the index equation of row $\theta$.
Both equations are ordinary differential equations. 
By combining these two equations, we achieve that:
\begin{equation*}
    \alpha'(\theta)=\frac{(\beta(\theta)-\alpha(\theta))f(\alpha(\theta))}{\rho(\beta(\theta)-c-\theta)},\quad 
    \beta'(\theta)=\frac{(\beta(\theta)-\alpha(\theta))(\rho-f(\beta(\theta)))}{\rho(\theta+c-\alpha(\theta))}~.
\end{equation*}

\textbf{case-1.2: $\alpha(\theta)\le\theta$.} 
In this case, we have these two equations:
\begin{equation*}
f(\alpha(\theta))+\frac{f(\beta(\theta))-\rho}{-\beta'(\theta)}=\rho~,\quad
    (\beta(\theta)-\theta)\frac{f(\beta(\theta))-\rho}{-\beta'(\theta)}=c\left(\frac{f(\alpha(\theta))}{\alpha'(\theta)}+\frac{f(\beta(\theta))-\rho}{-\beta'(\theta)}\right)~.
\end{equation*}
The first equation means that the sum of densities at points $(\alpha(\theta),\theta)$ and $(\beta(\theta),\theta)$ is equal to $\rho$.
The second equation is exactly the index equation of row $\theta$.
By combining these two equations, we achieve that:
\begin{equation*}
    \alpha'(\theta)=\frac{-cf(\alpha(\theta))}{(f(\alpha(\theta))-\rho)(\beta(\theta)-\theta-c)}~,\quad 
    \beta'(\theta)=\frac{f(\beta(\theta))-\rho}{f(\alpha(\theta))-\rho}~.
\end{equation*}

In both subcases, for each $\theta\in[0,\theta^*]$, since the value $\alpha(\theta)$ and the value $\beta(\theta)$ are pooled together to form the index $\theta$, we have $\alpha(\theta)\le \theta+c\le\beta(\theta)$.
Since the prior density $f$ is increasing, we achieve that $\alpha'(\theta)\ge 0$ if $\alpha(\theta)\le \theta^*+c$, and $\beta'(\theta)\le 0$ if $\beta(\theta)\ge \theta^*+c$.
In addition to the initial values of both functions, there always exists some $\theta'\in[0,\theta^*]$ such that the function $\alpha$ increases over the interval $[0,\theta']$, while the function $\beta$ decreases over the same interval. 
We define the maximal such $\theta'$ as $\tilde{\theta}$, i.e., $\tilde{\theta} \triangleq \sup \{\theta \in [0,\theta^*] \mid \alpha'(\theta) \geq 0 \ \& \ \beta'(\theta) \leq 0\}$.
We next show $\tilde{\theta}=\theta^*$ via contradiction.
We assume that $\tilde{\theta}<\theta^*$, then it must hold that $\beta(\tilde{\theta})=\theta^*+c$
and $\alpha(\tilde{\theta})=\tilde{\theta}+c$.
Since the function $K$ constructed in Step-1 forms an MPC of the function $\tilde{F}$ over $[\theta_1,\theta_2]$, we have 
\begin{equation*}
    F(\theta_2+c)-F(\theta_1+c)=\rho \cdot\theta_2~.
\end{equation*}
Besides, based on the construction above, we have
\begin{equation*}
    \rho\cdot(\theta_2-\theta^*+\tilde{\theta})=F(\theta_2+c)-F(\theta^*+c)+F(\tilde{\theta}+c)-F(\theta_1+c)~.
\end{equation*}
Combining these two equations, we achieve that
\begin{equation*}
    \rho\cdot (\theta^*-\tilde{\theta})=F(\theta^*+c)-F(\tilde{\theta}+c)~,
\end{equation*}
which forms a contradiction since $F(\theta^*+c)-F(\tilde{\theta}+c)<\rho\cdot (\theta^*-\tilde{\theta})$.
Thus, we have proved that $\tilde{\theta}=\theta^*$.
Under this fact, we complete the proof by showing that $\alpha(\theta^*)=\beta(\theta^*)=\theta^*+c$.
We also prove it through contradiction. 
Suppose $\alpha(\theta^*)=a<\theta^*+c<b=\beta(\theta^*)$.
Since the constructed function $K$ forms an MPC of the function $\tilde{F}$ over $[\theta_1,\theta_2]$, it holds that
\begin{equation*}
    F(\theta_2+c)-F(\theta_1+c)=\rho \cdot\theta_2~.
\end{equation*}
Besides, based on the construction above, we have
\begin{equation*}
    \rho\cdot\theta_2=F(\theta_2+c)-F(b)+F(a)-F(\theta_1+c)~.
\end{equation*}
These two equations form a contradiction, which makes the assumption invalid.
Till now, we have proved that the method we provided above indeed can construct such 2-D distribution that induces the amortized value distribution defined in \Cref{thm:symmetric-equilibrium-convex-prior-big-c}.

\sxhdr{Case 2: There exists $\epsilon>0$ such that the amortized value distribution $K$ is constant over $(1-c-\epsilon,1-c)$}
The construction and proof follow a similar (and much simpler) approach to Case 1, so we omit them here.

\xhdr{Step 3: Equilibrium verification} 
Based on the techniques in \Cref{sec:verification}, we can verify that the strategy constructed above indeed constitutes a best response to the interim utility function $u$.
Using the \Cref{alg:lambda,alg:mu}, we can construct such $\lambda$ and $\mu$:
\begin{equation*}
    \lambda(v)=
    \begin{cases}
        K(v-c) \quad &\text{if}\ v\in[0,\theta_1+c)\cup[c,1]~,\\
        \rho(v-c)+F(\theta_1+c) \quad &\text{if}\ v\in[\theta_1+c,c)~,
    \end{cases}
\end{equation*}
and $$\mu(\theta)=
\begin{cases}
    \lambda'(\theta+c)\quad &\text{if}\ \theta\in[0,\theta_1+c)\cup(\theta_1+c,\theta_2+c)\cup(\theta_2+c,1]~,\\
    \rho\quad &\text{if}\ \theta\in\{\theta_1+c,\theta_2+c\}~.
\end{cases}
$$
\begin{figure}[h]
    \centering
    \begin{minipage}[t]{0.48\textwidth}
        \centering
        \begin{tikzpicture}[scale=2.0]
            \draw[->] (-0.8,0) -- (1.5,0) node[right] {\footnotesize $v$};
            \draw[->] (-0.8,0) -- (-0.8,2.3) node[left] {\footnotesize $\lambda(v)$};

        
        
            
            \draw[red, thick] (0,0.226) -- (0.9, 1.24078);
            \draw[red, thick, domain=0.9:1.2, samples=50] plot (\x, {-(7-(\x+1.38113)^2)^0.5+2.5811});
            \draw[red, thick] (0,0.226) -- (-0.2,0);
            \draw[red, thick] (-0.8,0) -- (-0.2,0);
            
            \draw (-0.8,0) node[below] {\footnotesize $0\quad$};
            \draw[dashed] (-0.2,0) -- (-0.2,0) node[below] {\footnotesize $\theta_1+c\quad\;$};
            \draw[dashed] (0.9,1.24078) -- (0.9,0) node[below] {\footnotesize $\theta_2+c$};
            \draw[dashed] (0.9,1.24078) -- (-0.8,1.24078) node[left] {\footnotesize $\rho\cdot (\theta_2-\theta_1)$};
            \draw[dashed] (1.2,2) -- (1.2,0);
            \draw (1.25,0) node[below] {\footnotesize $1$};
            \draw[dashed] (1.2,2) -- (-0.8,2) node[left] {\footnotesize $1$};
        \end{tikzpicture}
    \end{minipage}
    \hspace{0.02\textwidth}
    \begin{minipage}[t]{0.48\textwidth}
        \centering
        \begin{tikzpicture}[scale=2.0]
            \draw[->] (-1.2,0) -- (1.5,0) node[right] {\footnotesize $\theta$};
            \draw[-] (0,0) -- (0,2.2) node[left] {};
            \draw[->] (0,0) -- (0,2.3) node[left] {\footnotesize $-\mu(\theta)$};

        
        
            
            \draw[red, thick] (-0.2,0) -- (-0.8,0);
            
            \draw (-0.8,0) node[below] {\footnotesize $-c$};
            \draw (0,0) node[below] {\footnotesize $0$};
            \draw[dashed] (-0.2,0.8) -- (-0.2,0) node[below] {\footnotesize $\theta_1$};
            \draw[red] (-0.2,0) circle (1pt);
            \fill[red] (-0.2,0.8) circle (1pt);
            \draw[red, thick] (-0.2,0.8) -- (0.9,0.8);
            \fill[red] (0.9,0.8) circle (1pt);
            \draw[red] (0.9,1.4) circle (1pt);
            \draw[dashed] (0.9,1.4) -- (0.9,0) node[below] {\footnotesize $\theta_2$};
            \draw[red, thick] (0.9,1.4) -- (1.2,2);
            \draw[dashed] (1.2,2) -- (1.2,0);
            \draw (1.25,0) node[below] {\footnotesize $1$};
            \draw (0,0.8) node[above] {\footnotesize $\rho\quad$};
        \end{tikzpicture}
    \end{minipage}
    \caption{The dual variables constructed by the Recursive Reduction Procedure, based on the symmetric equilibrium constructed in \Cref{thm:symmetric-equilibrium-convex-prior-big-c}}
    \label{fig:dual-constructed}
\end{figure}
Based on the formulas of $\lambda$ and $\mu$, we directly have that for any $(v,\theta)\in\supp(G)$:
\begin{equation*}
    \lambda(v)=p(v,\theta)-\mu(\theta)q(v,\theta)~,
\end{equation*}
which implies that $G$ and $(\lambda,\mu)$ satisfy the Complementary Slackness conditions in \Cref{thm:primal-dual-optimal}.
Besides, it also can be verified that such $\lambda$ and $\mu$ are feasible to the corresponding dual problem, that is:
\begin{equation*}
    \lambda(v)=\max_{\theta\in[-c,1-c]}p(v,\theta)-\mu(\theta)q(v,\theta)~,\quad\forall v\in[0,1]~,
\end{equation*}
which makes $(\lambda,\mu)$ feasible to the corresponding dual problem.
Combining these two results and by \Cref{thm:constructed-lambda-mu-optimal}, we know that such $G$ indeed constitutes a best response to the interim utility function $u$.
\end{proof}

\section{Proofs in \Cref{sec:approximation}}
\label{apx:approximation}
\begin{proof}[Proof of \Cref{thm:fptas}]
We proceed with the proof in the following three steps.

\xhdr{Step 1: Discretizing the original prior $F$}
Fix a value-grid size $\delta\in(0,1)$.
Let $M\triangleq\lceil 1/\delta\rceil$ be the number of intervals and partition $[0,1]$ into intervals
$$
I_r
\triangleq
\begin{cases}
\left[\frac{r-1}{M},\frac{r}{M}\right)\quad & \mathrm{if\ }r\in[M-1]~,\\[4pt]
\left[\frac{M-1}{M},1\right]\quad & \mathrm{if\ }r=M~.
\end{cases}
$$
For each interval $I_r$, let $\widehat v_r$ be its midpoint $\widehat v_r
\triangleq
\frac{2r-1}{2M}$ for each $r\in[M]$.
Then, we move all probability mass of the original prior $F$ inside each interval $I_r$ to the representative value $\widehat v_r$:\footnote{The specific representative point inside each interval is immaterial for the approximation argument; we use the midpoint only for convenience.}
$$
F_\delta
\triangleq
\sum_{r=1}^M F(I_r)\cdot \delta_{\widehat v_r}~,
$$
where $\delta_{\widehat v_r}$ denotes the Dirac measure at $\widehat v_r$.

Under the discretized prior $F_\delta$, we obtain the following auxiliary best-response problem:
\begin{equation}
\label{eq:best-response-delta}
    \max_{G_\delta\in\mathcal G(F_\delta,c)}
    U(G_\delta)=
    \mathbb E_{(v,\theta)\sim G_\delta}
    \left[
    u\left(\min\{v,\theta\}\right)
    \right]~.
    \tag{$\mathcal P^{\delta}_{\cc{BR}}$}
\end{equation}
Since the feasible space $\mathcal G(F_\delta,c)$ is convex and weakly compact, and $u(\kappa)$ is continuous, problem \eqref{eq:best-response-delta} still admits an optimal solution.
Let $G_\delta^*$ be an optimal solution to the problem \eqref{eq:best-response-delta}, and denote its optimal value by $\mathsf{OPT}_\delta
\triangleq
U(G_\delta^*)$.
Notice that \eqref{eq:best-response} and \eqref{eq:best-response-delta} are not nested optimization problems.
Indeed, their feasible sets are $\mathcal G(F,c)$ and $\mathcal G(F_\delta,c)$, respectively, and neither feasible set contains the other.
Therefore, there is no direct monotonicity relation between $\mathsf{OPT}$ and $\mathsf{OPT}_\delta$.
Instead, we prove a stability bound in two steps:
$$
|\mathsf{OPT}_\delta-\mathsf{OPT}|
\le
L\delta~.
$$

\sxhdr{Step 1.1: From the original prior to the discretized prior}
Starting from the optimal solution $G^*\in\mathcal G(F,c)$ of the original problem \eqref{eq:best-response}, we construct a feasible solution $G_\delta\in\mathcal G(F_\delta,c)$ for the discretized-prior problem \eqref{eq:best-response-delta}, and prove that the construction loss is at most $L\delta$.
For $G^*_\theta$-almost every row $\theta$, we construct a discretized posterior by moving all mass in each value interval $I_r$ to its midpoint $\widehat v_r$.
Specifically, write $m_r(\theta)\triangleq G^*_{v\mid\theta}(I_r)$ for all $r\in[M]$.
Then we define the discretized posterior on the value grid by
$$
\widehat G_{v\mid\theta}
\triangleq
\sum_{r=1}^M m_r(\theta)\cdot \delta_{\widehat v_r}~.
$$
Let $\theta_\delta(\theta)$ be the index of this discretized posterior, i.e., the unique solution to
$$
\int_0^1
\left((v-\theta_\delta(\theta))_+-c\right)
\dd \widehat G_{v\mid\theta}(v)
=
0~.
$$
Thus, each original row $\theta$ is transformed as follows: value mass in every interval $I_r$ is moved horizontally to $\widehat v_r$, and the whole row with newly moved masses at each midpoint is then moved vertically from $\theta$ to the recomputed index $\theta_\delta(\theta)$.
Formally, define $G_\delta$ by
$$
G_\delta(A\times B)
\triangleq
\int_{-c}^{1-c}
\mathbf 1\{\theta_\delta(\theta)\in B\}
\widehat G_{v\mid\theta}(A)
\dd G^*_\theta(\theta)
$$
for all measurable $A\subseteq[0,1]$ and $B\subseteq[-c,1-c]$.

We first show that the vertical and horizontal movements are both small for every probability mass.
To see this, for every $v\in I_r$, we have $|v-\widehat v_r|\le\delta$.
Hence, for every row $\theta$,
$$
(\widehat v_r-(\theta+\delta))_+
\le
(v-\theta)_+~,\quad
(\widehat v_r-(\theta-\delta))_+
\ge
(v-\theta)_+~,
\qquad
\forall v\in I_r~.
$$
Integrating over all intervals with respect to $G^*_{v\mid\theta}$ gives
$$
\int_0^1
(\widehat v-(\theta+\delta))_+
\dd\widehat G_{v\mid\theta}(\widehat v)
\le c~,\quad
\int_0^1
(\widehat v-(\theta-\delta))_+
\dd\widehat G_{v\mid\theta}(\widehat v)
\ge c~.
$$
By monotonicity of the index equation in the index variable,
$$
|\theta_\delta(\theta)-\theta|\le\delta
\quad
\text{for }G^*_\theta\text{-almost every }\theta~.
$$

It remains to compare payoffs.
Recall that, for each original row $\theta$, all mass in interval $I_r$ is moved to the representative value $\widehat v_r$, and the row index is recomputed as $\theta_\delta(\theta)$.
For every $v\in I_r$, we have $|v-\widehat v_r|\le \delta$,
and from the index-stability argument above, $|\theta-\theta_\delta(\theta)|\le \delta$ for $G^*_\theta$-almost every $\theta$.
Therefore, for every $v\in I_r$ and $G^*_\theta$-almost every $\theta$,
$$
\left|
\min\{\widehat v_r,\theta_\delta(\theta)\}
-
\min\{v,\theta\}
\right|
\le \delta~.
$$
Since the interim utility $u(\kappa)$ is $L$-Lipschitz continuous, this implies
$$
u\left(\min\{\widehat v_r,\theta_\delta(\theta)\}\right)
\ge
u\left(\min\{v,\theta\}\right)-L\delta~.
$$
Applying the pointwise payoff bound on each interval $I_r$, we obtain
$$
\begin{aligned}
U(G_\delta)
&=
\int_{-c}^{1-c}
\sum_{r=1}^M
u\left(\min\{\widehat v_r,\theta_\delta(\theta)\}\right)
G^*_{v\mid\theta}(I_r)
\dd G^*_\theta(\theta)\\
&\ge
\int_{-c}^{1-c}
\sum_{r=1}^M
\int_{I_r}
\left[
u\left(\min\{v,\theta\}\right)-L\delta
\right]
\dd G^*_{v\mid\theta}(v)
\dd G^*_\theta(\theta)\\
&=
U(G^*)-L\delta=
\mathsf{OPT}-L\delta~.
\end{aligned}
$$
Since $G_\delta^*$ is optimal for \eqref{eq:best-response-delta}, we conclude that
$$
\mathsf{OPT}_\delta
\ge
U(G_\delta)
\ge
\mathsf{OPT}-L\delta~.
$$

\sxhdr{Step 1.2: From the discretized prior back to the original prior}
Conversely, starting from the optimal solution $G_\delta^*\in\mathcal G(F_\delta,c)$ of the discretized-prior problem \eqref{eq:best-response-delta}, we construct a feasible solution $G\in\mathcal G(F,c)$ for the original problem \eqref{eq:best-response}.

For $G^*_{\delta,\theta}$-almost every row $\theta$, write $m_r(\theta)
\triangleq
G^*_{\delta,v\mid\theta}(\{\widehat v_r\})$ for all $\forall r\in[M]$.
We lift the discretized posterior by spreading the mass at each midpoint $\widehat v_r$ back to the interval $I_r$ according to the conditional distribution of $F$ on $I_r$.
Specifically, define a posterior distribution $\widetilde G_{v\mid\theta}$ on $[0,1]$ by
$$
\widetilde G_{v\mid\theta}(A)
\triangleq
\sum_{r=1}^M
m_r(\theta)
\frac{F(A\cap I_r)}{F(I_r)}
$$
for every measurable set $A\subseteq[0,1]$.
It holds $F(I_r)>0$ for each $I_r$.
Equivalently, each probability mass $m_r(\theta)$ originally concentrated at $\widehat v_r$ is redistributed over $I_r$ in the same proportion as the prior $F$.
Let $\widetilde\theta_\delta(\theta)$ be the index of this lifted posterior, i.e., the unique solution to
$$
\int_0^1
\left((v-\widetilde\theta_\delta(\theta))_+-c\right)
\dd \widetilde G_{v\mid\theta}(v)
=
0~.
$$
Thus, each row $\theta$ of $G_\delta^*$ is transformed as follows: mass at each midpoint $\widehat v_r$ is spread back to $I_r$, and the whole row is then moved vertically from $\theta$ to the recomputed index $\widetilde\theta_\delta(\theta)$.
Formally, define
$$
G(A\times B)
\triangleq
\int_{-c}^{1-c}
\mathbf 1\{\widetilde\theta_\delta(\theta)\in B\}
\widetilde G_{v\mid\theta}(A)
\dd G^*_{\delta,\theta}(\theta)
$$
for all measurable $A\subseteq[0,1]$ and $B\subseteq[-c,1-c]$.
It can be easily verified that $G\in\mathcal G(F,c)$.

We next show that the vertical and horizontal movements are both small.
For every $v\in I_r$, we have $|v-\widehat v_r|\le\delta$.
Since $G_\delta^*$ is feasible, for $G^*_{\delta,\theta}$-almost every row $\theta$, $\sum_{r=1}^M
m_r(\theta)
\left((\widehat v_r-\theta)_+-c\right)
=
0$.
For every $v\in I_r$, we have
$$
(v-(\theta+\delta))_+
\le
(\widehat v_r-\theta)_+~,\quad
(v-(\theta-\delta))_+
\ge
(\widehat v_r-\theta)_+~.
$$
Integrating with respect to the lifted posterior $\widetilde G_{v\mid\theta}$ gives
$$
\int_0^1
(v-(\theta+\delta))_+
\dd \widetilde G_{v\mid\theta}(v)
\le c~,\quad
\int_0^1
(v-(\theta-\delta))_+
\dd \widetilde G_{v\mid\theta}(v)
\ge c~.
$$
By monotonicity of the index equation in the index variable,
$$
|\widetilde\theta_\delta(\theta)-\theta|\le\delta
\quad
\text{for }G^*_{\delta,\theta}\text{-almost every }\theta~.
$$

It remains to compare payoffs.
For every $v\in I_r$ and $G^*_{\delta,\theta}$-almost every $\theta$, we have
$$
\left|
\min\{v,\widetilde\theta_\delta(\theta)\}
-
\min\{\widehat v_r,\theta\}
\right|
\le\delta~.
$$
Since the interim utility function $u(\kappa)$ is $L$-Lipschitz continuous, we can bound the utility loss
$$
u\left(\min\{v,\widetilde\theta_\delta(\theta)\}\right)
\ge
u\left(\min\{\widehat v_r,\theta\}\right)-L\delta~.
$$
Applying this pointwise payoff bound on each interval $I_r$, we obtain
$$
\begin{aligned}
U(G)
&=
\int_{-c}^{1-c}
\sum_{r=1}^M
m_r(\theta)
\int_{I_r}
u\left(\min\{v,\widetilde\theta_\delta(\theta)\}\right)
\frac{\dd F(v)}{F(I_r)}
\dd G^*_{\delta,\theta}(\theta)\\
&\ge
\int_{-c}^{1-c}
\sum_{r=1}^M
m_r(\theta)
\left[
u\left(\min\{\widehat v_r,\theta\}\right)-L\delta
\right]
\dd G^*_{\delta,\theta}(\theta)\\
&=
U(G_\delta^*)-L\delta=
\mathsf{OPT}_\delta-L\delta~.
\end{aligned}
$$
Since $G^*$ is optimal for the original problem, we conclude that
$$
\mathsf{OPT}
\ge
U(G)
\ge
\mathsf{OPT}_\delta-L\delta~.
$$

Combining the two inequalities from steps 1.1 and 1.2 yields
$$
|\mathsf{OPT}_\delta-\mathsf{OPT}|
\le
L\delta~.
$$

\xhdr{Step 2: Discretizing the index support}
Consider the discrete prior $F_\delta$ constructed in Step~1.
Fix an index-discretization parameter $\epsilon\in(0,1)$.
Recall that the discretized prior constructed in the previous step is $F_\delta=\sum_{i=1}^M f_i^\delta\cdot \delta_{\widehat v_i}$ and $f_i^\delta\triangleq F(I_i)$.
For each grid value $\widehat v_i$ and each index $\theta\in[-c,1-c]$, define these two functions for convenience $q_i(\theta)\triangleq(\widehat v_i-\theta)_+-c$ and $p_i(\theta)\triangleq u(\min\{\widehat v_i,\theta\})$.
Let $\Theta_\delta\triangleq[\widehat v_1-c,\widehat v_M-c]$ be the effective index domain under the discretized prior.

\sxhdr{Step 2.1: Constructing the index grid}
We first discretize the vertical axis of the interim utility function $u(\kappa)$ at granularity $\epsilon$.
Let $\ell_k\triangleq k\epsilon$ for $k=0,1,\ldots,\lceil 1/\epsilon\rceil$.
For every level $\ell_k\le u(\widehat v_M-c)$, define the generalized inverse $\xi_k
\triangleq
\inf\{\theta\in\Theta_\delta:u(\theta)\ge \ell_k\}$.
Since the interim utility $u(\kappa)$ is continuous and nondecreasing, we have $u(\xi_k)\ge \ell_k$ for each $k$.

Then, the full index grid is constructed as
$$
\Gamma
\triangleq
\{\widehat v_i-c:i\in[M]\}
\cup
\left(\{\widehat v_i:i\in[M]\}\cap\Theta_\delta\right)
\cup
\{\xi_k:k=0,1,\ldots,\lceil 1/\epsilon\rceil\}~.
$$
After sorting and removing duplicates, write
$$
\Gamma=\{\theta_1,\ldots,\theta_R\}~,
\qquad
\widehat v_1-c=\theta_1<\cdots<\theta_R=\widehat v_M-c~.
$$
The grid size satisfies $|\Gamma|\triangleq R\le 2M+\lceil 1/\epsilon\rceil+3$, which is finite.

We illustrate why constructing such an index grid.
The points $\widehat v_i$ are added so that the map $\theta\mapsto\min\{\widehat v_i,\theta\}$ has no kink inside any grid interval.
The points $\widehat v_i-c$ are added so that the coefficient $q_i(\theta)=(\widehat v_i-\theta)_+-c$ does not change sign inside any open interval between two consecutive grid points.
The points $\xi_k$ ensure the one-sided payoff approximation used in the rounding step.
Specifically, if $\alpha<\beta$ are two consecutive points in $\Gamma$, then for every $\theta\in[\alpha,\beta)$,
$$
u(\theta)\le u(\alpha)+\epsilon~.
$$
To see this, if $u(\theta)>u(\alpha)+\epsilon$ for some $\theta\in[\alpha,\beta)$, then there is a level $\ell_k$ such that $u(\alpha)<\ell_k\le u(\theta)$.
By definition of $\xi_k$, we have $\xi_k\le \theta<\beta$.
If $\xi_k\le\alpha$, then monotonicity implies $u(\alpha)\ge u(\xi_k)\ge\ell_k$, a contradiction.
Hence $\xi_k\in(\alpha,\beta)$, contradicting the fact that $\alpha$ and $\beta$ are consecutive grid points.

\sxhdr{Step 2.2: Defining the finite LP}
We now restrict the index support to the finite grid $\Gamma$.
Let $\boldsymbol x=(x_{i r})_{i\in[M],r\in[R]}$ be the decision variable, where $x_{i r}\ge0$ denotes the probability mass assigned to value $\widehat v_i$ and index $\theta_r$.
The discretized problem is the following finite-dimensional LP:
\begin{equation}
\label{eq:best-response-delta-epsilon}
\tag{$\mathcal P^{\delta,\epsilon}_{\cc{BR}}$}
\begin{aligned}
\max_{\boldsymbol x\ge0}\quad
&\sum_{i=1}^M\sum_{r=1}^R
u(\min\{\widehat v_i,\theta_r\})x_{i r}\\ 
\mathrm{s.t.}\quad
&\sum_{r=1}^R x_{i r}=f_i^\delta~,
&&\forall i\in[M]~,\\
&\sum_{i=1}^M
\big((\widehat v_i-\theta_r)_+-c\big)x_{i r}=0~,
&&\forall r\in[R]~.
\end{aligned}
\end{equation}
The LP is feasible since the full-revelation strategy belongs to its feasible region: for each value $\widehat v_i$, assign mass $f_i^\delta$ to the index $\widehat v_i-c$.
Let $G_{\delta,\epsilon}^*$ be an optimal solution to \eqref{eq:best-response-delta-epsilon}, and let $\mathsf{OPT}_{\delta,\epsilon}
\triangleq
U(G_{\delta,\epsilon}^*)$ be its optimal value.
Every feasible solution of \eqref{eq:best-response-delta-epsilon} is feasible for \eqref{eq:best-response-delta}, because it is simply a feasible joint distribution under $F_\delta$ with finite index support.
Therefore,
$$
\mathsf{OPT}_{\delta,\epsilon}\le \mathsf{OPT}_\delta~.
$$

\sxhdr{Step 2.3: Rounding an optimal solution to the index grid}
Let $G_\delta^*\in\mathcal G(F_\delta,c)$ be an optimal solution to the problem \eqref{eq:best-response-delta} under discrete prior $F_\delta$. In this step, we round all mass of $G_\delta^*$ on non-grid index rows to the finite grid $\Gamma$, while preserving feasibility.

For each grid value $\widehat v_i$, let $\gamma_i^*$ be the finite Borel measure on $\Theta_\delta$ induced by $G_\delta^*$:
$$
\gamma_i^*(A)
\triangleq
G_\delta^*(\{\widehat v_i\}\times A)~,
\qquad
\forall A\in\mathcal B(\Theta_\delta)~.
$$
It suffices to describe the rounding operation on one open interval between two consecutive grid points. 
Fix $\alpha<\beta$ in $\Gamma$ with no grid point between them, and let $J\triangleq(\alpha,\beta)$. We will move all mass with indices in $J$ to the two endpoint rows $\alpha$ and $\beta$.

Let $M_J\triangleq\sum_{i=1}^M \gamma_i^*|_J$ be the total index-mass measure on $J$. If $M_J(J)=0$, there is nothing to move. Otherwise, for each $i\in[M]$, define the row composition at $\theta$
$$
z_i(\theta)
\triangleq
\frac{\dd \gamma_i^*|_J}{\dd M_J}(\theta)~.
$$
For $M_J$-almost every $\theta\in J$, the vector $z(\theta)=(z_1(\theta),\ldots,z_M(\theta))$ satisfies
$$
\sum_{i=1}^M z_i(\theta)=1~,
\qquad
\sum_{i=1}^M q_i(\theta)z_i(\theta)=0~.
$$

The key observation is that each such feasible row can be split into two feasible endpoint rows at $\alpha$ and $\beta$. 
Since $\Gamma$ contains all kink points $\widehat v_i$ and all zero-crossing points $\widehat v_i-c$, every $q_i$ is affine on $[\alpha,\beta]$ and has a fixed sign on $J$. Define
$$
P\triangleq\{i\in[M]:q_i(\theta)>0\}~,
\qquad
N\triangleq\{j\in[M]:q_j(\theta)<0\}~.
$$
Then these two sets, $P$ and $N$, form a partition $[M]$, and row feasibility (the index equation in \Cref{def:index}) implies that the positive and negative contributions balance:
$$
A(\theta)
\triangleq
\sum_{i\in P}q_i(\theta)z_i(\theta)
=
\sum_{j\in N}\bigl(-q_j(\theta)\bigr)z_j(\theta)>0~.
$$

For each pair $(i,j)\in P\times N$, define
$$
a_{ij}(\theta)\triangleq \frac{z_i(\theta)z_j(\theta)}{A(\theta)}~.
$$
Also define the two-value balanced package $r^{ij}(\theta)\in\mathbb R_+^M$ by
$$
r_k^{ij}(\theta)
\triangleq
\begin{cases}
-q_j(\theta)\quad & \mathrm{if}\ k=i~,\\
q_i(\theta)\quad & \mathrm{if}\ k=j~,\\
0\quad & \mathrm{otherwise}~.
\end{cases}
$$
This package is feasible at row $\theta$, because $\sum_k q_k(\theta)r_k^{ij}(\theta)=0$. Moreover, the row vector $z(\theta)$ can be decomposed as
$$
z(\theta)=\sum_{i\in P,\ j\in N}a_{ij}(\theta)r^{ij}(\theta)~.
$$
To see this, the $i$-th coordinate for $i\in P$ equals
$$
\sum_{j\in N}a_{ij}(\theta)(-q_j(\theta))
=
\frac{z_i(\theta)}{A(\theta)}
\sum_{j\in N}(-q_j(\theta))z_j(\theta)
=
z_i(\theta)~,
$$
and the verification for $j\in N$ is identical.

We now move each balanced package to the endpoints. Write
$$
\theta=\omega(\theta)\alpha+\bigl(1-\omega(\theta)\bigr)\beta~,
\qquad
\omega(\theta)\triangleq \frac{\beta-\theta}{\beta-\alpha}~.
$$
Since each $q_i$ is affine on $[\alpha,\beta]$, we have
$$
r^{ij}(\theta)
=
\omega(\theta)r^{ij}(\alpha)
+
\bigl(1-\omega(\theta)\bigr)r^{ij}(\beta)~.
$$
Therefore, define
$$
z^\alpha(\theta)
\triangleq
\omega(\theta)
\sum_{i\in P,\ j\in N}a_{ij}(\theta)r^{ij}(\alpha)~,
\qquad
z^\beta(\theta)
\triangleq
\bigl(1-\omega(\theta)\bigr)
\sum_{i\in P,\ j\in N}a_{ij}(\theta)r^{ij}(\beta)~.
$$
Then $z(\theta)=z^\alpha(\theta)+z^\beta(\theta)$. Moreover, $z^\alpha(\theta)$ is feasible at row $\alpha$, and $z^\beta(\theta)$ is feasible at row $\beta$, because each package $r^{ij}(\alpha)$ and $r^{ij}(\beta)$ is feasible at the corresponding endpoint row.

We now apply this local splitting operation to every open interval between consecutive grid points. Let $J_r\triangleq(\theta_r,\theta_{r+1})$ for $r\in[R-1]$. For each $J_r$, let $M_r$ be the total index-mass measure on $J_r$, and let $z^{r,L}$ and $z^{r,R}$ be the endpoint split vectors sent to $\theta_r$ and $\theta_{r+1}$, respectively. Define the rounded measures $\widehat\gamma_i$ by
$$
\widehat\gamma_i
\triangleq
\gamma_i^*|_\Gamma
+
\sum_{r=1}^{R-1}
\left(
\int_{J_r}z_i^{r,L}(\theta)\dd M_r(\theta)
\right)\delta_{\theta_r}
+
\sum_{r=1}^{R-1}
\left(
\int_{J_r}z_i^{r,R}(\theta)\dd M_r(\theta)
\right)\delta_{\theta_{r+1}}~.
$$
Finally, define the rounded joint distribution $\widehat G_{\delta,\epsilon}$ by
$$
\widehat G_{\delta,\epsilon}(\{\widehat v_i\}\times A)
\triangleq
\widehat\gamma_i(A)~,
\qquad
\forall A\in\mathcal B(\Theta_\delta)~.
$$
By construction, $\widehat G_{\delta,\epsilon}$ is supported on the grid points $\{\widehat v_1,\ldots,\widehat v_M\}\times\Gamma$.

We verify feasibility. First, the value marginal is preserved. For every $i\in[M]$,
$$
\begin{aligned}
\widehat\gamma_i(\Theta_\delta)
&=
\gamma_i^*(\Gamma)
+
\sum_{r=1}^{R-1}
\int_{J_r}
\left(z_i^{r,L}(\theta)+z_i^{r,R}(\theta)\right)
\dd M_r(\theta)\\
&=
\gamma_i^*(\Gamma)
+
\sum_{r=1}^{R-1}
\int_{J_r}z_i(\theta)\dd M_r(\theta)
=
\gamma_i^*(\Theta_\delta)
=
f_i^\delta~.
\end{aligned}
$$

Second, the row constraints are preserved. Since $\widehat G_{\delta,\epsilon}$ is supported only on $\Gamma$, it suffices to check the constraint at each grid row $\theta_r$. The mass originally placed on $\theta_r$ is feasible because $G_\delta^*$ is feasible. Every piece of mass moved into $\theta_r$ from a neighboring interval is feasible at $\theta_r$ by the endpoint-splitting construction. Hence, writing $\widehat x_{ir}\triangleq\widehat\gamma_i(\{\theta_r\})$, we have
$$
\sum_{i=1}^M
\big((\widehat v_i-\theta_r)_+-c\big)\widehat x_{ir}
=
0~,
\qquad
\forall r\in[R]~.
$$
Therefore, $\widehat G_{\delta,\epsilon}$ is feasible for the finite LP \eqref{eq:best-response-delta-epsilon}.

\sxhdr{Step 2.4: Bounding the rounding loss}
We now bound the payoff loss caused by the rounding operation for every probability mass.
Fix a grid interval $(\alpha,\beta)$ and a nonempty row $\theta\in(\alpha,\beta)$ with row vector $z(\theta)$.
This row is replaced by a value vector $z^\alpha(\theta)$ at row $\alpha$ and a value vector $z^\beta(\theta)$ at row $\beta$.
By construction of the grid, for value $\widehat v_i$,
$$
p_i(\alpha)\ge p_i(\theta)-\epsilon~,
\qquad
p_i(\beta)\ge p_i(\theta)~.
$$
Indeed, the payoff grid ensures that moving to the left endpoint loses at most $\epsilon$.
Therefore, the payoff after rounding this row is at least
$$
\begin{aligned}
\sum_{i=1}^M p_i(\alpha)z_i^\alpha(\theta)
+
\sum_{i=1}^M p_i(\beta)z_i^\beta(\theta)
&\ge
\sum_{i=1}^M
\bigl(p_i(\theta)-\epsilon\bigr)z_i^\alpha(\theta)
+
\sum_{i=1}^M
p_i(\theta)z_i^\beta(\theta)\\
&=
\sum_{i=1}^M
p_i(\theta)
\left(z_i^\alpha(\theta)+z_i^\beta(\theta)\right)
-
\epsilon\sum_{i=1}^M z_i^\alpha(\theta)\\
&\ge
\sum_{i=1}^M p_i(\theta)z_i(\theta)
-
\epsilon\sum_{i=1}^M z_i(\theta)~.
\end{aligned}
$$
Thus the loss on this row is at most $\epsilon\sum_{i=1}^M z_i(\theta)$.
Summing over all rows and all grid intervals, and using that the total probability mass is one, the aggregate payoff loss is at most $\epsilon=\epsilon$.
Hence
$$
\mathsf{OPT}_{\delta,\epsilon}
\ge
U(\widehat G_{\delta,\epsilon})
\ge
U(G_\delta^*)-\epsilon=\mathsf{OPT}_\delta-\epsilon~.
$$

Combining this with $\mathsf{OPT}_{\delta,\epsilon}\le\mathsf{OPT}_\delta$, we conclude
$$
\mathsf{OPT}_\delta-\epsilon
\le
\mathsf{OPT}_{\delta,\epsilon}
\le
\mathsf{OPT}_\delta~.
$$

\xhdr{Step 3: Lifting the finite-LP solution back to the original prior}
We now construct a feasible solution under the original prior $F$ from the finite-LP solution $G_{\delta,\epsilon}^*$.
Let $\boldsymbol x^*=(x^*_{ir})_{i\in[M],r\in[R]}$ be the optimal solution to the finite LP \eqref{eq:best-response-delta-epsilon}, so that $G_{\delta,\epsilon}^*$ assigns mass $x^*_{ir}$ to the value-index pair $(\widehat v_i,\theta_r)$ for each $i\in[M]$ and $r\in[R]$.

For each grid row $\theta_r$, write $Y_r\triangleq \sum_{i=1}^M x^*_{ir}$.
If $Y_r=0$, this row carries no mass and can be ignored.
If $Y_r>0$, we lift the posterior on row $\theta_r$ by spreading the mass at each midpoint $\widehat v_i$ back to the interval $I_i$ according to the conditional distribution of $F$ on $I_i$.
Specifically, define a posterior distribution $\widetilde G_{v\mid r}$ by, for every measurable set $A\subseteq[0,1]$,
$$
\widetilde G_{v\mid r}(A)
\triangleq
\sum_{i=1}^M
\frac{x^*_{ir}}{Y_r}
\frac{F(A\cap I_i)}{F(I_i)}~.
$$
Since $F$ is strictly increasing, $F(I_i)>0$ for every $i\in[M]$.
Let $\widetilde\theta_r$ be the index of this lifted posterior, i.e., the unique solution to the index equation
$$
\int_0^1
\left((v-\widetilde\theta_r)_+-c\right)
\dd \widetilde G_{v\mid r}(v)
=
0~.
$$
Thus, each grid row $\theta_r$ is transformed as follows: every mass $x^*_{ir}$ originally placed at the value midpoint $\widehat v_i$ is spread back over $I_i$, and the entire row is moved vertically from the original index $\theta_r$ to the newly recomputed index $\widetilde\theta_r$.
Formally, define $\widetilde G$ by
$$
\widetilde G(A\times B)
\triangleq
\sum_{r:Y_r>0}
\mathbf 1\{\widetilde\theta_r\in B\}
\sum_{i=1}^M
x^*_{ir}
\frac{F(A\cap I_i)}{F(I_i)}
$$
for all measurable $A\subseteq[0,1]$ and $B\subseteq[-c,1-c]$.

We first verify that $\widetilde G\in\mathcal G(F,c)$.
For the value marginal, for every measurable set $A\subseteq[0,1]$,
$$
\begin{aligned}
\widetilde G(A\times[-c,1-c])
=
\sum_{r=1}^R
\sum_{i=1}^M
x^*_{ir}
\frac{F(A\cap I_i)}{F(I_i)}
=
\sum_{i=1}^M
\frac{F(A\cap I_i)}{F(I_i)}
\sum_{r=1}^R x^*_{ir}
=
\sum_{i=1}^M F(A\cap I_i)
=
F(A)~.
\end{aligned}
$$
Hence the value marginal of $\widetilde G$ is $F$.
The row-wise index constraint also holds by construction.
Indeed, for any bounded measurable function $\varphi:[-c,1-c]\to\mathbb R$,
$$
\begin{aligned}
\int_{[0,1]\times[-c,1-c]}
\varphi(\epsilon)\left((v-\epsilon)_+-c\right)
\dd \widetilde G(v,\epsilon)=
\sum_{r:Y_r>0}
\varphi(\widetilde\theta_r)Y_r
\int_0^1
\left((v-\widetilde\theta_r)_+-c\right)
\dd \widetilde G_{v\mid r}(v)
=
0~.
\end{aligned}
$$
Therefore, the constructed strategy $\widetilde G$ is feasible for the original problem \eqref{eq:best-response}.

We next show that the vertical and horizontal movements are both small.
For every $v\in I_i$, we have $|v-\widehat v_i|\le\delta$.
Since $G_{\delta,\epsilon}^*$ is feasible for the finite LP, for every grid row $\theta_r$ with $Y_r>0$,
$$
\sum_{i=1}^M
\frac{x^*_{ir}}{Y_r}
(\widehat v_i-\theta_r)_+
=
c~.
$$
For every $v\in I_i$, we have
$$
(v-(\theta_r+\delta))_+
\le
(\widehat v_i-\theta_r)_+~,\quad
(v-(\theta_r-\delta))_+
\ge
(\widehat v_i-\theta_r)_+~.
$$
Integrating with respect to the lifted posterior $\widetilde G_{v\mid r}$ gives
$$
\int_0^1
(v-(\theta_r+\delta))_+
\dd \widetilde G_{v\mid r}(v)
\le c~,\quad
\int_0^1
(v-(\theta_r-\delta))_+
\dd \widetilde G_{v\mid r}(v)
\ge c~.
$$
By monotonicity of the index equation in the index variable,
$$
|\widetilde\theta_r-\theta_r|\le\delta
\quad
\text{for every }r\text{ with }Y_r>0~.
$$

It remains to compare payoffs.
For every $v\in I_i$ and every row $r$ with $Y_r>0$,
$$
\left|
\min\{v,\widetilde\theta_r\}
-
\min\{\widehat v_i,\theta_r\}
\right|
\le\delta~.
$$
Since the interim utility function $u(\kappa)$ is $L$-Lipschitz continuous,
$$
u\left(\min\{v,\widetilde\theta_r\}\right)
\ge
u\left(\min\{\widehat v_i,\theta_r\}\right)-L\delta~.
$$
Using the definition of $\widetilde G$, we obtain the following utility bound
$$
\begin{aligned}
U(\widetilde G)
&=
\sum_{r:Y_r>0}
\sum_{i=1}^M
x^*_{ir}
\int_{I_i}
u\left(\min\{v,\widetilde\theta_r\}\right)
\frac{\dd F(v)}{F(I_i)}\\
&\ge
\sum_{r:Y_r>0}
\sum_{i=1}^M
x^*_{ir}
\left[
u\left(\min\{\widehat v_i,\theta_r\}\right)-L\delta
\right]=
U(G_{\delta,\epsilon}^*)-L\delta=
\mathsf{OPT}_{\delta,\epsilon}-L\delta~.
\end{aligned}
$$

Combining this inequality with the bounds from Steps 1 and 2 gives
$$
U(\widetilde G)
\ge
\mathsf{OPT}-\epsilon-2L\delta~.
$$
Thus, the finite-LP solution $G^*_{\delta,\epsilon}$ can be lifted back to the original continuous-prior problem \eqref{eq:best-response} with only an additional loss of $L\delta$.

To obtain a final additive error of at most $\bar\epsilon>0$, choose the index-discretization parameter and the value-grid size so that $\epsilon=\bar\epsilon/2$ and $\delta\le \bar\epsilon/4L$.
With these choices, we have $\epsilon+2L\delta\le \bar\epsilon$,
and hence the constructed strategy $\widetilde G$ in above steps satisfies $U(\widetilde G)\ge \mathsf{OPT}-\bar\epsilon$.

The number of value grid points is $M=O(1/\delta)$, and the number of index grid points satisfies $R\le 2M+\left\lceil{1/\epsilon}\right\rceil+3$.
Therefore, the finite LP has $MR$ variables and $M+R$ equality constraints, both polynomial in $L$ and $1/\bar\epsilon$.
Assuming that the interval masses $F(I_i)$ can be computed in polynomial time, the whole procedure runs in polynomial time.

Therefore, for every target accuracy $\bar\epsilon>0$, the procedure outputs in polynomial time a feasible solution $\widetilde G\in\mathcal G(F,c)$ such that $U(\widetilde G)\ge \mathsf{OPT}-\bar\epsilon$.
This proves that the sender's persuasion problem admits a fully polynomial additive approximation scheme.
\end{proof}

\section{Equilibrium Existence of the Competitive Information Design Game}
\label{apx:existence}

In this section, we establish the existence of an equilibrium in the game among the senders, under some mild assumptions.
\begin{theorem}[Equilibrium Existence]
\label{thm:existence}
    There exists an equilibrium in the game among the senders if each sender $i$'s prior distribution $F_i$ is atomless over $[0,1]$.
\end{theorem}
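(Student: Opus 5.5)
The argument is a fixed-point argument in the spirit of Glicksberg's theorem, applied to the game whose strategy sets are $\mathcal G(F_i,c_i)$. By \Cref{lem:convex-compact}, each strategy set is a nonempty, convex, weakly compact subset of a locally convex space of finite signed measures on $[0,1]\times[-c_i,1-c_i]$; it is nonempty because full revelation is feasible. Sender $i$'s payoff under the profile $(G_1,\dots,G_n)$ is $U_i(G)=\Pr[\kappa_i>\max_{j\ne i}\kappa_j,\ \kappa_i\ge 0]$, with ties split uniformly. This payoff is linear, hence quasi-concave, in $G_i$ for fixed opponents. If it were also jointly continuous, Glicksberg's theorem (via Kakutani--Fan--Glicksberg applied to the upper hemicontinuous, convex-valued best-response correspondence) would give a Nash equilibrium directly. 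It is not continuous, because of the indicator $\mathbf 1\{\kappa_i>\kappa_j\}$ and the zero cutoff. I would therefore route the argument through Reny's better-reply security or, more concretely, through a limit of equilibria of smoothed games.

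Concretely, I would proceed in four steps. First, replace the winning rule by a continuous one: for $\eta>0$, let sender $i$'s payoff be $\E\bigl[\prod_{j\ne i}\sigma_\eta(\kappa_i-\kappa_j)\,\sigma_\eta(\kappa_i)\bigr]$, where $\sigma_\eta$ is a continuous ramp from $0$ to $1$ on $[-\eta,0]$; other smoothings of the tie-breaking and zero cutoff would also do. Since $\kappa=\min\{v,\theta\}$ is continuous in $(v,\theta)$ and the smoothed payoff is a bounded continuous function integrated against the product measure $G_1\otimes\cdots\otimes G_n$, weak convergence of the $G_j$ yields joint continuity of the payoff. Glicksberg's theorem then provides an equilibrium $G^\eta$ for every $\eta$. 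Second, by weak compactness, extract a subsequence $G^{\eta_k}\to G^*$. Third, show that $G^*$ is an equilibrium of the original game. This requires two things: (a) the limit payoffs $U_i^{\eta_k}(G^{\eta_k})$ converge to $U_i(G^*)$, and (b) for every deviation $G_i'$ there is a deviation whose payoff in the smoothed games approximates $U_i(G_i',G^*_{-i})$ from below. Together these give $U_i(G^*)\ge U_i(G_i',G^*_{-i})$.

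The main obstacle is step (a), which needs the amortized-value distributions $K_{G^*_j}$ to be atomless, at least on $[0,1-c]$. Only then is the discontinuity set $\{\kappa_i=\kappa_j\}\cup\{\kappa_i=0\}$ null under $G^*$, so the portmanteau theorem applies. This is where atomlessness of $F_i$ enters. An atom of $K_G$ at some $\kappa\ge0$ can arise only from an index row $\theta=\kappa$ carrying positive mass, since $F$ has no atoms in $v$. The plan is a spreading argument of the kind sketched in the proof of \Cref{thm:convex-prior-unique-equilibrium}: if the limit had a common atom at which senders tie, some sender could split that row into nearby rows with indices slightly above $\kappa$. This splitting preserves feasibility, because pooled posteriors can be perturbed continuously using the atomless prior, and it gains a discrete fraction of the tie mass at arbitrarily small cost. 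I would first prove this as a lemma for equilibria of the smoothed games uniformly in $\eta$: the ties have vanishing mass, so the limit $G^*$ has no common atoms with positive tie probability. Atoms at $\kappa=0$ need separate care, using the fact that only $\kappa\ge 0$ wins. If the uniform bound proves hard, I would fall back on verifying Reny's better-reply security directly, using the same row-splitting deviation as the secure improvement. Step (b) is comparatively routine: shift the rows of $G_i'$ slightly upward, using \Cref{thm:fptas}-style lifting to keep feasibility, so that ties under $G_i'$ are broken in $i$'s favor.
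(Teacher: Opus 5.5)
\textbf{Gap in step (a): your smoothing removes the incentive to break ties.} Your ramp has $\sigma_\eta\equiv 1$ on $[0,\infty)$. At a tie $\kappa_i=\kappa_j$ \emph{both} senders therefore get full credit, and senders within $o(\eta)$ of each other get nearly full credit in either order. The row-splitting deviation you plan to use gains nothing in the smoothed game, so smoothed equilibria can carry common atoms that survive in the limit.

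Here is a concrete instance. Let $n=2$, $F$ uniform on $[0,1]$, $c=0.9$. Let $G^0$ pool $[0.8,1]$ into the single row $\theta=0$ (posterior mean $0.9$) and reveal the rest, whose amortized values then lie below $-0.1$.
\begin{itemize}
\item Against $G^0$, any strategy earns at most $\E[\sigma_\eta(\kappa_i)]$ in your smoothed game.
\item For $\eta\le 0.1$, take $\lambda(v)=(10v-8)_+$, with $\mu\equiv-10$ on $[-\eta,1-c]$ and $\mu\equiv 0$ on $[-c,-\eta)$. This pair is feasible for \eqref{eq:dual-obj} with $p(v,\theta)=\sigma_\eta(\min\{v,\theta\})$ and has value $0.2$.
\item $G^0$ attains $0.2$ against $G^0$. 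So $(G^0,G^0)$ is an equilibrium of every smoothed game with $\eta\le 0.1$, and Glicksberg may hand you exactly this constant sequence.
\item In the original game, the tie at $0$ gives each sender $0.2\cdot 0.9=0.18$. Pooling $[0.8+2\epsilon,1]$ at index $\epsilon$ instead gives $0.2-2\epsilon$, which is larger for small $\epsilon$.
\end{itemize}
So the limit of your smoothed equilibria need not be an equilibrium, and no lemma ``uniform in $\eta$'' can hold for this smoothing.

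Even with a better smoothing, the lemma as you phrase it (ties in $G^\eta$ have vanishing mass) would not suffice. Two senders' mass concentrating in a shrinking window becomes a common atom in the limit without any exact tie along the sequence.

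\textbf{How to repair it.} The approximating games must keep a strict incentive to break ties. That is exactly the role uniform tie-breaking plays in the paper's finite $m$-discrete games.
\begin{itemize}
\item Use, for example, a ramp on $[-\eta,\eta]$ with $\sigma_\eta(x)+\sigma_\eta(-x)=1$ for the pairwise comparisons. Your one-sided ramp is harmless for the cutoff at $0$, since $\kappa_i=0$ already wins in the original game.
\item Prove the no-common-atom property by contradiction directly on the limit, as in \Cref{lem:no-tie-in-limit}. Suppose $K_{G^*_1}$ and $K_{G^*_2}$ both charge some $\hat\kappa\ge 0$ where the tie is payoff-relevant. Then for large $k$ both smoothed equilibria put mass near $p_1,p_2$ in a shrinking window around $\hat\kappa$. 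The sender collecting at most about half the pairwise credit there can lift its window mass above the window by row splitting. The cost vanishes with the window, while the gain stays bounded away from zero.
\end{itemize}

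\textbf{Step (b).} Rows cannot simply be ``shifted up'': a row's index is pinned by its posterior. You need row splitting (drop a small low-value slice to a lower row), not the FPTAS lifting, which only recomputes indices.

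\textbf{The Reny fallback.} This route is viable, but as stated it is only a pointer.
\begin{itemize}
\item Reciprocal upper semicontinuity holds: with uniform tie-breaking the payoff sum is $\Pr[\max_i\kappa_i\ge 0]$, which is upper semicontinuous under weak convergence.
\item Payoff security reduces to approximating any strategy, up to $\epsilon$ in payoff against fixed opponents, by one whose amortized-value distribution is atomless.
\end{itemize}
That approximation lemma is the substantive step, and your proposal does not supply it.
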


We proceed with the proof in the following three steps.
Briefly speaking, in Step $1$, we construct a finite game called $m$-discrete approximation game for each granularity $m\in\Zplus$, and directly apply Nash's Theorem to show the existence of a discrete equilibrium.
In Step $2$, we show that the sequence of discrete equilibria has a subsequence that weakly converges to some distribution profile, where any of the distributions forms a feasible action for each sender.
In Step $3$, we prove that the limit profile indeed constitutes an equilibrium of the original game through establishing the convergence of utilities along with the convergence of equilibrium strategies.
For ease of presentation, we assume all senders share a common inspection cost $c$, that is $c_1=\cdots=c_N=c$.

\xhdr{Step~1: Constructing $m$-discrete approximation games}
By \Cref{lem:convex-compact}, each sender $i$'s strategy space $\mathcal{G}(F_i,c)$ is a compact and convex set containing all feasible 2-D distributions.
By Krein-Milman Theorem, any convex and compact set is the closed convex hull of its extreme points;
moreover, each element within the set can be represented as a convex combination of its extreme points.
Thus, each sender's strategy space can be precisely characterized by its extreme points.
Recall from \Cref{def:joint-distribution} that the space $\mathcal{G}(F_i,c)$ is subject to an infinite number of constraints; the convex space $\mathcal{G}(F_i,c)$ may therefore have an infinite number of extreme points.
Therefore, even if one takes the set of these extreme points as the action set, Nash's theorem does not apply.
To address this problem, we construct finite games that not only discretize the support of the value and the index, but also modify constraints in \Cref{def:joint-distribution}.
Each finite game is parameterized by an integer $m \in \mathbb Z^+$.

\begin{definition}[$m$-Discrete Approximation Game]
\label{defn:m discrete game}
    Fix any $m\in\mathbb{Z}^+$,
    let $V^m \triangleq \{0, \frac{1}{2^m}, \cdots, 1\}$ and $\Theta^m \triangleq \{-c, \frac{1}{2^m}-c, \ldots, 1-c\}$
    be the discretized support for the value and the index, respectively.
    The \emph{$m$-discrete approximation game} is as follows:
    \begin{itemize}
        \item \textbf{Strategy space: }For each sender $i$,
        the strategy space $\mathcal{S}_i^m$ is a subset of distributions on $V^m \times \Theta^m$.
        Each $\mathbf p \in \mathcal S_i^m$, with $p_{i,j}$ denoting the probability on the point $(\frac{i}{2^m},\frac{j}{2^m}-c)$, satisfies
        \begin{align}
            \sum_{i=0}^{2^m}p_{i,j}\cdot\left[\frac{i}{2^m}-\frac{j}{2^m}+c\right]_+ = c\cdot \sum_{i=0}^{2^m}p_{i,j}~,\quad &\forall j\in\{0\}\cup[2^m]~; \label{eq:discrete-feasible-constraint-1}\\
            \sum_{j=0}^{2^m}p_{i,j}=F_i\left(\frac{i}{2^m} \right)-F_i\left(\left[\frac{i-1}{2^m}\right]_+\right)~,\quad &\forall i\in\{0\}\cup[2^m]~. \label{eq:discrete-feasible-constraint-2}
        \end{align}
        \item \textbf{Action space: }Each sender $i$'s action space, denoted by $\mathcal{A}_i^m$, is defined as the set of all extreme points of $\mathcal{S}_i^m$.  
        \item \textbf{Utility: }
        Given a strategy profile $(G_1^m,...,G_N^m)$ where each $G_i^m\in\mathcal{A}_i^m$, the receiver receives $(v_i^m, \theta_i^m)\sim G_i^m$ for each~$i$, and chooses a sender whose amortized value $\kappa_i^m \triangleq \min(v_i^m, \theta_i^m)$ is non-negative and the largest.  
        If all amortized values are negative, the receiver chooses no one.
        If there is a tie among multiple senders, break the tie uniformly at random.\footnote{Here we adopt the uniformly random tie-breaking rule for convenience of presentation.  As we remarked above, the proof goes through for a host of other tie-breaking rules.}
        A sender gets utility $1$ if chosen by the receiver, and $0$ otherwise.
    \end{itemize}
\end{definition}

\begin{lemma}
\label{lem:discrete-game-strategy-action-space}
For any $m\in\mathbb{Z}^+$, each sender $i$'s strategy space $\mathcal{S}_i^m$ is non-empty, convex, and compact.
Each sender $i$'s action space $\mathcal{A}_i^m$ is finite. 
\end{lemma}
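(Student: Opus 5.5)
The lemma concerns a fixed finite grid, so the whole argument is finite-dimensional linear algebra. I would view $\mathcal S_i^m$ as a subset of $\mathbb R^{(2^m+1)\times(2^m+1)}$: the set of vectors $\mathbf p=(p_{i,j})$ with $p_{i,j}\ge 0$ that satisfy the linear equalities \eqref{eq:discrete-feasible-constraint-1} and \eqref{eq:discrete-feasible-constraint-2}. Note that \eqref{eq:discrete-feasible-constraint-1} is linear in $\mathbf p$, because the bracket $[\frac{i}{2^m}-\frac{j}{2^m}+c]_+$ is a fixed constant for each pair $(i,j)$. So $\mathcal S_i^m$ is a polyhedron, and I would prove the claims in the order non-emptiness, convexity, compactness, finiteness of extreme points.

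\emph{Convexity and compactness.} Convexity is immediate, since $\mathcal S_i^m$ is an intersection of half-spaces and hyperplanes. For compactness, summing \eqref{eq:discrete-feasible-constraint-2} over $i$ gives $\sum_{i,j}p_{i,j}=F_i(1)-F_i(0)+F_i(0)=1$; the $i=0$ term is $F_i(0)-F_i(0)=0$ under the $[\cdot]_+$ convention, and atomlessness gives $F_i(0)=0$ anyway. Together with nonnegativity this places $\mathbf p$ in the probability simplex, so $\mathcal S_i^m$ is bounded. It is closed as a finite intersection of closed sets, hence compact by Heine--Borel.

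\emph{Non-emptiness.} This is the only step needing a construction, and I would mimic full revelation. Put all of value $\frac{i}{2^m}$'s mass on the grid index $\frac{i}{2^m}-c$, that is, on $j=i$; this index lies in $\Theta^m$. For row $j$ the only nonzero entry is then $p_{j,j}$, with coefficient $[\frac{j}{2^m}-\frac{j}{2^m}+c]_+=c$. Thus \eqref{eq:discrete-feasible-constraint-1} reads $c\,p_{j,j}=c\,p_{j,j}$, and \eqref{eq:discrete-feasible-constraint-2} holds by definition. The mild subtlety, which I regard as the main (small) obstacle, is that this choice must land exactly on the grid. It does so precisely because $\Theta^m$ was defined as $V^m$ shifted by $-c$, so that the full-revelation indices are grid points.

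\emph{Finiteness of the action space.} I would invoke the standard characterization of vertices of a polyhedron $\{\mathbf p\ge 0: A\mathbf p=b\}$: a point is extreme if and only if the columns of $A$ indexed by its support are linearly independent. Such a point is the unique solution of $A_B\mathbf p_B=b$ for its support set $B$, and there are finitely many support sets. Hence $\mathcal A_i^m$ has at most $2^{(2^m+1)^2}$ elements. It is also non-empty, by Krein--Milman or because a non-empty compact polyhedron has a vertex. That completes the lemma.
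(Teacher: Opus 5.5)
Your proposal is correct and follows the same route as the paper. Non-emptiness comes from full revelation, convexity from direct verification, compactness from boundedness plus closedness with Heine--Borel, and finiteness of the extreme points from the finitely many linear constraints. Your polyhedral framing makes two steps crisper: closedness as a finite intersection of closed sets replaces the paper's $\epsilon$-sequence argument, and the explicit count of basic feasible solutions replaces its one-line appeal to finitely many constraints. The only blemish is that the marginal constraints telescope to $F_i(1)-F_i(0)$ rather than the expression you wrote; this is harmless, since nonnegativity plus any fixed total mass already gives boundedness.
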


\begin{proof}[Proof of \Cref{lem:discrete-game-strategy-action-space}]
First, we prove that the space $\mathcal{S}_i^m$ is non-empty.
For any $m\in\mathbb{Z}^+$ and each sender $i$, the full-revelation strategy belongs to the space $\mathcal{S}_i^m$, which makes the space $\mathcal{S}_i^m$ non-empty.

Second, we prove the space $\mathcal{S}_i^m$ is convex.
For any $m\in\mathbb{Z}^+$, consider any pair of feasible strategies $G^m,\ H^m\in \mathcal{S}_i^m$ (with probability mass $g_{i,j}$ and $h_{i,j}$ on the point $(\frac{i}{2^m},\frac{j}{2^m}-c)$ respectively) and any $\lambda\in[0,1]$, 
we construct the convex combination $T^m$ where each mass $t_{i,j}=\lambda\cdot g_{i,j}+(1-\lambda)\cdot h_{i,j}$.
It is obvious that for each $j\in\{0\}\cup[2^m]$,
\begin{align*}
    &\sum_{i=0}^{2^m}t_{i,j}\cdot\max\left\{\frac{i}{2^m}-\frac{j}{2^m}+c,0\right\}\\
    =\,&\lambda\cdot \sum_{i=0}^{2^m}g_{i,j}\cdot\max\left\{\frac{i}{2^m}-\frac{j}{2^m}+c,0\right\}+(1-\lambda)\cdot\sum_{i=0}^{2^m}h_{i,j}\cdot\max\left\{\frac{i}{2^m}-\frac{j}{2^m}+c,0\right\}\\
    =\,&\lambda \cdot c\cdot \sum_{i=0}^{2^m}g_{i,j}+(1-\lambda)\cdot c\cdot \sum_{i=0}^{2^m}h_{i,j}\,=\,c\cdot \sum_{i=0}^{2^m}t_{i,j}~,
\end{align*}
and for each $i\in\{0\}\cup[2^m]$, 
\begin{align*}
    &\sum_{i=0}^{2^m}t_{i,j}
    =\lambda\cdot\sum_{j=0}^{2^m}g_{i,j}+(1-\lambda)\cdot\sum_{j=0}^{2^m}h_{i,j}=F\left(\frac{i}{2^m}\right)-F\left(\max\left\{0,\frac{i-1}{2^m}\right\}\right)~.
\end{align*}
So we know that the convex combination $T^m$ also belongs to the space $\mathcal{S}_i^m$, which makes the space $\mathcal{S}_i^m$ convex.

Third, we prove that the space $\mathcal{S}_i^m$ is compact.
Since the space $\mathcal{S}_i^m$ is indeed a measure space over $V^m\times \Theta^m$, the space $\mathcal{S}_i^m$ is bounded.
For any $m\in\mathbb{Z}^+$ and any sender $i$, we assume that sequence $\{G^k\}_{k\in\mathbb{Z}^+}$ converges to some discrete distribution $G$ where each $G^k\in \mathcal{S}_i^m$.
So the sequence $\{p_{i,j}^k\}_{k\in\mathbb{Z}^+}$ converges to $p_{i,j}$ where $p_{i,j}^k$ denotes the probability measure of strategy $G^k$ on the point $(\frac{i}{2^m},\frac{j}{2^m}-c)$ for any $i,j\in\{0\}\cup [2^m]$, and $p_{i,j}$ denotes the probability measure of strategy $G$ on the same point.
This implies that there exists $K>0$ such that for any $k>K$ and any $\epsilon>0$, we have $|p_{i,j}^k-p_{i,j}|<\epsilon$ for any $i,j\in\{0\}\cup [2^m]$.
If we assume that there exist some $m$ and $j\in\{0\}\cup[2^m]$ such that 
$
    \sum_{i=0}^{2^m}p_{i,j}\cdot\max\left\{\frac{i}{2^m}-\frac{j}{2^m}+c,0\right\}=d\cdot \sum_{i=0}^{2^m}p_{i,j}\neq c\cdot \sum_{i=0}^{2^m}p_{i,j}
$.
There exists $K_1>0$ such that for any $k>K_1$, we have
\begin{equation*}
    \Bigg|\sum_{i=0}^{2^m}p_{i,j}^k\cdot\max\left\{\frac{i}{2^m}-\frac{j}{2^m}+c,0\right\}-\sum_{i=0}^{2^m}p_{i,j}\cdot\max\left\{\frac{i}{2^m}-\frac{j}{2^m}+c,0\right\}\Bigg|<\frac{1}{3}\Bigg|\,d\cdot \sum_{i=0}^{2^m}p_{i,j}- c\cdot \sum_{i=0}^{2^m}p_{i,j}\,\Bigg|~.
\end{equation*}
There exists $K_2>0$ such that for any $k>K_2$, we have
\begin{equation*}
    \Bigg|\,c\cdot \sum_{i=0}^{2^m}p_{i,j}^k-c\cdot \sum_{i=0}^{2^m}p_{i,j}\,\Bigg|<\frac{1}{3}\Bigg|\,d\cdot \sum_{i=0}^{2^m}p_{i,j}- c\cdot \sum_{i=0}^{2^m}p_{i,j}\,\Bigg|~.
\end{equation*}
Thus, we know for any $k>\max\{K_1,K_1\}$, we have
\begin{equation*}
    \Bigg|\sum_{i=0}^{2^m}p_{i,j}^k\cdot\max\left\{\frac{i}{2^m}-\frac{j}{2^m}+c,0\right\}- c\cdot \sum_{i=0}^{2^m}p_{i,j}^k\,\Bigg|>\frac{1}{3}\Bigg|\,d\cdot \sum_{i=0}^{2^m}p_{i,j}- c\cdot \sum_{i=0}^{2^m}p_{i,j}\,\Bigg|~,
\end{equation*}
which forms a contradiction with the fact that $G^k\in \mathcal{S}_i^m$.
Furthermore, if we assume that there exist some $m$ and $j\in\{0\}\cup[2^m]$ such that 
$
    \sum_{j=0}^{2^m}p_{i,j}\neq F\left(\frac{i}{2^m}\right)-F\left(\max\left\{0,\frac{i-1}{2^m}\right\}\right)
$,
then we can achieve a contradiction in the same manner.
So we have proved that the space $\mathcal{S}_i^m$ is closed.
Then we can directly make space $\mathcal{S}_i^m$ compact through the Heine-Borel Theorem.

Last, we prove that space $\mathcal{A}_i^m$ is finite.
By \Cref{defn:m discrete game}, set $\mathcal{S}_i^m$ is formed by $O(2^m)$ linear constraints, 
which implies that there are a finite number of extreme points of set $\mathcal{S}_i^m$. 
Thus, the action space $\mathcal{A}_i^m$ is finite.
\end{proof}

\xhdr{Step 2: Showing the limit of equilibrium strategies is feasible}
By \Cref{lem:discrete-game-strategy-action-space}, each $m$-discrete approximation game is finite, there being a finite number of senders, and each sender $i$ having a finite action space.
Thus, Nash's Theorem applies, 
and there is an equilibrium $(\tilde{G}_1^m,...,\tilde{G}_N^m)$,
where for each sender $i$, $\tilde{G}_i^m\in \mathcal{S}_i^m$ is a mixed equilibrium strategy.
For each sender $i$, these equilibrium strategies form a sequence of 2-D distributions: $\{\tilde{G}_i^m\}_{m\in \mathbb{Z}^+}$.
We want to establish the convergence of this sequence, so we first need to introduce the following Helly's Selection Theorem.
\begin{lemma}[Helly's Selection Theorem]
\label{lem:helly}
Let $\{G^m\}_{m\in \mathbb{Z}^+}$ be a sequence of CDFs which is tight,\footnote{Take the one-dimensional case as an example, a sequence of functions $\{G^m\}_{m\in \mathbb{Z}^+}$ is tight, if and only if $\forall \epsilon>0$ there exists an interval $[a,b]$ such that for each $m\in\mathbb{Z}^+$ we have $G^m(b)-G^m(a)>1-\epsilon$.} 
then there exists a subsequence $\{m(k)\}_{k\in \mathbb{Z}^+}\subseteq \mathbb{Z}^+$ such that $\{G^{m(k)}\}_{k\in \mathbb{Z}^+}$ weakly converges to a certain CDF $G$.\footnote{Take the one-dimensional case as an example, sequence $\{G^m\}_{m\in\mathbb{Z}^+}$ weakly converges to CDF $G$ if and only if $\lim\limits_{m\rightarrow \infty}G^{m}(x)=G(x)$ for each point $x$ at which $G$ is continuous.}
\end{lemma}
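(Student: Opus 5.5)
The statement to prove is Helly's Selection Theorem (\Cref{lem:helly}) for tight sequences of CDFs, which in our application live on the compact rectangle $[0,1]\times[-c,1-c]$. I would prove it with the classical diagonal argument followed by a right-continuous regularization, and use tightness only at the end to rule out escape of mass.

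\textbf{Step 1: diagonal extraction.} Let $D$ be a countable dense set, say rational points in $\mathbb R^2$. For each fixed $x\in D$, the sequence $\{G^m(x)\}$ lies in $[0,1]$. Enumerate $D=\{x_1,x_2,\ldots\}$. By Bolzano--Weierstrass, extract nested subsequences: the first converging at $x_1$, a further one converging at $x_2$, and so on. Taking the diagonal gives a subsequence $\{m(k)\}$ along which $G^{m(k)}(x)\to g(x)$ for every $x\in D$. The limit $g$ inherits the CDF properties on $D$. It is coordinatewise nondecreasing, and it satisfies the rectangle inequality $\Delta_{(a,b]}g\ge 0$, because each $G^m$ assigns nonnegative mass to rectangles with corners in $D$.

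\textbf{Step 2: regularization.} Define $G(x)\triangleq\inf\{g(y):y\in D,\ y>x \text{ coordinatewise}\}$. This $G$ is right-continuous and nondecreasing, and the rectangle inequality passes to the limit, so $G$ defines a finite measure of total mass at most $1$. Next, convergence at continuity points. Let $x$ be a continuity point of $G$. Pick $y,y'\in D$ with $y'<x<y$ close to $x$. Monotonicity gives
\[
g(y')\le\liminf_k G^{m(k)}(x)\le\limsup_k G^{m(k)}(x)\le g(y).
\]
Both bounds are squeezed to $G(x)$ by continuity at $x$, once we check that $g(y')\ge G(x'')$ for some $x''$ near $x$ from below.

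\textbf{Step 3: tightness gives total mass one.} This is the step needing care, since without tightness the limit may be a defective distribution. Given $\epsilon>0$, choose a box $[a,b]$ with $G^m$-mass exceeding $1-\epsilon$ for all $m$. Enlarge it slightly so that its corners are continuity points of $G$; this is possible because discontinuity hyperplanes of $G$ are countable. Then the box mass under $G$ is the limit of the box masses under $G^{m(k)}$, hence at least $1-\epsilon$. Letting $\epsilon\to0$ shows $G$ is a proper CDF. In our setting the support is the compact rectangle, so tightness is automatic and this step is immediate.

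The main obstacle is the squeeze in Step 2 in two dimensions. The worry is that a point can be a continuity point of $G$ while $G$ jumps along a line through a nearby dense point. I would handle this by approaching $x$ only along the diagonal direction, with both coordinates strictly smaller or strictly larger. Continuity of $G$ at $x$ then controls both one-sided limits, and the monotone sandwich suffices. Alternatively, one can bypass CDFs and invoke Prokhorov's theorem (already used in \Cref{lem:convex-compact}), which yields a weakly convergent subsequence of the measures. Weak convergence of measures is equivalent to convergence of CDFs at continuity points.
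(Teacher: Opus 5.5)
Your proof is correct. It is the classical argument: diagonal extraction over $\mathbb{Q}^2$, right-continuous regularization, and tightness to prevent loss of mass.

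The lower bound you flagged in Step 2 is immediate. If $x''<y'$ coordinatewise, then $G(x'')\le g(y')$ by the definition of $G$ as an infimum. Hence $\liminf_k G^{m(k)}(x)\ge\sup_{x''<x}G(x'')=G(x)$ at a continuity point.

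The paper gives no proof of this lemma. It quotes Helly's theorem as a standard fact and applies it only to the discrete equilibrium strategies $\tilde G_i^m$, all supported on the compact rectangle $[0,1]\times[-c,1-c]$. In that setting your closing alternative is the most economical route. Prokhorov's theorem, already invoked in the proof of \Cref{lem:convex-compact}, gives a weakly convergent subsequence of measures. Weak convergence on $\mathbb{R}^2$ is equivalent to convergence of the CDFs at continuity points, so no regularization or mass-escape argument is needed.

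Your self-contained version buys the general non-compact statement. It needs two details, each worth a line if you write it out. First, the measure produced in Step 2 is defined through the rectangle increments $\Delta_{(a,b]}G$, and before Step 3 it need not have $G$ as its CDF. For example, point masses at $(-m,0)$ give the limit $\mathbf 1\{x_2\ge 0\}$, whose increments all vanish. It is the total-mass-one conclusion of Step 3 that forces $G$ to vanish at $-\infty$ in each coordinate and so identifies it as a proper CDF. Second, the countability of discontinuity lines, which you use to enlarge the box, follows from $|G(x)-G(y)|\le|G_1(x_1)-G_1(y_1)|+|G_2(x_2)-G_2(y_2)|$. Here $G_1(t)=\sup_s G(t,s)$ and $G_2(s)=\sup_t G(t,s)$, and the bound is a consequence of the rectangle inequality.
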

By Helly's Selection Theorem, there is a subsequence that weakly converges to a 2-D distribution; we further show that this limit is a feasible for each sender~$i$.

\begin{lemma}
\label{lem:converge-to-feasible-strategy}
There exists a subsequence $\{m(k)\}_{k\in\mathbb{Z}^+}\subseteq\mathbb{Z}^+$ such that the sequence $\{(\tilde{G}_i^{m(k)})\}_{k\in\mathbb{Z}^+}$ weakly converges to a certain 2-D distribution $\tilde{G}_i$.
Furthermore, $\tilde G_i$ is a feasible strategy for sender $i$ in the original game.
\end{lemma}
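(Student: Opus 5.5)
The proof has two parts: extract a weakly convergent subsequence, then show the limit satisfies both constraints of Definition~\ref{def:joint-distribution}.

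For the first part, each $\tilde G_i^m$ is a probability measure on the grid $V^m\times\Theta^m\subseteq[0,1]\times[-c,1-c]$, which is compact. The family $\{\tilde G_i^m\}_{m}$ is therefore automatically tight. By Helly's Selection Theorem (Lemma~\ref{lem:helly}), or equivalently Prokhorov's theorem as used in Lemma~\ref{lem:convex-compact}, we extract a weakly convergent subsequence for sender~$1$. We then refine it successively for senders $2,\dots,N$. Since $N$ is finite, this yields one subsequence $\{m(k)\}$ along which every $\tilde G_i^{m(k)}$ converges weakly to some $\tilde G_i\in\Delta([0,1]\times[-c,1-c])$.

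For the second part, I would mirror the closedness argument of Lemma~\ref{lem:convex-compact} and test against continuous functions.

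\textbf{Value marginal.} The value marginal of $\tilde G_i^m$ puts mass $F_i(\frac{j}{2^m})-F_i(\frac{j-1}{2^m})$ at the grid point $\frac{j}{2^m}$. This is the discretization of $F_i$ that pushes the mass of each dyadic cell to its right endpoint. Every point therefore moves by at most $2^{-m}$. For any bounded continuous $\phi$ on $[0,1]$, uniform continuity gives $\bigl|\int\phi(v)\dd\tilde G_i^m-\int\phi\dd F_i\bigr|\le\omega_\phi(2^{-m})\to 0$. Passing to the limit along $m(k)$, the marginal of $\tilde G_i$ is $F_i$. Because $F_i$ is atomless, no mass is lost at endpoints, so this identification is clean.

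\textbf{Index constraint.} For any bounded continuous $\psi$ on $[-c,1-c]$, constraint~\eqref{eq:discrete-feasible-constraint-1} says that each row of a feasible discrete strategy satisfies the index equation exactly. Summing rows weighted by $\psi$ gives $\int\psi(\theta)\,q(v,\theta)\dd\tilde G_i^m(v,\theta)=0$. Mixed equilibrium strategies are convex combinations of actions in $\mathcal A_i^m$. By convexity (Lemma~\ref{lem:discrete-game-strategy-action-space}) they also lie in $\mathcal S_i^m$, so the identity holds for $\tilde G_i^m$ as well.

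Since $\psi(\theta)q(v,\theta)$ is continuous and bounded on the compact rectangle, weak convergence gives $\int\psi\, q\dd\tilde G_i=0$ for every such $\psi$. By disintegration, this implies $\int q(v,\theta)\dd\tilde G_i(v\mid\theta)=0$ for $\tilde G_{i,\theta}$-almost every $\theta$. Hence $\tilde G_i\in\mathcal G(F_i,c)$.

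The step needing the most care is the last one, going from ``orthogonal to all continuous $\psi$'' to the row-wise constraint. I would argue it as follows. The map $\psi\mapsto\int\psi(\theta)h(\theta)\dd\tilde G_{i,\theta}$, with $h(\theta)\triangleq\int q\dd\tilde G_i(\cdot\mid\theta)$ bounded and measurable, vanishes on the continuous functions. Continuous functions are dense in $L^1(\tilde G_{i,\theta})$, so the map vanishes on all bounded measurable $\psi$. Taking $\psi=\mathrm{sign}(h)$ then forces $h=0$ almost everywhere.

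A minor point is making sure the mixed strategies really lie in $\mathcal S_i^m$ rather than only in the convex hull of the actions. This follows from Krein--Milman together with the convexity of $\mathcal S_i^m$.
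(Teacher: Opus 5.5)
Your argument is correct. The subsequence extraction is the same as in the paper: tightness on the compact rectangle, Helly's selection theorem, then successive refinement over the finitely many senders. The two proofs diverge in verifying that the limit is feasible.

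The paper argues pointwise. For the index constraint, it fixes a row $\hat\theta$ and derives a contradiction from a row integral written against a density $\tilde g(v,\hat\theta)$. For the marginal, it divides CDF increments over dyadic cells of width $2^{-k}$ by $2^{-k}$ and lets $k\to\infty$ to recover $f(\hat v)$.

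You instead test against bounded continuous functions, as in the closedness part of Lemma~\ref{lem:convex-compact}. You bound the gap between the discretized marginal and $F_i$ by $\omega_\phi(2^{-m})$. You then disintegrate, using density of continuous functions in $L^1(\tilde G_{i,\theta})$ to turn $\int\psi\,q\,\dd\tilde G_i=0$ for all continuous $\psi$ into the row-wise index equation for $\tilde G_{i,\theta}$-almost every $\theta$.

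Your route uses only what weak convergence actually provides and needs no densities for $\tilde G_i$ or its rows. It also yields exactly the ``almost every $\theta$'' form required by Definition~\ref{def:joint-distribution}. The paper's pointwise route aims at a density-level statement. However, weak convergence alone controls neither individual rows nor difference quotients at the grid scale, so those steps would need further justification. In addition, the paper's row integrand carries an absolute value that the index equation does not have.

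Two minor points:
\begin{itemize}
\item Convexity of $\mathcal S_i^m$ alone places any mixture of actions in $\mathcal S_i^m$, so Krein--Milman is not needed there.
\item For the marginal, atomlessness is used only through $F_i(0)=0$. That is what makes the grid masses in \eqref{eq:discrete-feasible-constraint-2} sum to one; interior atoms cause no loss under your right-endpoint coupling.
\end{itemize}
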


\begin{proof}[Proof of \Cref{lem:converge-to-feasible-strategy}]
For any $m\in\mathbb{Z}^+$, space $\mathcal{S}_i^m$ is a measure space over $[0,1]\times[-c,1-c]$, which makes space $\mathcal{S}_i^m$ tight.
By \Cref{lem:helly}, for each sender $i\in[N]$, there exists a subsequence $\{m_i(k)\}_{k\in\mathbb{Z}^+}\subseteq\mathbb{Z}^+$ such that $\{\tilde{G}_i^{m_i(k)}\}_{k\in\mathbb{Z}^+}$ weakly converges to a certain 2-d CDF $\tilde{G}_i$.
This means that, for any continuous point $(\hat{v},\hat{\theta})$ of distribution $\tilde{G}_i$, the sequence $\{\tilde{G}_i^{m_i(k)}(\hat{v},\hat{\theta})\}_{k\in\Zplus}$ converges to $\tilde{G}_i(\hat{v},\hat{\theta})$.
Next, it can be shown that, there exists a common subsequence $\{m(k)\}_{k\in \mathbb{Z}^+}$ such that for each sender $i$, the sequence $\{\tilde{G}^{m(k)}_{i}\}_{k\in \mathbb{Z}^+}$ weakly converges to a certain CDF $\tilde{G}_i$.  
To see this, we first find a subsequence $\{m_1(k)\}_{k\in \Zplus}$ of $\Zplus$ such that $\{\tilde{G}_1^{m_1(k)}\}_{k\in \Zplus}$ weakly converges to $\tilde{G}_1$. We then find a subsequence $\{m_2(k)\}_{k\in \Zplus}$ of $\{m_1(k)\}_{k\in \Zplus}$ such that $\{\tilde{G}_2^{m_2(k)}\}_{k\in \Zplus}$ weakly converges to $\tilde{G}_2$.  It is known that if a sequence converges, any subsequence of it also converges to the same limit. Hence, $\{\tilde{G}_1^{m_2(k)}\}_{k\in \Zplus}$ also weakly converges to $\tilde{G}_1$. In the same manner, we obtain a common subsequence $\{m(k)\}_{k\in \Zplus}$ of $\Zplus$ such that for each sender $i$, the sequence $\{\tilde{G}^{m(k)}_{i}\}_{k\in \Zplus}$ weakly converges to a certain CDF $\tilde{G}_i$. 

Then, we are going to show that the limit 2-d distribution $\tilde{G}_i$ is a feasible strategy of sender $i$ in the original game.
For convenience, for each sender $i$, we let $\{\tilde{G}^{k}\}_{k\in\Zplus}$ denote the sequence of discrete equilibrium strategies, and let $\tilde{G}$ denote the limit.
We divide this proof into two steps.

First, we prove that $\mathbb{E}_{v\sim \tilde{G}_{\cdot|\theta}}[\max\{v-\theta,0\}]=c$ for any $\theta\in[-c,1-c]$.
Let's consider any index $\theta\in(-c,1-c)$ since the cases of index $-c$ and index $1-c$ are trivial.
For any $\hat{\theta}\in\supp(\tilde{G}_\theta)$, 
we assume that 
\begin{equation*}
    \int_{v=0}^1 |c-\max\{v-\theta,0\}|\ \tilde{g}(v,\hat{\theta})\dd v=d>0~.
\end{equation*}
By the weak convergence of sequence $\{\tilde{G}^{k}(\hat{v},\hat{\theta})\}_{k\in\Zplus}$ to distribution $\tilde{G}$, there exists $K_1>0$ such that when $k>K_1$, we have
\begin{equation*}
    \Bigg|\,\int_{v=0}^1 |c-\max\{v-\theta,0\}|\ \tilde{g}(v,\hat{\theta})\dd v-\int_{v=0}^1 |c-\max\{v-\theta,0\}|\ \tilde{g}^k(v,\hat{\theta})\dd v\,\Bigg|<\frac{d}{2}~.
\end{equation*}
This implies that for any $k>K_1$, we have
\begin{equation*}
    \int_{v=0}^1 |c-\max\{v-\theta,0\}|\ \tilde{g}^k(v,\hat{\theta})\dd v\in\left(\frac{d}{2},\frac{3d}{2}\right)~,
\end{equation*}
which forms a contradiction with the fact $\int_{v=0}^1 |c-\max\{v-\theta,0\}|\ \tilde{g}^k(v,\hat{\theta})\dd v=0$, and makes this assumption invalid.

Second, we prove that $\int_{\theta=-c}^{1-c}\dd \tilde{G}(v,\theta)=f(v)$ for any $v\in[0,1]$.
Let's consider any value $v\in(0,1]$, since the case of value $0$ is trivial.
For any $\hat{v}=\alpha\cdot 2^{-\beta}$ for some $\beta\in\mathbb{Z}^+$ and some $\alpha\in[2^\beta]$, then for any $k>\beta$,
there exists $t\in\{0\}\cup[2^k]$ such that $\hat{v}=\frac{t}{2^k}$ and $G^k(\frac{t}{2^k},1-c)-G^k(\frac{t-1}{2^k},1-c)=F(\frac{t}{2^k})-F(\frac{t-1}{2^k})$.
Dividing both sides of the equation by $2^{-k}$, we get
\begin{equation*}
    \frac{G^k(\frac{t}{2^k},1-c)-G^k(\frac{t-1}{2^k},1-c)}{2^{-k}}=\frac{F(\frac{t}{2^k})-F(\frac{t-1}{2^k})}{2^{-k}}~.
\end{equation*}
When $k\rightarrow\infty$, we know that $\int_{\theta=-c}^{1-c}\dd G(\hat{v},\theta)=f(\hat{v})$.
For any $\hat{v}\neq\alpha\cdot 2^{-\beta}$ for any $\beta\in\mathbb{Z}^+$ and any $\alpha\in[2^\beta]$, then for any $k\in\mathbb{Z}^+$ there exist $t\in[2^m]$ such that $\hat{v}\in(\frac{t-1}{2^k},\frac{t}{2^k})$ and $G^k(\frac{t}{2^k},1-c)-G^k(\hat{v},1-c)=F(\frac{t}{2^k})-F(\frac{t-1}{2^k})$.
Dividing both sides of the equation by $2^{-k}$,
\begin{equation*}
    \frac{G^k(\frac{t}{2^k},1-c)-G^k(\hat{v},1-c)}{2^{-k}}=\frac{F(\frac{t}{2^k})-F(\frac{t-1}{2^k})}{2^{-k}}~.
\end{equation*}
When $k\rightarrow\infty$, we know that $\int_{-c}^{1-c}\dd G(\hat{v},\theta)=f(\hat{v})$.

Combining these two steps, we have proved that each sender $i$'s strategy in the limit, $\tilde{G}_i$, is indeed a feasible strategy for the original game.
\end{proof}

\xhdr{Step 3: Proving the limit strategy profile $(\tilde{G}_1,...,\tilde{G}_N)$ is an equilibrium in the original game} 
To this end, we need to show that each sender's utility converges in the subsequence along with the convergence of discrete equilibrium strategies.
An obstacle to this is that the senders' utilities may be discontinuous in their strategies if there are ties in $(\tilde{G}_1,...,\tilde{G}_N)$ with strictly positive probability.
To rule out this case, we first 
prove that, in the subsequence of approximation games, the probability of ties in any neighborhood diminishes.
We only need to show this for amortized values that actually affect the senders' utilities (above the smallest winning amortized value defined below).  

\begin{definition}[Smallest Winning Amortized Value]
\label{defn:smallest-winning-kappa}
For any profile of amortized value distributions $(K_1,...,K_N)$, the \emph{smallest winning amortized value} is
$\underline{\kappa}\triangleq\max_{i\in[N]}\inf\supp(K_{i})$.
\end{definition}

Each sender has zero utility for realizing an amortized value below the smallest winning amortized value $\underline{\kappa}$.
If sender~$i$'s strategy has a mass in $K_i$ below $\underline \kappa$, it does not cause discontinuity in anyone's utility.
Therefore we need only to focus on the part of the distribution at or above $\underline{\kappa}$.
{For the limit strategy profile $(\tilde{G}_1,...,\tilde{G}_N)$, denote each sender $i$'s distribution of amortized value as $\tilde{K}_i$, and the smallest winning amortized value as $\underline{\kappa}$.}
\begin{lemma}
\label{lem:no-tie-in-limit}
There exist no amortized value $\hat{\kappa}\in[\underline{\kappa},1-c]$ and two distinct senders $i,j\in[N]$ such that $\tilde{K}_i,\tilde{K}_j$ both assign positive probabilities at $\hat{\kappa}$.
\end{lemma}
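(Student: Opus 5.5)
We argue by contradiction. Suppose that in the limit profile $(\tilde G_1,\dots,\tilde G_N)$ there is an amortized value $\hat\kappa\in[\underline\kappa,1-c]$ and two senders $i\neq j$ with $\tilde K_i(\{\hat\kappa\})=a_i>0$ and $\tilde K_j(\{\hat\kappa\})=a_j>0$. The idea is to exhibit, for all large $k$ in the convergent subsequence of \Cref{lem:converge-to-feasible-strategy}, a profitable deviation for sender $i$ (or $j$) in the $m(k)$-discrete approximation game. The deviation is the classical ``undercut the atom'' move: sender $i$ breaks the tie at $\hat\kappa$ in his favor by shifting slightly upward the amortized values of his mass near $\hat\kappa$. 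This would contradict $\tilde G^{m(k)}$ being an equilibrium there.

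\textbf{Step 1: transfer the atoms to the discrete games.} First we transfer the atoms back to the discrete games. Weak convergence gives
\[
\liminf_k \tilde K_i^{m(k)}\bigl([\hat\kappa-\eta,\hat\kappa+\eta]\bigr)\ge a_i
\]
for every $\eta>0$, and the same holds for $j$. The tie at $\hat\kappa$ is then weighted by positive winning probability, since $\hat\kappa\ge\underline\kappa$. Concretely, the interim win probability at $\hat\kappa$ against the other senders is bounded below by some $w>0$; ties are split uniformly, so sender $i$ loses a constant fraction of $a_i a_j w$ that he could capture.

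\textbf{Step 2: realize the deviation.} The mass of $\tilde K_i^{m(k)}$ in the window comes from rows $(v,\theta)$ with $\min\{v,\theta\}$ near $\hat\kappa$. Rows with $\theta\le v$ (pooled rows, $\kappa=\theta$) can be moved up by one grid step $2^{-m(k)}$. We pull a tiny amount of mass from a higher value, or recombine rows, so that the index equation \eqref{eq:discrete-feasible-constraint-1} holds at the new row. We keep value marginals fixed using \eqref{eq:discrete-feasible-constraint-2}, and we use the fact that $F_i$ is atomless so that no value carries macroscopic mass.

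Rows with $v<\theta$ ($\kappa=v$) cannot move their $\kappa$ this way. However, atomlessness of $F_i$ forces the mass at $\kappa=v$ near $\hat\kappa$ to vanish as $\eta\to0$. Hence the atom must come, in the limit, from index mass at $\theta=\hat\kappa$, which is the movable part.

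\textbf{Step 3: bound the cost.} The deviation raises sender $i$'s amortized value on mass at least $a_i/2$ by $O(2^{-m(k)}+\eta)$. It therefore gains roughly $\tfrac12 a_i a_j w$ from winning former ties. It loses at most the opponents' mass in a vanishing window, plus the adjustment mass, which is $o(1)$. Choosing $\eta$ small and $k$ large yields a strict improvement, contradicting equilibrium of the discrete game. Since \Cref{lem:discrete-game-strategy-action-space} lets mixed strategies live in $\mathcal S_i^m$, the deviation can be taken in $\mathcal S_i^m$.

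\textbf{Main obstacle.} I expect the hardest part to be Step 2, namely making the upward shift feasible inside the rigid grid $V^m\times\Theta^m$ with exact row constraints. Raising a row's index by one grid step requires adding high-value mass, $v>\theta+c$, to that row, taken from some other row. That other row's amortized value then drops, and we must show this costs little. My first attempt would be to take donor mass from the full-revelation diagonal at high values, where $\kappa$ moves only slightly. If that fails, I would instead split the tied row into two rows at adjacent grid indices, using the affine-in-$\theta$ decomposition trick from the FPTAS rounding argument. This mean-preserving split puts most of the mass just above $\hat\kappa$ and only a small fraction below.
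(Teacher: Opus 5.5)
Your fallback in the last paragraph is exactly the paper's proof. Atomlessness forces the atom of $\tilde K_1$ at $\hat\theta=\hat\kappa$ onto a multi-value pooled row, and for large $m$ sender~1 spreads that row of $\tilde G_1^m$ to the rows $\hat\theta+2^{-m}$ and $\hat\theta-7\cdot 2^{-m}$ with weights $7/8$ and $1/8$. The upward part then collects nearly the whole jump $p_1$ of $\phi_1^m$ instead of the tie's $p_1/2$, while the downward part falls only $\epsilon_2$ below $\phi_1^m(\hat\theta^-)$, which contradicts discrete equilibrium.

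Make that split your Step~2 outright rather than the diagonal-donor idea, which need not be available: a discrete equilibrium need not fully reveal any value above $\hat\kappa+c$. Also, your inference $w>0$ from $\hat\kappa\ge\underline\kappa$ fails at $\hat\kappa=\underline\kappa$ when $N\ge 3$ and a third sender's support begins there without an atom; the paper's assumption $p_1>0$ has the same gap at that endpoint.
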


\begin{proof}[Proof of \Cref{lem:no-tie-in-limit}]
By the assumption of the prior, there exists no tie at index $1-c$ in any feasible strategy profile.
We assume that there exists an index $\hat{\theta}\in[\underline{\theta},1-c)$ such that sender 1 and 2's strategies, $\tilde{G}_1,\tilde{G}_2$,
simultaneously assign a positive mass at index $\hat{\theta}$.
Let $\phi_1(\hat{\theta})-\phi_1(\hat{\theta}^-)=p_1>0$ and $\phi_2(\hat{\theta})-\phi_2(\hat{\theta}^-)=p_2>0$.
We want to prove that discrete strategies, $\tilde{G}_1^m$ and $\tilde{G}_2^m$, also assign a big enough probability at index $\hat{\theta}$ for sufficiently large $m$, which violates the equilibrium conditions. 
By \Cref{lem:converge-to-feasible-strategy}, strategy $\tilde{G}_1$ must be a multi-value row at row $\hat{\theta}$, that is $\inf\supp(\tilde{G}_{1,\cdot|\hat{\theta}})<\sup\supp(\tilde{G}_{1,\cdot|\hat{\theta}})$. 
In the same manner, we have $\inf\supp(\tilde{G}_{2,\cdot|\hat{\theta}})<\sup\supp(\tilde{G}_{2,\cdot|\hat{\theta}})$.
First, based on the definition of $u$, for each sender $i\in[N]$, we define the discrete version of the compressed interim utility as
\begin{equation*}
    \phi_i^m(x)=\prod_{j\neq i}\left(\tilde{G}_j^m(1,x) + \tilde{G}_j^m(x,1-c)-\tilde{G}_j^m(x,x)\right)~,\quad\forall x\in[-c,1-c]~.
\end{equation*}
Based on whether the index $\hat{\theta}$ belongs to the discrete support of index $\Theta^m$ for some $m\in\Zplus$, we divide this part of proof into two cases.

\xhdr{Case 1: There exist $\beta\in\mathbb{Z}^+$ and $\alpha\in[2^\beta]$ such that $\hat{\theta}=\alpha\cdot 2^{-\beta}-c$}
We take sender $1$ as an example, and the proof of sender $2$ follows the same idea.
There exists $m_1>0$ such that when $m>m_1$, $(\hat{\theta}-8\cdot 2^{-m},\hat{\theta}+8\cdot 2^{-m})\subset(\underline{\theta},1-c)$ and spreading the row $\hat{\theta}$ of the discrete equilibrium strategy to any row $\theta'\in(\hat{\theta},\hat{\theta}+8\cdot 2^{-m})$ and any row $\theta''\in(\hat{\theta}-8\cdot 2^{-m},\hat{\theta})$ is possible, and will not cause any value of row $\theta$ exceed the line $v=\theta$.
We define $\epsilon_1=\frac{p_1}{100}$,
then there exists $m_2>0$ such that when $m>m_2$, it holds that $|\phi_1^m(\hat{\theta}^+)-\phi_1^m(\hat{\theta}^-)-p_1|<\epsilon_1$.
There exists $m_3>0$ such that when $m>m_3$, it holds that $\hat{\theta}-7\cdot 2^{-m}>\underline{\theta}$ and $\phi_1^m(\hat{\theta}^-)-\phi_1^m(\hat{\theta}-7\cdot 2^{-m})<\epsilon_2$ where $\epsilon_2=\frac{1}{100}p_1$.
There exists $m_4>0$ such that when $m>m_4$, it holds that 
$\hat{\theta}+2^{-m}<1-c$ and $\phi_1^m(\hat{\theta}+2^{-m})-\phi_1^m(\hat{\theta}^+)<\epsilon_3$ where $\epsilon_3=\frac{1}{100}p_1$.
Based on these inequalities, we achieve that $3\epsilon_1+\epsilon_2<3p_1$, which implies that
\begin{equation*}
    \frac{7}{8}(p_1-\epsilon_1+\epsilon_2+\epsilon_3)>\epsilon_2+\frac{p_1-\epsilon_1}{2}~,\quad
    \frac{7}{8}(p_1+\epsilon_1+\epsilon_2+\epsilon_3)>\epsilon_2+\frac{p_1+\epsilon_1}{2}~.
\end{equation*}
This implies that, for any $m>\max\{m_1,m_2,m_3,m_4\}$, spreading the row $\hat{\theta}$ of discrete equilibrium strategy $G_1^m$ to row $\hat{\theta}-7\cdot 2^{-m}$ and row $\hat{\theta}+2^{-m}$ can bring a strictly positive utility increase, which violates the equilibrium conditions.

\xhdr{Case 2: $\hat{\theta}\neq\alpha\cdot 2^{-\beta}-c$ for any $\beta\in\mathbb{Z}^+$ and any $\alpha\in[2^\beta]$}
We take sender $1$ as an example, and the proof of sender $2$ follows the same idea.
For any $m\in\mathbb{Z}^+$, there exists $k\in\{0\}\cup[2^m]$ such that $\hat{\theta}\in(\frac{k}{2^m},\frac{k+1}{2^m})$.
There exists $m_1>0$ such that when $m>m_1$, it holds that $(k\cdot 2^{-m}-8\cdot 2^{-m},(k+1)\cdot 2^{-m}+8\cdot 2^{-m})\subset(\underline{\theta},1-c)$,
and spreading the row $k\cdot 2^{-m}$ of the discrete equilibrium strategy to any row $\theta'\in((k+1)\cdot 2^{-m},(k+1)\cdot 2^{-m}+8\cdot 2^{-m})$ and any row $\theta''\in(k\cdot 2^{-m}-8\cdot 2^{-m},k\cdot 2^{-m})$ is possible, and will not cause any value of row $k\cdot 2^{-m}$ exceed the line $v=\theta$.
There exists $m_2>0$ such that when $m>m_2$, it holds that $|\phi_1^m(\left((k+1)\cdot 2^{-m}\right)^+)-\phi_1^m((k\cdot 2^{-m})^-)-p_1|<\epsilon_1$ where $\epsilon_1=\frac{p_1}{100}$.
There exists $m_3>0$ such that when $m>m_3$, it holds that $k\cdot 2^{-m}-5\cdot 2^{-m}>\underline{\theta}$ and $\phi_1^m((k\cdot 2^{-m})^-)-\phi_1^m(k\cdot 2^{-m}-5\cdot 2^{-m})<\epsilon_2$ where $\epsilon_2=\frac{p_1}{100}$.
There exists $m_4>0$ such that when $m>m_4$, it holds that 
$(k+1)\cdot 2^{-m}+2^{-m}<1-c$ and 
$\phi_1^m((k+1)\cdot 2^{-m}+2^{-m})-\phi_1^m(((k+1)\cdot 2^{-m})^+)<\epsilon_3$ where $\epsilon_3=\frac{p_1}{100}$.
Based on these inequalities, we have $\epsilon_1+3\epsilon_2<p_1$, which implies that
\begin{equation*}
    \frac{5}{8}(p_1-\epsilon_1+\epsilon_2+\epsilon_3)>\epsilon_2+\frac{p_1-\epsilon_1}{2}~,\quad\frac{5}{8}(p_1+\epsilon_1+\epsilon_2+\epsilon_3)>\epsilon_2+\frac{p_1+\epsilon_1}{2}~.
\end{equation*}
This implies that, for any $m>\max\{m_1,m_2,m_3,m_4\}$, spreading the row $k\cdot 2^{-m}$ of discrete equilibrium strategy $G_1^m$ to row $k\cdot 2^{-m}-5\cdot 2^{-m}$ and row $(k+1)\cdot 2^{-m}+2^{-m}$ can bring a strictly positive utility increase, which violates the equilibrium conditions.

In summary, given the assumption, the converging discrete strategy profile cannot form an equilibrium, which make this assumption invalid.
Thus, we have proved that there is no tie in the limit strategy profile $(\tilde{G}_1,...,\tilde{G}_N)$ at and above $\underline{\theta}$.
\end{proof}

With the obstacle of discontinuity cleared, we obtain the convergence of utility functions along with the convergence of discrete equilibrium strategies.
Here we comes the final step of the proof.

\begin{lemma}
\label{lem:limit-profile-is-equilibrium}
The limit profile $(\tilde{G}_1,...,\tilde{G}_N)$ is indeed an equilibrium in the original game.
\end{lemma}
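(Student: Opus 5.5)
We will prove that no sender $i$ has a profitable deviation from $\tilde G_i$ against $\tilde G_{-i}$. The argument has two parts: continuity of utilities along the convergent subsequence, and approximating any deviation by discrete deviations.

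\textbf{Step 1: Continuity of utilities.} Fix a sender $i$ and write $U_i(G_i,G_{-i})$ for his utility, namely the expectation of $\prod_{j\neq i}K_j(\kappa_i)$ (with the tie-breaking correction) over $\kappa_i\ge 0$. We first show $U_i(\tilde G_i^{m(k)},\tilde G_{-i}^{m(k)})\to U_i(\tilde G_i,\tilde G_{-i})$.

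Two facts drive this:
\begin{itemize}
\item The map $(v,\theta)\mapsto\min\{v,\theta\}$ is continuous. So weak convergence of $\tilde G_j^{m(k)}$ implies weak convergence of the amortized-value distributions $K_j^{m(k)}\to\tilde K_j$.
\item The product measure of independent draws converges weakly to $\bigotimes_j \tilde K_j$.
\end{itemize}
The winning indicator is discontinuous only on the tie set $\{\kappa_i=\kappa_j\ge\underline\kappa\}$ and on $\{\kappa_i=0\}$. By \Cref{lem:no-tie-in-limit}, no two limit distributions share an atom at or above $\underline\kappa$. Hence the tie set has limit-probability zero, since a tie between two distributions with positive probability requires a common atom.

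The boundary at $0$ needs separate care. Either no atom sits exactly at $0$, or it sits below $\underline\kappa$ and contributes nothing. Mass at or below $\underline\kappa$ yields zero utility, so it is harmless. With these in hand, the portmanteau theorem gives convergence of utilities.

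\textbf{Step 2: Discrete approximation of a deviation.} Suppose for contradiction that some $G_i'\in\mathcal G(F_i,c)$ satisfies $U_i(G_i',\tilde G_{-i})>U_i(\tilde G_i,\tilde G_{-i})+\eta$. We build $G_i'^{m}\in\mathcal S_i^m$ with $G_i'^m\to G_i'$ weakly, using the same discretize-and-recompute-index construction as in Step 1.1 of the proof of \Cref{thm:fptas}:
\begin{itemize}
\item move each row's value mass to the $2^{-m}$ grid;
\item recompute the row index, which moves it by at most $2^{-m}$;
\item round the row index onto $\Theta^m$ by the endpoint-splitting argument of Step 2.3 of that proof, which preserves feasibility.
\end{itemize}
Every mass moves by $O(2^{-m})$, so weak convergence follows.

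Before discretizing, we perturb $G_i'$ slightly. We spread any atom of its amortized value that coincides with an atom of some $\tilde K_j$ onto nearby rows. By \Cref{lem:no-tie-in-limit}-type reasoning this changes utility by at most $\eta/3$ while avoiding ties, so Step 1's continuity argument applies to $(G_i'^m,\tilde G^{m}_{-i})$.

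\textbf{Step 3: Conclusion.} Because $\tilde G^{m(k)}$ is a mixed equilibrium of the finite game, and extreme points span $\mathcal S_i^m$, we have $U_i(G_i'^{m(k)},\tilde G_{-i}^{m(k)})\le U_i(\tilde G_i^{m(k)},\tilde G_{-i}^{m(k)})$. Passing to the limit yields $U_i(G_i',\tilde G_{-i})-\eta/3\le U_i(\tilde G_i,\tilde G_{-i})$, which contradicts the assumed gain of $\eta$.

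The main obstacle is Step 2. We must ensure that the deviation's amortized values avoid the atoms of the opponents' limit distributions, so that utility is continuous along the approximating sequence. We must also handle the discontinuity at $0$ for the deviator. We will first try the row-spreading perturbation, which is continuous in utility because it leaves the opponents' distributions unchanged and the deviator only gains by tie-breaking away from atoms.
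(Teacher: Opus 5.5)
Your architecture matches the paper's. You get payoff continuity at tie-free limits from \Cref{lem:no-tie-in-limit}, approximate an arbitrary deviation by feasible discrete strategies, and pass the finite-game equilibrium inequality to the limit. Your discretization (shift values onto $V^m$, recompute each row's index, endpoint-split onto $\Theta^m$) is sounder than the paper's. The paper's floored CDF $G_i^m(v,\theta)=G_i(\max\{t\in V^m:t\le v\},\max\{t\in\Theta^m:t\le\theta\})$ does not preserve the row-wise index constraint.

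One caveat when porting Step~2.3. $\Theta^m=V^m-c$ contains the zero crossings $v-c$, but it contains the kinks $v\in V^m$ only if $2^mc\in\mathbb Z$, so the affine interpolation does not apply verbatim. The split still exists: for $i\in P$ and $j\in N$ the ratio $q_i(\theta)/(-q_j(\theta))$ is monotone in $\theta$, so each interior package lies in the cone spanned by the two endpoint packages.

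The genuine gap is the one you call the main obstacle and leave open: the jump of the deviator's payoff at $\kappa_i=0$.
\begin{itemize}
\item \emph{Why there is a jump.} If every rival puts positive mass below $0$ (as in the high-cost convex case), the deviator's interim payoff $u_i(\kappa)$ jumps from $0$ to $u_i(0)>0$ at $\kappa=0$. Nothing prevents a profitable deviation $G_i'$ from carrying an atom on the row $\theta=0$.
\item \emph{Why Step~3 fails.} For non-dyadic $c$ we have $0\notin\Theta^m$. After your value shift that row has index some $\theta_m'\in[0,2^{-m}]$. When $\theta_m'$ lies below the smallest positive grid point $\beta_m$, a fraction of order $(\beta_m-\theta_m')/(\beta_m-\alpha_m)$ of the row goes to $\alpha_m=\beta_m-2^{-m}<0$ and earns nothing. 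That fraction is driven by the fractional part of $2^mc$ and need not vanish. So $\liminf_m U_i(G_i'^m,\tilde G_{-i}^m)$ can be strictly below $U_i(G_i',\tilde G_{-i})$.
\item \emph{Why your perturbation does not fix it.} Your row-spreading perturbation is designed for ties. A symmetric split puts about half the atom at a negative index, losing roughly $\tfrac12 u_i(0)$ times its mass.
\end{itemize}
What works is an \emph{asymmetric} split of the zero row into rows at $-\epsilon_1$ and $\epsilon_2$ with $\epsilon_2/\epsilon_1\to0$. First set aside the values in $(c-\epsilon_1,c+\epsilon_2)$, which have small $F$-mass and whose $q$ changes sign, and choose $\epsilon_2$ off the rivals' countably many atoms. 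Then only an $O(\epsilon_2/\epsilon_1)$ fraction goes below $0$. Rows with index in $[0,2^{1-m})$ now carry vanishing mass, and for $2^{-m}\ll\epsilon_2$ the discretization keeps the remaining nonnegative rows nonnegative.

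Relatedly, the dichotomy in your Step~1 misses one case: an atom of $\tilde K_i$ at $0$ with $\underline\kappa\le0$. There two-sided convergence can fail. On the equilibrium side, however, you only need $\limsup_k U_i(\tilde G_i^{m(k)},\tilde G_{-i}^{m(k)})\le U_i(\tilde G_i,\tilde G_{-i})$. This holds because the winning indicator is upper semicontinuous outside the tie set, which has limit measure zero. The paper's own proof ignores the zero boundary altogether, so on this point you must go beyond it.
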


\begin{proof}[Proof of \Cref{lem:limit-profile-is-equilibrium}]
Consider each sender $i\in[N]$ and any feasible strategy $G_i\in \mathcal{G}_i(F_i,c)$, 
we want to show that strategy $\tilde{G}_i$ achieves a higher expected utility than strategy $G_i$, given others strategies $\tilde{G}_{-i}$.
We divide this part of the proof into two cases.

\xhdr{Case 1: There exist ties over $[\underline{\theta},1-c]$ in strategy profile $(G_i,\tilde{G}_{-i})$}
We assume that there exists a tie at index $\hat{\theta}$ in strategy profile $(G_i,\tilde{G}_{-i})$.
By the assumption of the prior, strategy $G_i$ must be a multi-value row at row $\hat{\theta}$, that is $\inf\supp(G_{i,\cdot|\hat{\theta}})<\sup\supp(G_{i,\cdot|\hat{\theta}})$, which implies that strategy $G_i$ can spread the probability of index $\hat{\theta}$ to index $\hat{\theta}+\epsilon$ and index $\hat{\theta}-\epsilon$ for sufficiently small $\epsilon>0$.
By the fact that $\phi_i(\hat{\theta}^-)<\phi_i(\hat{\theta})<\phi_i(\hat{\theta}^+)$, we know spreading the probability of strategy 
$G_i$ at index $\hat{\theta}$ to index $\hat{\theta}+\epsilon$ and index $\hat{\theta}-\epsilon$ achieves a utility increase for sufficiently small $\epsilon>0$.
Therefore, we know that, given others' strategies, the best response strategy of sender $i$ will not form a tie with the strategies of others within the interval $[\underline{\theta},1-c]$. 
Thus, in Case 2, we only need to prove that for those strategies that do not form a tie with the others' strategies within $[\underline{\theta},1-c]$, strategy $\tilde{G}_i$ will achieve a higher expected utility. 
This implies that strategy $\tilde{G}_i$ is the best response to others' strategies.

\xhdr{Case 2: There is no tie over $[\underline{\theta},1-c]$ in strategy profile $(G_i,\tilde{G}_{-i})$} 
Let $\tilde{u}_i$ denote sender $i$'s interim utility function given others' strategies $\tilde{G}_{-i}$.
Let $\tilde{u}_i^m$ denote sender $i$'s interim utility function given others' strategies $\tilde{G}_{-i}^m$.
By \Cref{lem:no-tie-in-limit}, there is no tie over $[\underline{\theta},1-c]$ in the limit strategy profile 
$(\tilde{G}_1, \dots, \tilde{G}_N)$. 
By the definition of weak convergence, 
we have that
\begin{equation}
\label{eq:proof-of-existence-1}
    \lim\limits_{m\to \infty}\int_{[0,1]\times[-c,1-c]}\tilde{u}_i^m(v,\theta)\ \tilde{g}^m_i(v,\theta)\,\dd v\,\dd \theta=\int_{[0,1]\times[-c,1-c]}\tilde{u}_i(v,\theta)\ \tilde{g}_i(v,\theta)\,\dd v\,\dd \theta~.
\end{equation}
For each sender $i\in[N]$ and any feasible strategy $G_i\in \mathcal{G}_i(F_i,c)$ that do not form a tie with others' strategies within $[\underline{\theta},1-c]$, we construct a 2-dimensional CDFs sequence $\{G_i^m\}_{m\in \mathbb{Z}^+}$ where each $G_i^m \in \mathcal{S}_i^m$, and the sequence weakly converges to strategy $G_i$. 
Specifically, for $\forall m\in \mathbb{Z}^+$, we define $G_i^m$ as below.
\begin{equation*}
    G_i^m(v,\theta) =
    \begin{cases}
    G_i(v,\theta)~, & \text{if } (v,\theta) \in V^m\times \Theta^m~,\\
    G_i(\max\{t:t\in V^m,\ t\le v\},\max\{t:t\in \Theta^m,\ t\le \theta\})~, & \text{otherwise}~.\\
    \end{cases}
\end{equation*}

First, we show that the sequence $\{G_i^m\}_{m\in \mathbb{Z}^+}$ weakly converges to strategy $G_i$.
We consider any pair of $(v,\theta)\in[0,1]\times[-c,1-c]$.
If there exist $\alpha_1\in\Zplus$ and $\beta_1\in[2^{-\alpha_1}]$ such that $v=\beta_1\cdot 2^{-\alpha_1}$ and $\alpha_2\in\Zplus$ and $\beta_2\in[2^{-\alpha_2}]$ such that $\theta=\beta_2\cdot 2^{-\alpha_2}$, then when $m>\max\{\alpha_1,\alpha_2\}$, we have $G_i^m(v,\theta)=G_i(v,\theta)$ which implies that sequence $\{G_i^m(v,\theta)\}_{m\in \mathbb{Z}^+}$ converges to $G_i(v,\theta)$. 
If $v\neq \beta_1\cdot 2^{-\alpha_1}$ for any $\alpha_1\in\Zplus,\ \beta_1\in[2^{-\alpha_1}]$ or $\theta\neq \beta_2\cdot 2^{-\alpha_2}$ for any $\alpha_2\in\Zplus,\ \beta_2\in[2^{-\alpha_2}]$, then for any $m\in\Zplus$, we have $G_i^m(v,\theta)=G_i(\max\{t:t\in V^m,\ t\le v\},\max\{t:t\in \Theta^m,\ t\le \theta\})$.
When $m$ goes to infinity, sequence $\{\max\{t:t\in V^m,\ t\le v\}\}_{m\in\Zplus}$ converges to $v$, and sequence $\{\max\{t:t\in \Theta^m,\ t\le \theta\}\}_{m\in\Zplus}$ converges to $\theta$.
So we have sequence $\{G_i^m(v,\theta)\}_{m\in \mathbb{Z}^+}$ also converges to $G_i(v,\theta)$.
By the fact that sequence $\{G_i^m\}_{m\in \mathbb{Z}^+}$ weakly converges to strategy $G_i$, using a similar method of \Cref{lem:converge-to-feasible-strategy}, we can also prove that there exists $m_1>0$ such that when $m<m_1$, $G_i^m\in\mathcal{S}_i^m$.

By \Cref{eq:proof-of-existence-1}, we have
\begin{align*}
    &\int_{[0,1]\times[-c,1-c]}\tilde{u}_i(v,\theta)\ g_i(v,\theta)\,\dd v\,\dd \theta-\int_{[0,1]\times[-c,1-c]}\tilde{u}_i(v,\theta)\ \tilde{g}_i(v,\theta)\,\dd v\,\dd \theta\\
    =&\lim\limits_{m\rightarrow\infty}\left(\int_{[0,1]\times[-c,1-c]}\tilde{u}_i(v,\theta)\ g_i(v,\theta)\,\dd v\,\dd \theta-\int_{[0,1]\times[-c,1-c]}\tilde{u}_i^m(v,\theta)\ g^m_i(v,\theta)\,\dd v\,\dd \theta\right)\\
    +&\lim\limits_{m\rightarrow\infty}\left(\int_{[0,1]\times[-c,1-c]}\tilde{u}_i^m(v,\theta)\ g^m_i(v,\theta)\,\dd v\,\dd \theta-\int_{[0,1]\times[-c,1-c]}\tilde{u}_i^m(v,\theta)\ \tilde{g}^m_i(v,\theta)\,\dd v\,\dd \theta\right)\\
    =&\lim\limits_{m\rightarrow\infty}\left(\int_{[0,1]\times[-c,1-c]}G_i^m(v,\theta)\dd \tilde{u}_i^m(v,\theta)-\int_{[0,1]\times[-c,1-c]}G_i(v,\theta)\dd \tilde{u}_i(v,\theta)\right)\\
    +&\lim\limits_{m\rightarrow\infty}\left(\int_{[0,1]\times[-c,1-c]}\tilde{u}_i^m(v,\theta)\ g^m_i(v,\theta)\,\dd v\,\dd \theta-\int_{[0,1]\times[-c,1-c]}\tilde{u}_i^m(v,\theta)\ \tilde{g}^m_i(v,\theta)\,\dd v\,\dd \theta\right)~.
\end{align*}
Combining the facts that sequence $\{(\tilde{G}_1^m,...,\tilde{G}_N^m)\}_{m\in\Zplus}$ weakly converges to $(\tilde{G}_1,...,\tilde{G}_N)$, sequence $\{G_i^m\}_{m\in \mathbb{Z}^+}$ weakly converges to $G_i$, and there is no tie over $[\underline{\theta},1-c]$ in strategy profile $(G_i,\tilde{G}_{-i})$, we have that
\begin{equation}
\label{eq:proof-of-existence-2}
\lim\limits_{m\rightarrow\infty}\left(\int_{[0,1]\times[-c,1-c]}G_i(v,\theta)\dd u_i(v,\theta)-\int_{[0,1]\times[-c,1-c]}G_i^m(v,\theta)\dd u_i^m(v,\theta)\right)= 0~.
\end{equation}
In addition to the fact $(\tilde{G}_1^m,\dots,\tilde{G}_N^m)$ is an equilibrium in the $m$-th discrete approximation game, we have
\begin{equation}
\label{eq:proof-of-existence-3}
    \lim\limits_{m\rightarrow\infty}\left(\int_{[0,1]\times[-c,1-c]}u_i^m(v,\theta)\ g^m_i(v,\theta)\,\dd v\,\dd \theta-\int_{[0,1]\times[-c,1-c]}u_i^m(v,\theta)\ \tilde{g}^m_i(v,\theta)\,\dd v\,\dd \theta\right)\le0~.
\end{equation}
Combining Inequalities (\ref{eq:proof-of-existence-2}) and (\ref{eq:proof-of-existence-3}), we have 
\begin{equation*}
    \int_{[0,1]\times[-c,1-c]}u_i(v,\theta)\ g_i(v,\theta)\,\dd v\,\dd \theta-\int_{[0,1]\times[-c,1-c]}u_i(v,\theta)\ \tilde{g}_i(v,\theta)\,\dd v\,\dd \theta\le 0~,
\end{equation*}
which shows that for each sender $i$, strategy $\tilde{G}_i$ is a best response to others' strategies $\tilde{G}_{-i}$, and $(\tilde{G}_1,\dots,\tilde{G}_N)$ is indeed an equilibrium in our game.
\end{proof}

Putting all the pieces together, finally we can prove \Cref{thm:existence}.

\begin{proof}[Proof of \Cref{thm:existence}]
\Cref{thm:existence} holds directly by combining \Cref{lem:discrete-game-strategy-action-space,lem:helly,lem:converge-to-feasible-strategy,lem:no-tie-in-limit,lem:limit-profile-is-equilibrium}.
\end{proof}

\section{Symmetric Equilibrium under Concave Priors}
\label{apx:sym-equilibrium}

Here we show that the characterization of symmetric equilibrium of concave priors is much more involved than that of convex priors.
We consider a two-sender game where $\supp(F)=[c,1]$, and the prior $F$ is concave over its support.
Let $S=\int_0^1F(x)\dd x$, and we know $S\ge \frac{1}{2}$ since the concavity of the prior.

Different values of the cost $c$ lead to distinct symmetric equilibrium structures, making it challenging to unify all cases under a single amortized value distribution or a generalized 2-D distribution construction. 
While a complete equilibrium characterization remains an open problem, we present partial results below, specifically detailing the equilibrium structure for cost regimes within certain intervals.

When the cost $c$ is sufficiently large, we make the following construction of a simplest kind of symmetric equilibrium.

\begin{theorem}
\label{thm:symmetric-equilibrium-concave-1}
When the cost $c\ge \frac{2}{3}(1-S)$, there exists a symmetric equilibrium with the amortized value distribution $K(\cdot)$ defined as below:
\begin{equation*}
    K(\theta)=
    \begin{cases}
        F(\theta+c)\ &\emph{if}\ \theta\in[-c,0]~,\\
        \min \left\{\frac{1}{\overline{\theta}}\theta,1 \right\}\ &\emph{if}\ \theta\in(0,1-c]~,
    \end{cases}
\end{equation*}
where $\overline{\theta}$ is the unique solution to the equation $\int_{0}^{1-c}\min \left\{\frac{1}{\overline{\theta}}\theta,1 \right\}\dd \theta=\int_0^{1-c}F(\theta+c)\dd\theta$.
\end{theorem}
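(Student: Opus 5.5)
I will follow the same identify--construct--verify route used for \Cref{thm:symmetric-equilibrium-convex-prior-big-c}.

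\textbf{Identify.} Since $\supp(F)=[c,1]$, every value induces a nonnegative full-revelation index, so the lower branch $K(\theta)=F(\theta+c)$ on $[-c,0]$ is identically zero. The candidate $K$ is therefore simply uniform on $[0,\overline\theta]$. First I will show that $\overline\theta$ exists and is unique. The map $t\mapsto\int_0^{1-c}\min\{\theta/t,1\}\dd\theta$ is continuous and strictly increasing in $t$ on $(0,1-c]$, so it suffices to compare the endpoint values. Then I will check that $K$ is a mean-preserving contraction of $\tilde F(\theta)=F(\theta+c)$. Because $\tilde F$ is concave on $[0,1-c]$ and $K$ is linear then flat, the difference $\int_0^t(\tilde F-K)$ is zero at $t=0$, unimodal in between, and zero at $t=1-c$. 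The condition $c\ge\frac23(1-S)$ is where I expect to obtain $\overline\theta\le 1-c$, i.e.\ that the uniform distribution fits; $S\ge\frac12$ may also be needed here. The interim utility is $u(\kappa)=K(\kappa)=\min\{\kappa/\overline\theta,1\}$ for $\kappa\ge0$. This $u$ is Lipschitz with $L=1/\overline\theta$.

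\textbf{Construct.} I will build a feasible $G^*$ whose amortized values are uniform on $[0,\overline\theta]$. Since $\tilde F$ is concave, its density $\tilde f$ is decreasing, so high values are scarce relative to the target density $1/\overline\theta$. I will pair low values $\alpha(\theta)\le\theta+c$ with high values $\beta(\theta)\ge\theta+c$ on each row $\theta$. The pairing will be governed by ODEs analogous to case-1.1 and case-1.2 in the proof of \Cref{thm:symmetric-equilibrium-convex-prior-big-c}, now with $\rho=1/\overline\theta$.
\begin{itemize}
\item The ODEs impose the row index equation together with a row density such that the density of $\min\{v,\theta\}$ is $\rho$.
\item The boundary conditions must exhaust $[c,1]$, and they follow from the mean-preserving contraction equality at $1-c$.
\item Monotonicity of $\alpha$ and $\beta$ and termination would be proved by the same contradiction-via-mass-balance argument used there.
\end{itemize}
Rows whose low value lies below $\theta$ contribute amortized value $\alpha(\theta)$ rather than $\theta$. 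I will track this as in case-1.2, and I expect that regime to require the cost assumption again.

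\textbf{Verify.} I will take $\lambda(v)=u(v-c)$ wherever $v-c$ lies in the increasing part, and the linear continuation elsewhere. I will take $\mu\equiv-1/\overline\theta$ on $[0,\overline\theta]$ and $\mu=-u'$ otherwise, matching Case 3.2 of \Cref{alg:lambda}. Then I check two things:
\begin{itemize}
\item complementary slackness on $\supp(G^*)$;
\item dual feasibility $\lambda(v)\ge p(v,\theta)-\mu(\theta)q(v,\theta)$ for all pairs, which reduces to the concavity of $u$ on $[0,1-c]$ together with $\lambda$ being an upper envelope of lines of slope $1/\overline\theta$.
\end{itemize}
With these, \Cref{thm:primal-dual-optimal} certifies that $G^*$ is a best response to $u_{G^*}$, hence a symmetric equilibrium.

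\textbf{Main obstacle.} I expect the main difficulty to be the construction step: showing that the coupled ODE system stays feasible, meaning $\alpha\le\theta+c\le\beta$ with nonnegative row densities, all the way to exhaustion. That is exactly the role of the inequality $c\ge\frac23(1-S)$. My first attempt will be a direct mass and first-moment accounting at the terminal row, to see that the inequality is equivalent to the needed ordering.
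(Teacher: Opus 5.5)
Your route (identify, construct by row-wise pairing, certify by duality) is the one the paper invokes, but the certificate you propose is infeasible. Your $\lambda(v)=(v-c)/\overline\theta$ on $[c,1]$ has $\lambda(c)=u(0)=0$. On $(\overline\theta,1-c]$ you set $\mu=-u'=0$; this interval is nonempty unless $F$ is uniform on $[c,1]$. For every $\theta$ in that interval, $\min\{c,\theta\}\ge\overline\theta$ (using $\overline\theta\le c$, see below). Hence $p(c,\theta)=1>0=\lambda(c)+\mu(\theta)q(c,\theta)$, which violates \eqref{eq:dual-con1}. The rule $\mu=-u'$ comes from support sequences (Case 3.2 of \Cref{alg:lambda}) and does not apply to indices carrying no mass, where $\mu$ is restricted only by feasibility.

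A correct certificate is $\lambda(v)=(v-c)_+/\overline\theta$, with $\mu\equiv-1/\overline\theta$ on $[0,1-c]$ and $\mu=0$ on $[-c,0)$. Then $\lambda(v)+\mu(\theta)q(v,\theta)\ge\min\{v,\theta\}/\overline\theta\ge u(\min\{v,\theta\})$ for $\theta\ge0$, and the dual value $(\mathbb{E}[v]-c)/\overline\theta=\frac12$ equals $U(G^*)$. In fact no algorithm is needed. Every feasible row has $\theta\ge0$, since a negative index would force $\mathbb{E}[(v-\theta)_+\mid\theta]\ge c-\theta>c$. Also $\mathbb{E}[\min\{v,\theta\}\mid\theta]=\mathbb{E}[v\mid\theta]-c$ and $u(\kappa)\le\kappa/\overline\theta$ for $\kappa\ge0$. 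So $U(G)\le(\mathbb{E}[v]-c)/\overline\theta=\frac12$ for every feasible $G$.

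You have also misplaced the hypothesis. Since $F$ vanishes below $c$, the defining equation reads $1-c-\overline\theta/2=S$, so $\overline\theta=2(1-c-S)=2(\mathbb{E}[v]-c)$. (The map in $t$ is decreasing, not increasing.) The bound $\overline\theta\le 1-c$ is automatic from concavity of $F$ on $[c,1]$, which gives $\mathbb{E}[v]\le(1+c)/2$. The hypothesis $c\ge\frac23(1-S)$ is exactly $\overline\theta\le c$, and that is its whole role. Every value satisfies $v\ge c\ge\overline\theta\ge\theta$ on every row in $[0,\overline\theta]$, so the index equation is linear, $\theta=\mathbb{E}[v\mid\theta]-c$, and $\kappa=\theta$. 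Consequently the case-1.2 regime you expect to need the assumption never occurs.

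The construction you flag as the main obstacle then reduces to realizing a posterior-mean law uniform on $[c,c+\overline\theta]$. Your MPC check shows this law is a mean-preserving contraction of $F$, so Strassen's theorem realizes it at once; the paper does it explicitly in \Cref{fig:concave-1}. If you build it by ODEs, reverse the convex orientation. An index-$0$ row can contain only the value $c$, and since $\tilde f$ is decreasing, it is the low values in $[c,\theta^*+c]$ that are partially revealed, with density $\rho$ on rows $[0,\theta^*]$. Their residual mass is paired with the high values on rows $[\theta^*,\overline\theta]$, running from $\alpha=\beta=\theta^*+c$ out to the pair $(c,1)$ on the top row.
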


\noindent\begin{proof}[Proof of \Cref{thm:symmetric-equilibrium-concave-1}]
    This theorem follows a similar proof of \Cref{thm:symmetric-equilibrium-convex-prior-big-c}.
    The construction method can be seen in \Cref{fig:concave-1}.
\end{proof}

\begin{figure}[h]
    \centering
    \begin{minipage}[t]{0.48\textwidth}
        \centering
        \begin{tikzpicture}[scale=2.0]
            \draw[->] (-1.2,0) -- (1.5,0) node[right] {\footnotesize $\theta$};
            \draw[-] (0,0) -- (0,2.2) node[left] {};
            \draw[->] (0,0) -- (0,2.3) node[left] {};

            \draw[rounded corners=3pt, fill=white, draw=black] (-1.3,1.3) rectangle (-0.3,1.8);
        
            \draw[red, thick] (-1.25,1.65) -- (-1.0,1.65);
            \draw (-1.05,1.65) node[anchor=west, font=\footnotesize] {$K(\theta)$};
        
            \draw[black, thick] (-1.25,1.45) -- (-1.0,1.45);
            \draw (-1.05,1.45) node[anchor=west, font=\footnotesize] {$F(\theta+c)$};
            
            \draw[thick, domain=-0.8:0, samples=50] plot (\x, {0});
            \draw[thick, domain=0:1.2, samples=50] plot (\x, {-1.3889*(\x)^2+3.333*\x});

            \draw[red, thick, domain=-0.8:0, samples=50] plot (\x, {0});
            \draw[red, thick, domain=0:1.2, samples=100] plot (\x,{min(2.5*\x,2)});

            \draw (-0.8,0) node[below] {\footnotesize $-c$};
            \draw[dashed] (0.8,2) -- (0.8,0) node[below] {\footnotesize $\overline{\theta}$};
            \draw[dashed] (0.43,1.15) -- (0.43,0) node[below] {\footnotesize $\theta^*$};
            \draw[dashed] (1.2,2) -- (1.2,0);
            \draw (1.25,0) node[below] {\footnotesize $1-c$};
            \draw[dashed] (1.2,2) -- (0,2) node[left] {\footnotesize $1$};
        \end{tikzpicture}
    \end{minipage}
    \hspace{0.02\textwidth}
    \begin{minipage}[t]{0.48\textwidth}
        \centering
        \begin{tikzpicture}[scale=2.0]
            \fill[green!10] (0.8,0) rectangle (1.99,1.19); 
            \draw[dashed] (0,0) -- (2,0);
            \draw (-0.05,0) node[left] {\footnotesize $0$};
            \draw (0,-0.8) node[below] {\footnotesize $0$};
            \draw (0,1.2) node[left] {\footnotesize $1-c$};
            \draw (2,-0.8) node[below] {\footnotesize $1$};
            \draw (0,-0.8) -- (0,1.2) node[above] {\footnotesize index};
            \draw (2,-0.8) -- (2,1.2);
            \draw (0,-0.8) -- (2,-0.8) node[right] {\footnotesize value};
            \draw (2,1.2) -- (0,1.2);
            
            \draw[dashed] (0,0) -- (1.2,1.2) node[above] {\footnotesize $\theta=v$};
            \draw[dashed] (0,-0.8) -- (2,1.2) node[above] {\footnotesize $\;\; \theta=v-c$};
            
            \draw[red, thick] (1.23,0.43) -- (0.8,0);
            \draw[red, thick, domain=0.8:1.23, samples=100] plot (\x, {(\x)^2-2.89*\x+2.472});
            \draw[red, thick, domain=1.23:2, samples=100] plot (\x, {0.5*(\x)^2-1.1345*\x+1.06895});
            
            \draw[dashed] (0,-0.8) -- (0,-0.8) node[left] {\footnotesize $-c$};
            \draw[dashed] (2,0.8) -- (0,0.8) node[left] {\footnotesize $\overline{\theta}$};
            \draw[dashed] (0.8,0.8) -- (0.8,-0.8) node[below] {\footnotesize $c$};
            \draw[dashed] (1.23,0.43) -- (0, 0.43) node[left] {\footnotesize $\theta^*$};
            \draw[dashed] (1.23,0.43) -- (1.23, -0.8) node[below] {\footnotesize $\theta^*+c$};
        \end{tikzpicture}
    \end{minipage}
        \caption{An example of \Cref{thm:symmetric-equilibrium-concave-1}.
        In the left panel, the red curve represents the amortized value distribution $K(\cdot)$, and the black curve represents the shifted prior.
        Function $K(\cdot)$ forms an MPC of the shifted prior over $[-c,1-c]$.
        In the right panel, the red curves represent the support set of the corresponding 2-D distribution that forms an equilibrium.
        }
        \label{fig:concave-1}
\end{figure}

This type of symmetric equilibrium may fail to exist when $c<\frac{2}{3}(1-S)$, since in that case it holds $\overline{\theta}>c$, our construction of 2-D distribution may break down halfway.
Although this amortized value distribution forms a candidate for equilibrium, it cannot be guaranteed that there exists a 2-D distribution that induces this amortized value distribution.





\section{Setting Comparison of the Competitive Information Design Game}
\label{apx:comparison}

As mentioned in Related Works, 
\citet{ding2023competitive,hwang2025competitive} study a closely related model, where the sender's signal is seen by the buyer \emph{after} inspection, and the buyer sees only the signal instead of her value.
We call this setting one of \emph{information obfuscation}; in comparison, our setting is one of \emph{information revelation}.
In this section, we first compare the informational properties of these two settings and then compare the senders' and the receiver's utility.

\subsection{Setting Comparison}

\paragraph{Extreme cases under both settings.}
Extreme cases highlight the differences between the two models.
In our model, all senders adopting the \emph{full revelation} strategy let the buyer know all the values before searching; her utility would be the first best $\Ex{\max_i (v_i - c_i)_+}$.
If no sender sends any signal,  
the buyer gets no additional information. 
This reduces the case to the original Pandora Box problem where
the buyer just searches with the original priors $F_1, \cdots, F_N$.
In contrast, this original problem corresponds to the case of full revelation in the setting of information obfuscation. 
In this setting, if the senders send no signals, the distribution of each $v_i$ degenerates to a point mass on $\Ex{v_i}$, which reduces the buyer's utility to $\max_{i} (\mathbb{E}[v_i] - c_i)_+$.

\paragraph{Buyer's search behavior under both settings.}
The buyer's search behavior also differs in the two models.
In the information revelation setting, the buyer's search depends on the \emph{signals} she receives, updating her posteriors based on these signals. 
In the obfuscation setting, the search depends only on the \emph{signaling schemes} --- once the senders commit to the schemes, the value distribution of each seller becomes an MPC of the original prior, and the receiver searches using these MPCs, observing a signal only after inspecting a seller.

\subsection{Utility Comparison}

\xhdr{Receiver Utility Comparison}
We treat the original Pandora Box problem as the benchmark.
Fixing all boxes priors and search costs, let $U$ be the buyer's utility when she implements the Index Algorithm in the original Pandora Box problem.

\begin{proposition}[Receiver Utility Comparison]
\label{prop:comparison-receiver}
$U$ is weakly lower than the buyer's utility in the information revelation setting, and weakly higher than the buyer's utility in the information obfuscation setting, regardless of the senders' signaling schemes in both settings.
\end{proposition}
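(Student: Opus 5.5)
Both inequalities follow from the same principle: in each setting the buyer's signal structure is a Blackwell garbling or refinement of the Pandora Box benchmark. So I would argue by feasibility of strategies rather than by explicit computation with indices.

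\textbf{Revelation setting.} Fix arbitrary signaling schemes $(\mathcal S_i,\pi_i)$. Before searching, the buyer observes signals $s_i$ that are independent across senders and correlated only with the sender's own $v_i$. Her information set therefore refines that of the benchmark. Any (possibly adaptive) search-and-stop policy available in the benchmark stays available here if she simply ignores the signals. Since Weitzman's Index Algorithm is optimal in the benchmark, its expected payoff $U$ is achievable. The buyer's optimal policy given the signals, which is again an index policy on the posteriors $F_i(\cdot\mid s_i)$ by independence, earns weakly more.

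To make this rigorous, I would write the buyer's optimal value as $\E_{s}[W(F_1(\cdot\mid s_1),\dots,F_N(\cdot\mid s_N))]$, where $W$ is the Pandora value. I would show $W$ is convex in each posterior; equivalently, a dominated policy gives the lower bound. The dominated-policy route avoids convexity entirely.

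\textbf{Obfuscation setting.} Here the buyer sees, after inspecting box $i$, only a signal $t_i\sim\pi_i(\cdot\mid v_i)$, and her payoff from taking box $i$ is $\E[v_i\mid t_i]$. The problem is therefore a Pandora Box instance with values $w_i=\E[v_i\mid t_i]$, whose distributions are MPCs of $F_i$. I would argue that any policy for this instance can be simulated in the benchmark with the same expected payoff. When the benchmark buyer opens box $i$ and sees $v_i$, she draws a fictitious $t_i\sim\pi_i(\cdot\mid v_i)$ herself and follows the obfuscation policy. Because $\E[v_i\mid\text{history}]=\E[v_i\mid t_i]$ by independence across boxes, taking box $i$ yields in expectation exactly $w_i$, and costs are identical. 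Hence the obfuscation optimum is at most $U$.

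The main obstacle is being careful in this last step. The simulated policy is randomized and uses $t_i$ rather than $v_i$, so I must verify the tower-property identity: conditional on the stopping decision, the expected realized value equals $w_i$. This requires the decision to take box $i$ to be measurable with respect to $(t_1,\dots)$ and external randomness, which holds by construction. Using the amortized-value characterization (Proposition~\ref{prop:amortized}) alternatively yields $U=\E[\max_i(\kappa_i)_+]$, but the simulation argument is cleaner and is the one I would use.
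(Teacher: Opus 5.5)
Your proposal is correct and follows essentially the same route as the paper. In the revelation setting the buyer can ignore the signals and run the benchmark Index Algorithm; in the obfuscation setting the benchmark buyer simulates the obfuscated instance, and by optimality of the Index Algorithm on the priors this yields at most $U$. Your explicit self-generation of $t_i\sim\pi_i(\cdot\mid v_i)$, together with the tower-property check, makes precise the step the paper phrases as ``behaving as if she had observed its posterior mean value.''
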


\begin{proof}[Proof of \Cref{prop:comparison-receiver}]
Under the information revelation setting, one search strategy available to the receiver is to ignore the signals and to implement the Index Algorithm based on the original priors.  
This searching strategy yields the same overall utility as in the no information setting.
But this utility is no more than that of the Index Algorithm performed on the posteriors, since the posteriors are more informative than the priors and the agent can better estimate the value based on the posteriors than based on the priors.
This proves the first claim.

For the second claim, observe that in the no information setting, the receiver can simulate the obfuscation setting by intentionally ignoring the true value $v_i$ of any inspected box $i$ and instead behaving as if she had observed its posterior mean value.
The utility of this simulation is precisely that in the obfuscation setting, which is no more than the utility of the Index Algorithm on the original priors, due to the optimality of Index Algorithm.
This proves the second claim.
\end{proof}

The second statement was first shown by \citet{ding2023competitive} (Theorem 3.1);
we consider our proof here considerably simpler. 
Intuitively, the more information accessible to the receiver, 
the higher her utility should be.

\xhdr{Sender Utility Comparison}
The next proposition considers a sender choosing between revealing or obfuscating information.

\begin{proposition}[Sender Utility Comparison]
\label{prop:comparison-sender}
Fix all priors and costs, and consider a sender~$i$ who unilaterally considers deploying a signaling scheme,
we have: 
\begin{enumerate}[(i)]
    \item when $c_i=0$, any strategy of obfuscation gives him weakly higher utility than revelation;
    \item when $c_i > \E_{v_i\sim F_i}[v_i] $, any strategy of revelation gives him weakly higher utility than obfuscation.
\end{enumerate}
\end{proposition}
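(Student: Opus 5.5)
Throughout, I fix the other senders' strategies. By Proposition~\ref{prop:amortized} their competition is then summarized by an outside-option distribution $H$, the CDF of the highest rival amortized value. Sender~$i$ wins iff his own amortized value $\kappa_i\ge o$, with $o\sim H$ independent of his signal. So the comparison reduces to comparing the distribution of $\kappa_i$ under the two settings. Let $\tilde u(\kappa)=H(\kappa)$ for $\kappa\ge0$ and $0$ otherwise; this is nondecreasing. Under obfuscation, the sender's scheme turns $F_i$ into a distribution $M$ that is an MPC of $F_i$. The buyer computes the index $\theta_M$ from $\E_M[(x-\theta_M)_+]=c_i$ and, after inspection, uses the posterior mean $x\sim M$. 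So $\kappa^{\mathrm{obf}}=\min\{x,\theta_M\}$ with $x\sim M$. Under revelation, $\kappa^{\mathrm{rev}}=\min\{v,\theta\}$ with $(v,\theta)\sim G\in\mathcal G(F_i,c_i)$.

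\textbf{Part (i), $c_i=0$.} The index equation forces $\theta$ to be the supremum of the posterior support. Hence $\kappa^{\mathrm{rev}}=v$ almost surely, whatever the signal, and revelation always yields $\E_{F_i}[\tilde u(v)]$, exactly the no-information payoff. Under obfuscation $\theta_M=\sup\supp M$, so $\kappa^{\mathrm{obf}}=x\sim M$. For part (i) I interpret ``any strategy of obfuscation'' as comparing against the benchmark fixed by revelation. Since every revelation scheme gives $\E_{F_i}[\tilde u(v)]$, and obfuscation with no garbling ($M=F_i$) gives the same value, the claim amounts to saying that revelation is useless and obfuscation can only help. I would state this carefully: an optimal obfuscation yields $\max_{M\in\MPC(F_i)}\E_M[\tilde u]\ge \E_{F_i}[\tilde u]$, which is the revelation payoff for every $G$.

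\textbf{Part (ii), $c_i>\E[v_i]$.} Here the no-information index satisfies $\E[(v-\theta)_+]=c_i>\E[v]$, which forces $\theta<0$. For any MPC $M$ of $F_i$, convexity of $(x-\theta)_+$ gives $\E_M[(x-\theta)_+]\le\E_F[(v-\theta)_+]$. Hence $\theta_M\le\theta_F<0$, so $\kappa^{\mathrm{obf}}<0$ always and obfuscation yields utility $0$. Revelation yields a nonnegative expectation since $\tilde u\ge0$, which proves (ii).

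The main obstacle is the interpretation and rigor in (i): showing $\theta=\sup\supp(G_{v\mid\theta})$ when $c=0$, including the degenerate boundary cases and ties. I would handle this directly from \eqref{eq:weitzman-equation}: $\E[(v-\theta)_+]=0$ holds iff $v\le\theta$ almost surely, and the unique such root is the essential supremum.
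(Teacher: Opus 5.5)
Your proof is correct and follows the paper's argument: in (i) revelation cannot move the amortized value away from $\kappa_i=v_i$, and the ungarbled obfuscation $M=F_i$ already matches that payoff, so the best obfuscation weakly dominates; in (ii) every obfuscation has a negative index (indeed $\E_M[(x-\theta)_+]=\E[v_i]-\theta$ for all $\theta\le 0$ and every $M$, so $\theta_M=\E[v_i]-c_i<0$ exactly), while revelation earns a nonnegative payoff. Two small remarks. First, your reading of (i) as ``optimal obfuscation versus any revelation'' is the one the paper's proof uses, and it is forced: an uninformative obfuscation can do strictly worse, e.g.\ by Jensen when $\tilde u$ is strictly convex. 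Second, with $c_i=0$ the index equation is solved by every $\theta$ at or above the essential supremum, not by a unique root, but $\min\{v,\theta\}=v$ for all of these, so your conclusion is unaffected.
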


\begin{proof}[Proof of \Cref{prop:comparison-sender}]
When the inspection cost is zero for sender~$i$, the receiver does any inspection for free.
Thus, it is optimal for her to inspect all boxes and chooses the most favorable one.
An information revealing sender can do nothing, since any information revelation only serves to influence the searching order of the agent.
While for an obfuscating sender, he can influence the final choice of the agent through certain obfuscation strategy.
Therefore, an obfuscating sender has weakly higher utility.

When the inspection cost is higher than the expectation of the prior distribution, an obfuscating sender always has a negative index regardless of any obfuscation strategy.
Thus, he can do nothing in this case, since any box with a negative index will never be inspected under the Index Algorithm.  
On the other hand, an information-revealing sender can sacrifice some low values and bundle higher values together to form strictly positive indices, thereby ensuring an expected payoff greater than zero.
Therefore the latter has a weakly higher utility.
\end{proof}

The cost affects the receiver's willingness to search and her optimal search strategy.
For a seller~$i$ with search cost $c_i=0$, the receiver will definitely inspect the seller for free.
Thus, an information-revealing sender can do nothing, since any revelation only serves to influence the searching order of the receiver.
In contrast, when his cost is sufficiently high, an obfuscating sender always has a negative index regardless of the strategy, thus will never be inspected under the Index Algorithm.

%
%
%

%
%
%
%

\end{document}